\begin{document}
\theoremstyle{plain}
\newtheorem{thm}{Theorem}[section]
\newtheorem{lem}[thm]{Lemma}
\newtheorem{prop}[thm]{Proposition}
\newtheorem{cor}[thm]{Corollary}
\theoremstyle{definition}
\newtheorem{assum}[thm]{Assumption}
\newtheorem{notation}[thm]{Notation}
\newtheorem{defn}[thm]{Definition}
\newtheorem{clm}[thm]{Claim}
\newtheorem{ex}[thm]{Example}
\theoremstyle{remark}
\newtheorem{rem}[thm]{Remark}
\newcommand{\unit}{\mathbb I}
\newcommand{\ali}[1]{{\mathfrak A}_{[ #1 ,\infty)}}
\newcommand{\alm}[1]{{\mathfrak A}_{(-\infty, #1 ]}}
\newcommand{\nn}[1]{\lV #1 \rV}
\newcommand{\br}{{\mathbb R}}
\newcommand{\dm}{{\rm dom}\mu}
\newcommand{\lb}{l_{\bb}(n,n_0,k_R,k_L,\lal,\bbD,\bbG,Y)}
\newcommand{\Ad}{\mathop{\mathrm{Ad}}\nolimits}
\newcommand{\Proj}{\mathop{\mathrm{Proj}}\nolimits}
\newcommand{\RRe}{\mathop{\mathrm{Re}}\nolimits}
\newcommand{\RIm}{\mathop{\mathrm{Im}}\nolimits}
\newcommand{\Wo}{\mathop{\mathrm{Wo}}\nolimits}
\newcommand{\Prim}{\mathop{\mathrm{Prim}_1}\nolimits}
\newcommand{\Primz}{\mathop{\mathrm{Prim}}\nolimits}
\newcommand{\ClassA}{\mathop{\mathrm{ClassA}}\nolimits}
\newcommand{\Class}{\mathop{\mathrm{Class}}\nolimits}
\def\qed{{\unskip\nobreak\hfil\penalty50
\hskip2em\hbox{}\nobreak\hfil$\square$
\parfillskip=0pt \finalhyphendemerits=0\par}\medskip}
\def\proof{\trivlist \item[\hskip \labelsep{\bf Proof.\ }]}
\def\endproof{\null\hfill\qed\endtrivlist\noindent}
\def\proofof[#1]{\trivlist \item[\hskip \labelsep{\bf Proof of #1.\ }]}
\def\endproofof{\null\hfill\qed\endtrivlist\noindent}
\newcommand{\caA}{{\mathcal A}}
\newcommand{\caB}{{\mathcal B}}
\newcommand{\caC}{{\mathcal C}}
\newcommand{\caD}{{\mathcal D}}
\newcommand{\caE}{{\mathcal E}}
\newcommand{\caF}{{\mathcal F}}
\newcommand{\caG}{{\mathcal G}}
\newcommand{\caH}{{\mathcal H}}
\newcommand{\caI}{{\mathcal I}}
\newcommand{\caJ}{{\mathcal J}}
\newcommand{\caK}{{\mathcal K}}
\newcommand{\caL}{{\mathcal L}}
\newcommand{\caM}{{\mathcal M}}
\newcommand{\caN}{{\mathcal N}}
\newcommand{\caO}{{\mathcal O}}
\newcommand{\caP}{{\mathcal P}}
\newcommand{\caQ}{{\mathcal Q}}
\newcommand{\caR}{{\mathcal R}}
\newcommand{\caS}{{\mathcal S}}
\newcommand{\caT}{{\mathcal T}}
\newcommand{\caU}{{\mathcal U}}
\newcommand{\caV}{{\mathcal V}}
\newcommand{\caW}{{\mathcal W}}
\newcommand{\caX}{{\mathcal X}}
\newcommand{\caY}{{\mathcal Y}}
\newcommand{\caZ}{{\mathcal Z}}
\newcommand{\bbA}{{\mathbb A}}
\newcommand{\bbB}{{\mathbb B}}
\newcommand{\bbC}{{\mathbb C}}
\newcommand{\bbD}{{\mathbb D}}
\newcommand{\bbE}{{\mathbb E}}
\newcommand{\bbF}{{\mathbb F}}
\newcommand{\bbG}{{\mathbb G}}
\newcommand{\bbH}{{\mathbb H}}
\newcommand{\bbI}{{\mathbb I}}
\newcommand{\bbJ}{{\mathbb J}}
\newcommand{\bbK}{{\mathbb K}}
\newcommand{\bbL}{{\mathbb L}}
\newcommand{\bbM}{{\mathbb M}}
\newcommand{\bbN}{{\mathbb N}}
\newcommand{\bbO}{{\mathbb O}}
\newcommand{\bbP}{{\mathbb P}}
\newcommand{\bbQ}{{\mathbb Q}}
\newcommand{\bbR}{{\mathbb R}}
\newcommand{\bbS}{{\mathbb S}}
\newcommand{\bbT}{{\mathbb T}}
\newcommand{\bbU}{{\mathbb U}}
\newcommand{\bbV}{{\mathbb V}}
\newcommand{\bbW}{{\mathbb W}}
\newcommand{\bbX}{{\mathbb X}}
\newcommand{\bbY}{{\mathbb Y}}
\newcommand{\bbZ}{{\mathbb Z}}
\newcommand{\str}{^*}
\newcommand{\lv}{\left \vert}
\newcommand{\rv}{\right \vert}
\newcommand{\lV}{\left \Vert}
\newcommand{\rV}{\right \Vert}
\newcommand{\la}{\left \langle}
\newcommand{\ra}{\right \rangle}
\newcommand{\ltm}{\left \{}
\newcommand{\rtm}{\right \}}
\newcommand{\lcm}{\left [}
\newcommand{\rcm}{\right ]}
\newcommand{\ket}[1]{\lv #1 \ra}
\newcommand{\bra}[1]{\la #1 \rv}
\newcommand{\lmk}{\left (}
\newcommand{\rmk}{\right )}
\newcommand{\al}{{\mathcal A}}
\newcommand{\md}{M_d({\mathbb C})}
\newcommand{\Tr}{\mathop{\mathrm{Tr}}\nolimits}
\newcommand{\Ran}{\mathop{\mathrm{Ran}}\nolimits}
\newcommand{\Ker}{\mathop{\mathrm{Ker}}\nolimits}
\newcommand{\spn}{\mathop{\mathrm{span}}\nolimits}
\newcommand{\Mat}{\mathop{\mathrm{M}}\nolimits}
\newcommand{\UT}{\mathop{\mathrm{UT}}\nolimits}
\newcommand{\DT}{\mathop{\mathrm{DT}}\nolimits}
\newcommand{\GL}{\mathop{\mathrm{GL}}\nolimits}
\newcommand{\spa}{\mathop{\mathrm{span}}\nolimits}
\newcommand{\supp}{\mathop{\mathrm{supp}}\nolimits}
\newcommand{\rank}{\mathop{\mathrm{rank}}\nolimits}
\newcommand{\idd}{\mathop{\mathrm{id}}\nolimits}
\newcommand{\ran}{\mathop{\mathrm{Ran}}\nolimits}
\newcommand{\dr}{ \mathop{\mathrm{d}_{{\mathbb R}^k}}\nolimits} 
\newcommand{\dc}{ \mathop{\mathrm{d}_{\cc}}\nolimits} \newcommand{\drr}{ \mathop{\mathrm{d}_{\rr}}\nolimits} 
\newcommand{\zin}{\mathbb{Z}}
\newcommand{\rr}{\mathbb{R}}
\newcommand{\cc}{\mathbb{C}}
\newcommand{\ww}{\mathbb{W}}
\newcommand{\nan}{\mathbb{N}}\newcommand{\bb}{\mathbb{B}}
\newcommand{\aaa}{\mathbb{A}}\newcommand{\ee}{\mathbb{E}}
\newcommand{\pp}{\mathbb{P}}
\newcommand{\wks}{\mathop{\mathrm{wk^*-}}\nolimits}
\newcommand{\he}{\hat {\mathbb E}}
\newcommand{\ikn}{{\caI}_{k,n}}
\newcommand{\mk}{{\Mat_k}}
\newcommand{\mnz}{\Mat_{n_0}}
\newcommand{\mn}{\Mat_{n}}
\newcommand{\mkk}{\Mat_{k_R+k_L+1}}
\newcommand{\mnzk}{\mnz\otimes \mkk}
\newcommand{\hbb}{H^{k,\bb}_{m,p,q}}
\newcommand{\gb}[1]{\Gamma^{(R)}_{#1,\bb}}
\newcommand{\cgv}[1]{\caG_{#1,\vv}}
\newcommand{\gv}[1]{\Gamma^{(R)}_{#1,\vv}}
\newcommand{\gvt}[1]{\Gamma^{(R)}_{#1,\vv(t)}}
\newcommand{\gbt}[1]{\Gamma^{(R)}_{#1,\bb(t)}}
\newcommand{\cgb}[1]{\caG_{#1,\bb}}
\newcommand{\cgbt}[1]{\caG_{#1,\bb(t)}}
\newcommand{\gvp}[1]{G_{#1,\vv}}
\newcommand{\gbp}[1]{G_{#1,\bb}}
\newcommand{\gbpt}[1]{G_{#1,\bb(t)}}
\newcommand{\Pbm}[1]{\Phi_{#1,\bb}}
\newcommand{\Pvm}[1]{\Phi_{#1,\bb}}
\newcommand{\mb}{m_{\bb}}
\newcommand{\E}[1]{\widehat{\mathbb{E}}^{(#1)}}
\newcommand{\lal}{{\boldsymbol\lambda}}
\newcommand{\rar}{{\boldsymbol r}}
\newcommand{\oo}{{\boldsymbol\omega}}
\newcommand{\vv}{{\boldsymbol v}}
\newcommand{\bbm}{{\boldsymbol m}}
\newcommand{\kl}[1]{{\mathcal K}_{#1}}
\newcommand{\wb}[1]{\widehat{B_{\mu^{(#1)}}}}
\newcommand{\ws}[1]{\widehat{\psi_{\mu^{(#1)}}}}
\newcommand{\wsn}[1]{\widehat{\psi_{\nu^{(#1)}}}}
\newcommand{\wv}[1]{\widehat{v_{\mu^{(#1)}}}}
\newcommand{\wbn}[1]{\widehat{B_{\nu^{(#1)}}}}
\newcommand{\wo}[1]{\widehat{\omega_{\mu^{(#1)}}}}
\newcommand{\dist}{\dc}
\newcommand{\hpu}{\hat P^{(n_0,k_R,k_L)}_R}
\newcommand{\hpd}{\hat P^{(n_0,k_R,k_L)}_L}
\newcommand{\pu}{ P^{(k_R,k_L)}_R}
\newcommand{\pd}{ P^{(k_R,k_L)}_L}
\newcommand{\puuz}{P_{R}^{(n_0-1,n_0-1)}\otimes P^{(k_R,k_L)}_R}
\newcommand{\pddz}{P_{L}^{(n_0-1,n_0-1)}\otimes P^{(k_R,k_L)}_L}
\newcommand{\puu}{\tilde P_R}
\newcommand{\pdd}{\tilde P_L}
\newcommand{\qu}[1]{ Q^{(k_R,k_L)}_{R, #1}}
\newcommand{\qd}[1]{ Q^{(k_R,k_L)}_{L,#1}}
\newcommand{\hqu}[1]{ \hat Q^{(n_0,k_R,k_L)}_{R, #1}}
\newcommand{\hqd}[1]{ \hat Q^{(n_0,k_R,k_L)}_{L,#1}}
\newcommand{\eij}[1] {E^{(k_R,k_L)}_{#1}}
\newcommand{\eijz}[1] {E^{(n_0-1,n_0-1)}_{#1}}
\newcommand{\heij}[1] {\hat E^{(k_R,k_L)}_{#1}}
\newcommand{\cn}{\mathop{\mathrm{CN}(n_0,k_R,k_L)}\nolimits}
\newcommand{\ghd}[1]{\mathop{\mathrm{GHL}(#1,n_0,k_R,k_L,\bbG)}\nolimits}
\newcommand{\ghu}[1]{\mathop{\mathrm{GHR}(#1,n_0,k_R,k_L,\bbD)}\nolimits}
\newcommand{\ghdb}[1]{\mathop{\mathrm{GHL}(#1,n_0,k_R,k_L,\bbG)}\nolimits}
\newcommand{\ghub}[1]{\mathop{\mathrm{GHR}(#1,n_0,k_R,k_L,\bbD)}\nolimits}
\newcommand{\hfu}[1]{{\mathfrak H}_{#1}^R}
\newcommand{\hfd}[1]{{\mathfrak H}_{#1}^L}
\newcommand{\hfui}[1]{{\mathfrak H}_{#1,1}^R}
\newcommand{\hfdi}[1]{{\mathfrak H}_{#1,1}^L}
\newcommand{\hfuz}[1]{{\mathfrak H}_{#1,0}^R}
\newcommand{\hfdz}[1]{{\mathfrak H}_{#1,0}^L}
\newcommand{\CN}{\overline{\hpd}\lmk\mnzk \rmk\overline{\hpu}}
\newcommand{\cnz}[1] {\chi_{#1}^{(n_0)}}
\newcommand{\eu}{\eta_{R}^{(k_R,k_L)}}
\newcommand{\ezu}{\eta_{R}^{(n_0-1,n_0-1)}}
\newcommand{\ed}{\eta_{L}^{(k_R,k_L)}}
\newcommand{\ezd}{\eta_{L}^{(n_0-1,n_0-1)}}
\newcommand{\fii}[1]{f_{#1}^{(k_R,k_L)}}
\newcommand{\fiir}[1]{f_{#1}^{(k_R,0)}}
\newcommand{\fiil}[1]{f_{#1}^{(0,k_L)}}
\newcommand{\fiz}[1]{f_{#1}^{(n_0-1,n_0-1)}}
\newcommand{\zeij}[1] {e_{#1}^{(n_0)}}
\newcommand{\CL}{\ClassA}
\newcommand{\CLn}{\Class_2(n,n_0,k_R,k_L)}
\newcommand{\braket}[2]{\left\langle#1,#2\right\rangle}
\newcommand{\abs}[1]{\left\vert#1\right\vert}
\newtheorem{nota}{Notation}[section]
\def\qed{{\unskip\nobreak\hfil\penalty50
\hskip2em\hbox{}\nobreak\hfil$\square$
\parfillskip=0pt \finalhyphendemerits=0\par}\medskip}
\def\proof{\trivlist \item[\hskip \labelsep{\bf Proof.\ }]}
\def\endproof{\null\hfill\qed\endtrivlist\noindent}
\def\proofof[#1]{\trivlist \item[\hskip \labelsep{\bf Proof of #1.\ }]}
\def\endproofof{\null\hfill\qed\endtrivlist\noindent}
\title{A class of asymmetric gapped Hamiltonians on quantum spin chains and its characterization I}
\author{
{\sc Yoshiko Ogata}\footnote{Supported in part by
the Grants-in-Aid for
Scientific Research, JSPS.}\\
{\small Graduate School of Mathematical Sciences}\\
{\small The University of Tokyo, Komaba, Tokyo, 153-8914, Japan}
}

\maketitle{}

\begin{abstract}
We introduce a class of gapped Hamiltonians on quantum spin chains,
which allows asymmetric edge ground states.
This class is an asymmetric generalization of the class of Hamiltonians in 
\cite{Fannes:1992vq}. It can be characterized by
five qualitative physical properties of ground state structures.
In this Part I, we introduce the models and investigate their properties.
\end{abstract}

\section{Introduction}
Recently, gapped ground state phases attract a lot of attention. It is related to various fields of mathematics and physics, including condensed matter physics,
quantum information, spectral theory, and topology, and studied widely from  many different points of view.
Even if we restrict our attention to quantum spin systems,  many interesting facts have been found in the last decades.
One of the most famous discoveries is the area law  in one dimensional quantum spin system proven by Hastings \cite{area}.
In the paper, he showed and used the fact that a unique ground state of a gapped Hamiltonian 
can be approximated by a product of three localized operators. This fact holds even in more general setting, see \cite{hmns}.  
Furthermore, the exponential decay of correlations of gapped ground states was proven in \cite{hk}, \cite{ns}.
In a word, the existence of the gap guarantees us to have a good control on the spectral projection of ground state spaces.

Such a nice control was used to show the automorphic equivalence of 
the ground state structures in \cite{Bachmann:2011kw}, in the classification problem of gapped Hamiltonians.
Here, two Hamiltonians are defined to be in the same class, if they are the endpoints of a $C^1$-path of Hamiltonians along which the spectral gap above the ground state energy does not close. 
(See \cite{bo} for a more formal definition.)
We would like to emphasize that in  \cite{Bachmann:2011kw},  finite volume Hamiltonians with open boundary conditions are considered.
Therefore, we call this classification, the $C^1$-classification with open boundary condition. One benefit of considering open boundary condition
is that it possesses the information of edge states, as well as the bulk one. Another more technical advantage is that it is convenient when we use the martingaele method 
introduced in \cite{Nachtergaele:1996vc} to show the spectral gap.

As we have seen, one can derive strong results under quite general setting, {\it if we assume the existence of the spectral gap},
because of the nice control of the spectral projections.
However, 
{\it to prove the existence of the gap itself} turns out to be a much more difficult problem, especially in more than one dimensional systems.
For one dimensional systems,
a recipe to construct Hamiltonians out of $n$-tuple of matrices in some auxiliary systems is known. (See  Subsection \ref{subsec:ph}.)
We would like to call
the Hamiltonians given by this recipe, the MPS (matrix product state) Hamiltonians.
Just a random choice of the $n$-tuples does not guarantee the spectral gap. In \cite{Fannes:1992vq},
a sufficient condition, which we call the injectivity condition, to guarantee the gap was introduced. (See Remark \ref{rem:inj}.)
In this setting of \cite{Fannes:1992vq}, the bulk ground state turns out to be a pure finitely correlated state (or MPS).
The class of Hamiltonians given in this way with the injectivity condition covers the AKLT model \cite{Affleck:1988vr}.
Classification of this class of Hamiltonians was studied in \cite{Chen:2010gb}, \cite{Chen:2011iq}, \cite{Schuch:2011ve},
\cite {bo}.

Having this nice recipe, one natural and basic question is how much of the gapped ground state phase in quantum spin chains
is covered by the Hamiltonians given by MPS recipe with the injectivity condition.
More precisely, for each equivalence class of the $C^1$-classification,
is there a representative
given by such a Hamiltonian?

The answer is no: If we assume the injectivity condition, the edge states (namely, the ground state space on left/right half infinite chains) have to be symmetric \cite{bo}.
However, in ~\cite{Bachmann:2012uu, Bachmann:2012bf},
a particular family of gapped models
called PVBS models, with asymmetric edge states was introduced.
PVBS model still can be given by the MPS recipe, but the matrices do not satisfy the injectivity condition.
In this model, the bulk ground state is
 a pure product state.
It is then natural to explore a general condition on $n$-tuple of matrices which guarantees the spectral gap,
but still allows the asymmetric edge ground states.

This is what we do in this series of papers. 
In Part I, we introduce a new class of MPS gapped Hamiltonians which covers 
Hamiltonians in \cite{Fannes:1992vq}. 
This class allows asymmetric edge ground states. 
We investigate the properties of this Hamiltonian in this paper (Theorem \ref{thm:asymmetric}).
This new class is not a mere generalization of the known models. In Part II, we will show that
this class has a characterization in terms of five qualitative physical conditions (which corresponds to (i)-(v) and (viii)) of Theorem \ref{thm:asymmetric})  .
More precisely, we will show if a (not necessarily MPS) Hamiltonian satisfies these five conditions, there is an MPS Hamiltonian from our class
satisfying the followings: The ground state spaces  of the two Hamiltonians on the infinite intervals coincide.
In the finite intervals, the spectral projections onto the ground state space of the original Hamiltonian on each intervals
are well approximated by that of the MPS Hamiltonian. This last property has a corollary to the $C^1$-classification
problem i.e., the classification problem  of Hamiltonians satisfying these five properties is reduced to the
classification problem of our generalized class of MPS Hamiltonians.
The benefit is that the latter one has a concrete structure  which allows us to handle the spectral gap.

We should emphasis two points. Firstly, it is well known that any vector state on a finite interval can be represented as an MPS.
However, naive representation would require the size of the auxiliary systems to grow very fast, as the length of the intervals goes to infinity.
If  it is the unique ground state of a gapped Hamiltonian or even just by having the exponential decay of the correlation functions,
the growth can be reduced significantly \cite{area}, \cite{bh}.
What we do in this paper is however, of  different nature. We would like to have the dimension of the auxiliary system to be fixed.
As a cost, we have to assume that our Hamiltonian is frustration free.
The second point is that if we care only about the ground state in the bulk, it is already known that any ground state of frustration free Hamiltonian
with uniformly bounded degeneracy
is  an MPS \cite{Matsui1}, \cite{Matsui2}.
The difference here is that we would like to care about the spectral gap, the $C^1$-classification, and the edge states.
As a result, we have to represent not only bulk ground state, but also left/right edge ground states and  the  ground states on the finite intervals,
simultaneously, using  the {\it  same}  auxiliary system. 
This requires the detailed analysis of the auxiliary system.

We will use the notations listed in Appendix \ref{sec:nota} freely.

\subsection{Hamiltonians and ground state structures}
For $\bbN\ni n\geq 2$, let $\caA$ be the finite dimensional C*-algebra $\caA = \Mat_n$, the algebra of $n\times n$ matrices. 
Throughout this article, this $n$ is fixed as the dimension of the spin under consideration, and we fix an orthonormal basis $\{\psi_\mu\}_{\mu=1}^n$ of $\cc^n$.
We denote the set of all finite subsets in $\Gamma\subset{\bbZ}$ by ${\mathfrak S}_{\Gamma}$.
The number of elements in a finite set $\Lambda\subset {\bbZ}$ is denoted by
$|\Lambda|$.
When we talk about intervals in $\bbZ$, $[a,b]$ for $a\le b$,
means the interval in $\bbZ$, i.e., $[a,b]\cap \bbZ$.
We denote the set of all finite intervals in $\Gamma$
by ${\mathfrak I}_{\Gamma}$.
For each $z\in\bbZ$, we let $\caA_{\{z\}}$ be an isomorphic copy of $\caA$ and for any finite subset $\Lambda\subset\bbZ$, $\caA_{\Lambda} = \otimes_{z\in\Lambda}\caA_{\{z\}}$ is the local algebra of observables. 
For finite $\Lambda$, the algebra $\caA_{\Lambda} $ can be regarded as the set of all bounded operators acting on
a Hilbert space $\otimes_{z\in\Lambda}{\bbC}^n$.
We use this identification freely.
If $\Lambda_1\subset\Lambda_2$, the algebra $\caA_{\Lambda_1}$ is naturally embedded in $\caA_{\Lambda_2}$ by tensoring its elements with the identity. Finally,
for an infinite subset $\Gamma$ of $\bbZ$, the algebra $\caA_{\Gamma}$ 
is given as the inductive limit of the algebras $\caA_{\Lambda}$ with $\Lambda\in{\mathfrak S}_{\Gamma}$. In particular,
$\caA_{\bbZ}$  is the chain algebra.
We denote the set of local observables in $\Gamma$ by $\caA_{\Gamma}^{\rm loc}=\bigcup_{\Lambda\in{\mathfrak S}_{\Gamma}}\caA_{\Lambda}
$.
For $\omega$ a state on $\caA_{\Gamma}$ and
each finite $\Lambda\subset \Gamma$, we denote by $D_{\omega\vert_{\caA_\Lambda}}$ the density matrix of the restriction $\omega\vert_{\caA_\Lambda}$.

For any $x\in\bbZ$, let $\tau_x$ be the shift operator by $x$ on $\caA_\bbZ$. 
An interaction is a map $\Phi$ from 
${\mathfrak S}_{\bbZ}$ into ${\caA}_{{\mathbb Z}}^{\rm loc}$ such
that $\Phi(X) \in {\caA}_{X}$ 
and $\Phi(X) = \Phi(X)^*$
for $X \in {\mathfrak S}_{\bbZ}$. 
An interaction $\Phi$ is translation invariant if
$
\Phi(X+j)=\tau_j\lmk
\Phi(X)\rmk,
$
for all $ j\in{\mathbb Z}$ and $X\in  {\mathfrak S}_{\bbZ}$.
Furthermore, it is of finite range if there exists an $m\in {\mathbb N}$ such that
$
\Phi(X)=0$,
for $X$ with diameter larger than $m$.
In this case, 
we say that the interaction length of $\Phi$ is less than or equal to $m$.
A natural number $m\in\nan$ and an element $h\in\caA_{[0,m-1]}$,
define an interaction $\Phi_h$ by
\begin{align}\label{hamdef}
\Phi_h(X):=\left\{
\begin{gathered}
\tau_x\lmk h\rmk,\quad \text{if}\quad  X=[x,x+m-1] \quad \text{for some}\quad  x\in\bbZ\\
0,\quad\text{otherwise}
\end{gathered}\right.
\end{align}
for $X\in {\mathfrak S}_{\bbZ}$.

A Hamiltonian associated with $\Phi$ is a net of self-adjoint operators $H_{\Phi}:=\left((H_{\Phi })_\Lambda\right)_{\Lambda\in{\mathfrak I}_{\bbZ}}$ such that 
\begin{equation}\label{GenHamiltonian}
\lmk H_{\Phi}\rmk_{\Lambda}:=\sum_{X\subset{\Lambda}}\Phi(X).
\end{equation}
Note that $(H_\Phi)_{\Lambda}\in {\caA}_{\Lambda}$. Without loss of generality we consider positive interactions i.e.,  $\Phi(X)\geq 0$
for any $X\in{\mathfrak S}_{\bbZ}$, throughout this article.
Let us specify what we mean by gapped  with respect to the open boundary conditions Hamiltonian in this paper.
\begin{defn}
A Hamiltonian $H:=\left(H_\Lambda\right)_{\Lambda\in{\mathfrak I}_{\bbZ}}$
associated with a positive translation invariant finite range interaction is \emph{gapped with respect to the open boundary conditions}  
if there exist $\gamma>0$ and $N_0\in\nan$ 
such that the difference 
between the smallest and the next-smallest eigenvalue of $H_\Lambda$, is bounded below by 
$\gamma$,  for all finite intervals $\Lambda\subset\bbZ$ with $|\Lambda|\ge N_0$. 
\end{defn}
In this definition, the smallest eigenvalue can be degenerated in general.
Let $H=(H_{\Lambda})$ be a Hamiltonian associated with some positive 
translation invariant finite range interaction.
For a finite interval $\Lambda$, a ground state of
$H_{\Lambda}$ means a state on $\caA_{\Lambda}$
with support in the smallest eigenvalue space of $H_{\Lambda}$.
We denote the set of all ground states of $H_{\Lambda}$
on $\caA_{\Lambda}$ by $\caS_{\Lambda}(H)$.
For $\Lambda\in{\mathfrak I}_{\Gamma}$, any of the elements in 
 $\caS_{\Lambda}(H)$ can be extended to a 
state on $\caA_{\Gamma}$, and there exit weak-$*$
accumulation points of such extensions,
in the thermodynamical limit $\Lambda\to\Gamma$.
We denote the set of all such accumulation points by
$\caS_{\Gamma}(H)$.
\begin{defn}
We call the quadruplet $\left(\left\{\caS(H)\right\}_{I\in{\mathfrak I}_{\Gamma}},
\caS_{(-\infty-1]}(H),\caS_{[0,\infty)}(H),\caS_{\bbZ}(H)\right)$ 
the ground state structure of the Hamiltonian $H$.
\end{defn}

\subsection{The parent Hamiltonians}\label{subsec:ph}
The Hamiltonians we introduce in this paper are parent Hamiltonians of
sequence of subspaces satisfying the intersection property.
We say that a sequence of subspaces $\{\caD_N\}_{N\in \nan}$, $\caD_N\subset \bigotimes_{i=0}^{N-1}\cc^n$, $N\in\nan$, satisfies the {\it intersection property},
if there exists an $m\in\nan$, such that the relation 
\begin{equation}\label{intersection property}
\caD_N = \bigcap_{x=0}^{N-m} (\bbC^{n})^{\otimes x}\otimes \caD_{m}\otimes (\bbC^{n})^{\otimes N-m-x}, 
\end{equation} holds for all $N\ge m$.
In order to specify the number $m\in \nan$, we will say that $\{\caD_N\}_{N\in\nan}$ satisfies  Property~(I,$m$) 
when (\ref{intersection property}) holds for $m$
and all $N\ge m$.
Note that Property~(I,$m$) implies Property~(I,$m'$)  for all $m'\ge m$.
\begin{defn}
For a sequence of subspaces $\{\caD_N\}_{N\in \nan}$, $\caD_N\subset \bigotimes_{i=0}^{N-1}\cc^n$, $N\in\nan$,
we define $\bbm_{\{\caD_N\}}\in\nan\cup\{\infty\}$ by
\[
\bbm_{\{\caD_N\}}:=\inf\left\{
m\mid
\{\caD_N\}_N\text{ satisfies Property~(I,$m$)} 
\right\}.
\]
\end{defn}
Let $\{\caD_N\}$ be a sequence of nonzero spaces satisfying Property~(I,$m$).
Let $Q_m$ be the orthogonal projection onto the orthogonal complement of $\caD_m$ in $\otimes_{i=0}^{m-1}\cc^n$,
and consider the interaction $\Phi_{Q_m}$ associated with $Q_m$.
Then, by (\ref{intersection property}), we see that $\ker \lmk H_{\Phi_{Q_m}}\rmk _{[0,N-1]}=\caD_N$
for all $N\ge m$.
Namely, the ground state spaces
of the Hamiltonian $H_{\Phi_{Q_m}}$
are given by $\{\caD_N\}$.
We shall refer to that particular Hamiltonian as 
the \emph{parent Hamiltonian}
of $\{\caD_N\}$,
and denote this $\Phi_{Q_m}$ by $\Phi_{m,\{\caD_N\}}$.

The matrix product formalism gives a way to define sequences of subspaces.
Let $k\in \nan$, and 
$\vv=(v_1,\cdots,v_n)\in \lmk \mk\rmk^{\times n}$, an
$n$-tuple of elements in $\mk$.
For $l\in\nan$ and $\mu^{(l)}=(\mu_0,\mu_1,\ldots,\mu_{l-1})\in\{1,\cdots,n\}^{\times l}$,
we use the notation
\begin{align}
\widehat{v_{\mu^{(l)}}}:=
v_{\mu_0}v_{\mu_1}\cdots v_{\mu_{l-1}}\in\mk,\quad
\widehat{\psi_{\mu^{(l)}}}:=
\bigotimes_{i=0}^{l-1}\psi_{\mu_i}\in\bigotimes_{i=0}^{l-1}\cc^n.
\end{align}
(Recall that $\{\psi_{\mu}\}_{\mu}$ is the fixed CONS of $\cc^n$.)For an invertible $R\in \mk$ and $\vv=(v_1,\cdots,v_n)\in \lmk \mk\rmk^{\times n}$,
we denote by $R\vv R^{-1}$ the $n$-tuple given by
$R\vv R^{-1}=(Rv_1R^{-1},\cdots,Rv_nR^{-1})$.
We say $R \vv R^{-1}$ is similar to $\vv$ in this case.
\begin{defn}
Let $k\in \nan$, and 
$\vv=(v_1,\cdots,v_n)\in \mk^{\times n}$.
For each $l\in\nan$,
define
$\Gamma^{(R)}_{l,\vv}\;:\; \mk\to\bigotimes_{i=0}^{l-1}\cc^n$
by
\begin{align}\label{eq:gbdef}
\Gamma^{(R)}_{l,\vv}\lmk X\rmk
=\sum_{\mu^{(l)}\in \{1,\cdots,n\}^{\times l}}\lmk\Tr X \lmk\widehat{v_{\mu^{(l)}}} \rmk^*\rmk\widehat{\psi_{\mu^{(l)}}},\quad
X\in\mk,
\end{align}
and set
$
\cgv{l}:=\Ran\Gamma^{(R)}_{l,\vv}\subset \bigotimes_{i=0}^{l-1}\cc^n.
$
Furthermore, we denote by
$G_{l,\vv}$
the orthogonal projection onto $\cgv{l}$ in $\bigotimes_{i=0}^{l-1}\cc^n$.
We set $h_{m,\vv}:=1-G_{m,\vv}$ and $\Phi_{m,\bb}:=\Phi_{h_{m,\bb}}$.
For the simplicity of the terminology, 
we use the symbol 
$\bbm_{\vv}$
to denote $\bbm_{\{\cgv{N}\}_N}$.
\end{defn}

With a random choice of $\vv$, the sequence of subspaces $\{\cgv{l}\}_l$ would not satisfy 
the intersection property. Furthermore, even if the intersection property is satisfied,
the parent Hamiltonian may not be gapped.
We need to require some additional conditions to guarantee those properties.

We introduce several notations about $n$-tuples of matrices, which we use throughout this paper. For $k,l\in\nan$ and $\vv\in \mk^{\times n}$,
let ${\mathcal K}_l(\vv)$ be the following span of monomials of degree $l$ in the $v_\mu$'s,
\begin{equation}\label{KB}
{\mathcal K}_l(\vv) :=\spn\left\{v_{\mu_0}v_{\mu_{1}}\ldots v_{\mu_{l-1}}\mid
(\mu_0,\mu_1,\ldots,\mu_{l-1})\subset\{1,\ldots n\}^{\times l}\right\}.
\end{equation}
For $k\in\nan$ and $\vv\in \mk^{\times n}$,
we define the completely positive map $T_{\vv}\;:\; \mk\to\mk$ by
\begin{align}
T_{\vv}(X):=\sum_{\mu=1}^n v_{\mu}Xv_{\mu}^*,\quad X\in\mk.
\end{align}
The spectral property of $T_{\vv}$ will play an important role in the analysis.
For $\vv\in \Mat_{k}^{\times n}$, 
and a projection $p\in\mk$,
we denote by $\vv_p$ the $n$-tuple given by
$\vv_p=(pv_{1}p,pv_{2}p,\ldots,pv_{n}p)$.
For $p, q\in\caP(\mk)$, we also set 
\begin{align}\label{eq:mvpq}
M_{\vv,p,q}:=\inf\{M\in\nan\mid \gv{N} \;\text{is injective on} \; p\mk q  \; \text{for all } N\ge M\}.
\end{align}
We set $\cgv{N}^{p,q}:=\gv{N}(p\mk q)$ for $N\in\nan$.

\subsection{$\ClassA$}
In this section we introduce a class of $n$-tuples of matrices  which we consider in this paper.
First we introduce several notations we use repeatedly.
\begin{defn}\label{def:class}
For $k_R,k_L\in\nan\cup\{0\}$,
we denote by $\Wo(k_R,k_L)$ the set of all
${\boldsymbol\lambda}=(\lambda_{-k_R},\ldots,\lambda_{-1},\lambda_0,\lambda_1,\ldots,\lambda_{k_L})\in
\cc^{k_R+k_L+1}$
satisfying
\begin{align*}
&\lambda_0=1,\\
&0<\lv\lambda_{-k_R}\rv\le\lv\lambda_{-k_R+1}\rv\le\cdots
\le\lv\lambda_{-1}\rv<1,\\
&0<\lv\lambda_{k_L}\rv\le\lv\lambda_{k_L-1}\rv\le\cdots
\le\lv\lambda_{1}\rv<1.
\end{align*}
\end{defn}
For $\lal\in \cc^{k_R+k_L+1}$, we define
the diagonal matrix $\Lambda_{\lal}:=\sum_{i=-k_R}^{k_L}\lambda_i E^{(k_R,k_L)}_{ii}$.
\begin{defn}
Let $n,n_0\in\nan$ with $n\ge 2$,
and $\oo=(\omega_1,\ldots,\omega_n)\in \mnz^{\times n}$.
We say $\oo$ is primitive if 
\[
\l_{\oo}:=\inf\{l\in\nan\;:\; \kl{l'}(\oo)=\mnz \text{ for all } l'\ge l\}<\infty.
\]
We denote by $\Primz(n,n_0)$ the set of 
all primitive $n$-tuples $\oo$ of $n_0\times n_0$ matrices.
We also denote by $\Prim(n,n_0)$ the set of 
all primitive $n$-tuples $\oo$ of $n_0\times n_0$ matrices 
with $r_{T_{\oo}}=1$.
Furthermore, we denote by $\Primz_u(n,n_0)$ the set of 
all primitive $n$-tuples $\oo$ of $n_0\times n_0$ matrices
such that $T_\oo$ is a unital CP map.
\end{defn}
\begin{defn}\label{def:dg}
Let $k_R\in\nan$.
We define $\caC^R(k_R)$
by the set of $k_R$-tuples $\bbD=(D_1,\ldots,D_{k_R})$ of $\UT_{0,k_R+1}$
satisfying the following conditions.
\begin{enumerate}
\item
$D_aE_{00}^{(k_R,0)}=E_{-a,0}^{(k_R,0)}$
\item
The linear span of 
$\{D_a\}_{a=1}^{k_R}$ is a subalgebra of $\UT_{0,k_R+1}$
\end{enumerate}
Similarly, for  $k_L\in\nan$, we define $\caC^L(k_L)$
by the set of $k_L$-tuples $\bbG=(G_1,\ldots,G_{k_L})$ of $\UT_{0,k_L+1}$
satisfying the following conditions.:\begin{enumerate}
\item
$E_{00}^{(0,k_L)}G_b=E_{0,b}^{(0,k_L)}$
\item
The linear span of 
$\{G_b\}_{b=1}^{k_L}$ is a subalgebra of $\UT_{0,k_L+1}$.
\end{enumerate}
\end{defn}
\begin{defn}[$\caT(k_R,k_L)$]
Let $ (k_R,k_L)\in \lmk \nan\cup\{0\}\rmk^{\times 2}$.
If $k_R,k_L\in\nan$,
we denote by $\caT(k_R,k_L)$
the set of all quadruplets
$(\lal,\bbD,\bbG,Y)$
with $\lal\in\Wo(k_R,k_L)$, $\bbD\in \caC^R(k_R)$,
$\bbG\in \caC^L(k_L)$, and $Y\in\UT_{0,k_L+k_R+1}$ such that
for any $1\le a\le k_R$ and $1\le b\le k_L$,
\begin{align}
&{\Lambda_{\lal}}I_R^{(k_R,k_L)}(D_{a})=\lambda_{-a}I_R^{(k_R,k_L)}(D_{a}){\Lambda_{\lal}},\quad
a=1,\ldots,k_R,\label{eq:ld}\\
&{\Lambda_{\lal}} I_L^{(k_R,k_L)}(G_b)=\lambda_{b}^{-1}I_L^{(k_R,k_R)}(G_{b}){\Lambda_{\lal}},\quad
b=1,\ldots,k_L\label{eq:lg},\\
&Y\Lambda_{\lal}=\Lambda_{\lal}Y,\quad
\pu Y\pd=0,\label{eq:ly}
\end{align}
and
\begin{align}
&\lmk\Lambda_{\lal}(1+Y)\rmk^lI_R^{(k_R,k_L)}(D_a)
=\sum_{a'=1}^{k_R}\braket{f_{-a'}^{(k_R,k_L)}}{\lmk \Lambda_{\lal}(1+Y)\rmk^lf_{-a}^{(k_R,k_L)}}
I_R^{(k_R,k_L)}(D_{a'})\lmk\Lambda_{\lal}(1+Y)\rmk^l,\label{eq:dygy1}\\
&I_L^{(k_R,k_L)}(G_b)\lmk\Lambda_{\lal}(1+Y)\rmk^l
=\sum_{b'=1}^{k_L}\braket{f_{b}^{(k_R,k_L)}}{\lmk \Lambda_{\lal}(1+Y)\rmk^lf_{b'}^{(k_R,k_L)}}
\lmk\Lambda_{\lal}(1+Y)\rmk^lI_L^{(k_R,k_L)}(G_{b'}),\label{eq:dygy2}
\end{align}
for all $l\in\nan$.If $k_R=0$ and $k_L\in\nan$, $\caT(0,k_L)$ is the set of all triples $(\lal,\bbG,Y)$
with $\lal\in\Wo(k_R,k_L)$, 
$\bbG\in \caC^L(k_L)$, and $Y\in\UT_{0,k_L+k_R+1}$, satisfying (\ref{eq:lg}), (\ref{eq:ly}), and (\ref{eq:dygy2}).
If $k_R\in\nan$ and $k_L=0$, $\caT(k_R,0)$ is the set of all triples $(\lal,\bbD,Y)$
with  $\lal\in\Wo(k_R,k_L)$, $\bbD\in \caC^R(k_R)$, and $Y\in\UT_{0,k_L+k_R+1}$, satisfying (\ref{eq:ld}), (\ref{eq:ly}), and (\ref{eq:dygy1}).
If $k_R=k_L=0$, $\caT(0,0)$ consists of a single point $\lambda_0=1\in\Wo(0,0)$.
\end{defn}
\begin{rem}
As $\Lambda_{\lal}(1+Y)$ is an upper triangular matrix,
the summations on the right hand side of (\ref{eq:dygy1}) and (\ref{eq:dygy2}) are actually over
$a\le a'$ and $b\le b'$.
\end{rem}
\begin{rem}\label{rm:dgd}
For the simplicity of the statements, we write $\bbD$ or $\bbG$ even when $k_R=0$ or $k_L=0$,
although they are not defined in these cases.
The readers should discard them suitably in each statements.
\end{rem}
\begin{rem}\label{rem:dkk}
Let $\bbD\in\caC^R(k_R)$ and $\bbG\in\caC^L(k_L)$.
Note that $I_R^{(k_R,k_L)}(D_a)=\overline{P_L^{(k_R,k_L)}}I_R^{(k_R,k_L)}(D_a)P_R^{(k_R,k_L)}$,
 $I_L^{(k_R,k_L)}(G_b)={P_L^{(k_R,k_L)}}I_L^{(k_R,k_L)}(G_b)\overline{P_R^{(k_R,k_L)}}$, because
$D_a$ and $G_b$ are upper triangular matrices with zero diagonal elements.
Therefore, the linear space
\begin{align}
\caD(k_R,k_L,\bbD,\bbG):=\spa\lmk \unit, \{I_R^{(k_R,k_L)}(D_a)\}_{a=1}^{k_R} \cup\{I_L^{(k_R,k_L)}(G_b)\}_{b=1}^{k_L}\cup
\left\{E_{-a,b}^{(k_R,k_L)}\right\}_{a=1,\ldots,k_R,b=1,\ldots,k_L}\rmk
\end{align}
is a subalgebra of $\UT_{k_L+k_R+1}$.
(Here, as mentioned in Remark \ref{rm:dgd},  discard 
$ \{I_R^{(k_R,k_L)}(D_a)\}_{a=1}^{k_R}$, 
$\left\{E_{-a,b}^{(k_R,k_L)}\right\}_{a=1,\ldots,k_R,b=1,\ldots,k_L}$,
on the right hand side if $k_R=0$.)
\end{rem}

\begin{rem}\label{rem:din}
By (\ref{eq:dygy1}), (\ref{eq:dygy2}) and the fact that $\Lambda_{\lal}(1+Y)$ is an invertible
matrix, we observe that
\begin{align}
&I_R^{(k_R,k_L)}(D_a)\lmk\Lambda_{\lal}(1+Y)\rmk^l
=\sum_{a'=1}^{k_R}\braket{f_{-a'}^{(k_R,k_L)}}{\lmk \Lambda_{\lal}(1+Y)\rmk^{-l}f_{-a}^{(k_R,k_L)}}
\lmk\Lambda_{\lal}(1+Y)\rmk^{l}I_R^{(k_R,k_L)}(D_{a'}),\nonumber\\
&\lmk\Lambda_{\lal}(1+Y)\rmk^lI_L^{(k_R,k_L)}(G_b)
=\sum_{b'=1}^{k_L}\braket{f_{b}^{(k_R,k_L)}}{\lmk \Lambda_{\lal}(1+Y)\rmk^{-l}f_{b'}^{(k_R,k_L)}}
I_L^{(k_R,k_L)}(G_{b'})\lmk\Lambda_{\lal}(1+Y)\rmk^l.
\end{align}
This implies
\begin{align}\label{eq:adl}
\lmk \Lambda_{\lal}\lmk 1+Y\rmk\rmk^{-x}
\caD(k_R,k_L,\bbD,\bbG)
\lmk \Lambda_{\lal}\lmk 1+Y\rmk\rmk^x
=\caD(k_R,k_L,\bbD,\bbG),\quad
x\in\bbZ.
\end{align}
\end{rem}
\begin{defn}
Let $n,n_0\in\nan$ with $n\ge 2$, $k_R,k_L\in \nan\cup\{0\}$ and
$(\lal,\bbD,\bbG,Y)\in\caT(k_R,k_L)$.
We denote by ${\mathfrak B}(n,n_0,k_R,k_L,\lal,\bbD,\bbG,Y)$
the set of all $n$-tuples $\bb=(B_1,\ldots,B_n)\in\mnz\otimes\lmk
 \caD(k_R,k_L,\bbD,\bbG)\Lambda_{\lal}\lmk 1+Y\rmk
\rmk$
satisfying 
\begin{align}\label{eq:lblb}
l_\bb=
l_{\bb}(n,n_0,k_R,k_L,\lal,\bbD,\bbG,Y)
:=\inf\left\{l\mid
\caK_{l'}(\bbB)=
\mnz\otimes\lmk
 \caD(k_R,k_L,\bbD,\bbG)\lmk \Lambda_{\lal}\lmk 1+Y\rmk\rmk^{l'}
\rmk
\text{ for all } l'\ge l
\right\}
<\infty,
\end{align}
and $r_{T_{\bb}}=1$.
\end{defn}
\begin{defn}[$\ClassA$]
Let $n\ge 2$.
We define 
\[
\ClassA:=\bigcup
\left\{ {\mathfrak B}(n,n_0,k_R,k_L,\lal,\bbD,\bbG,Y)
\mid n_0\in\nan, k_R,k_L\in \nan\cup\{0\}, (\lal,\bbD,\bbG,Y)\in\caT(k_R,k_L)
\right\}.
\]
\end{defn}
\begin{rem}\label{rem:inj}
We say $\bb$ belongs to $\ClassA$
with respect to $(n_0,k_R,k_L,\lal, \bbD,\bbG,Y)$,
when we would like to make them explicitly.
\end{rem}
\begin{rem}
Recall that the parents Hamiltonians considered in \cite{Fannes:1992vq}
are generated by $\vv\in \Primz(n,n_0)$,
i.e., the condition $\caK_{l}(\vv)=\mnz$ for large $l$, is required.
This is what we call the injectivity condition.
This corresponds to the case $k_L=k_R=0$
in our setting.
\end{rem}
\begin{ex}
Let $n_0\in\nan$, and $k_R,k_L\in \nan\cup\{0\}$.
We fix $0<\kappa<1$, and set $\lal=(\lambda_i)_{i=-k_R}^{k_L}$ and  $\rar=(r_\alpha)_{\alpha=1}^{n_0}$ by
\begin{align*}
&r_{\alpha}:=\kappa ^{\alpha-1},\quad
\text{for }\alpha=1,\ldots,n_0,\notag\\
&\lambda_{j}:=\kappa ^{|j|n_0},\quad
\text{for }j=-k_R,\ldots, -1,0,1,\ldots,k_L.
\end{align*}
We also set
\begin{align*}
V_R:=\sum_{j=-k_R}^{-1}E_{j,j+1}^{(k_R,0)},\quad 
V_L:=\sum_{j=0}^{k_L-1}E_{j,j+1}^{(0,k_L)},
\end{align*}
and define $\bbD$ and $\bbG$ by
\begin{align*}
D_a:=V_R^a,\quad a=1,\ldots,k_R,\quad
G_b:=V_L^b,\quad b=1,\ldots, k_L.
\end{align*}
It is easy to check $(\lal, \bbD,\bbG,0)\in \caT(k_R,k_L)$.
Set
\begin{align*}
\eta:=\sum_{\alpha=2}^{(n_0)}\chi^{(n_0)}_\alpha.
\end{align*}
We define $\bbB=(B_\mu)$ by
\begin{align*}
&B_1:=\Lambda_{\rar}\otimes \Lambda_\lal,\\
&B_2:= \lmk\ket{\chi_1^{(n_0)}}\bra{\eta}+\ket{\eta}\bra{\chi_1^{(n_0)}}\rmk\otimes \Lambda_\lal
+\Lambda_\rar\otimes \lmk I_R^{(k_R,k_L)}(V_R)+I_L^{(k_R,k_L)}(V_L)\rmk\Lambda_\lal,\\
&B_\mu:=0,\quad \mu\ge 3.
\end{align*}
We claim  $\bbB\in {\mathfrak B}(n,n_0,k_R,k_L,\lal,\bbD,\bbG,0)$.
From the definition, we have $B_\mu\in\mnz\otimes\lmk
 \caD(k_R,k_L,\bbD,\bbG)\Lambda_{\lal}
\rmk$, for each $\mu=1,\ldots, n$.
This fact and Remark \ref{rem:dkk}, Remark \ref{rem:din} imply that
$\caK_l(\bbB)\subset \mnz\otimes\lmk
 \caD(k_R,k_L,\bbD,\bbG)\Lambda_{\lal}^l
\rmk$ holds for all $l\in\nan$.
Now we prove the opposite inclusion for $l$ large enough.
We use the argument in Lemma 5.2 \cite{bo}.
Let $l\in\nan$ and $0\le j\le l-1$. By the expansion, we see that
\begin{align*}
B_1^{l-1-j}B_2 B_1^j
=\sum_{\alpha=2}^{n_0}\lmk r_\alpha^j\zeij{1\alpha}+r_\alpha^{l-1-j}\zeij{\alpha1}\rmk\otimes \Lambda_\lal^{l}
+\kappa^{(j-l+1)n_0}\Lambda_\rar^{l}\otimes I_L^{(k_R,k_L)}(V_L)\Lambda^{l}_\lal
+\kappa^{(l-1-j)n_0}\Lambda_\rar^{l}\otimes I_R^{(k_R,k_L)}(V_R)\Lambda^{l}_\lal.
\end{align*}
As the numbers
$\{r_\alpha=\kappa^{\alpha-1}\}_{\alpha=2}^{n_0}$, $\{r_\alpha^{-1}=\kappa^{-\alpha+1}\}_{\alpha=2}^{n_0}$,
$\{\kappa^{n_0}\}$ $\{\kappa^{-n_0}\}$
are distinct, by the argument of Lemma 5.2 \cite{bo}, we conclude that for $l$ large enough, all of
\begin{align*}
\zeij{1\alpha}\otimes \Lambda_\lal^{l}, \quad\zeij{\alpha1}\otimes \Lambda_\lal^{l},\quad
\Lambda_\rar^l\otimes I_L^{(k_R,k_L)}(V_L)\Lambda^{l}_\lal,\quad
\Lambda_\rar^l\otimes  I_R^{(k_R,k_L)}(V_R)\Lambda^{l}_\lal
\end{align*}
belong to $\caK_l(\bbB)$.
Multiplying these terms each other, we conclude that
$\mnz\otimes\lmk
 \caD(k_R,k_L,\bbD,\bbG)\Lambda_{\lal}^l\rmk\subset \caK_l(\bbB)$ holds for all $l$ large enough.
\end{ex}

\subsection{The main result}
In this part I, we study the ground state structure of the Hamiltonians given by $\bbB\in\ClassA$.
Our Hamiltonian is the Hamiltonian $H_{\Phi_{m,\bb}}$ given by 
the subspaces $\cgb{m}$ via the formula
(\ref{hamdef}) with $h=1-G_{m,\bb}$ for $\bb\in\ClassA$.
\begin{thm}\label{thm:asymmetric}
Let $n\in\nan$ with $n\ge 2$, and $\bb\in\ClassA$
with respect to $(n_0,k_R,k_L,\lal, \bbD,\bbG,Y)$.
Then $\bbm_\bb\le 2 l_\bb(n,n_0,k_R,k_L,\lal,\bbD,\bbG,Y)<\infty$ and for any 
$m\ge \max\{2l_{\bb}(n,n_0,k_R,k_L,\lal,\bbD,\bbG,Y),\frac{\log\lmk n_0^2(k_L+1)(k_R+1)+1\rmk}{\log n}\}$,
we have the followings.
\begin{enumerate}
\item[(i)]$\ker\lmk H_{\Phi_{m,\bb}}\rmk_{[0,N-1]}=\cgb{N}$
and $\dim \ker\lmk H_{\Phi_{m,\bb}}\rmk_{[0,N-1]}=n_0^2(k_L+1)(k_R+1)$, for $N\ge m$.
\item[(ii)]  There exist $\gamma_m>0$ and $N_m\in\nan$ such that
\[
\gamma_m\lmk 1-G_{N,\bb}\rmk
\le \lmk H_{\Phi_{m,\bb}}\rmk_{[0,N-1]},\text{ for all } N\ge N_m.
\] 
\item[(iii)]$\caS_{\bbZ}(H_{\Phi_{m,\bb}})$ consists of  a unique state $\omega_{\bb,\infty}$ on $\caA_{\bbZ}$.
\item[(iv)]There exist $0<C_{\bb}$, $0<s_{\bb}<1$, $N_{\bb}\in\nan$, and states
$\omega_{R,\bb}\in \caS_{[0,\infty)}(H_{\Phi_{m,\bb}})$, and 
$\omega_{L,\bb}\in \caS_{(-\infty,-1]}(H_{\Phi_{m,\bb}})$, such that
\begin{align}
&\lv
\frac{\Tr_{[0,N-1]}\lmk G_{N,\bb} A\rmk}{\Tr_{[0,N-1]}\lmk G_{N,\bb}\rmk}-
\omega_{R,\bb}(A)
\rv\le C_{\bb} s_{\bb}^{N-l}\lV A\rV,\notag\\
&\lv
\frac{\Tr_{[0,N-1]}\lmk G_{N,\bb}\tau_{N-l}\lmk A\rmk\rmk}{\Tr_{[0,N-1]}\lmk G_{N,\bb}\rmk}-
\omega_{L,\bb}\circ\tau_{-l}(A)
\rv\le C_{\bb} s_{\bb}^{N-l}\lV A\rV,
\end{align}
for all $l\in\nan$, $A\in\caA_{[0,l-1]}$, and $N\ge \max\{l, N_{\bb}\}$,
and 
\begin{align}
\inf\left\{ \sigma\lmk\omega_{R,\bb}\vert_{\caA_{[0,l-1]}}\rmk\setminus \{0\}\mid l\in\nan\right\}>0,\notag\\
\inf\left\{ \sigma\lmk\omega_{L,\bb}\vert_{\caA_{[-l,-1]}}\rmk\setminus \{0\}\mid l\in\nan\right\}>0.
\end{align}
\item[(v)]For any $\psi\in\caS_{[0,\infty)}(H_{\Phi_{m,\bb}})$ (resp. 
$\psi\in\caS_{(-\infty,-1]}(H_{\Phi_{m,\bb}})$), there exists an  $l_{\psi}\in \nan$
such that $\lV\psi-\psi\circ\tau_{l_\psi}\rV<2$
(resp.  $\lV\psi-\psi\circ\tau_{-l_\psi}\rV<2$).
\item[(vi)] 
$\caS_{(-\infty,-1]}(H_{\Phi_{m,\bb}})$ and 
$\caS_{[0,+\infty)}(H_{\Phi_{m,\bb}})$ are convex sets and
there exist affine bijections 
\[
\Xi_L:{\mathfrak E}_{n_0(k_L+1)}\to \caS_{(-\infty,-1]}(H_{\Phi_{m,\bb}}),\qquad
\Xi_R:{\mathfrak E}_{n_0(k_R+1)}\to \caS_{[0,+\infty)}(H_{\Phi_{m,\bb}}).
\]
\item[(vii)]
There exist $C_{\bb}'>0$, and $0<s_{\bb}'<1$ such that
\begin{align}
&\lv
\psi\circ\tau_{N}(A)-\omega_{\bb,\infty}(A)
\rv\le C_{\bb}'{s_{\bb}'}^N\lV A\rV,\notag\\
&(resp. \lv
\psi\circ\tau_{-N}(A)-\omega_{\bb,\infty}(A)
\rv\le C_{\bb}'{s_{\bb}'}^N\lV A\rV,)
\end{align}
for all $A\in \caA_{[0,\infty)}$ and $\psi\in\caS_{[0,\infty)}(H_{\Phi_{m,\bb}})$
(resp.
for all $A\in \caA_{(-\infty,-1]}$ and
$\psi\in\caS_{(-\infty,-1]}(H_{\Phi_{m,\bb}})$), and $N\in\nan$.
\item[(viii)] Any element in $\caS_{(-\infty,-1]}(H_{\Phi_{m,\bb}})$ or $\caS_{[0,+\infty)}(H_{\Phi_{m,\bb}})$ is a factor
state.
\item[(ix)] $\omega_{\bb,\infty}$ satisfies the exponential decay of correlations. 
\end{enumerate}
\end{thm}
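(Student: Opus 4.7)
The plan is to work through parts (i)--(ix) in an order matching their logical dependencies, the shared backbone being the block structure of the transfer operator $T_{\bb}$ induced by the algebra $\caD(k_R,k_L,\bbD,\bbG)$, the invariance relation (\ref{eq:adl}), and the reachability condition $\caK_l(\bbB)=\mnz\otimes\lmk\caD(k_R,k_L,\bbD,\bbG)\lmk\Lambda_{\lal}(1+Y)\rmk^l\rmk$ for all $l\ge l_\bb$. First I would settle (i) by proving Property~(I,$2l_\bb$) for $\{\cgb{N}\}$. The inclusion $\cgb{N}\subset\bigcap_x(\cc^n)^{\otimes x}\otimes \cgb{2l_\bb}\otimes(\cc^n)^{\otimes N-2l_\bb-x}$ is immediate from the multiplicative form of $\Gamma^{(R)}_{N,\bb}$. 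For the reverse inclusion, a vector in the intersection admits at every cut $x$ a representation $\Gamma^{(R)}_{N,\bb}(X_x)$; using Remark~\ref{rem:din} to shift each $X_x$ to a common centered form and then invoking reachability at length $l_\bb$ on each side, one assembles a single $X\in\mnzk$ representing the vector globally. The dimension $n_0^2(k_L+1)(k_R+1)$ follows from $\dim\lmk\mnz\otimes\caD(k_R,k_L,\bbD,\bbG)\rmk=n_0^2(k_L+1)(k_R+1)$ combined with the injectivity of $\Gamma^{(R)}_{N,\bb}$ on that subspace for $N\ge\log(n_0^2(k_L+1)(k_R+1)+1)/\log n$.

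Once (i) is in hand, (ii) follows via the martingale method of \cite{Nachtergaele:1996vc}: it suffices to bound $\nn{G_{N+1,\bb}-G_{N,\bb}G_{[N-m+1,N],\bb}G_{N,\bb}}$ uniformly below $1/\sqrt 2$, which in turn reduces to controlling the secondary spectrum of the principal block of $T_\bb$. Parts (iii), (vii), (ix) then come from a complete spectral analysis of $T_\bb$. Writing MPS expectation values as traces of compositions $T_{\bb,A_1}\cdots T_{\bb,A_N}$ and exploiting the upper-triangular structure of $\Lambda_\lal(1+Y)$, $\bbD$ and $\bbG$, the map $T_\bb$ is block-triangular on $\mnzk$. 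The principal $(0,0)$-block acts on $\mnz\otimes\cc \eij{00}$ as the transfer operator of an element of $\Prim(n,n_0)$; Perron--Frobenius together with $r_{T_\bb}=1$ yields a unique positive fixed point, from which one extracts the bulk state $\omega_{\bb,\infty}$, the exponential decay of correlations (ix), and the bulk convergence of edge states (vii). The remaining diagonal blocks are damped by factors $\lv\lambda_i\lambda_j\rv<1$ for $(i,j)\ne(0,0)$.

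For (iv)--(vi), I would analyse the density $\Tr_{[0,N-1]}(G_{N,\bb}\,\cdot)/\Tr_{[0,N-1]}(G_{N,\bb})$ in the MPS parametrisation and show that, up to an error exponentially small in the separating distance, it factorises into a left-edge contribution localised near site $0$, a bulk contribution governed by $\omega_{\bb,\infty}$, and a right-edge contribution localised near site $N-1$. The edge states $\omega_{R,\bb}$ and $\omega_{L,\bb}$ are read off from Perron-type fixed points of the edge columns of $T_\bb$, yielding the estimates in (iv); the uniform lower bound on the nonzero spectrum follows because these restrictions are supported on a fixed-dimensional auxiliary system and faithful there. The affine bijection $\Xi_R$ (symmetrically $\Xi_L$) sends a density matrix $\rho$ on the $n_0(k_R+1)$-dimensional edge auxiliary space $\mnz\otimes\spa\{\eij{0,0},\eij{-a,0}:a=1,\ldots,k_R\}$ to the MPS state with right boundary $\rho$; surjectivity comes from the ergodic decomposition of $\caS_{[0,\infty)}(H_{\Phi_{m,\bb}})$ and injectivity from reachability. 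Part (v) is then an immediate consequence of (vii), since $\psi\circ\tau_{\pm l_\psi}$ becomes weak-$*$-close to $\omega_{\bb,\infty}$ while $\psi$ itself already agrees with $\omega_{\bb,\infty}$ away from the edge. Finally (viii) follows because $\omega_{\bb,\infty}$ is a factor by exponential decay and any edge state agrees with it on all sufficiently distant tail subalgebras.

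The main obstacle I expect is the delicate combinatorial and algebraic control needed to establish reachability---and hence the intersection property---in the presence of $\caD(k_R,k_L,\bbD,\bbG)$ twisted by $\Lambda_\lal(1+Y)$: the definition of $\ClassA$ is tuned precisely so that $\caK_l(\bbB)$ lands inside the correctly shifted algebra, and surjectivity at large $l$ must be extracted by an argument modelled on Lemma~5.2 of \cite{bo} but adapted to the off-diagonal blocks carried by $\bbD$ and $\bbG$. A second nontrivial point is organising the block-triangular decomposition of $T_\bb$ once and for all, so that the decay rates appearing throughout (iv)--(vii) share uniform constants depending only on $(n_0,k_R,k_L,\lal,\bbD,\bbG,Y)$, and so that the edge-bulk factorisation in (iv) genuinely isolates the finite-dimensional edge data on which $\Xi_R$ and $\Xi_L$ are defined.
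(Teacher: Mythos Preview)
Your overall architecture is close to the paper's: block-triangular analysis of $T_{\bb}$, Nachtergaele's martingale method for the gap, and a CP-map construction for the edge parametrisations $\Xi_L,\Xi_R$. The paper organises this through an abstract set of conditions (its Conditions~2--4 on a pentad $(n,k,p,q,\vv)$, verified for $p=\hpu$, $q=\hpd$) rather than an ad hoc block decomposition, but that is a matter of packaging. Two points, however, are genuine gaps.

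First, your derivation of (v) from (vii) does not work. Property (vii) tells you that $\psi\circ\tau_N$ converges in norm to $\omega_{\bb,\infty}\vert_{\caA_{[0,\infty)}}$, and that $\psi$ agrees with $\omega_{\bb,\infty}$ on $\caA_{[L,\infty)}$ for $L$ large. But agreement on a tail subalgebra says nothing about $\lV\psi-\psi\circ\tau_l\rV$ on the full algebra: two states that coincide on $\caA_{[L,\infty)}$ may still be mutually singular because of what happens on $\caA_{[0,L-1]}$. The paper's argument (its Lemma~3.17) is not a soft consequence of bulk convergence: it uses the affine bijection $\Xi_L$ to write $\psi=\Xi_L(\sigma_L)$ and $\psi\circ\tau_{-N}=\Xi_L(\kappa_N)$ for an explicit density $\hat\kappa_N=\rho_{\bb}^{1/2}T_{\bb}^N(y_{\bb}^{1/2}\hat\sigma_L y_{\bb}^{1/2})\rho_{\bb}^{1/2}$, reduces to non-orthogonality of $s(\sigma_L)$ and $s(\kappa_N)$ via a support lemma, and then proves by contradiction, using a Vandermonde-type independence lemma for $\{N^a\lambda_i^N\}$, that $\langle\eta,\rho_{\bb}^{1/2}\caK_N(\bb)y_{\bb}^{1/2}\eta\rangle$ cannot vanish for all $N$ unless $\eta=0$. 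This algebraic input has no analogue in your sketch.

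Second, your handling of (i) is off in two places. The claim that a vector in the intersection ``admits at every cut $x$ a representation $\Gamma^{(R)}_{N,\bb}(X_x)$'' is false: membership in $(\cc^n)^{\otimes x}\otimes\cgb{2l_\bb}\otimes(\cc^n)^{\otimes(N-2l_\bb-x)}$ only gives a decomposition $\sum_{\mu,\nu}\widehat{\psi_\mu}\otimes\Gamma_{2l_\bb,\bb}(X_{\mu,\nu})\otimes\widehat{\psi_\nu}$, not a single global auxiliary matrix. The paper's intersection argument (its Lemma~2.11) instead compares two \emph{adjacent} single-step decompositions and patches them using the invertible element $X_{m_2}=\unit\otimes(\Lambda_{\lal}(1+Y))^{l_\bb}\in\caK_{l_\bb}(\bb)$ together with the kernel inclusion $(X_{m_2}^{-1})^*\ker\Gamma^{(R)}_{N,\bb}\subset\ker\Gamma^{(R)}_{N+l_\bb,\bb}$. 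Separately, the injectivity of $\Gamma^{(R)}_{N,\bb}$ on $\hpu(\mnzk)\hpd$ holds for $N\ge l_\bb$, not for $N\ge\log(n_0^2(k_L+1)(k_R+1)+1)/\log n$; the latter threshold only ensures $G_{m,\bb}\neq\unit$ so that the martingale constant $\gamma_{l,m}$ is strictly positive.
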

This article is organized as follows:
In Section \ref{general}, we introduce a sufficient condition
for an $n$-tuple 
$\vv$ to have a gapped parent Hamiltonian, in rather general setting.
Applying the results of Section \ref{general},
we study the properties of $H_{\Phi_{m,\bb}}$ for $\bb\in\ClassA$ in Section \ref{sec:gss}.
\section{The intersection property and the spectral gap}\label{general}
In this section, we introduce a sufficient condition
for the MPS Hamiltonians to be gapped.
We first introduce a set of conditions on sequences of subspaces.
\begin{defn}[\it Condition 1]
Let  $n\in\nan$.
Let 
$\caD_l$ be a nonzero subspace of 
$\bigotimes_{i=0}^{l-1}\cc^n$ given for each $l\in\nan$.
We denote by $G_l$ the orthogonal projection
 onto $\caD_l$ in $\bigotimes_{i=0}^{l-1}\cc^n$ for each $l\in\nan$.
 We say the sequence of subspaces $\{\caD_l\}_{l\in\nan}$
 satisfies the {\it Condition 1} if  
 there exists $m_0,l_0\in \nan$ such that
\begin{enumerate}
\item[(i)] 
$\{{\caD}_l\}_{l\in\nan}$ satisfies Property~(I,$m_0$), and
 \item[(ii)]
for any $l_0\le l$, there exists $0<\varepsilon_l<\frac{1}{\sqrt l}$
such that
\[
\lV\lmk 1_{[0,N-l]}\otimes G_l\rmk\lmk G_N\otimes 1_{\{N\}}-G_{N+1}\rmk\rV
<\varepsilon_l,
\]
for all $N\ge 2l$.
\end{enumerate}
 We say $\{\caD_l\}_{l\in\nan}$
satisfies the {\it Condition 1} for $(m_0,l_0)$
when we would like to specify the numbers.
\end{defn}

The following theorem is a special version of Theorem 3 in ~\cite{Nachtergaele:1996vc}.
\begin{thm}[\cite{Nachtergaele:1996vc}]\label{bruno}Let  $\{{\caD}_l\}_{l\in\nan}$  be a sequence of nonzero vector spaces such that 
$\caD_l\subset \bigotimes_{i=0}^{l-1}\cc^n$, and $G_l$ the orthogonal projections onto $\caD_l$
in $\otimes_{i=0}^{l-1}\cc^n$. 
For $m,N\in\nan$ with $m\le N$, we set
\[
H_N^m:=\sum_{0\le x\le N-m}\tau_x(1-G_m).
\]
Suppose that $\{\caD_l\}_l$ satisfies the {\it Condition 1} for $(m_0,l_0)$.
Then for all $m_0\le m$, 
\begin{enumerate}
\item[(1)]$\ker H_N^m=\caD_N$, for all $N\ge m$, and
\item[(2)]
\[
\frac{\gamma_{l,m}}{l+2}\lmk 1-\varepsilon_l\sqrt l\rmk^2\lmk 1-G_N\rmk
\le
H_N^m,
\]
for all $l$ with $\max\{l_0,m\}\le l$, 
and $N$ with $2l+1\le N$,
where
\[
\gamma_{l,m}=\min\left\{\drr\lmk\sigma(H_l^m)\setminus \{0\},0\rmk,\drr\lmk\sigma(H_{2l}^m)\setminus \{0\},0\rmk\right\}.
\]
\end{enumerate}
\end{thm}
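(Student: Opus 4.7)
The proof separates into the two assertions. For assertion~(1), since Property~(I,$m_0$) implies Property~(I,$m$) for every $m\ge m_0$, one has $\caD_N=\bigcap_{x=0}^{N-m}(\cc^n)^{\otimes x}\otimes\caD_m\otimes(\cc^n)^{\otimes N-m-x}$ whenever $N\ge m\ge m_0$. Each summand $\tau_x(1-G_m)$ of $H_N^m$ is the orthogonal projection onto the orthogonal complement of $(\cc^n)^{\otimes x}\otimes\caD_m\otimes(\cc^n)^{\otimes N-m-x}$ inside $(\cc^n)^{\otimes N}$, hence a positive operator whose kernel is exactly that subspace. Because $H_N^m$ is a sum of positive operators, its kernel equals the intersection of the kernels of the summands, which is $\caD_N$.

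For assertion~(2), my plan is to follow the martingale method of Nachtergaele. First, introduce for each $n$ with $m\le n\le N$ the auxiliary projection $\tilde G_n$ on $(\cc^n)^{\otimes N}$ that acts as $G_n$ on the first $n$ tensor factors and as the identity on the remaining $N-n$; by Property~(I,$m$) these form a decreasing family $\tilde G_N\le\tilde G_{N-1}\le\cdots\le\tilde G_m$. Then I would use the telescoping identity
\[
1-\tilde G_N=(1-\tilde G_m)+\sum_{n=m}^{N-1}(\tilde G_n-\tilde G_{n+1})
\]
and estimate $1-\tilde G_N$ via a Cauchy--Schwarz bound on the square of the telescopic sum. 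The crucial input is Condition~1(ii), which, rewritten in terms of the $\tilde G_n$, asserts that the norm of the product of the projection acting as $G_l$ on the last $l$ sites of $[0,n]$ with $\tilde G_n-\tilde G_{n+1}$ is bounded by $\varepsilon_l$. This controls the off-diagonal cross terms in the expansion and yields an estimate in which the deviation from a clean orthogonal decomposition of the telescopic increments is of order $\varepsilon_l\sqrt l$, producing the factor $(1-\varepsilon_l\sqrt l)^2$ in the final bound.

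To convert this into a lower bound on $H_N^m$, I would invoke assertion~(1) at scales $l$ and $2l$: since $\ker H_l^m=\caD_l$ and $\ker H_{2l}^m=\caD_{2l}$, finite-dimensional spectral theory gives $H_l^m\ge\gamma_{l,m}(1-G_l)$ and $H_{2l}^m\ge\gamma_{l,m}(1-G_{2l})$, and the same inequalities hold for every translate $\tau_x(H_l^m)$ and $\tau_x(H_{2l}^m)$ that fits inside $[0,N-1]$. Summing these shifted copies and noting that each elementary interaction $\tau_x(1-G_m)$ is reused only a bounded number of times produces the combinatorial counting factor $1/(l+2)$, and combining the local gap $\gamma_{l,m}$ with the martingale correction $(1-\varepsilon_l\sqrt l)^2$ derived above yields the claimed inequality. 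The main obstacle is the bookkeeping required to realize both the $1/(l+2)$ counting constant and the correction factor $(1-\varepsilon_l\sqrt l)^2$ simultaneously, i.e., coupling the shifted local-gap estimates at scales $l$ and $2l$ with the Cauchy--Schwarz control of the telescopic expansion in just the right proportions; this is the delicate computation at the heart of Nachtergaele's argument.
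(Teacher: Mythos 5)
The paper does not prove this theorem at all: it is imported verbatim as a special case of Theorem~3 of \cite{Nachtergaele:1996vc}, so there is no in-paper argument to compare against. Your part~(1) is complete and correct (kernel of a sum of positive operators equals the intersection of the kernels, then Property~(I,$m$)), and your part~(2) is a faithful outline of exactly the martingale method used in the cited reference: the decreasing family $\tilde G_n$, the telescoping increments $E_n=\tilde G_n-\tilde G_{n+1}$, Condition~1(ii) as the bound $\lV G_{\mathrm{loc}}E_n\rV<\varepsilon_l$, the local gap $H_l^m\ge\gamma_{l,m}(1-G_l)$ from finite dimensionality, and the overlap count producing $1/(l+2)$. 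Two small remarks: since the $\tilde G_n$ are decreasing projections the increments $E_n$ are already mutually orthogonal, so no Cauchy--Schwarz is needed for the telescoping identity itself --- it enters only when you estimate $\lV E_n\psi\rV^2=\braket{\psi}{E_n\psi}$ against $\lV(1-G_{\mathrm{loc}})\psi\rV$ plus the $\varepsilon_l$ error; and the final chain of inequalities that turns these ingredients into the precise constant $\frac{\gamma_{l,m}}{l+2}(1-\varepsilon_l\sqrt l)^2$ is only described, not executed, which you acknowledge. That residual computation is standard and is carried out in \cite{Nachtergaele:1996vc}, so the proposal is the same approach the paper relies on.
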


In this section, we give a criterion for this in the MPS formalism.
We introduce a set of conditions.
Recall the definition (\ref{KB})
of $\caK_l(\vv)$.
\begin{defn}[\it Condtion 2]
For $n,k\in\nan$, projections $p,q\in\caP(\mk)$, and  $n$-tuple of
$k\times k$ matrices $\vv\in\mk^{\times n}$,
we say that the pentad $(n,k,p,q,\vv)$ satisfies 
{\it Condition 2} if the following conditions are satisfied.
\begin{description}
\item[(i)]
$pq=qp\neq 0$.
\item[(ii)]
$v_{\mu}p=pv_{\mu}p$ for all $\mu=1,\ldots,n$.
\item[(iii)]
$qv_{\mu}=qv_{\mu}q$ for all $\mu=1,\ldots,n$.
\item[(iv)]
There exist constants $c_{pq}>0$, $0<s_{pq}<1$, 
a positive linear functional $\varphi_{pq}$ on $\mk$
and a positive element $e_{pq}\in \mk$ with $s(\varphi_{pq})=s(e_{pq})=pq$,
such that
\[
\lV T^N_{\vv_{pq}}\lmk A\rmk-\varphi_{pq}(A)e_{pq}\rV\le
c_{pq}s_{pq}^N\lV A\rV,\quad
\text{for all} \; A\in \mk,\text{and}\;N\in\nan.
\]
\item[(v)]
There exist constants $c_{\bar q}>0$, $0<s_{\bar q}<1$
such that
\[
\lV T_{\vv_{\bar q}}^N(A)\rV\le c_{\bar q}s_{\bar q}^N\lV A\rV,\quad
\text{for all }\; A\in \mk \;\text{and}\;
N\in\nan.
\]
\item[(vi)]
There exist constants $c_{\bar p}>0$, $0<s_{\bar p}<1$
such that
\[
\lV T_{\vv_{\bar p}}^N(A)\rV\le c_{\bar p}s_{\bar p}^N\lV A\rV,\quad
\text{for all }\; A\in \mk \;\text{and}\;
N\in\nan.
\]
\item[(vii)]
For any $\eta\in q\cc^k$ with $pq\eta\neq 0$,
there exists $l_{\eta}\in\nan$ such that $\lmk \caK_{l_{\eta}}(\vv_{q})\rmk ^*\eta=q\cc^k$.
\item[(viii)]
For any $\xi\in p\cc^k$ with $pq\xi\neq 0$,
there exists $l_{\xi}'\in\nan$ such that 
$\caK_{l_{\xi}'}(\vv_{p})\xi=p\cc^k$.
\end{description}
\end{defn}
\begin{defn}[\it Condition 3]
Let $n,k\in\nan$, $p,q\in\caP(\mk)$ and $\vv\in\mk^{\times n}$.
We  say that the pentad $(n,k,p,q,\vv)$ satisfies
the Condition 3 for $m_1\in\nan$ if 
for all $N\ge m_1$,
$\dim\caK_{N}(\vv)=(\rank p)(\rank q)$.
\end{defn}
\begin{defn}[\it Condition 4]
Let $n,k\in\nan$,  and $\vv\in\mk^{\times n}$.
For $m_2,m_3\in\nan$, we  say that the triple $(n,k,\vv)$ satisfies
the Condition 4 for $(m_2,m_3)$, if 
there exists an invertible $X_{m_2}\in\caK_{m_2}(\vv)$
such that $X_{m_2}^{-1}\caK_{N+m_2}(\vv)\subset \caK_{N}(\vv)$
for all $N\ge m_3$.
When we would like to specify $X_{m_2}$,
we say the triple $(n,k,\vv)$ satisfies
the Condition 4 for $(m_2,m_3)$ 
with respect to $X_{m_2}$.
 \end{defn}
Here is the main Proposition of this section.
\begin{prop}\label{prop:maingen}
Let $n,k,m_1,m_2,m_3\in\nan$, $p,q\in\caP(\mk)$, and 
 $\vv=(v_{\mu})_{\mu=1}^n\in \mk^{\times n}$.
Suppose that
the pentad $(n,k,p,q,\vv)$ satisfies the {\it Condition 2}, and
the {\it  Condition 3} for $m_1$.
Furthermore, assume that the triple $(n,k,\vv)$
satisfies the {\it Condition 4} for $(m_2,m_3)$.
Then $
M_{\vv,p,q}<\infty$
and $\{\caG_{l,\vv}\}_{l\in\nan}$
satisfies 
the {\it Condition 1}.
Here, 
$m_0\in\nan$ of  {\it Condition 1}
can be taken $m_0=m_2+m_3$.
\end{prop}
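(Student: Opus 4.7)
The plan is to establish the three conclusions---finiteness of $M_{\vv,p,q}$, the intersection property Condition~1~(i), and the norm estimate Condition~1~(ii)---in that order, exploiting throughout the block decomposition of $\cc^k = pq\cc^k \oplus p\bar{q}\cc^k \oplus \bar{p}q\cc^k \oplus \bar{p}\bar{q}\cc^k$ forced by Conditions~2(ii) and 2(iii). In this decomposition each $v_\mu$ leaves $p\cc^k$ invariant (so it induces $\vv_p$ on $p\cc^k$) and leaves $\bar{q}\cc^k$ invariant (so it induces $\vv_q$ on the quotient $\cc^k/\bar{q}\cc^k\cong q\cc^k$). Every product $\widehat{v_{\mu^{(N)}}}\in\kl{N}(\vv)$ inherits the same zero-block pattern, and in particular the diagonal block $pq\,\widehat{v_{\mu^{(N)}}}\,pq$ is a product of $\vv_{pq}$'s.

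For $M_{\vv,p,q}<\infty$, I would use the Hilbert--Schmidt pairing to identify $\ker\gv{N}\cap p\mk q$ with the HS-orthogonal complement of $p\,\kl{N}(\vv)\,q$ inside $p\mk q$; injectivity of $\gv{N}$ on $p\mk q$ is therefore equivalent to $p\,\kl{N}(\vv)\,q = p\mk q$. Condition~3 gives $\dim\kl{N}(\vv) = (\rank p)(\rank q) = \dim p\mk q$ for $N\ge m_1$, so it suffices to prove surjectivity of $Y\mapsto pYq$ from $\kl{N}(\vv)$ to $p\mk q$. The $(pq,pq)$-block is spanned for large $N$ by the primitive behaviour of $T_{\vv_{pq}}$ from 2(iv), which forces $\kl{N}(\vv_{pq}) = pq\mk pq$ eventually. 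The other three blocks are then filled in by composing this primitive span with the cyclicity assertions (vii) and (viii)---which force $\vv_p$-iterates from a $pq$-cyclic $\xi\in p\cc^k$ to exhaust $p\cc^k$, and dually for $\vv_q^*$---while the exponential decay on $\vv_{\bar q}$ and $\vv_{\bar p}$ from (v) and (vi) ensures that the $(\bar p)$- and $(\bar q)$-contributions do not obstruct the spanning.

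For the intersection property (I,$m_0$) with $m_0=m_2+m_3$, I would induct on $N\ge m_0$. The easy inclusion $\cgv{N}\subset\bigcap_x(\cc^n)^{\otimes x}\otimes\cgv{m_0}\otimes(\cc^n)^{\otimes N-m_0-x}$ follows by splitting $\widehat{v_{\mu^{(N)}}}$ at position $x$ and cycling the trace. For the reverse, take $\Psi$ in the intersection; the induction hypothesis yields $X_{\mu_{N-1}}\in\mk$ with $\Psi_{\mu_{N-1}}=\gv{N-1}(X_{\mu_{N-1}})$, and the rightmost window $x=N-m_0$ yields $Y_{\nu^{(N-m_0)}}\in\mk$ representing $\Psi$ from the right. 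Producing a single $X\in\mk$ with $\gv{N}(X)=\Psi$ reduces, after cycling the trace, to the system $\gv{m_0}\lmk\widehat{v_{\nu^{(N-m_0)}}}^*X\rmk=\gv{m_0}(Y_{\nu^{(N-m_0)}})$ for every $\nu^{(N-m_0)}$, i.e.\ $\widehat{v_{\nu^{(N-m_0)}}}^*X\equiv Y_{\nu^{(N-m_0)}}\pmod{\kl{m_0}(\vv)^{\perp}}$. Condition~4 is decisive here: the identity $\kl{N+m_2}(\vv)=X_{m_2}\kl{N}(\vv)$ for $N\ge m_3$ (immediate from $X_{m_2}\in\kl{m_2}(\vv)$ being invertible) dualises to $X_{m_2}^*\ker\gv{N+m_2}=\ker\gv{N}$, and this invertible correspondence between kernels allows the local data $\{X_\mu\}$ and $\{Y_{\nu^{(N-m_0)}}\}$ to be simultaneously lifted to a single $X$.

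For the norm estimate Condition~1~(ii), I would write $G_{l,\vv}$ through the pseudo-inverse $\gv{l}\lmk(\gv{l})^*\gv{l}\rmk^{-1}(\gv{l})^*$ on $\ker(\gv{l})^{\perp}$ and reduce $\lV(1_{[0,N-l]}\otimes G_{l,\vv})(G_{N,\vv}\otimes 1_{\{N\}}-G_{N+1,\vv})\rV$ to an estimate on the $l$-fold iterate of $T_{\vv}$. The block decomposition splits $T_{\vv}^l$ into a $(pq,pq)$-diagonal part converging exponentially (rate $s_{pq}$) to the rank-one limit $\varphi_{pq}\otimes e_{pq}$ by 2(iv), plus ``$\bar q$'' and ``$\bar p$'' parts which decay exponentially (rates $s_{\bar q},s_{\bar p}$) by (v)--(vi); combining these yields a bound $\varepsilon_l\le Cs^l$, far stronger than the required $l^{-1/2}$. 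The main obstacle I expect is the intersection-property step: the lifting of $\{X_\mu\}$ and $\{Y_{\nu^{(N-m_0)}}\}$ to a single global $X$ requires genuine use of the invertible $X_{m_2}$ from Condition~4 and delicate compatibility bookkeeping between the two overlapping windows, whereas the other two claims are essentially spectral-theoretic once the block structure from 2(ii)--(iii) is in place.
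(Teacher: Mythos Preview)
Your three-part plan is correct, and your identification of Condition~4 as the engine for the intersection property matches the paper. But the mechanisms you propose for the first and third parts diverge from the paper's, and for~(ii) your sketch has a real gap.

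For $M_{\vv,p,q}<\infty$, the paper does not argue $p\,\kl{N}(\vv)\,q=p\mk q$ directly. Instead it first establishes (Lemma~\ref{cpmain}, from Condition~2 alone, essentially via the block analysis you describe) that $T_\vv$ itself has Spectral Property~II with rank-one Perron projector $\varphi_\vv(\cdot)e_\vv$ satisfying $s(e_\vv)=p$ and $s(\varphi_\vv)=q$. The form $\langle X,Y\rangle_\vv:=\varphi_\vv(X^*e_\vv Y)$ is then a genuine inner product on $p\mk q$, and the identity $\langle\gv{N}(X),\gv{N}(Y)\rangle=\langle X,Y\rangle_\vv+O(E_\vv(N))$ gives injectivity once $E_\vv(N)<1$. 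Your route might be completable, but ``filling in the other three blocks'' via (vii)--(viii) is not automatic: the indices $l_\eta,l_\xi'$ depend on the vector, and uniformising them needs an argument the paper avoids entirely. The paper's packaging of your block decomposition into a single Spectral Property~II statement for $T_\vv$ is what makes every later estimate clean.

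For the intersection property your kernel correspondence is right, and the paper does essentially your lift (Lemma~\ref{lem:inter}), phrased as the one-step identity $\cgv{N}=(\cgv{N-1}\otimes\cc^n)\cap(\cc^n\otimes\cgv{N-1})$ for $N\ge m_2+m_3+1$.

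For Condition~1~(ii) your pseudo-inverse idea does not obviously work: $(\gv{l})^*\gv{l}$ has range $\kl{l}(\vv)\subset\mk$ and no clean inverse relation to $T_\vv^l$. The paper instead uses the \emph{bijectivity} of $\gv{m}|_{p\mk q}$ (which needs Condition~3, absent from your sketch for this part) to write $\Phi\in\cgv{l+m}\otimes(\cc^n)^{\otimes r}$ uniquely via a family $\{\Phi_{\mu^{(r)}}\}\subset p\mk q$, and forms an aggregate $V^{\mathfrak Q^\Phi}=\sum_{\mu^{(r)}}\Phi_{\mu^{(r)}}\rho_\vv\widehat{v_{\mu^{(r)}}}y_\vv$. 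The step you are missing is that the hypothesis $\Phi\perp\cgv{l+m+r}$, tested against $\gv{l+m+r}(X)$ for arbitrary $X\in p\mk q$, forces $\langle V^{\mathfrak Q^\Phi},V^{\mathfrak Q^\Phi}\rangle_\vv^{1/2}\lesssim E_\vv(m)\|\Phi\|$; the overlap $|\langle\Phi,\Psi\rangle|$ then reduces to $|\langle V^{\mathfrak Q^\Phi},V_{\mathfrak Q_\Psi}\rangle_\vv|+O(E_\vv(m))$. Your block-decay heuristic explains why $E_\vv(m)$ is small, but provides no mechanism to convert the orthogonality $\Phi\perp\cgv{l+m+r}$ into a smallness statement---and that conversion is the whole content of the overlap lemma.
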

 As we mentioned in the introduction, the spectral property of $T_{\vv}$ is important for the analysis of
 the spectral gap. In Subsection \ref{cpgen}, we consider the spectral property of
 $T_{\vv}$ when $(n,k,p,q,\vv)$ satisfies {\it Condition 2}.
The intersection property
is considered in subsection \ref{intergen}.
The {\it Condition 2} and {\it Condition 3}
imply the bijectivity of $\gv{N}\vert_{p\mk q}$,
which is proven in subsection \ref{bijecgen}.
The important input to show the spectral gap is an estimate of
overlaps of spectral projections. This is done in
subsection \ref{gapgen}.


\subsection{Spectral analysis of CP maps}\label{cpgen}
\begin{defn}
Let $T$ be a CP map on $\mk$.
Let $0<s<1$,
$\varphi$ a state on $\mk$, and $e\in\mk_+$.
We say $T$ satisfies the Spectral Property II
with respect to $(s,e,\varphi)$ if 
\begin{description}
\item[(1)]
$r_{T}=1$ and $1$ is a non-degenerate eigenvalue of $T$,
\item[(2)]
$\sigma(T)\setminus \{1\}\subset \caB_{s}(0)$,
\item[(3)]
$e$  is a $T$-invariant positive  element and $e=P_{\{1\}}^{T}(1 )$,
\item[(4)]
$\varphi$ is $T$-invariant,
\item[(5)]
$P_{\{1\}}^{T}(\cdot )=\varphi(\cdot)e$,
\item[(6)]
for any $s\le s'<1$, we have
\[
\lV T^N(A)-\varphi(A)e\rV
\le
(s')^{N+1}\sup_{|z|=s'}\lV(z-T)^{-1}\rV\lV A\rV,\;\;\text{for all}\;\;
N\in \nan ,A\in\mk.
\]
\end{description}
\end{defn}
\begin{rem}
The conditions above are redundant.
For example,
(6) follows from (1)-(5). 
However, we leave them for the convenience.
\end{rem}
In this subsection we prove the following Lemma.
\begin{lem}\label{cpmain}
Let $n,k\in\nan$, $p,q\in\caP(\mk)$, and  $\vv\in\mk^{\times n}$.
Assume that the pentad $(n,k,p,q,\vv)$ satisfies 
{\it Condition 2}.
Then,
there exist a constant $0<s_{\vv}<1$, a state $\varphi_{\vv}$, 
and a positive element $e_{\vv}\in\mk_{+}$,
such that 
\begin{description}
\item[(1)]$T_{\vv}$ satisfies the Spectral Property II
with respect to $(s_{\vv},e_{\vv},\varphi_{\vv})$, and
\item[(2)]
$s(e_{\vv})=p,\quad s(\varphi_{_\vv})=q$.
\end{description}
\end{lem}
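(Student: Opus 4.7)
The plan is to exploit the commutativity $pq=qp$ (Condition (i)) to decompose $\cc^k$ into four blocks via the mutually orthogonal projections $e_1:=pq$, $e_2:=p\bar q$, $e_3:=\bar pq$, $e_4:=\bar p\bar q$ (summing to $\unit$), and then analyze the resulting sixteen-block decomposition of $\mk$. Conditions (ii), (iii) read $\bar pv_\mu p=0$ and $qv_\mu\bar q=0$ respectively, and together they force seven of the sixteen components $(v_\mu)_{ij}:=e_iv_\mu e_j$ to vanish; the nine potentially nonzero blocks are indexed by $(i,j)\in\{(1,1),(1,3),(2,1),(2,2),(2,3),(2,4),(3,3),(4,3),(4,4)\}$. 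The reachability relation ``$(v_\mu)_{ik}\neq 0$ for some $\mu$'' is a DAG on $\{1,2,3,4\}$ with singleton strongly connected components, so with an appropriate topological ordering $T_\vv$ acts block triangularly on the corresponding sixteen-block decomposition of $\mk$; in particular $T_\vv$ preserves $p\mk p$. Consequently $\sigma(T_\vv)=\bigcup_{(i,j)}\sigma(T_{(i,j)})$, where $T_{(i,j)}(X):=\sum_\mu v_\mu^{ii}X(v_\mu^{jj})^*$ on $e_i\mk e_j$ with $v_\mu^{ii}:=e_iv_\mu e_i$.

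Next I would bound each diagonal piece. By Condition (iv), $T_{(1,1)}=T_{\vv_{pq}}$ has spectrum $\{1\}\cup\caB_{s_{pq}}(0)$ with $1$ a simple eigenvalue, eigenvector $e_{pq}$ and left eigenfunctional $\varphi_{pq}$. For $i\neq 1$, $T_{(i,i)}$ coincides with a diagonal piece of the analogous block-triangular decomposition of $T_{\vv_{\bar q}}|_{\bar q\mk\bar q}$ (when $i=2$, giving $T_{\vv_{p\bar q}}$), of $T_{\vv_{\bar p}}|_{\bar p\mk\bar p}$ (when $i=3$), or of both (when $i=4$), so by Conditions (v), (vi) its spectral radius is bounded by $\max(s_{\bar p},s_{\bar q})<1$. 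The mixed pieces satisfy, via operator Cauchy--Schwarz,
\[
\lV T_{(i,j)}^N(X)\rV^2\le\lV T_{(i,i)}^N(\unit)\rV\cdot\lV T_{(j,j)}^N(X^*X)\rV,
\]
which yields $r(T_{(i,j)})\le\sqrt{r(T_{(i,i)})\,r(T_{(j,j)})}<1$ whenever $(i,j)\neq(1,1)$. Combining, $\sigma(T_\vv)\setminus\{1\}\subset\caB_{s_\vv}(0)$ for some $s_\vv<1$ and $1$ is a simple eigenvalue. Set $e_\vv:=\lim_N T_\vv^N(\unit)$, a positive limit of positives, which is a positive multiple of the unique $1$-eigenvector of $T_\vv$; analogously a positive $1$-eigenvector $\Phi_\vv=\lim_N(T_\vv^*)^N(\unit)$ of the adjoint CP map $T_\vv^*$ is produced, and I let $\varphi_\vv(\cdot):=\Tr(\Phi_\vv\,\cdot)/\Tr(\Phi_\vv)$, a state on $\mk$.

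The hard part is the support identification $s(e_\vv)=p$ and $s(\varphi_\vv)=q$, where Conditions (vii), (viii) enter essentially. Since $T_\vv$ preserves $p\mk p$ and the $(1,1)$-block of $e_\vv$ is a positive multiple of $e_{pq}$, we have $pq\le s(e_\vv)\le p$. Supposing $p_0:=s(e_\vv)$ satisfies $p_0\neq p$ and setting $\caN:=(p-p_0)\cc^k\neq\{0\}$, for $\xi\in\caN$ the identity $T_\vv(e_\vv)=e_\vv$ gives
\[
0=\xi^*e_\vv\xi=\xi^*T_\vv(e_\vv)\xi=\sum_\mu\lV e_\vv^{1/2}v_\mu^*\xi\rV^2,
\]
so $v_\mu^*\xi\in\Ker e_\vv$ for each $\mu$; since $\Ker e_\vv\supset\bar p\cc^k$, projecting by $p$ yields $(pv_\mu^*p)\xi\in\caN$. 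Hence $\caN$ is $\{pv_\mu^*p\}_\mu$-invariant, so its orthogonal complement $p_0\cc^k$ in $p\cc^k$ is invariant under $\caK_N(\vv_p)$ for every $N$. Taking any nonzero $\xi_0\in pq\cc^k\subset p_0\cc^k$, Condition (viii) provides $l_{\xi_0}'$ with $\caK_{l_{\xi_0}'}(\vv_p)\xi_0=p\cc^k$; by invariance this subspace lies in $p_0\cc^k$, forcing $p_0=p$. The symmetric argument applied to $T_\vv^*$ (using $qv_\mu\eta=(\vv_q)_\mu\eta$ for $\eta\in q\cc^k$ by (iii), together with Condition (vii) in its adjoint form $\caK_{l_\eta}(\vv_q)^*\eta=q\cc^k$) then gives $s(\varphi_\vv)=q$. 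With the supports secured, the remaining items of Spectral Property II follow routinely: $r_{T_\vv}=1$ and simplicity of $1$ from the second paragraph; the identifications $e_\vv=P^{T_\vv}_{\{1\}}(\unit)$ and $P^{T_\vv}_{\{1\}}(\cdot)=\varphi_\vv(\cdot)e_\vv$ from one-dimensionality of the $1$-eigenspace after normalizing $\varphi_\vv(\unit)=\varphi_\vv(e_\vv)=1$; and the decay estimate is the Riesz integral formula
\[
T_\vv^N-P^{T_\vv}_{\{1\}}=\frac{1}{2\pi i}\oint_{|z|=s'}z^N(z-T_\vv)^{-1}dz,
\]
whose norm is bounded by $(s')^{N+1}\sup_{|z|=s'}\lV(z-T_\vv)^{-1}\rV$ for any $s_\vv\le s'<1$.
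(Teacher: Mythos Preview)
Your proof is correct and takes a genuinely different route from the paper.

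The paper proceeds iteratively via two bootstrap lemmas (Lemma~\ref{pcut} and its adjoint Lemma~\ref{qcut}): it first applies Lemma~\ref{pcut} to $T_{\vv_q}$ on $q\mk q\simeq\Mat_{\rank q}$ to obtain Spectral Property~II for $T_{\vv_q}$ (with Condition~(vii) yielding faithfulness of the invariant state there), then applies Lemma~\ref{qcut} to lift this to $T_{\vv}$ on all of $\mk$, and finally applies Lemma~\ref{qcut} once more to $T_{\vv_p}$ on $p\mk p$ to pin down $s(e_\vv)=p$ via Condition~(viii). You instead cut by $p$ and $q$ simultaneously into the four projections $pq,\,p\bar q,\,\bar pq,\,\bar p\bar q$, observe the resulting sixteen-block lower-triangular structure of $T_\vv$, and read off the spectrum directly from the diagonal pieces $T_{(i,j)}$, with the Cauchy--Schwarz estimate $r(T_{(i,j)})\le\sqrt{r(T_{(i,i)})r(T_{(j,j)})}$ controlling the mixed blocks. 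Your invariant-subspace argument for the supports is essentially the mechanism in the last paragraph of the paper's Lemma~\ref{pcut}, but applied directly to $T_\vv$ rather than to the subordinate maps. Your approach is more self-contained and avoids the two-step bootstrap; the paper's is more modular, with Lemmas~\ref{pcut} and~\ref{qcut} stated for reuse.

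Two minor expository points worth tightening. First, the implication ``$T_\vv$ preserves $p\mk p$ $\Rightarrow$ $s(e_\vv)\le p$'' is not immediate from invariance alone; you need the additional observation that $1\notin\sigma\bigl(T_\vv\vert_{\mk/(p\mk p)}\bigr)$, whose diagonal blocks are the $T_{(i,j)}$ with $(i,j)\notin\{1,2\}^2$, all of spectral radius~$<1$ by your analysis. Second, the claim that the $(1,1)$-block of $e_\vv$ is a \emph{nonzero} multiple of $e_{pq}$ deserves a line: once $e_\vv\in p\mk p$, the relation $(v_\mu)_{12}=0$ gives $e_1T_\vv(e_\vv)e_1=T_{(1,1)}\bigl((e_\vv)_{11}\bigr)$, so $(e_\vv)_{11}$ is $T_{\vv_{pq}}$-fixed, hence a multiple of $e_{pq}$; if it vanished then positivity of $e_\vv$ would force $e_\vv\in e_2\mk e_2$ and hence $e_\vv=T_{(2,2)}(e_\vv)=0$.
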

\begin{rem}\label{tavv}
We call $(s_{\vv}, e_\vv,\varphi_{\vv})$ {\it the triple associated with $(n,k,p,q,\vv)$}.
From now on, if we write $(s_{\vv}, e_\vv,\varphi_{\vv})$,
it means this triple.
Furthermore, $\rho_{\vv}$ denotes the density matrix of $\varphi_\vv$.
\end{rem}
As $s(e_{\vv})=p$ and $s(\varphi_{\vv})=q$,
there exist $x_{\vv}\in (p \mk p)_{+}$ and 
$y_{\vv}\in (q \mk q)_{+}$
such that
$e_{\vv}x_{\vv}=x_{\vv}e_{\vv}=p$ and $\rho_{\vv}y_{\vv}=y_{\vv}\rho_{\vv}=q$.
We set $a_{\vv}:=\lV x_\vv\rV^{-1}$ and 
$c_{\vv}:=\lV y_\vv\rV^{-1}$.

The following numbers will be used.
\begin{align}\label{eq:efdef}
&\tilde E_{\vv}(N):=k^2\lV T_{\vv}^N\lmk 1- P^{T_{\vv}}_{\{1\}}\rmk\rV,\quad
E_{\vv}(N):=\lmk a_{\vv}c_{\vv}\rmk^{-1}\tilde E_{\vv}(N),\quad N\in\nan,\nonumber\\
&F_{\vv}:=\sup_{N\in\nan}\lV T_{\vv}^N\lmk 1- P^{T_{\vv}}_{\{1\}}\rmk\rV+\lV e_{\vv}\rV,\quad
L_{\vv}:=\inf \left\{L\in\nan\mid
\sup_{N\ge L}E_{\vv}(N)<\frac 12\right\}.
\end{align}
From Lemma \ref{cpmain}, we know that
$\lim_{N\to\infty}\lV T_{\vv}^N\lmk 1- P^{T_{\vv}}_{\{1\}}\rmk\rV=0$.
Hence we have $\lim_{N\to\infty} \tilde E_{\vv}(N)=
\lim_{N\to\infty} E_{\vv}(N)=0$,
$F_{\vv}<\infty$
and $L_{\vv}\in\nan$.
By the definition, we have
\begin{align}\label{eq:tnb}
\sup_{N\in \nan}\lV T_{\vv}^N\rV\le
F_{\vv}.
\end{align}
With $e_{\vv}$ and $\varphi_{\vv}$, 
we define a quasi-linear form $\braket{\cdot}{\cdot}_{\vv}$
on $\mk$ by
\[
\braket{ A}{B}_{\vv}
:=\varphi_{\vv}\lmk A^* e_{\vv}B \rmk,\quad
A,B\in\mk.
\]
Note that for any $X\in p\mk q$,
\begin{align}\label{eq:ac}
\lV X\rV_2^2\le
(a_{\vv}c_{\vv})^{-1}
\braket{X}{X}_{\vv}.
\end{align}
From this, we see that $\braket{\cdot}{\cdot}_{\vv}$
is an inner product on $p \mk q$.
For any $Z\in \mk $, we have
\begin{align}\label{eq:iv}
\braket{Z}{Z}_{\vv}^{\frac 12}
=\braket{p Z q}{pZq}_{\vv}^{\frac 12}
=\sup\left\{
\lv
\braket{pZq}{Y}_{\vv}
\rv\mid
Y\in p\mk q,\;\;
\braket{Y}{Y}_{\vv}=1
\right\}.
\end{align}
For the first equality, we used the fact that $e_\vv\in p\mk p$ and $\rho_{\vv}\in q\mk q$.

To prove Lemma \ref{cpmain},
first we note the following basic properties.
\begin{lem}\label{basiccp}
Let $n,k\in\nan$, $p\in\caP(\mk)$ and $\vv\in\mk^{\times n}$ such that
$v_{\mu}p=pv_{\mu}p$, $\mu=1,\ldots,n$.
Then,
\begin{enumerate}
\item for all $N\in\nan$, $\mu^{(N)}\in\{1,\ldots,n\}^{\times N}$ and $A\in\mk$,we have
 \begin{align*}
 p\lmk\widehat {v_{\mu^{(N)}}}\rmk^*\bar p=0,\quad
 \bar p\lmk\widehat{ v_{\mu^{(N)}}}\rmk^*\bar p=\lmk\widehat v_{\mu^{(N)}}\rmk^*\bar p,\\
 \sum_{ \mu^{(N)}\in \{1,\ldots, n \}^{\times N}}
 \bar p\lmk\widehat{ v_{\mu^{(N)}}}\rmk A
 \lmk\widehat{ v_{\mu^{(N)}}}\rmk^*\bar p
 =T_{\vv_{\bar p}}^N(A),\\
\sum_{ \mu^{(N)}\in \{1,\ldots, n \}^{\times N}}
 \lmk\widehat{ v_{\mu^{(N)}}}\rmk pAp
 \lmk\widehat{ v_{\mu^{(N)}}}\rmk^*
 =T_{\vv_{p}}^N(A),
 \end{align*}
\item for any $\eta\in\cc^k$ and $N\in\nan$, we have
\begin{align*}
\sum_{\mu^{(N)}\in \{1,\ldots,n\}^{\times N}}\lV\bar p\lmk\widehat{ v_{\mu^{(N)}}}\rmk^*p  \eta\rV^2
\le
\lmk \sum_{m=1}^N\lmk \Tr T_{\vv_{\bar p}}^{N-m}(1)\rmk^{\frac 12}
\braket{\eta}{T_{\vv_p}^{m-1}(1)\eta}^{\frac 12}\rmk^{2}\sum_{\mu=1}^n\lV v_{\mu}\rV^2
\end{align*}
\item
for any $A\in\mk$ and $N\in\nan$,
\begin{align*}
&\lV T^N_{\vv}(A)-pT^N_{\vv}(A)p\rV\\
&\le2\lV A \rV
 \lV T_{\vv_{\bar p}}^N(1)\rV^{\frac 12}
\\
&\lmk \sup_{M\in\nan}\lV T_{\vv_p}^M\rV^{\frac 12} +\sup_{M\in\nan}\lV T_{\vv_{\bar p}}^M\rV^{\frac 12}
+\lmk \sum_{m=1}^N\lmk \Tr T_{\vv_{\bar p}}^{N-m}(1)\rmk^{\frac 12}
\lV T_{\vv_p}^{m-1}(1)\rV^{\frac 12}\rmk\lmk \sum_{\mu=1}^n\lV v_{\mu}\rV^2\rmk^{\frac 12}
\rmk,
\end{align*}
\item
\begin{align*}
\sup_{M\in\nan}\lV T_{\vv}^M\rV
\le
&\sup_{M\in\nan}\lV T^M_{\vv_p}\rV+\sup_{M\in\nan}\lV \ T^M_{\vv_{\bar p}}\rV
+2\sup_{M\in\nan}\lmk \sum_{m=1}^M\lmk \Tr T_{\vv_{\bar p}}^{M-m}(1)\rmk^{\frac 12}
{\lV T_{\vv_p}^{m-1}(1)\rV}^{\frac 12}\rmk
\lmk
\sum_{\mu=1}^n\lV v_{\mu}\rV^2\rmk^{\frac 12}
\lV T^M_{\vv_{\bar p}}(\bar p)\rV^{\frac 12}\\
&+\sup_{M\in\nan}\lmk \sum_{m=1}^M\lmk \Tr T_{\vv_{\bar p}}^{M-m}(1)\rmk^{\frac 12}
{\lV T_{\vv_p}^{m-1}(1)\rV}^{\frac 12}\rmk^{2}\sum_{\mu=1}^n\lV v_{\mu}\rV^2.
\end{align*}
\end{enumerate}
\end{lem}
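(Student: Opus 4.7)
The plan is to prove the four claims in order: (1) is algebraic, (2) is a careful telescoping expansion, and (3)--(4) reduce to (1) and (2) via Cauchy--Schwarz and the triangle inequality. For (1), the adjoint of $v_\mu p = pv_\mu p$ gives $pv_\mu^* = pv_\mu^* p$, hence $pv_\mu^*\bar p = 0$. By induction on $N$, $p(\widehat{v_{\mu^{(N)}}})^*\bar p = 0$ and $\bar p(\widehat{v_{\mu^{(N)}}})^*\bar p = (\widehat{v_{\mu^{(N)}}})^*\bar p$. For the two sum identities, iteratively use $\bar p v_\mu v_{\mu'} = \bar p v_\mu \bar p v_{\mu'}$ (valid since $\bar p v_\mu p = 0$) to see $\bar p\widehat{v_{\mu^{(N)}}} = \prod_i (\bar p v_{\mu_i}\bar p)$; summing then gives $\sum \bar p\widehat{v_{\mu^{(N)}}} A \widehat{v_{\mu^{(N)}}}^*\bar p = T^N_{\vv_{\bar p}}(A)$. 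The dual manipulation with $p v_{\mu_i} v_{\mu_{i+1}} p = p v_{\mu_i} p v_{\mu_{i+1}} p$ (valid since $v_{\mu_{i+1}}p = pv_{\mu_{i+1}}p$) yields the $T^N_{\vv_p}(A)$ identity.

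The core is (2). Expand $\bar p(\widehat{v_{\mu^{(N)}}})^* p = \bar p v_{\mu_{N-1}}^* \cdots v_{\mu_0}^* p$ by inserting $1 = p + \bar p$ at each of the $N-1$ interior gaps. Because $pv_\nu^*\bar p = 0$, the only surviving configurations take the form ``$\bar p,\ldots,\bar p,p,\ldots,p$'', parametrized by a unique transition index $m \in \{1,\ldots,N\}$ at which $v_{\mu_{N-m}}^*$ sits between the last $\bar p$ and the first $p$. This gives $\bar p(\widehat{v_{\mu^{(N)}}})^* p = \sum_{m=1}^N X^{(m)}_\nu\, v_{\mu_{N-m}}^*\, Y^{(m)}_\tau$, with $X^{(m)}_\nu = \bar p v_{\mu_{N-1}}^*\bar p \cdots \bar p v_{\mu_{N-m+1}}^*\bar p$ depending on $\nu = (\mu_{N-m+1},\ldots,\mu_{N-1})$ and $Y^{(m)}_\tau = p v_{\mu_{N-m-1}}^* p \cdots p v_{\mu_0}^* p$ on $\tau = (\mu_0,\ldots,\mu_{N-m-1})$. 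By (1), $\sum_\nu (X^{(m)}_\nu)^* X^{(m)}_\nu$ is a positive element of $\bar p \mk \bar p$ with trace equal to $\Tr T^{m-1}_{\vv_{\bar p}}(1)$, and $\sum_\tau (Y^{(m)}_\tau)^* Y^{(m)}_\tau = T^{N-m}_{\vv_p}(1)$. Bounding each fixed-$m$ summand by $\lV X^{(m)}_\nu w\rV^2 \le (\Tr T^{m-1}_{\vv_{\bar p}}(1))\lV w\rV^2$, controlling the middle factor via $\sum_\mu v_\mu v_\mu^* \le (\sum_\mu \lV v_\mu\rV^2)\cdot 1$, computing the $Y^{(m)}_\tau$-action as $\braket{\eta}{T^{N-m}_{\vv_p}(1)\eta}$, and combining with Minkowski's inequality yields the claim up to the reindexing $m \leftrightarrow N+1-m$ which produces the displayed form.

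For (3), decompose $T^N_\vv(A) - pT^N_\vv(A)p = pT^N_\vv(A)\bar p + \bar p T^N_\vv(A) p + \bar p T^N_\vv(A)\bar p$ and bound each piece. By (1), $\bar p T^N_\vv(A)\bar p = T^N_{\vv_{\bar p}}(A)$, bounded by $\lV T^N_{\vv_{\bar p}}(1)\rV^{1/2}\sup_M \lV T^M_{\vv_{\bar p}}\rV^{1/2}\lV A\rV$ via Russo--Dye. For $pT^N_\vv(A)\bar p = \sum (p\widehat{v_{\mu^{(N)}}} p) A (\widehat{v_{\mu^{(N)}}}^*\bar p)$, matrix-valued Cauchy--Schwarz combined with $\sum (p\widehat{v_{\mu^{(N)}}} p)(p\widehat{v_{\mu^{(N)}}}^* p) = T^N_{\vv_p}(1)$ and $\sum \bar p \widehat{v_{\mu^{(N)}}}\widehat{v_{\mu^{(N)}}}^*\bar p = T^N_{\vv_{\bar p}}(1)$ (both from (1)) gives the $\sup_M\lV T^M_{\vv_p}\rV^{1/2}\lV T^N_{\vv_{\bar p}}(1)\rV^{1/2}\lV A\rV$ contribution; alternatively, splitting $\widehat{v_{\mu^{(N)}}}^*\bar p = p\widehat{v_{\mu^{(N)}}}^*\bar p + \bar p\widehat{v_{\mu^{(N)}}}^*\bar p$ and applying (2) to the $p\cdot \bar p$ part gives the $\sum_m (\Tr T^{N-m}_{\vv_{\bar p}}(1))^{1/2}\lV T^{m-1}_{\vv_p}(1)\rV^{1/2}$-term. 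Symmetric treatment of $\bar p T^N_\vv(A)p$ gives the overall factor of $2$. Part (4) follows from (3) with $A = 1$, using $pT^M_\vv(p)p = T^M_{\vv_p}(1)$ (from (1)) and $T^M_{\vv_{\bar p}}(1) = T^M_{\vv_{\bar p}}(\bar p)$. The main obstacle is the transition-point expansion in (2): the combinatorial bookkeeping of the surviving ``$\bar p$-to-$p$'' configurations and the identification of the $X^{(m)}$, $Y^{(m)}$ factors with the right powers of $T_{\vv_{\bar p}}$ and $T_{\vv_p}$ via (1); once (2) is established, (3) and (4) are mechanical combinations of triangle and matrix Cauchy--Schwarz inequalities.
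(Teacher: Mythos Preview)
Your treatment of (1) and (2) matches the paper's approach; the transition-index decomposition you describe is exactly the paper's formula
\[
\bar p(\widehat{v_{\mu^{(N)}}})^*p=\sum_{m=1}^N(\bar p v_{\mu_N}^*\bar p)\cdots(\bar p v_{\mu_{m+1}}^*\bar p)(\bar p v_{\mu_m}^* p)(p v_{\mu_{m-1}}^* p)\cdots(p v_{\mu_1}^* p),
\]
and your use of Minkowski and the identification of the $X^{(m)}$, $Y^{(m)}$ factors with powers of $T_{\vv_{\bar p}}$, $T_{\vv_p}$ is correct.

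There are, however, two concrete errors in (3) and (4). In (3) you assert $pT^N_\vv(A)\bar p=\sum (p\widehat{v_{\mu^{(N)}}}p)A((\widehat{v_{\mu^{(N)}}})^*\bar p)$, but this is false: the hypothesis $v_\mu p=pv_\mu p$ does \emph{not} imply $p\widehat{v_{\mu^{(N)}}}=p\widehat{v_{\mu^{(N)}}}p$ (the range of $p$ is only invariant, not reducing). Your ``alternative'' split of $(\widehat{v_{\mu^{(N)}}})^*\bar p$ is also misdirected, since by (1) the summand $p(\widehat{v_{\mu^{(N)}}})^*\bar p$ is already zero. The correct split---and what the paper does, in vector form, by testing $\bar p T^N_\vv(A)$ against $\xi,\eta$ and decomposing $(\widehat{v_{\mu^{(N)}}})^*\xi$---is on the \emph{left} factor: write $p\widehat{v_{\mu^{(N)}}}=p\widehat{v_{\mu^{(N)}}}p+p\widehat{v_{\mu^{(N)}}}\bar p$. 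The first piece then gives the $\sup_M\|T^M_{\vv_p}\|^{1/2}$ contribution via matrix Cauchy--Schwarz, and the second is the adjoint of what (2) controls.

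In (4), deriving from (3) with $A=1$ overlooks that $pT^M_\vv(1)p=T^M_{\vv_p}(1)+pT^M_\vv(\bar p)p$, and the second term is generally nonzero; it is precisely the source of the $\bigl(\sum_m\cdots\bigr)^2\sum_\mu\|v_\mu\|^2$ term in the stated bound, which your argument omits. The paper does not go through (3) here: it bounds $\|T^M_\vv(1)\|\le\|T^M_\vv(p)\|+\|T^M_\vv(\bar p)\|$, uses $T^M_\vv(p)=T^M_{\vv_p}(1)$, block-decomposes $T^M_\vv(\bar p)$, and controls $\|pT^M_\vv(\bar p)p\|=\sup_{\|\eta\|=1}\sum_{\mu^{(M)}}\|\bar p(\widehat{v_{\mu^{(M)}}})^*p\eta\|^2$ directly by (2).
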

See Section \ref{apbasiccp} for the proof.
\begin{lem}\label{pcut}
Let $n,k\in\nan$, $p\in\caP(\mk)$ and $\vv\in\mk^{\times n}$,
and suppose that (ii),(vi) of {\it Condition 2} is satisfied for this $(n,k,p,\vv)$.
Furthermore, assume that
there exist constants $c_{p}>0$, $0<s_{p}<1$, 
a positive linear functional $\varphi_{p}$ on $\mk$
and a positive  element $e_{p}\in \mk$
such that $s(\varphi_p), s(e_{p})\le p$, $\varphi_p(e_p)\neq 0$ and
\begin{align}\label{tps}
\lV T^N_{\vv_{p}}\lmk A\rmk-\varphi_{p}(A)e_{p}\rV\le
c_{p}s_{p}^N\lV A\rV,\quad
\text{for all} \; A\in \mk,\text{and}\;N\in\nan.
\end{align}
Then, there exist a positive linear functional $\varphi_{\vv,p}^{(r)}$ and a constant
$0<s_{\vv,p}^{(r)}<1$
such that $\varphi_{\vv,p}^{(r)}(e_p)=1$, and
 $T_{\vv}$ satisfies the Spectral Property II with respect to $(s_{\vv,p}^{(r)}, \varphi_{\vv,p}^{(r)}(1)e_p,\lmk \varphi_{\vv,p}^{(r)}(1)\rmk^{-1}\varphi_{\vv,p}^{(r)})$.
If furthermore 
for any $\eta\in \cc^k$ with $p\eta\neq 0$,
there exists an $l_{\eta}\in\nan$ such that $\lmk \caK_{l_{\eta}}(\vv)\rmk ^*\eta=\cc^k$.
then
$\varphi_{\vv,p}^{(r)}$ is faithful.
\end{lem}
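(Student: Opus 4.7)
My plan is to exploit the block upper-triangular structure that hypothesis (ii) imposes on $T_\vv$. Setting $\alpha_\mu := pv_\mu p$ and $\delta_\mu := \bar p v_\mu \bar p$, the identity $v_\mu p = p v_\mu p$ is equivalent to $\bar p v_\mu p = 0$, so with respect to $\mk = p\mk p\oplus p\mk\bar p\oplus \bar p\mk p\oplus \bar p\mk\bar p$ the map $T_\vv$ is block upper-triangular with diagonal blocks $T_{\vv_p}$, the cross maps $X \mapsto \sum_\mu \alpha_\mu X\delta_\mu^*$ on $p\mk\bar p$ and $X \mapsto \sum_\mu \delta_\mu X\alpha_\mu^*$ on $\bar p\mk p$, and $T_{\vv_{\bar p}}$. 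Hence $\sigma(T_\vv)$ is the union of these four diagonal spectra. By (\ref{tps}), the first block has $1$ as a simple dominant eigenvalue with right eigenvector $e_p$ (automatically a fixed point by iterating the convergence) and the rest of its spectrum in $\caB_{s_p}(0)$, while the fourth has spectral radius at most $s_{\bar p}$ by (vi). For each cross block the Cauchy--Schwarz estimate
\begin{align*}
\lv u^* \sum_{\mu^{(N)}} \hat\alpha_{\mu^{(N)}} X \hat\delta_{\mu^{(N)}}^* v\rv \le \lV X\rV \braket{u}{T_{\vv_p}^N(\unit) u}^{\frac12} \braket{v}{T_{\vv_{\bar p}}^N(\unit) v}^{\frac12},
\end{align*}
combined with the boundedness of $T_{\vv_p}^N(\unit)$ and the exponential decay of $T_{\vv_{\bar p}}^N(\unit)$, bounds their spectral radii by $\sqrt{s_{\bar p}}$. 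Setting $s_{\vv,p}^{(r)} := \max\{s_p,\sqrt{s_{\bar p}}\} < 1$ establishes items (1)--(2) of Spectral Property II.

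I then identify the eigenprojection. Solving $T_\vv(y)=y$ block-by-block starting from the $\bar p\mk\bar p$ component and moving upward, the strict contractivity of the last three diagonal blocks forces $y \in \spa\{e_p\}$, confirming that the right eigenvector at $1$ is a scalar multiple of $e_p$ as the statement requires. I define $\varphi_{\vv,p}^{(r)}$ as the unique left eigenvector at $1$ normalized by $\varphi_{\vv,p}^{(r)}(e_p) = 1$. Positivity of $\varphi_{\vv,p}^{(r)}$ follows from the norm convergence $T_\vv^N(A)\to \varphi_{\vv,p}^{(r)}(A)\, e_p$ together with positivity of each $T_\vv^N$; the bound $e_p \le \lV e_p\rV\,\unit$ then gives $\varphi_{\vv,p}^{(r)}(1) \ge \lV e_p\rV^{-1} > 0$, so the rescaling in the statement is well-defined. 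Items (3)--(6) follow from standard Dunford functional calculus, item (6) coming from the resolvent bound supplied by the spectral gap.

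For the faithfulness clause, suppose $\varphi_{\vv,p}^{(r)}$ is not faithful, and let $\rho^{(r)}\ge 0$ be its density matrix with support projection $Q\lneq\unit$. The dual fixed-point identity $\rho^{(r)} = \sum_\mu v_\mu^*\rho^{(r)} v_\mu$, combined with $\rho^{(r)}\ge 0$, forces each $v_\mu$ to leave $\bar Q\cc^k$ invariant, hence each $v_\mu^*$ to leave $Q\cc^k$ invariant; iterating yields $\caK_l(\vv)^*\eta \subset Q\cc^k$ for every $\eta\in Q\cc^k$ and every $l$. On the other hand, taking the $p\mk p$ block of the same identity shows that $p\rho^{(r)}p$ is a positive fixed point of $Y\mapsto \sum_\mu \alpha_\mu^* Y\alpha_\mu$, whose fixed-point cone at $1$ is one-dimensional by the spectral gap of $T_{\vv_p}$ and is generated by the density matrix of $\varphi_p$; since $\varphi_p(e_p)\neq 0$, this multiple is nonzero, so $p\rho^{(r)}p\neq 0$ and hence $pQ\neq 0$. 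Choosing $\eta\in Q\cc^k$ with $p\eta\neq 0$ and applying the extra hypothesis gives $\caK_{l_\eta}(\vv)^*\eta = \cc^k$, contradicting $\caK_{l_\eta}(\vv)^*\eta\subset Q\cc^k\subsetneq\cc^k$. The main obstacle I anticipate is the faithfulness step itself: getting the $v_\mu^*$-invariance of $Q\cc^k$ and the one-dimensional reduction on the $p\mk p$-block to interact so as to force $pQ\neq 0$. The spectral analysis and the verification of (3)--(6) are then standard geometric-series arguments once the block-triangular picture is in hand.
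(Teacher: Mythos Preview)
Your proposal is correct and reaches the same conclusions as the paper, but the spectral analysis takes a genuinely different route.

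The paper proceeds dynamically: using the explicit norm estimates of Lemma~\ref{basiccp} (items {\it 3} and {\it 4}), it first shows $\sup_N\lV T_\vv^N\rV<\infty$ and $\lV T_\vv^N(\cdot)-pT_\vv^N(\cdot)p\rV\to 0$, then proves directly that $\{T_\vv^N\}_N$ is Cauchy, identifies the limit as $\varphi_{\vv,p}^{(r)}(\cdot)e_p$ by composing with the known limit of $T_{\vv_p}^N$, and finally reads off the spectral picture from the convergence. Your approach is algebraic: you exploit the block upper-triangular structure of $T_\vv$ with respect to $p\mk p\oplus p\mk\bar p\oplus\bar p\mk p\oplus\bar p\mk\bar p$, compute the spectrum as the union of the four diagonal spectra, and bound the two ``cross'' diagonal blocks by the Cauchy--Schwarz estimate $\lV D_i^N\rV\le\lV T_{\vv_p}^N(\unit)\rV^{1/2}\lV T_{\vv_{\bar p}}^N(\unit)\rV^{1/2}$. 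This is cleaner and yields the explicit bound $s_{\vv,p}^{(r)}\le\max\{s_p,\sqrt{s_{\bar p}}\}$ (up to replacing the closed by the open ball), whereas the paper only asserts existence of some $s_{\vv,p}^{(r)}<1$. The cost is that you must check the block structure carefully and invoke the linear-algebra fact that the spectrum of a block-triangular operator is the union of the diagonal spectra, while the paper's estimate-based argument is more self-contained.

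For the faithfulness clause the two arguments are essentially the same: both use $\sum_\mu v_\mu^*\rho v_\mu=\rho$ to obtain $\caK_l(\vv)^*\eta\subset s(\rho)\cc^k$ for $\eta\in s(\rho)\cc^k$, and both need an $\eta$ in the support with $p\eta\neq 0$. The paper gets this from $\varphi_{\vv,p}^{(r)}(p)\ge\varphi_{\vv,p}^{(r)}(e_p)/\lV e_p\rV>0$; you get it by showing $p\rho^{(r)}p$ is a nonzero multiple of the density of $\varphi_p$. Your one terse step (``since $\varphi_p(e_p)\neq 0$, this multiple is nonzero'') is justified by $1=\varphi_{\vv,p}^{(r)}(e_p)=\Tr(p\rho^{(r)}p\cdot e_p)=c\,\varphi_p(e_p)$, so $c=1/\varphi_p(e_p)\neq 0$; it would be worth stating this explicitly.
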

\begin{proof}
By (ii) of the {Condition 2} , we can apply Lemma \ref{basiccp}.
By (vi) of the {Condition 2} and (\ref{tps}),
we have 
$\sup_{M\in\nan}\lV T^M_{\vv_{p}}\rV<\infty$
and $\sup_{M\in\nan}\lV T^M_{\vv_{\bar p}}\rV<\infty$.
Furthermore, we have
\begin{align*}\sup_{N\in\nan}\lmk
 \sum_{m=1}^N\lmk \Tr T_{\vv_{\bar p}}^{N-m}(1)\rmk^{\frac 12}
\lV T_{\vv_p}^{m-1}(1)\rV^{\frac 12}\rmk
\le \sup_{M\in\nan}\lV T^M_{\vv_{p}}\rV^{\frac12}
 \sup_{N\in\nan}\lmk \sum_{m=1}^N\ \sqrt kc_{\bar p}^{\frac 12}s_{\bar p}^{\frac12(N-m)}\rmk<\infty.
\end{align*}
Therefore, by {\it 4} of Lemma \ref{basiccp},
$\sup_{M\in\nan}\lV T^M_{\vv}\rV<\infty$.
Furthermore, by  (vi) of the {Condition 2}  and Lemma \ref{basiccp} {\it 3}, we have
\begin{align}\label{pcut2}
\lim_{l\to\infty}
\lV T^l_{\vv}(\cdot)-pT^l_{\vv}(\cdot )p\rV=0.
\end{align}
For any $\varepsilon>0$, choose $l_0\in\nan$ such that
$\lmk
\lV T^{l_0}_{\vv}(\cdot)-pT^{l_0}_{\vv}(\cdot )p\rV\sup_{L\in\nan}\lV T^L_{\vv}\rV\rmk<\frac \varepsilon 4$
and $c_p s_p^{l_0}\sup_{L\in\nan}\lV T^L_{\vv}\rV<\frac\varepsilon 4$.
For any $N,M\ge 2l_0$, using {\it 1},  {\it 3} of Lemma \ref{basiccp} and (\ref{tps})
we have
\begin{align*}
&\lV T^N_{\vv}(A)-T^M_{\vv}(A)\rV\\
&\le
 \lV T^{N-l_0}_{\vv}\lmk T^{l_0}_{\vv}(A)-pT^{l_0}_{\vv}(A )p\rmk\rV+
 \lV T^{M-l_0}_{\vv}\lmk T^{l_0}_{\vv}(A)-pT^{l_0}_{\vv}(A )p\rmk\rV\\
& +\lV
 T_{\vv_p}^{N-l_0}\lmk pT^{l_0}_{\vv}(A )p\rmk-\varphi_p\lmk pT^{l_0}_{\vv}(A )p\rmk e_p
 \rV
 +\lV
 T_{\vv_p}^{M-l_0}\lmk pT^{l_0}_{\vv}(A )p\rmk-\varphi_p\lmk pT^{l_0}_{\vv}(A )p\rmk e_p
 \rV<\varepsilon\lV A\rV.
 \end{align*}
Hence $\{T^N_{\vv}\}$ is a Cauchy sequence and has a limit $T_{\vv}^{\infty}$.

Clearly we have $T_{\vv}^{\infty}\circ T_{\vv}=T_{\vv}^{\infty}$ and $T_{\vv}^{\infty}$ is positive.
Furthermore, for any $A\in\mk$, 
\[
T_{\vv}^{\infty}\lmk A\rmk=\lim_{N\to\infty}T^{2N}_{\vv}(A)
=\lim_{N\to\infty}T^{N}_{\vv}\lmk p T^{N}_{\vv}(A)p \rmk
=\lim_{N\to\infty}T^{N}_{\vv_p}\lmk p T^{N}_{\vv}(A)p \rmk
=\lim_{N\to\infty}\varphi_p\lmk p T^{N}_{\vv}(A)p \rmk e_p\in\cc e_p.
\]
Therefore, there exists a positive linear functional $\varphi_{\vv,p}^{(r)}$ such that
$T_{\vv}^{\infty}(\cdot)=\varphi_{\vv,p}^{(r)}(\cdot)e_p$.
 By the $T_{\vv}$-invariance of  
$T_{\vv}^{\infty}$,
 $\varphi_{\vv,p}^{(r)}$ is $T_{\vv}$-invariant.
Note that
\[
T_{\vv}(e_p)=T_{\vv_p}(e_p)=\lmk\varphi_p(e_p)\rmk^{-1}\lim_{N\to \infty}T_{\vv_p}^{N+1}(e_p)
=e_p.
\]
From $\lim_{N\to\infty} T_{\vv}^N(\cdot )=\varphi_{\vv,p}^{(r)}(\cdot) e_p$,
$r_{T_\vv}=1$, and
$1$ is a non-degenerate eigenvalue of $T_{\vv}$.
This equality also implies 
$P_{\{1\}}^{T_{\vv}}=\varphi_{\vv,p}^{(r)}(\cdot )e_p$.
The rest of the spectrum of $T_{\vv}$ is in a disk $\caB_{s_{\vv,p}^{(r)}}(0)$,
for some $0<s_{\vv,p}^{(r)}<1$.
Furthermore, we have $\varphi_{\vv,p}^{(r)}(e_p)e_p=\lim T_{\vv}^N(e_p)=e_p$.
Hence $\varphi_{\vv,p}^{(r)}(e_p)=1$.(Note that $e_p\neq 0$ because $\varphi_p(e_p)\neq 0$.)
By this, we have $\varphi_{\vv,p}^{(r)}(1)>0$ and
$\varphi_{\vv,p}^{(r)}(1) e_p=P_{\{1\}}^{T_{\vv}}(1)$.
Hence $T_{\vv}$ satisfies the Spectral Property II with respect to $(s_{\vv,p}^{(r)}, \varphi_{\vv,p}^{(r)}(1)e_p,\lmk \varphi_{\vv,p}^{(r)}(1)\rmk^{-1}\varphi_{\vv,p}^{(r)})$.

To prove the last statement, assume that for any $\eta\in \cc^k$ with $p\eta\neq 0$,
there exists an $l_{\eta}\in\nan$ such that $\lmk \caK_{l_{\eta}}(\vv)\rmk ^*\eta=\cc^k$.

Let $\rho\in \mk_{+}$ be the density matrix of $\varphi_{\vv,p}^{(r)}$, i.e., $\varphi_{\vv,p}^{(r)}=\Tr\rho(\cdot)$.
By the $T_{\vv}$-invariance of $\varphi_{\vv,p}^{(r)}$, we have 
\begin{align}\label{ts}
\sum_{\mu=1}^nv_{\mu}^*\rho v_{\mu}=\rho.
\end{align}

We claim that for any $\xi\in \overline{s(\rho)}\cc^k$, and $\eta\in s(\rho)\cc^k$,
we have
$\braket{\eta}{\caK_l(\vv)\xi}=0$ for all $l\in\nan$.
As $\eta\in s(\rho)\cc^k$, there exists $c_\eta>0$ such that
$\rho\ge c_\eta\ket{\eta}\bra{\eta}$.
 By the repeated use of (\ref{ts}),
 for all $l\in\nan$
 we have
 \[
 0=\braket{\xi}{\rho\xi}=\sum_{\mu^{(l)}\in\{1,\ldots,n\}^{\times l}}
 \braket{\xi}{\lmk \widehat{v_{\mu^{(l)}}}\rmk^*\rho  \widehat{v_{\mu^{(l)}}}\xi}
 \ge c_{\eta}\sum_{\mu^{(l)}\in\{1,\ldots,n\}^{\times l}}
 \lv
 \braket{\eta}{\widehat{v_{\mu^{(l)}}}\xi}
 \rv^2.
 \]
 Therefore we have $\braket{\eta}{\widehat{v_{\mu^{(l)}}}\xi}=$
 for any $\mu^{(l)}$ and this proves the claim.
 
 Next, there exists an $\eta\in s(\rho)\cc^k$ such that $p\eta\ne 0$.
 To see this,
 note that $\varphi_{\vv,p}^{(r)}(p)\neq 0$, for $\varphi_{\vv,p}^{(r)}(e_p)\neq 0$.
 If any $\eta\in s(\rho)\cc^k$ satisfies $p\eta=0$, then
 $s(\rho)\le \bar p$ and we have $\varphi_{\vv,p}^{(r)}(p)=0$ which is a contradiction.
 
Let us fix some $\eta\in s(\rho)\cc^k$ with $p\eta\ne 0$.
By the assumption, there exists an $l_{\eta}\in\nan$ such that $\lmk \caK_{l_{\eta}}(\vv)\rmk ^*\eta=\cc^k$.
Then for any $\xi\in \overline s(\rho)\cc^k$, we have
\[
\braket{\cc^k}{\xi}=\braket{\lmk\caK_{l_{\eta}}(\vv)\rmk ^*\eta}{\xi}=\braket{\eta}{\caK_{l_{\eta}}(\vv)\xi}=0,
\]
by the claim.
 This means $\xi=0$. Hence we have $s(\rho)=1$, i.e., $\varphi_{\vv,p}^{(r)}$ is faithful.
\end{proof}
By taking the adjoint of the previous Lemma, we obtain the following Lemma.
\begin{lem}\label{qcut}
Let $n,k\in\nan$, $q\in\caP(\mk)$ and $\vv\in\mk^{\times n}$,
and suppose that (iii),(v) of {\it Condition 2} is satisfied for this $(n,k,q,\vv)$.
Furthermore, assume that
there exist constants $c_{q}>0$, $0<s_{q}<1$, 
a positive linear functional $\varphi_{q}$ on $\mk$
and a positive  element $e_{q}\in \mk$
such that $s(\varphi_q), s(e_{q})\le q$, $\varphi_q(e_q)\neq 0$ and
\begin{align}\label{tps2}
\lV T^N_{\vv_{q}}\lmk A\rmk-\varphi_{q}(A)e_{q}\rV\le
c_{q}s_{q}^N\lV A\rV,\quad
\text{for all} \; A\in \mk,\text{and}\;N\in\nan.
\end{align}
Then, there exist a positive element $e_{\vv,q}^{(l)}\in \mk$ and a constant
$0<s_{\vv,q}^{(l)}<1$
such that $\varphi_q(e_{\vv,q}^{(l)})=1$, and
 $T_{\vv}$ satisfies the Spectral Property II with respect to 
$(s_{\vv,q}^{(l)}, \varphi_q(1)e_{\vv,q}^{(l)},\lmk \varphi_q(1)\rmk^{-1}\varphi_q)$.
If furthermore 
for any $\eta\in \cc^k$ with $q\eta\neq 0$,
there exists an $l_{\eta}\in\nan$ such that $\lmk \caK_{l_{\eta}}(\vv)\rmk \eta=\cc^k$.
then
$e_{\vv,q}^{(l)}$ is strictly positive.
\end{lem}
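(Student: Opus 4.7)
The strategy, as the author already hints, is to deduce this lemma from Lemma \ref{pcut} by passing to the Hilbert--Schmidt (HS) adjoint. Set $\tilde\vv := (v_1^*, \ldots, v_n^*)$; then on $\mk$ equipped with the HS inner product, one has $T_{\tilde\vv} = (T_{\vv})^*$, and the restriction identity $T_{\tilde\vv_q} = (T_{\vv_q})^*$.

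The first task is to verify that $(n,k,q,\tilde\vv)$ satisfies the hypotheses of Lemma \ref{pcut} with $p$ replaced by $q$. Condition (ii) for $(\tilde\vv, q)$, namely $v_\mu^* q = q v_\mu^* q$, follows by taking adjoints of condition (iii) of {\it Condition 2} for $(\vv, q)$. Condition (vi) for $(\tilde\vv, q)$, the exponential decay of $\lV T^N_{\tilde\vv_{\bar q}}\rV$, follows from (v) for $(\vv, q)$, because $T_{\tilde\vv_{\bar q}} = (T_{\vv_{\bar q}})^*$ shares its HS-operator norm with $T_{\vv_{\bar q}}$, and the operator norm and HS-operator norm of maps $\mk \to \mk$ are equivalent up to a constant depending only on $k$. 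For the convergence hypothesis, let $\rho_q$ denote the density matrix of $\varphi_q$, and define $\tilde\varphi_q(\cdot) := \Tr(e_q\,\cdot)$ and $\tilde e_q := \rho_q$. Taking the HS-adjoint of the rank-one map $B \mapsto \varphi_q(B) e_q = \Tr(\rho_q B) e_q$ yields $A \mapsto \Tr(e_q A)\rho_q = \tilde\varphi_q(A) \tilde e_q$, so the hypothesis $\lV T^N_{\vv_q}(A) - \varphi_q(A) e_q\rV \le c_q s_q^N \lV A\rV$ transfers (up to a $k$-dependent factor in the constant) into a bound of the same form for $T^N_{\tilde\vv_q}(A) - \tilde\varphi_q(A) \tilde e_q$. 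The support conditions $s(\tilde\varphi_q), s(\tilde e_q) \le q$ and non-triviality $\tilde\varphi_q(\tilde e_q) = \Tr(e_q \rho_q) = \varphi_q(e_q) \ne 0$ are immediate.

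Lemma \ref{pcut} then produces a positive linear functional $\varphi_{\tilde\vv, q}^{(r)}$ and $0 < s_{\tilde\vv, q}^{(r)} < 1$ with $\varphi_{\tilde\vv,q}^{(r)}(\rho_q) = 1$ such that $T_{\tilde\vv}$ satisfies Spectral Property II with triple $\bigl(s_{\tilde\vv, q}^{(r)},\ \varphi_{\tilde\vv,q}^{(r)}(1)\rho_q,\ \varphi_{\tilde\vv, q}^{(r)}(1)^{-1}\varphi_{\tilde\vv, q}^{(r)}\bigr)$. I then transport this back to $T_{\vv}$ by a second HS-adjoint. Let $\sigma \in \mk_+$ be the density matrix of $\varphi_{\tilde\vv,q}^{(r)}$, and set $e_{\vv,q}^{(l)} := \sigma$ and $s_{\vv,q}^{(l)} := s_{\tilde\vv,q}^{(r)}$. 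The HS-adjoint of the spectral projection $P^{T_{\tilde\vv}}_{\{1\}}(B) = \varphi_{\tilde\vv,q}^{(r)}(B) \rho_q$ is $P^{T_{\vv}}_{\{1\}}(A) = \Tr(\rho_q A)\sigma = \varphi_q(A)\, e_{\vv,q}^{(l)}$, which rearranges into the triple $\bigl(s_{\vv,q}^{(l)}, \varphi_q(1) e_{\vv,q}^{(l)}, \varphi_q(1)^{-1}\varphi_q\bigr)$ required in the conclusion. Clauses (1)--(5) of Spectral Property II for $T_{\vv}$ follow because $T_{\vv}$ and $T_{\tilde\vv}$ share their spectra and spectral projections (up to HS-adjoint), and clause (6) follows from the corresponding clause for $T_{\tilde\vv}$ via the same norm equivalence. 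The normalization $\varphi_q(e_{\vv,q}^{(l)}) = \Tr(\rho_q \sigma) = \varphi_{\tilde\vv,q}^{(r)}(\rho_q) = 1$ is immediate.

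For the final strict-positivity statement, observe that $\caK_l(\tilde\vv) = (\caK_l(\vv))^*$, since the $l$-fold monomials in $v_\mu^*$ are precisely the adjoints of those in $v_\mu$. Hence the hypothesis $\caK_{l_\eta}(\vv)\eta = \cc^k$ for every $\eta$ with $q\eta \ne 0$ is equivalent to $(\caK_{l_\eta}(\tilde\vv))^* \eta = \cc^k$, which is exactly the faithfulness hypothesis of Lemma \ref{pcut} applied to $\tilde\vv$. Lemma \ref{pcut} therefore gives that $\varphi_{\tilde\vv, q}^{(r)}$ is faithful, i.e., its density $\sigma = e_{\vv,q}^{(l)}$ is strictly positive. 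The main point requiring care is the bookkeeping of constants and rates when transferring condition (6) of Spectral Property II through the HS-adjoint, but since that condition only demands the existence of some $0 < s < 1$, the $k$-dependent factors are harmlessly absorbed.
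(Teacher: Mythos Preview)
Your proof is correct and is exactly the argument the paper intends: the paper's own proof consists of the single sentence ``By taking the adjoint of the previous Lemma, we obtain the following Lemma,'' and you have carried out precisely this Hilbert--Schmidt dualization of Lemma~\ref{pcut}. One small simplification: your caution about transferring clause~(6) of Spectral Property~II through the adjoint is unnecessary, since (as the Remark after the definition notes) clause~(6) is a formal consequence of clauses~(1)--(5), and those are purely spectral statements that pass directly from $T_{\tilde\vv}$ to $T_{\vv}=(T_{\tilde\vv})^*$ without any norm-equivalence bookkeeping.
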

Now we are ready to prove
Lemma \ref{cpmain}.
\begin{proofof}[Lemma \ref{cpmain}]
Let $k_R:=\rank p$ and  $k_L:=\rank q$.
Note that $q \mk q\simeq \Mat_{k_L}$.
Under this identification, we apply
Lemma \ref{pcut} to
$(n,k_L,pq, \vv_{q})$.
The first condition of Lemma \ref{pcut} ((ii) of {\it Condition 2})
can be checked by
(i),(ii)  of {\it Condition 2}.
The second condition ((vi) of {\it Condition 2}) flows from
\[
\lV T^N_{\vv_{(q-pq)}}(A)\rV
=\lV q T^N_{\vv_{\bar p}}(A) q\rV\le
c_{\bar p}s_{\bar p }^N\lV A\rV,\quad A\in\mk,\quad
N\in\nan.
\]
Here we used (i) (iii) of {\it Condition 2} for the first equality and
(vi) for the last inequality.
The third condition of Lemma \ref{pcut} (i.e., (\ref{tps}))
is now (iv) of {\it Condition 2} itself.
Hence,  Lemma \ref{pcut} is applicable
to $(n,k_L,pq, \vv_{q})$ and we obtain
a nonzero positive linear functional $\varphi_{\vv_q,pq}^{(r)}$ on $q \mk q$
and 
$0<s_{\vv_q,pq}^{(r)}<1$. With respect to $
\lmk s_{\vv_q,pq}^{(r)}, \varphi_{\vv_q,pq}^{(r)}(q)e_{pq},
\lmk  \varphi_{\vv_q,pq}^{(r)}(q)\rmk^{-1}\varphi_{\vv_q,pq}^{(r)}
\rmk$,
 $T_{\vv_q}$ satisfies the
 {\it Spectral Property II}.
Define a positive linear functional
$\tilde\varphi_{\vv_q,pq}^{(r)}$ by $\tilde\varphi_{\vv_q,pq}^{(r)}(A)=\varphi_{\vv_q,pq}^{(r)}(qAq)$
on $\mk$.
Set $\varphi_{\vv}:=\lmk \tilde \varphi_{\vv_q,pq}^{(r)}(1)\rmk^{-1}\tilde\varphi_{\vv_q,pq}^{(r)}$.
Note that $\varphi_{\vv}(1)=1$.
From (vii) of {\it Condition 2},
we have $s(\tilde\varphi_{\vv_q,pq}^{(r)})=q$,
by the last statement of Lemma \ref{pcut}.

Next we apply Lemma \ref{qcut} to
$(n,k,q,\vv)$.
The first condition of Lemma \ref{qcut} ((iii) of {\it Condition 2})
is
(iii)  of {\it Condition 2}.
The second condition ((v) of {\it Condition 2})
is (v) of {\it Condition 2}.
The third condition (\ref{tps2}) follows from  (6) of the 
 {\it Spectral Property II} of  $T_{\vv_q}$.

Hence,  Lemma \ref{qcut} is applicable and we obtain
a positive element $e_{\vv,q}^{(l)}$ in $\mk $,
and $0<s_{\vv,q}^{(l)}<1$. 
Set $e_{\vv}=e_{\vv,q}^{(l)}$ and $s_{\vv}= s_{\vv,q}^{(l)}$.
With respect to $\lmk s_{\vv},e_{\vv} ,\varphi_{\vv}\rmk$,
 $T_{\vv}$ satisfies the
 {\it Spectral Property II}.

We know $s(\varphi_{\vv})=s(\tilde\varphi_{\vv_q,pq}^{(r)})=q$.
We show $s(e_{\vv})=p$ in the rest of the proof.

To see this, we apply Lemma \ref{qcut}
to $(n, k_R,pq, \vv_p)$, under the identification
$p\mk p\simeq \Mat_{k_R}$.

The first condition of Lemma \ref{qcut} ((iii) of {\it Condition 2})
can be checked by
(i),(iii)  of {\it Condition 2}.
The second condition ((v) of {\it Condition 2}) follows from
\[
\lV T^N_{\vv_{p-pq}}(A)\rV
=\lV T^N_{\vv_{\bar q}}(pAp) \rV\le
c_{\bar q}s_{\bar q}^N\lV A\rV,\quad A\in\mk,\quad
N\in\nan.
\]
Here we used (i) (ii) of {\it Condition 2} for the first equality and
(v) for the last inequality.
The third condition (\ref{tps2}) is (iv) of {\it Condition 2} itself.

Hence,  Lemma \ref{qcut} is applicable and we obtain
a positive element $e_{\vv_p,pq}^{(l)}$ in $p \mk p $,
and $0<s_{\vv_p,pq}^{(l)}<1$. 
With respect to $\lmk s_{\vv_p,pq}^{(l)},\varphi_{pq}(1)e_{\vv_p,pq}^{(l)} ,\varphi_{pq}(1)^{-1}\varphi_{pq}\rmk$,
 $T_{\vv_p}$ satisfies the
 {\it Spectral Property II}.
 By (viii) of {\it Condition 2}, $s(e_{\vv_p,pq}^{(l)})=p$.
 
As $\varphi_{\vv}$, $\varphi_{pq}$ are faithful on $pq \mk pq$,
we have $\varphi_{\vv}(pq),\varphi_{pq}(pq)>0$.
By
\[
\varphi_{\vv}(pq)e_{\vv}=\lim_{N\to\infty}T_{\vv}^N(pq)
=\lim_{N\to\infty}T_{\vv_p}^N(pq)=
\varphi_{pq}(pq) e_{\vv_p,pq}^{(l)},
\] 
we obtain $s(e_{\vv})=s(e_{\vv_p,pq}^{(l)})=p$.
\end{proofof}
\subsection{The intersection property}\label{intergen}
In this subsection we prove the following Lemma.
\begin{lem}\label{lem:inter}
Let $m_2,m_3\in\nan$.
Let $n,k\in\nan$ and $\vv\in\mk^{\times n}$.
Assume that the triple $(n,k,\vv)$ satisfies
the Condition 4 for $(m_2,m_3)$
with respect to  $X_{m_2}\in\caK_{m_2}(\vv)$.
Then,
\begin{description}
\item[(i)]for all $N\ge m_3$,
\[
(X_{m_2}^{-1})^*\ker\gv{N}\subset \ker\gv{N+m_2}, and
\]
\item[(ii)]for all $N\ge m_2+m_3+1$, 
\begin{align*}
\cgv{N}=\lmk\cgv{N-1}\otimes\cc^n\rmk\cap\lmk\cc^n\otimes\cgv{N-1}\rmk.
\end{align*}
\end{description}
\end{lem}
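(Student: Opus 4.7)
The plan is to handle (i) by a direct duality argument and (ii) by combining the immediate MPS decomposition with a construction that rests on (i). For (i), note that unpacking the definition of $\gv{N}$ gives $\ker\gv{N}=\caK_N(\vv)^\perp$ with respect to the Hilbert--Schmidt pairing $\langle A,B\rangle=\Tr(A^*B)$. For $X\in\ker\gv{N}$ and $Z\in\caK_{N+m_2}(\vv)$ one computes
\[
\Tr\bigl((X_{m_2}^{-1})^*X\,Z^*\bigr)=\Tr\bigl(X\,(X_{m_2}^{-1}Z)^*\bigr)=0,
\]
since $X_{m_2}^{-1}Z\in\caK_N(\vv)$ by Condition 4 (applicable as $N\ge m_3$) and $X\perp\caK_N(\vv)$. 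Hence $(X_{m_2}^{-1})^*X\in\ker\gv{N+m_2}$.

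For (ii), the inclusion $\cgv{N}\subset(\cgv{N-1}\otimes\cc^n)\cap(\cc^n\otimes\cgv{N-1})$ is immediate from the two MPS identities
\[
\gv{N}(X)=\sum_{\mu=1}^n\gv{N-1}(Xv_\mu^*)\otimes\psi_\mu=\sum_{\mu=1}^n\psi_\mu\otimes\gv{N-1}(v_\mu^*X),
\]
obtained by splitting $\widehat{v_{\mu^{(N)}}}$ at its last or first factor. For the reverse inclusion, take $v$ in the intersection and pick $A_\mu,B_\mu\in\mk$ with $v=\sum_\mu\gv{N-1}(A_\mu)\otimes\psi_\mu=\sum_\mu\psi_\mu\otimes\gv{N-1}(B_\mu)$. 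Comparing coefficients in the basis $\widehat{\psi_{\mu^{(N)}}}$, and using the elementary fact $v_\mu^*\ker\gv{N-1}\subset\ker\gv{N-2}$ (a duality computation based on $\caK_1(\vv)\caK_{N-2}(\vv)\subset\caK_{N-1}(\vv)$), yields the compatibility
\[
v_{\mu_0}^*A_{\mu_{N-1}}-B_{\mu_0}v_{\mu_{N-1}}^*\in\ker\gv{N-2}\qquad\text{for all }\mu_0,\mu_{N-1}.
\]
Combining part (i) with the containment $X_{m_2}\caK_{N-1-m_2}(\vv)\subset\caK_{N-1}(\vv)$ (from $X_{m_2}\in\caK_{m_2}(\vv)$) and a dimension count upgrades this to the equality $\caK_{N-1}(\vv)=X_{m_2}\caK_{N-1-m_2}(\vv)$, valid once $N-1\ge m_2+m_3$. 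Using this equality to ``invert by $X_{m_2}$'' together with the compatibility, one constructs a single $X\in\mk$ satisfying $Xv_\mu^*\equiv A_\mu\pmod{\ker\gv{N-1}}$ for every $\mu$, so that $v=\gv{N}(X)\in\cgv{N}$.

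The principal obstacle is this final construction of $X$. The hypothesis $N\ge m_2+m_3+1$ is exactly what makes Condition 4 applicable at both indices $N-1$ and $N-1-m_2$, providing the slack needed to solve simultaneously, in $\mu$, the $n$ congruences arising from the compatibility relation; anything less leaves one of the two inclusions with no room to be promoted to equality, so the lift to a common $X$ fails.
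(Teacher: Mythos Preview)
Your treatment of (i) and the forward inclusion in (ii) is correct and coincides with the paper's argument.

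For the reverse inclusion in (ii), however, the construction of $X$ is the whole point, and your outline does not carry it out. You obtain only the one-step compatibility
\[
v_{\mu}^*A_{\nu}-B_{\mu}v_{\nu}^*\in\ker\gv{N-2},
\]
whereas to produce $X$ one needs a relation living in $\ker\gv{N-1}$. The equality $\caK_{N-1}(\vv)=X_{m_2}\caK_{N-1-m_2}(\vv)$ you invoke (which, incidentally, follows directly from Condition~4 without any dimension count or use of (i)) is dual to $(X_{m_2}^{-1})^{*}\ker\gv{N-1-m_2}=\ker\gv{N-1}$; to exploit this you must first land in $\ker\gv{N-1-m_2}$, not $\ker\gv{N-2}$. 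You never close that gap, and your remark that Condition~4 must be ``applicable at both indices $N-1$ and $N-1-m_2$'' is not what drives the argument.

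The paper closes the gap by peeling off $m_2$ leading indices at once in the coefficient comparison, obtaining directly
\[
v_{\mu_{m_2}}^{*}\cdots v_{\mu_{2}}^{*}B_{\mu_{1}}v_{\nu}^{*}
-v_{\mu_{m_2}}^{*}\cdots v_{\mu_{1}}^{*}A_{\nu}\in\ker\gv{N-1-m_2}.
\]
(Equivalently, one can iterate your one-step relation $m_2-1$ times using the very fact $v_\mu^{*}\ker\gv{M}\subset\ker\gv{M-1}$ that you mention.) Now (i), which requires precisely $N-1-m_2\ge m_3$, moves this into $\ker\gv{N-1}$. Writing $X_{m_2}=\sum_{\mu^{(m_2)}}\alpha_{\mu^{(m_2)}}\widehat{v_{\mu^{(m_2)}}}$ and setting
\[
X:=\sum_{\mu^{(m_2)}}\overline{\alpha_{\mu^{(m_2)}}}\,(X_{m_2}^{-1})^{*}v_{\mu_{m_2}}^{*}\cdots v_{\mu_{2}}^{*}B_{\mu_{1}},
\]
the sum over $\mu^{(m_2)}$ collapses the second term to $(X_{m_2}^{-1})^{*}X_{m_2}^{*}A_{\nu}=A_{\nu}$, yielding $Xv_{\nu}^{*}-A_{\nu}\in\ker\gv{N-1}$ for every $\nu$, hence $v=\gv{N}(X)$. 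This explicit formula is the missing step in your sketch.
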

\begin{proof}
First we prove (i).
For any $N\ge m_3$ and $Y\in \ker \gv{N}$,
by the definition of $\gv{N}$, we have
$\Tr YZ^*=0$ for any $Z\in\caK_{N}(\vv)$.
By the assumption, for any $\mu^{(N+m_2)}\in\{1,\ldots,n\}^{\times (N+m_2)}$,
we have $X_{m_2}^{-1}\widehat{v_{\mu^{(N+m_2)}}}\in \caK_{N}(\vv)$.
From these two observations,
we have
\begin{align*}
\gv{N+m_2}\lmk (X_{m_2}^{-1})^*Y \rmk
&=
\sum_{\mu^{(N+m_2)}\in\{1,\ldots,n\}^{\times (N+m_2)}}
\lmk
\Tr \lmk (X_{m_2}^{-1})^*Y \lmk \widehat{v_{\mu^{(N+m_2)}}}\rmk^*\rmk\rmk
\widehat{\psi_{\mu^{(N+m_2)}}}\\
&=\sum_{\mu^{(N+m_2)}\in\{1,\ldots,n\}^{\times (N+m_2)}}
\lmk
\Tr \lmk Y \lmk (X_{m_2}^{-1}\widehat{v_{\mu^{(N+m_2)}}}\rmk^*\rmk\rmk
\widehat{\psi_{\mu^{(N+m_2)}}}=0.
\end{align*}
Hence we have proven (i).

The inclusion $\cgv{N}\subset \lmk\cgv{N-1}\otimes\cc^n\rmk\cap\lmk\cc^n\otimes\cgv{N-1}\rmk$
for any $2\le N\in\nan$
can be seen by
\begin{align*}
\gv{N}(Y)&=\sum_{\nu\in\{1,\ldots,n\}}
\sum_{\mu^{(N-1)\in\{1,\ldots,n\}^{\times (N-1)}}}
\lmk \Tr\lmk Y v_\nu^* \lmk \widehat{v_{\mu^{(N-1)}}})^* \rmk\rmk\rmk
 \widehat{\psi_{\mu^{(N-1)}}}\otimes \psi_{\nu}\\
 &=\sum_{\nu\in\{1,\ldots,n\}}
\gv{N-1}(Y v_\nu^*)\otimes  \psi_{\nu}\\
&=\sum_{\nu\in\{1,\ldots,n\}}
\sum_{\mu^{(N-1)\in\{1,\ldots,n\}^{\times (N-1)}}}
\lmk \Tr\lmk Y \lmk \widehat{v_{\mu^{(N-1)}}})^*v_\nu^*  \rmk\rmk\rmk
 \psi_{\nu}\otimes\widehat{\psi_{\mu^{(N-1)}}} \\
 &=\sum_{\nu\in\{1,\ldots,n\}}
\psi_{\nu}\otimes\gv{N-1}(v_\nu^*Y ).
\end{align*}

To see the opposite inclusion,
let $N\ge m_2+m_3+1$
 and $\Phi\in \lmk\cgv{N-1}\otimes\cc^n\rmk\cap\lmk\cc^n\otimes\cgv{N-1}\rmk$.
 Then by some sets of $k\times k$ matrices, $\{C_{\mu}\}_{\mu=1}^n$ and $\{D_{\nu}\}_{\nu=1}^n$,
 we can write $\Phi$ as
 \begin{align}\label{pcd}
\Phi=
\sum_{\mu=1}^n\psi_{\mu}\otimes \gv{N-1}(C_{\mu})
=\sum_{\nu=1}^n\gv{N-1}(D_{\nu})\otimes\psi_{\nu}.
\end{align}
From this relation and the definition of $\gv{N}$,
we have for all $\mu,\nu,\mu_{j}\in\{1,\ldots,n\}$, 
$2\le j\le m_2$
and  $\mu_{i}\in\{1,\ldots,n\}$, $m_2+1\le i\le N-1$,
\[
\Tr\lmk\lmk
v_{\mu_{m_2}}^*\cdots v_{\mu_2}^*C_{\mu}v_{{\nu}}^*-
v_{\mu_{m_2}}^*\cdots v_{\mu_2}^*v_{\mu}^*D_{\nu}
\rmk
v_{\mu_{N-1}}^*\cdots v_{\mu_{m_2+1}}^*\rmk
=0.
\]
 Therefore,  for all $\mu,\nu,\mu_{j}\in\{1,\ldots,n\}$, 
$2\le j\le  m_2$,
 \begin{align}\label{cd}
 v_{\mu_{m_2}}^*\cdots v_{\mu_2}^*C_{\mu}v_{{\nu}}^*-
v_{\mu_{m_2}}^*\cdots. v_{\mu_2}^*v_{\mu}^*D_{\nu}
\in \ker\gv{N-1-m_2}.
\end{align}
(If $m_2=1$, replace $v_{\mu_{m_2}}^*\cdots v_{\mu_2}^*$ by $1$, and understand (\ref{cd}) as
$C_{\mu}v_{{\nu}}^*-
v_{\mu}^*D_{\nu}
\in \ker\gv{N-2}$.
)

We claim that there exists $\tilde C\in\mk$ such that
$\tilde C v_{\nu}^*-D_{\nu}\in\ker\gv{N-1}$ for all $\nu\in\{1,\ldots,n\}$.
 As $X_{m_2}\in\caK_{m_2}(\vv)$,
 there exists a set of coefficients $\{\alpha_{\mu^{(m_2)}}\}_{\mu^{(m_2)}\in\{1,\ldots n\}^{\times m_2}}\subset \cc$
 such that
 $X_{m_2}=\sum_{\mu^{(m_2)}}\alpha_{\mu^{(m_2)}}\widehat{v_{\mu^{(m_2)}}}$.
 On the other hand, as $N-1-m_2\ge m_3$,
(\ref{cd}) implies
 \begin{align}\label{xcd}
(X_{m_2}^{-1})^{*}\lmk
v_{\mu_{m_2}}^*\cdots v_{\mu_2}^*C_{\mu_1}v_{{\nu}}^*-
v_{\mu_{m_2}}^*\cdots. v_{\mu_2}^*v_{\mu_{1}}^*D_{\nu}\rmk
\in (X_{m_2}^{-1})^{*}\ker\gv{N-1-m_2}
\subset\ker\gv{N-1},
\end{align}
for all $\nu,\mu_{j}\in\{1,\ldots,n\}$, $1\le j\le m_2$.
We used (i) for the last inclusion.
Set 
\[\tilde C:= \sum_{\mu^{(m_2)}=(\mu_1,\ldots,\mu_{m_2})}
\overline{\alpha_{\mu^{(m_2)}}}(X_{m_2}^{-1})^{*}
v_{\mu_{m_2}}^*\cdots v_{\mu_2}^*C_{\mu_1}\in\mk.
\]
From (\ref{xcd}), we obtain
\begin{align*}
&\tilde C v_{\nu}^*-D_{\nu}=\tilde C v_{\nu}^*-(X_{m_2}^{-1})^*X_{m_2}^*D_{\nu}\\
&=\sum_{\mu^{(m_2)}=(\mu_1,\ldots,\mu_{m_2})}
\overline{\alpha_{\mu^{(m_2)}}}(X_{m_2}^{-1})^{*}\lmk
v_{\mu_{m_2}}^*\cdots v_{\mu_2}^*C_{\mu_1}v_{{\nu}}^*-
v_{\mu_{m_2}}^*\cdots. v_{\mu_2}^*v_{\mu_{1}}^*D_{\nu}\rmk
\in
\ker\gv{N-1},
\end{align*}
for all $\nu\in\{1,\ldots,n\}$,
proving the claim.

Substituting this to (\ref{pcd}), we have
\[
\Phi
=\sum_{\mu_{N}=1}^n\gv{N-1}(D_{\mu_N})\otimes\psi_{\mu_{N}}
=\sum_{\mu_{N}=1}^n\gv{N-1}(\tilde C v_{\mu_N}^*)\otimes\psi_{\mu_{N}}
=\gv{N}(\tilde C)\in\cgv{N}.
\]
\end{proof}
\subsection{Bijectivity of $\gv{N}\vert_{p\mk q}$}\label{bijecgen}
In this section, we prove the bijectivity of 
$\gv{N}$ on ${p\mk q}$ for $N$ large enough,
under {\it Condition 2} and {\it Condition 3}.
Recall the notations (\ref{eq:efdef}) and (\ref{eq:mvpq})
\begin{lem}\label{lem:bijec}
Let $n,k\in\nan$, $p,q\in\caP(\mk)$, and $\vv\in\mk^{\times n}$.
Suppose that the pentad $(n,k,p,q,\vv)$ satisfies
the {\it Condition 2}, and the {\it Condition 3}
for some $m_1\in\nan $.
Then 
\begin{enumerate}
\item 
$M_{\vv,p,q}\le L_\vv<\infty$, i.e.,
$\gv{N}$ is injective on $p\mk q$ for large $N$,
\item
for
all $N\ge \max\{M_{\vv,p,q}, m_1\}$,
the map
$\gv{N}\vert_{p\mk q}:p\mk q\to \cgv{N}$
is bijective.
\end{enumerate}
\end{lem}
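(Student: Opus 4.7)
The plan is to split the work in two. Part~(2) will follow from part~(1) by a dimension count: once $\gv{N}\vert_{p\mk q}$ is injective and $N\ge m_1$, the identity $\ker \gv{N} = \caK_N(\vv)^{\perp}$ in the Hilbert--Schmidt inner product, which is immediate from the definition of $\gv{N}$, gives $\dim\cgv{N} = \dim\caK_N(\vv)$; the \emph{Condition 3} then equates this with $(\rank p)(\rank q) = \dim(p\mk q)$. Hence the injective linear map $\gv{N}\vert_{p\mk q}\colon p\mk q \to \cgv{N}$ is between spaces of equal dimension and is therefore bijective.

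The heart of the proof is thus part~(1), namely $M_{\vv,p,q}\le L_\vv$. The target is the lower bound
\[
\lV \gv{N}(X)\rV^2 \ge (1-E_\vv(N))\,\braket{X}{X}_\vv, \qquad X\in p\mk q,
\]
which, combined with the fact that $\braket{\cdot}{\cdot}_\vv$ is an inner product on $p\mk q$ by (\ref{eq:ac}) and that $E_\vv(N)<\tfrac12$ for $N\ge L_\vv$, gives $\gv{N}(X)\ne 0$ whenever $0\ne X\in p\mk q$.

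To establish this bound, I would identify $\lV\gv{N}(X)\rV^2$ with a quadratic form in the Choi matrix of $T_\vv^N$. Using the vectorization $X\mapsto \ket{X}:=\sum_{ij}X_{ij}\ket{i}\otimes\ket{j}$ and the unnormalized maximally entangled vector $\ket{\Omega}:=\sum_i\ket{i}\otimes\ket{i}$, one has $\ket{\widehat{v_{\mu^{(N)}}}} = (\widehat{v_{\mu^{(N)}}}\otimes \idd)\ket{\Omega}$, so
\[
\lV\gv{N}(X)\rV^2 = \sum_{\mu^{(N)}}\abs{\Tr(X\widehat{v_{\mu^{(N)}}}^*)}^2 = \bra{X}(T_\vv^N\otimes\idd)\lmk\ket{\Omega}\bra{\Omega}\rmk\ket{X}.
\]
By Lemma~\ref{cpmain} I would split $T_\vv^N = P^{T_\vv}_{\{1\}} + (T_\vv^N - P^{T_\vv}_{\{1\}})$ with $P^{T_\vv}_{\{1\}}(A) = \varphi_\vv(A)e_\vv$. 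A short computation yields the principal term
\[
\bra{X}(P^{T_\vv}_{\{1\}}\otimes\idd)\lmk\ket{\Omega}\bra{\Omega}\rmk\ket{X} = \varphi_\vv(X^*e_\vv X) = \braket{X}{X}_\vv.
\]

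The residual $((T_\vv^N - P^{T_\vv}_{\{1\}})\otimes\idd)(\ket{\Omega}\bra{\Omega}) = \sum_{ij}(T_\vv^N - P^{T_\vv}_{\{1\}})(e_{ij})\otimes e_{ij}$ has operator norm controlled by a factor of order $k^2$ times $\lV T_\vv^N - P^{T_\vv}_{\{1\}}\rV$ via an elementary entrywise Cauchy--Schwarz estimate; combined with $\lV X\rV_2^2 \le (a_\vv c_\vv)^{-1}\braket{X}{X}_\vv$ from (\ref{eq:ac}), the error is at most $E_\vv(N)\braket{X}{X}_\vv$, completing the lower bound. The main subtlety I anticipate is tracking the vectorization conventions and the operator norm estimate carefully enough to reproduce precisely the $k^2$ factor built into the definition of $E_\vv(N)$.
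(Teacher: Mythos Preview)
Your proposal is correct and is essentially the paper's own argument. For part~(1) the paper proves exactly the inequality $\lV\gv{N}(X)\rV^2\ge(1-E_\vv(N))\braket{X}{X}_\vv$ (this is Lemma~\ref{lem:ge}), via the identity
\[
\braket{\gv{N}(X)}{\gv{N}(Y)}=\braket{X}{Y}_\vv+\sum_{i,j=1}^{k}\bigl\langle\chi_i^{(k)},\,T_\vv^N\!\circ(\unit-P_{\{1\}}^{T_\vv})\bigl(X^*\ket{\chi_i^{(k)}}\bra{\chi_j^{(k)}}Y\bigr)\,\chi_j^{(k)}\bigr\rangle,
\]
which is precisely your Choi-matrix computation written out as a double sum over matrix units; the $k^2$ factor arises from the $k^2$ terms in this sum, and the passage from $\lV X\rV$ to $\braket{X}{X}_\vv^{1/2}$ is (\ref{eq:ac}), just as you planned. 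For part~(2) the paper does the same dimension count, using $\dim\cgv{N}=\dim\caK_N(\vv)=(\rank p)(\rank q)$ implicitly.
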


We first introduce the Lemma which we will use repeatedly.
\begin{lem}\label{lem:ge}
Let $n,k\in\nan$, $p,q\in\caP(\mk)$, and $\vv\in\mk^{\times n}$
satisfying the {\it Condition 2}.
Then for any $X,Y\in p\mk q$ and $N\in\nan$,
we have
\begin{align}\label{eq:xyg}
\lv
\braket{\gv{N}(X)}{\gv{N}(Y)}-\braket{X}{Y}_{\vv}
\rv
\le E_{\vv}(N)\braket{X}{X}_{\vv}^{\frac 12}
\braket{Y}{Y}_{\vv}^{\frac 12},
\end{align}
and
\begin{align}\label{eq:xxg}
\lmk 1-E_{\vv}(N)\rmk
\braket{X}{X}_{\vv}
\le
\lV{\gv{N}(X)}\rV^2
\le
\lmk 1+E_{\vv}(N)\rmk
\braket{X}{X}_{\vv}.
\end{align}
Furthermore, for 
$X\in p\mk q$ and $N\in\nan$ with $N\ge L_{\vv}$,
we have
\begin{align}\label{eq:gtn}
\lV
X
\rV_2
\le
\sqrt{\frac{2}{a_{\vv}c_{\vv}}}\lV \gv{N}(X)\rV,
\end{align}
and $\gv{N}$ is injective on $p\mk q$.
\end{lem}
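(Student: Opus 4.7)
The plan is to express both $\braket{\gv{N}(X)}{\gv{N}(Y)}$ and $\braket{X}{Y}_\vv$ as contractions of the matrix entries of $X,Y$ against a four-index kernel --- the kernel being $T_\vv^N$ for the former and the rank-one spectral projection $P^{T_\vv}_{\{1\}}(\cdot)=\varphi_\vv(\cdot)e_\vv$ for the latter --- so that their difference is controlled by the operator norm $\lV T_\vv^N(\unit-P^{T_\vv}_{\{1\}})\rV=\tilde E_\vv(N)/k^2$. The bound (\ref{eq:ac}) then converts the resulting Hilbert--Schmidt norms into $\braket{\cdot}{\cdot}_\vv$-norms, turning $\tilde E_\vv(N)$ into $E_\vv(N)$ and giving (\ref{eq:xyg}); the remaining two claims drop out immediately.

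First I would carry out the rank-one reduction. Directly unfolding (\ref{eq:gbdef}) gives
\[
\braket{\gv{N}(\ket\xi\bra\eta)}{\gv{N}(\ket\zeta\bra\omega)}=\sum_{\mu^{(N)}}\braket{\xi}{\widehat{v_{\mu^{(N)}}}\eta}\braket{\widehat{v_{\mu^{(N)}}}\omega}{\zeta}=\braket{\xi}{T_\vv^N(\ket\eta\bra\omega)\zeta},
\]
which, extended bilinearly in the matrix entries of $X$ and $Y$, yields $\braket{\gv{N}(X)}{\gv{N}(Y)}=\sum_{a,b,c,d}\overline{X_{ab}}Y_{cd}\braket{a}{T_\vv^N(\ket b\bra d)c}$. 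A component expansion of $\braket{X}{Y}_\vv=\Tr(\rho_\vv X^*e_\vv Y)$ produces the identical sum with kernel $(e_\vv)_{ac}(\rho_\vv)_{db}=\braket{a}{P^{T_\vv}_{\{1\}}(\ket b\bra d)c}$, the last identification using $P^{T_\vv}_{\{1\}}(\cdot)=\varphi_\vv(\cdot)e_\vv$ from Spectral Property II (Lemma \ref{cpmain}).

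Subtracting the two expressions and using $T_\vv^N-P^{T_\vv}_{\{1\}}=T_\vv^N(\unit-P^{T_\vv}_{\{1\}})$, each matrix element of the kernel is bounded by $\tilde E_\vv(N)/k^2$; two applications of Cauchy--Schwarz on $\sum_{ab}\lv X_{ab}\rv\le k\lV X\rV_2$ and $\sum_{cd}\lv Y_{cd}\rv\le k\lV Y\rV_2$ then produce
\[
\lv\braket{\gv{N}(X)}{\gv{N}(Y)}-\braket{X}{Y}_\vv\rv\le\tilde E_\vv(N)\lV X\rV_2\lV Y\rV_2.
\]
For $X,Y\in p\mk q$, the bound (\ref{eq:ac}) converts each of $\lV X\rV_2,\lV Y\rV_2$ into $(a_\vv c_\vv)^{-1/2}\braket{\cdot}{\cdot}_\vv^{1/2}$, producing the factor $E_\vv(N)=(a_\vv c_\vv)^{-1}\tilde E_\vv(N)$ and establishing (\ref{eq:xyg}). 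Setting $Y=X$ immediately gives (\ref{eq:xxg}). For $N\ge L_\vv$, the definition of $L_\vv$ forces $E_\vv(N)<1/2$, so the lower bound of (\ref{eq:xxg}) combined with (\ref{eq:ac}) yields $\tfrac12 a_\vv c_\vv\lV X\rV_2^2\le\lV\gv{N}(X)\rV^2$, from which (\ref{eq:gtn}) and the injectivity of $\gv{N}\vert_{p\mk q}$ follow at once. No step is a genuine obstacle; the piece requiring the most care is the identification $(e_\vv)_{ac}(\rho_\vv)_{db}=\braket{a}{P^{T_\vv}_{\{1\}}(\ket b\bra d)c}$, which aligns the two bilinear forms against the same four-index tensor so that their difference is cleanly a contraction against $T_\vv^N-P^{T_\vv}_{\{1\}}$.
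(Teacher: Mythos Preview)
Your proof is correct and follows essentially the same route as the paper. The paper organizes the computation as $\braket{\gv{N}(X)}{\gv{N}(Y)}=\sum_{i,j}\braket{\chi_i^{(k)}}{T_\vv^N(X^*\ket{\chi_i^{(k)}}\bra{\chi_j^{(k)}}Y)\chi_j^{(k)}}$ (inserting the basis between $X^*$ and $Y$) and bounds by $\tilde E_\vv(N)\lV X\rV\lV Y\rV$ with the operator norm, whereas you expand $X,Y$ themselves in matrix units and land on $\tilde E_\vv(N)\lV X\rV_2\lV Y\rV_2$; both are the same calculation up to bookkeeping, and both finish via~(\ref{eq:ac}).
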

\begin{proof}
The proof is basically Section 5 of \cite{Fannes:1992vq},
but the fact that our $e_{\vv},\rho_{\vv}$
are not strictly positive requires an additional argument.
As in Lemma 5.2 of \cite{Fannes:1992vq},
we have for any $X,Y\in p \mk q$ and $N\in\nan$,
\begin{align}\label{eq:gg}
&\braket{\gv{N}(X)}{\gv{N}(Y)}\nonumber\\
&=\braket{X}{Y}_{\vv}
+\sum_{i=1}^{k}\sum_{j=1}^{k}
\braket{\chi_i^{(k)}}{T_{\vv}^N\circ\lmk \unit-P_{\{1\}}^{T_{\vv}}\rmk
\lmk
X^*\ket{\chi_i^{(k)}}\bra{\chi_j^{(k)}}Y\rmk
\chi_j^{(k)}},
\end{align}
and 
\begin{align*}
&\lv
\braket{\gv{N}(X)}{\gv{N}(Y)}-\braket{X}{Y}_{\vv}
\rv
\le
k^2\lV T_{\vv}^N\lmk 1- P^{T_{\vv}}_{\{1\}}\rmk\rV\lV X\rV\lV Y\rV\\
&=\tilde E_{\vv}(N)\lV X\rV\lV Y\rV
\le E_{\vv}(N)\braket{X}{X}_{\vv}^{\frac 12}
\braket{Y}{Y}_{\vv}^{\frac 12}.
\end{align*}
Here we used (\ref{eq:ac}) for the last inequality.
The inequality (\ref{eq:xxg}) is clear from (\ref{eq:xyg}).
Furthermore, if $N\ge L_{\vv}$,
we have $E_{\vv}(N)<\frac 12$. Substituting this to
(\ref{eq:xxg}) and using (\ref{eq:ac}), we obtain
(\ref{eq:gtn}). The injectivity is clear from this in equality.
\end{proof}
Now we prove Lemma \ref{lem:bijec}
\begin{proofof}[Lemma \ref{lem:bijec}]
{\it 1.} was already proven in Lemma \ref{lem:ge}.
To see {\it 2},
let $N\ge \max\{M_{\vv,p,q},m_1\}$.
As $\gv{N}$ is injective on $p\mk q$,
we have
\[
(\rank p)(\rank q)=
\dim p\mk q=\dim\gv{N}\lmk p\mk q\rmk
\le
\dim\gv{N}(\mk)=(\rank p)(\rank q),
\]
and $\gv{N}\lmk p\mk q\rmk\subset \gv{N}(\mk)$.
Hence we have $\gv{N}\lmk p\mk q\rmk= \gv{N}(\mk)$,
proving {\it 2}.
\end{proofof}
\subsection{Estimation of the overlaps of projections}\label{gapgen}
In this section we prove the following Lemma.
\begin{lem}\label{lem:ov}
Let $n,k\in\nan$, $p,q\in\caP(\mk)$, and $\vv\in\mk^{\times n}$.
Suppose that the pentad $(n,k,p,q,\vv)$ satisfies 
the {\it Condition 2} and the {\it Condition 3}
for some $m_1\in\nan $.
Then for all $l,m,r\in\nan$ with $m\ge \max\{m_1, L_{\vv}\}$,
we have
\begin{align}
\lV
\lmk \unit_{[0,l-1]}\otimes \gvp{m+r}\rmk\lmk \gvp{l+m}\otimes
\unit_{[l+m,l+m+r-1]}-\gvp{l+m+r}\rmk
\rV\le
2 F_{\vv}E_{\vv}(m)\lmk
F_{\vv}^2E_{\vv}(m)+1
\rmk.
\end{align}
\end{lem}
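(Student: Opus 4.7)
The plan is to reduce the target $\lV K(P_1-P)\rV$ (with $P:=\gvp{l+m+r}$, $P_1:=\gvp{l+m}\otimes\unit$, $K:=\unit\otimes\gvp{m+r}$) to a quadratic-form comparison on $\Ran K$, and then evaluate that form via the approximate isometry (\ref{eq:xyg}) from Lemma \ref{lem:ge}. The two natural factorizations $\gv{l+m+r}(X)=\sum_{\mu^{(l)}}\widehat{\psi_{\mu^{(l)}}}\otimes\gv{m+r}((\widehat{v_{\mu^{(l)}}})^*X)=\sum_{\mu^{(r)}}\gv{l+m}(X(\widehat{v_{\mu^{(r)}}})^*)\otimes\widehat{\psi_{\mu^{(r)}}}$ give $\cgv{l+m+r}\subset(\cc^n)^{\otimes l}\otimes\cgv{m+r}$ and $\cgv{l+m+r}\subset\cgv{l+m}\otimes(\cc^n)^{\otimes r}$, so $KP=PK=P$ and $P_1P=PP_1=P$. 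Taking adjoints, $\lV K(P_1-P)\rV=\lV(P_1-P)K\rV$, and for $\Psi\in\Ran K$ the orthogonality $P\Psi\perp(P_1-P)\Psi$ (from $P(P_1-P)=0$) yields
\[
\lV(P_1-P)K\rV^2=\sup\left\{\lV P_1\Psi\rV^2-\lV P\Psi\rV^2\;:\;\Psi\in\Ran K,\;\lV\Psi\rV\le1\right\}.
\]

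Since $m\ge\max\{m_1,L_\vv\}$, Lemma \ref{lem:bijec} gives bijectivity of $\gv{N}|_{p\mk q}$ onto $\cgv{N}$ for all $N\ge m$. Every $\Psi\in\Ran K$ therefore has a unique representation $\Psi=\sum_{\mu^{(l)}}\widehat{\psi_{\mu^{(l)}}}\otimes\gv{m+r}(Y_{\mu^{(l)}})$ with $Y_{\mu^{(l)}}\in p\mk q$, and $\lV\Psi\rV^2=(1+O(E_\vv(m)))\sum_{\mu^{(l)}}\braket{Y_{\mu^{(l)}}}{Y_{\mu^{(l)}}}_\vv$ by (\ref{eq:xxg}) at scale $m+r$, using that $E_\vv(\cdot)$ is monotone nonincreasing (an immediate consequence of item~(6) of the Spectral Property~II) to dominate every relevant scale's error by $E_\vv(m)$. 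I compute $\lV P\Psi\rV^2$ by expanding $\gvp{l+m+r}$ in the basis $\{\gv{l+m+r}(e_a)\}$ of $\cgv{l+m+r}$ for an orthonormal basis $\{e_a\}$ of $(p\mk q,\braket{\cdot}{\cdot}_\vv)$, using the left-factorization of $\gv{l+m+r}$ to reduce each matrix element $\braket{\gv{l+m+r}(e_a)}{\Psi}$ and then applying (\ref{eq:xyg}) at scale $m+r$. For $\lV P_1\Psi\rV^2$, I reorganize $\Psi=\sum_{\sigma^{(r)}}\Omega_{\sigma^{(r)}}\otimes\widehat{\psi_{\sigma^{(r)}}}$ with $\Omega_{\sigma^{(r)}}:=\sum_{\mu^{(l)}}\widehat{\psi_{\mu^{(l)}}}\otimes\gv{m}(Y_{\mu^{(l)}}(\widehat{v_{\sigma^{(r)}}})^*)$, expand $\gvp{l+m}$ in the basis $\{\gv{l+m}(e_a)\}$, and apply (\ref{eq:xyg}) at scale $m$. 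Both expansions yield the same principal quadratic form in $\{Y_{\mu^{(l)}}\}$, matched through the $T_\vv$-invariance identity $\sum_{\mu^{(l)}}\widehat{v_{\mu^{(l)}}}e_\vv(\widehat{v_{\mu^{(l)}}})^*=T_\vv^l(e_\vv)=e_\vv$, so the $O(1)$ and $O(E_\vv(m))$ contributions cancel. The residual $O(F_\vv^2 E_\vv(m)^2)$ error, with multi-index sums over $\mu^{(l)}$ controlled by the uniform bound $\sup_N\lV T_\vv^N\rV\le F_\vv$ from (\ref{eq:tnb}), combines into the squared bound $4F_\vv^2 E_\vv(m)^2(F_\vv^2 E_\vv(m)+1)^2$, whose square root is the stated estimate.

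The main obstacle I anticipate is that the intermediate expressions $Y_{\mu^{(l)}}(\widehat{v_{\sigma^{(r)}}})^*$ and $(\widehat{v_{\mu^{(l)}}})^*e_a$ lie in $p\mk$ or $\mk q$ but not necessarily in $p\mk q$, because conditions (ii)--(iii) of \emph{Condition 2} constrain $p(\widehat{v_\mu})^*$ and $(\widehat{v_\mu})^*q$ but leave $\bar p(\widehat{v_\mu})^*p$ unconstrained. To apply (\ref{eq:xyg}) legitimately I exploit the observation $\braket{Z-pZq}{\cdot}_\vv=0$ for all $Z\in\mk$ (a direct consequence of $e_\vv=pe_\vv p$ and $s(\varphi_\vv)=q$), together with its corollary $\lV\gv{N}(Z-pZq)\rV^2\le\tilde E_\vv(N)\lV Z-pZq\rV^2$ from (\ref{eq:xyg}), so that the non-$p\mk q$ components contribute only error terms of order $E_\vv(m)$ that are absorbed into the final bound. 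The most delicate step in the bookkeeping is verifying the precise cancellation of the first-order-in-$E_\vv(m)$ contributions to $\lV P_1\Psi\rV^2$ and $\lV P\Psi\rV^2$, since the target bound scales linearly in $E_\vv(m)$ rather than as a square root only because of this cancellation.
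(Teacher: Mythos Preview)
Your reduction $\lV(P_1-P)K\rV^2=\sup_{\Psi\in\Ran K,\lV\Psi\rV\le1}\bigl(\lV P_1\Psi\rV^2-\lV P\Psi\rV^2\bigr)$ is correct, and you are right that the principal parts of $\lV P_1\Psi\rV^2$ and $\lV P\Psi\rV^2$ both equal $\braket{W}{W}_\vv$ with $W:=\sum_{\mu^{(l)}}x_\vv\widehat{v_{\mu^{(l)}}}e_\vv Y_{\mu^{(l)}}$, by $T_\vv$-invariance of $e_\vv$ and $\varphi_\vv$. The gap is in the next order: the first-order remainders do \emph{not} cancel. When you apply (\ref{eq:gg}) at scale $m+r$ to $\lV P\Psi\rV^2$ and at scale $m$ to $\lV P_1\Psi\rV^2$, the error terms are of the form $\sum_{i,j}\braket{\chi_i}{T_\vv^{N}(1-P_{\{1\}}^{T_\vv})(\,\cdots\,)\chi_j}$ with different values of $N$ and genuinely different inner arguments (one involves $Y_{\mu^{(l)}}$, the other $Y_{\mu^{(l)}}(\widehat{v_{\sigma^{(r)}}})^*$). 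There is no algebraic identity forcing these to match; each is merely bounded by $O(E_\vv(m))\sum_{\mu}\braket{Y_\mu}{Y_\mu}_\vv$. Hence your method, as written, yields $\lV P_1\Psi\rV^2-\lV P\Psi\rV^2=O(E_\vv(m))\lV\Psi\rV^2$ and therefore only $\lV K(P_1-P)\rV=O(E_\vv(m)^{1/2})$, which is strictly weaker than the target bound. (Incidentally, $E_\vv(N)=(a_\vv c_\vv)^{-1}k^2\lV T_\vv^N(1-P_{\{1\}}^{T_\vv})\rV$ is not monotone in general, since $\lV T_\vv\rV$ may exceed $1$; item~(6) of the Spectral Property~II gives a monotone \emph{envelope}, not monotonicity of $E_\vv$ itself.)

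The paper avoids this loss by working bilinearly on the orthogonal complements. Writing $K(P_1-P)=(K-P)(P_1-P)$, one bounds $|\braket{\Phi}{\Psi}|$ for arbitrary $\Phi\in\Ran(P_1-P)$ and $\Psi\in\Ran(K-P)$. The key device is to associate to such $\Phi,\Psi$ the elements $V^{\mathfrak Q^\Phi}=\sum_{\mu^{(r)}}\Phi_{\mu^{(r)}}\rho_\vv\widehat{v_{\mu^{(r)}}}y_\vv$ and $V_{\mathfrak Q_\Psi}=\sum_{\mu^{(l)}}x_\vv\widehat{v_{\mu^{(l)}}}e_\vv\Psi_{\mu^{(l)}}$ (your $W$ is exactly $V_{\mathfrak Q_\Psi}$) and prove two separate estimates: first (Lemma~\ref{lem:df}), $|\braket{\Phi}{\Psi}-\braket{V^{\mathfrak Q^\Phi}}{V_{\mathfrak Q_\Psi}}_\vv|$ is $O(E_\vv(m))\lV\Phi\rV\lV\Psi\rV$, with the error arising \emph{only} at scale $m$; second (Lemma~\ref{lem:vv}), the orthogonality $\Phi,\Psi\perp\cgv{l+m+r}$ forces $\braket{V^{\mathfrak Q^\Phi}}{V^{\mathfrak Q^\Phi}}_\vv^{1/2}$ and $\braket{V_{\mathfrak Q_\Psi}}{V_{\mathfrak Q_\Psi}}_\vv^{1/2}$ to be $O(E_\vv(m))$ already. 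Thus $|\braket{V^{\mathfrak Q^\Phi}}{V_{\mathfrak Q_\Psi}}_\vv|=O(E_\vv(m)^2)$ and the total is $O(E_\vv(m))$, giving the stated bound with the precise constants. The point you are missing is that the orthogonality to $\cgv{l+m+r}$ must be used \emph{explicitly} to make the principal term small; subtracting two large quantities each computed with $O(E_\vv(m))$ accuracy cannot recover this.
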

The proof goes parallel to Lemma 6.2 of \cite{Fannes:1992vq}.
However, again the fact that our $e_{\vv},\rho_{\vv}$
are not strictly positive requires additional arguments.
\begin{lem}\label{lem:df}
Let $n,k\in\nan$, $p,q\in\caP(\mk)$, and $\vv\in\mk^{\times n}$
satisfying {\it Condition 2}.
Let $l,m,r\in\nan$,
$\Phi\in \cgv{l+m}\otimes\lmk \cc^n\rmk^{\otimes r}$
and $\Psi\in \lmk \cc^n\rmk^{\otimes l} \otimes \cgv{m+r}$.
Let ${\mathfrak P}^{\Phi}:=\{\tilde \Phi_{\mu^{(r)}}\}_{\mu^{(r)}\in\{1,\ldots,n\}^{\times r}}\subset \mk$,
${\mathfrak P}_{\Psi}:=\{\tilde \Psi_{\mu^{(l)}}\}_{\mu^{(l)}\in\{1,\ldots,n\}^{\times l}}\subset \mk$
be sets of matrices
such that
\begin{align*}
\Phi=\sum_{\mu^{(r)}} \gv{l+m}\lmk \tilde \Phi_{\mu^{(r)}}\rmk\otimes\widehat{\psi_{\mu^{(r)}}}
,\quad
\Psi=\sum_{\mu^{(l)}} \widehat{\psi_{\mu^{(l)}}}
\otimes\gv{m+r}\lmk \tilde \Psi_{\mu^{(l)}}\rmk.
\end{align*}
Define
\begin{align}\label{eq:vp}
V^{{\mathfrak P}^{\Phi}}:=\sum_{\mu^{(r)}}
\tilde \Phi_{\mu^{(r)}}\rho_{\vv}\widehat{v_{\mu^{(r)}}}y_{\vv},\quad
V_{{\mathfrak P}_{\Psi}}:=\sum_{\mu^{(l)}}
x_{\vv}\widehat{v_{\mu^{(l)}}}e_{\vv}\tilde \Psi_{\mu^{(l)}}\in\mk.
\end{align}
Then we have
\begin{align}\label{eq:pp}
\lv
\braket{\Phi}{\Psi}
-\braket{V^{{\mathfrak P}^{\Phi}}}{V_{{\mathfrak P}_{\Psi}}}_{\vv}
\rv
\le
k\lV T^m_{\vv}\lmk\unit-P^{T_{\vv}}_{\{1\}}\rmk\rV
\lmk
\Tr T^l_{\vv}(1)\sum_{\mu^{(r)}}\tilde \Phi_{\mu^{(r)}}\tilde \Phi_{\mu^{(r)}}^*\rmk^{\frac 12}
\lmk
\Tr T^r_{\vv}\lmk
\sum_{\mu^{(l)}}\tilde \Psi_{\mu^{(l)}}^*\tilde 
\Psi_{\mu^{(l)}}
\rmk
\rmk^{\frac 12}
\end{align}
\end{lem}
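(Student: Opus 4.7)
The plan is to expand $\braket{\Phi}{\Psi}$ in matrix product form, isolate the $m$-site block at positions $[l,l+m-1]$ so that the sum over its tensor indices is recognized as $\braket{\gv{m}(\cdot)}{\gv{m}(\cdot)}$, invoke identity~(\ref{eq:gg}) from the proof of Lemma~\ref{lem:ge} to replace each such inner product with $\braket{\cdot}{\cdot}_{\vv}$ plus a remainder controlled by $K:=T_{\vv}^{m}(\unit-P^{T_{\vv}}_{\{1\}})$, match the main term with $\braket{V^{{\mathfrak P}^{\Phi}}}{V_{{\mathfrak P}_{\Psi}}}_{\vv}$, and bound the remainder via Cauchy--Schwarz exploiting a factorization of the summands.

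First I would write
\[
\gv{l+m}(\tilde\Phi_{\mu^{(r)}})=\sum_{\nu^{(l)},\eta^{(m)}}\Tr\lmk\tilde\Phi_{\mu^{(r)}}\widehat{v_{\eta^{(m)}}}^{*}\widehat{v_{\nu^{(l)}}}^{*}\rmk\widehat{\psi_{\nu^{(l)}}}\otimes\widehat{\psi_{\eta^{(m)}}},
\]
placing $\widehat{\psi_{\nu^{(l)}}}$ on $[0,l-1]$ and $\widehat{\psi_{\eta^{(m)}}}$ on $[l,l+m-1]$, and analogously split $\gv{m+r}(\tilde\Psi_{\mu^{(l)}})$ between $[l,l+m-1]$ and $[l+m,l+m+r-1]$. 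Orthonormality of the $\widehat{\psi}$'s matches the outer indices and cyclicity of the trace collects the $\eta^{(m)}$-sum into
\[
\braket{\Phi}{\Psi}=\sum_{\mu^{(r)},\nu^{(l)}}\braket{\gv{m}\lmk\widehat{v_{\nu^{(l)}}}^{*}\tilde\Phi_{\mu^{(r)}}\rmk}{\gv{m}\lmk\tilde\Psi_{\nu^{(l)}}\widehat{v_{\mu^{(r)}}}^{*}\rmk}.
\]
Identity~(\ref{eq:gg}) with $N=m$ splits each summand into $\braket{\widehat{v_{\nu^{(l)}}}^{*}\tilde\Phi_{\mu^{(r)}}}{\tilde\Psi_{\nu^{(l)}}\widehat{v_{\mu^{(r)}}}^{*}}_{\vv}$ plus a $K$-remainder. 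To identify the sum of the main terms with $\braket{V^{{\mathfrak P}^{\Phi}}}{V_{{\mathfrak P}_{\Psi}}}_{\vv}$, I rewrite $\braket{X}{Y}_{\vv}=\Tr\lmk\rho_{\vv}X^{*}e_{\vv}Y\rmk$ and use cyclicity: Lemma~\ref{cpmain} gives $\rho_{\vv}=q\rho_{\vv}q$ and $e_{\vv}=pe_{\vv}p$, and Conditions~(ii), (iii) of \emph{Condition~2} force $\widehat{v_{\nu^{(l)}}}e_{\vv}=p\widehat{v_{\nu^{(l)}}}e_{\vv}$ and $\widehat{v_{\mu^{(r)}}}^{*}\rho_{\vv}=q\widehat{v_{\mu^{(r)}}}^{*}\rho_{\vv}$, so with $e_{\vv}x_{\vv}=p$ and $\rho_{\vv}y_{\vv}=q$ the main term equals $\varphi_{\vv}\lmk(V^{{\mathfrak P}^{\Phi}})^{*}e_{\vv}V_{{\mathfrak P}_{\Psi}}\rmk$.

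For the remainder, linearity of $K$ lets me pull the $(\mu^{(r)},\nu^{(l)})$-sum inside to obtain $\sum_{i,j}\braket{\chi_{i}^{(k)}}{K(M_{ij})\chi_{j}^{(k)}}$ with $M_{ij}=\sum_{\mu^{(r)},\nu^{(l)}}\ket{\tilde\Phi_{\mu^{(r)}}^{*}\widehat{v_{\nu^{(l)}}}\chi_{i}^{(k)}}\bra{\widehat{v_{\mu^{(r)}}}\tilde\Psi_{\nu^{(l)}}^{*}\chi_{j}^{(k)}}$. The key observation is that $M_{ij}=\tilde U_{i}\tilde W_{j}^{*}$, where $\tilde U_{i}$, $\tilde W_{j}$ are rectangular matrices to $\cc^{k}$ indexed over $(\mu^{(r)},\nu^{(l)})$ whose columns are the listed vectors. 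Then $\lV K(M_{ij})\rV\le\lV K\rV\lV\tilde U_{i}\rV\lV\tilde W_{j}\rV$ in operator norms, the passage to Hilbert--Schmidt norm gives $\lV\tilde U_{i}\rV\le\lV\tilde U_{i}\rV_{\mathrm{HS}}$, and Cauchy--Schwarz in $i$ yields $\sum_{i}\lV\tilde U_{i}\rV\le\sqrt{k}\sqrt{\Tr\lmk T_{\vv}^{l}(\unit)\sum_{\mu^{(r)}}\tilde\Phi_{\mu^{(r)}}\tilde\Phi_{\mu^{(r)}}^{*}\rmk}$, where I used $T_{\vv}^{l}(\unit)=\sum_{\nu^{(l)}}\widehat{v_{\nu^{(l)}}}\widehat{v_{\nu^{(l)}}}^{*}$; the analogous bound for $\tilde W_{j}$ via the duality $\Tr(T_{\vv}^{r}(A))=\Tr(A(T_{\vv}^{*})^{r}(\unit))$ produces the $T_{\vv}^{r}$ factor, and combining the two reproduces the claimed bound with the prefactor $k$.

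The main obstacle is this remainder bound: a naive term-by-term Cauchy--Schwarz would introduce an unwanted $\sqrt{n^{l+r}}$. The factorization $M_{ij}=\tilde U_{i}\tilde W_{j}^{*}$ together with the operator-to-Hilbert--Schmidt comparison is what collapses the $(\mu,\nu,i)$ and $(\mu,\nu,j)$ sums into the quadratic forms $\Tr(T_{\vv}^{l}(\unit)\sum_{\mu^{(r)}}\tilde\Phi_{\mu^{(r)}}\tilde\Phi_{\mu^{(r)}}^{*})$ and $\Tr(T_{\vv}^{r}(\sum_{\mu^{(l)}}\tilde\Psi_{\mu^{(l)}}^{*}\tilde\Psi_{\mu^{(l)}}))$, and the fact that our $e_{\vv},\rho_{\vv}$ are not strictly positive (unlike in~\cite{Fannes:1992vq}) forces the extra support insertions of $p,q$ handled in the main-term matching step.
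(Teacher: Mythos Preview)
Your proposal is correct and follows essentially the same approach as the paper: expand $\braket{\Phi}{\Psi}$, recognize the middle $m$-block as $\braket{\gv{m}(\cdot)}{\gv{m}(\cdot)}$, apply the identity~(\ref{eq:gg}), identify the main term with $\braket{V^{{\mathfrak P}^{\Phi}}}{V_{{\mathfrak P}_{\Psi}}}_{\vv}$ via the support insertions $e_{\vv}x_{\vv}=p$, $\rho_{\vv}y_{\vv}=q$, and bound the remainder. The paper compresses all of this into the phrase ``as in the proof of Lemma~\ref{lem:ge}, it is straightforward'' and only spells out the main-term identification; your factorization $M_{ij}=\tilde U_i\tilde W_j^{*}$ and the operator-to-Hilbert--Schmidt comparison give precisely the details the paper omits, with the correct prefactor $k$. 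One small remark: identity~(\ref{eq:gg}) is literally stated in the paper for $X,Y\in p\mk q$, but the derivation does not use that restriction (it is only needed for the subsequent bound via~(\ref{eq:ac})), so your application to general $X,Y\in\mk$ is valid.
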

\begin{rem}
Note that such 
${\mathfrak P}^{\Phi}$, ${\mathfrak P}_{\Psi}$ are not necessarily
unique.
\end{rem}
\begin{proof}
As in the proof of Lemma \ref{lem:ge}, it is straightforward to show that
\begin{align}\label{eq:gg}
\braket{\Phi}{\Psi}-
\sum_{\mu^{(r)}\mu^{(l)}}
\varphi_{\vv}\lmk
\tilde \Phi_{\mu^{(r)}}^*
\widehat{v_{\mu^{(l)}}}e_{\vv}\tilde \Psi_{\mu^{(l)}}
\lmk \widehat{v_{\mu^{(r)}}}\rmk^*\rmk
\end{align}
is bounded by the right hand side of (\ref{eq:pp}).
We rewrite the second term.
Note that $\lmk \widehat{v_{\mu^{(l)}}}\rmk p=p\lmk \widehat{v_{\mu^{(l)}}}\rmk p$ and
$\lmk \widehat{v_{\mu^{(r)}}}\rmk^* q=q\lmk \widehat{v_{\mu^{(r)}}}\rmk^* q$.
From this and the relations $e_\vv x_\vv=p$, $\rho_{\vv}y_{\vv}=q$, 
$s(e_\vv)=p$ and $s(\rho_{\vv})=q$, we have
\begin{align*}
&\sum_{\mu^{(r)}\mu^{(l)}}
\varphi_{\vv}\lmk
\tilde \Phi_{\mu^{(r)}}^*
\widehat{v_{\mu^{(l)}}}e_{\vv}\tilde \Psi_{\mu^{(l)}}
\lmk \widehat{v_{\mu^{(r)}}}\rmk^*\rmk
=\sum_{\mu^{(r)}\mu^{(l)}}
\Tr\lmk\rho_{\vv} y_{\vv}\lmk \widehat{v_{\mu^{(r)}}}\rmk^*\rho_{\vv}
\tilde \Phi_{\mu^{(r)}}^*e_{\vv}x_{\vv}
\widehat{v_{\mu^{(l)}}}e_{\vv}\tilde \Psi_{\mu^{(l)}}
\rmk\\
&=\varphi_{\vv}\lmk
\lmk{V^{{\mathfrak P}^{\Phi}}}\rmk^*
e_\vv
\lmk
{V_{{\mathfrak P}_{\Psi}}}\rmk
\rmk=\braket{V^{{\mathfrak P}^{\Phi}}}{V_{{\mathfrak P}_{\Psi}}}_{\vv}.
\end{align*}
\end{proof}
\begin{lem}\label{lem:qpp}
Let $n,k\in\nan$, $p,q\in\caP(\mk)$, and $\vv\in\mk^{\times n}$.
Suppose that the pentad $(n,k,p,q,\vv)$ satisfies the
{\it Condition 2} and the {\it Condition 3}
for some $m_1\in\nan $.
Let $l,m,r\in\nan$ with $m\ge \max\{m_1, L_{\vv}\}$,
$\Phi\in \cgv{l+m}\otimes\lmk \cc^n\rmk^{\otimes r}$
and $\Psi\in \lmk \cc^n\rmk^{\otimes l} \otimes \cgv{m+r}$.
Then 
\begin{description}
\item[(1)]
there exist unique 
${\mathfrak Q}^{\Phi}:=\{ \Phi_{\mu^{(r)}}\}_{\mu^{(r)}\in\{1,\ldots,n\}^{\times r}}\subset p \mk q$,
${\mathfrak Q}_{\Psi}:=\{ \Psi_{\mu^{(l)}}\}_{\mu^{(l)}\in\{1,\ldots,n\}^{\times l}}\subset p\mk q$
such that
\begin{align*}
\Phi=\sum_{\mu^{(r)}} \gv{l+m}\lmk \Phi_{\mu^{(r)}}\rmk\otimes\widehat{\psi_{\mu^{(r)}}}
,\quad
\Psi=\sum_{\mu^{(l)}} \widehat{\psi_{\mu^{(l)}}}
\otimes\gv{m+r}\lmk \Psi_{\mu^{(l)}}\rmk.
\end{align*}
\item[(2)]
For ${\mathfrak Q}^{\Phi}$, ${\mathfrak Q}_{\Psi}$ of
(1), we have
\begin{align*}
\Tr\lmk
T^l_\vv(1)\lmk
\sum_{\mu^{(r)}}\Phi_{\mu^{(r)}}
\Phi_{\mu^{(r)}}^*
\rmk
\rmk
\le
\frac{2F_{\vv}}{a_{\vv}c_{\vv}}
\lV \Phi\rV^2,\quad
\Tr\lmk
T^r_\vv\lmk
\sum_{\mu^{(l)}}\Psi_{\mu^{(l)}}^*
\Psi_{\mu^{(l)}}
\rmk
\rmk
\le
\frac{2kF_{\vv}}{a_{\vv}c_{\vv}}
\lV \Psi\rV^2.
\end{align*}
\item[(3)]
For any $X\in p\mk q$,
we have
\begin{align*}
\Tr\lmk
T^l_\vv(1)
\sum_{\mu^{(r)}}
\lmk X\lmk\widehat{v_{\mu^{(r)}}}\rmk^*\rmk
\lmk
\lmk X\lmk\widehat{v_{\mu^{(r)}}}\rmk^*\rmk^*
\rmk
\rmk
\le
\frac{kF_{\vv}^2}{a_{\vv}c_{\vv}}
\braket{X}{X}_{\vv}\\
\Tr\lmk
T^r_\vv\lmk
\sum_{\mu^{(l)}}
\lmk \lmk\widehat{v_{\mu^{(l)}}}\rmk^*X\rmk^*
\lmk \lmk\widehat{v_{\mu^{(l)}}}\rmk^*X\rmk
\rmk
\rmk
\le
\frac{kF_{\vv}^2}{a_{\vv}c_{\vv}}
\braket{X}{X}_{\vv}
\end{align*}\end{description}
\end{lem}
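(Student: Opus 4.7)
For (1), note that $l+m, m+r \ge m \ge \max\{m_1, L_{\vv}\}$, and by Lemma~\ref{lem:bijec}(1) we have $M_{\vv,p,q} \le L_{\vv}$, so Lemma~\ref{lem:bijec}(2) provides bijections $\gv{l+m}\vert_{p\mk q}\colon p\mk q \to \cgv{l+m}$ and $\gv{m+r}\vert_{p\mk q}\colon p\mk q \to \cgv{m+r}$. Expanding $\Phi \in \cgv{l+m}\otimes \lmk\cc^n\rmk^{\otimes r}$ and $\Psi \in \lmk\cc^n\rmk^{\otimes l}\otimes\cgv{m+r}$ in the product orthonormal bases $\{\widehat{\psi_{\mu^{(r)}}}\}$ and $\{\widehat{\psi_{\mu^{(l)}}}\}$ of the trivial tensor factors produces uniquely determined coefficients in $\cgv{l+m}$ and $\cgv{m+r}$ respectively; the two bijections pull these back uniquely to $\Phi_{\mu^{(r)}}, \Psi_{\mu^{(l)}} \in p\mk q$.

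For (2), I first convert the Hilbert-space norms $\lV\Phi\rV^2, \lV\Psi\rV^2$ into sums of quasi-inner-product norms using (\ref{eq:xxg}) of Lemma~\ref{lem:ge}. Since $l+m, m+r \ge L_{\vv}$ forces $E_{\vv}(l+m), E_{\vv}(m+r) < \tfrac12$, orthogonality of the product bases and (\ref{eq:xxg}) yield
\[
\sum_{\mu^{(r)}}\braket{\Phi_{\mu^{(r)}}}{\Phi_{\mu^{(r)}}}_{\vv} \le 2\lV\Phi\rV^2, \qquad
\sum_{\mu^{(l)}}\braket{\Psi_{\mu^{(l)}}}{\Psi_{\mu^{(l)}}}_{\vv} \le 2\lV\Psi\rV^2.
\]
The first estimate of (2) then follows from the elementary inequality $\Tr(BA) \le \lV B\rV\Tr(A)$ for positive $A, B \in\mk$, applied with $B = T^l_{\vv}(1)$ and $A = \sum_{\mu^{(r)}}\Phi_{\mu^{(r)}}\Phi_{\mu^{(r)}}^*$, combined with $\lV T^l_{\vv}(1)\rV \le F_{\vv}$ from (\ref{eq:tnb}) and $\Tr(\Phi_{\mu^{(r)}}\Phi_{\mu^{(r)}}^*) = \lV\Phi_{\mu^{(r)}}\rV_2^2 \le (a_{\vv}c_{\vv})^{-1}\braket{\Phi_{\mu^{(r)}}}{\Phi_{\mu^{(r)}}}_{\vv}$ from (\ref{eq:ac}). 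For the second estimate, note that $T^r_{\vv}\lmk\sum_{\mu^{(l)}} \Psi_{\mu^{(l)}}^*\Psi_{\mu^{(l)}}\rmk$ is positive, so its trace is bounded by $k$ times its operator norm; combining $\lV T^r_\vv\rV\le F_\vv$, the triangle inequality $\lV\sum_{\mu^{(l)}} \Psi_{\mu^{(l)}}^*\Psi_{\mu^{(l)}}\rV \le \sum_{\mu^{(l)}}\lV\Psi_{\mu^{(l)}}\rV_2^2$, and (\ref{eq:ac}) delivers the required bound.

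For (3), by cyclicity of trace both expressions collapse to the same quantity $\Tr\lmk T^r_{\vv}\lmk X^*T^l_{\vv}(1)X\rmk\rmk$: the first since $\sum_{\mu^{(r)}}X\widehat{v_{\mu^{(r)}}}^{*}\widehat{v_{\mu^{(r)}}}X^{*}$, paired with $T^l_{\vv}(1)$ under trace, rearranges to this form, and the second since $\sum_{\mu^{(l)}}X^{*}\widehat{v_{\mu^{(l)}}}\widehat{v_{\mu^{(l)}}}^{*}X = X^{*}T^l_{\vv}(1)X$ already sits inside $T^r_{\vv}$. Since $X^{*}T^l_{\vv}(1)X \ge 0$, its image under the CP map $T^r_{\vv}$ is positive, so its trace is at most $k$ times its operator norm; (\ref{eq:tnb}) applied twice and submultiplicativity then bound this by $kF_{\vv}^2\lV X\rV^2 \le kF_{\vv}^2\lV X\rV_2^2$, and (\ref{eq:ac}) converts the last expression to $(kF_{\vv}^2)(a_{\vv}c_{\vv})^{-1}\braket{X}{X}_{\vv}$, as required. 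There is no genuine obstacle beyond careful bookkeeping: the whole argument rests on the quadratic-form comparison (\ref{eq:ac}) and the $\gv{N}$-versus-$\braket{\cdot}{\cdot}_{\vv}$ comparison (\ref{eq:xxg}), both of which are already available via the spectral analysis of $T_{\vv}$ carried out under Condition~2 in Lemma~\ref{cpmain}.
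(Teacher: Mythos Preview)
Your proof is correct and follows essentially the same route as the paper: part~(1) via Lemma~\ref{lem:bijec}, part~(2) by combining the bound $\lV T_{\vv}^N\rV\le F_{\vv}$ with (\ref{eq:gtn}) (which you obtain by chaining (\ref{eq:xxg}) and (\ref{eq:ac}), exactly as in the proof of Lemma~\ref{lem:ge}), and part~(3) by collapsing both traces to $\Tr\bigl(T^r_{\vv}(X^{*}T^l_{\vv}(1)X)\bigr)$ and then applying (\ref{eq:tnb}) twice together with (\ref{eq:ac}). The only cosmetic difference is that the paper invokes (\ref{eq:gtn}) directly rather than rederiving it, and in part~(3) bounds $\lV X\rV^{2}$ by $(a_{\vv}c_{\vv})^{-1}\braket{X}{X}_{\vv}$ without passing through $\lV X\rV_2^2$.
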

\begin{proof}
{\it (1)} follows from Lemma \ref{lem:bijec}.
The first inequality of {\it (2)} can be checked as
\begin{align*}
&\Tr\lmk
T^l_\vv(1)\lmk
\sum_{\mu^{(r)}}\Phi_{\mu^{(r)}}
\Phi_{\mu^{(r)}}^*
\rmk
\rmk
\le
\lV
T^l_\vv(1)
\rV
\Tr\lmk
\sum_{\mu^{(r)}}\Phi_{\mu^{(r)}}
\Phi_{\mu^{(r)}}^*\rmk
\\
&\le
F_{\vv}
\Tr\lmk
\sum_{\mu^{(r)}}\Phi_{\mu^{(r)}}
\Phi_{\mu^{(r)}}^*
\rmk
\le
\frac{2F_{\vv}}{a_\vv c_\vv} 
\sum_{\mu^{(r)}}\lV
\gv{l+m}\lmk \Phi_{\mu^{(r)}}\rmk
\rV^2
=
\frac{2F_{\vv}}{a_{\vv}c_{\vv}}
\lV \Phi\rV^2
\end{align*}
Here we used (\ref{eq:tnb}) for the second inequality
and (\ref{eq:gtn}) for the third inequality
with $m\ge L_{\vv}$.
The second one can be obtained similarly.
The first inequality of {\it (3)}
can be seen for $X\in p \mk q$, by
\begin{align*}
\Tr\lmk
T^l_\vv(1)
\sum_{\mu^{(r)}}
\lmk X\lmk\widehat{v_{\mu^{(r)}}}\rmk^*\rmk
\lmk
\lmk X\lmk\widehat{v_{\mu^{(r)}}}\rmk^*\rmk^*
\rmk
\rmk
=\Tr\lmk T^{r}_\vv\lmk X^*T^l_\vv\lmk 1\rmk X\rmk\rmk
\le
kF_{\vv}^2\lV X\rV^2
\le
\frac{kF_{\vv}^2}{a_{\vv}c_{\vv}}
\braket{X}{X}_{\vv}.
\end{align*}
Here we used (\ref{eq:tnb}) for the first inequality
and (\ref{eq:ac}) for the second inequality.
The second inequality of {\it (3)} can be obtained similarly.
\end{proof}
\begin{lem}\label{lem:vv}
Let $n,k\in\nan$, $p,q\in\caP(\mk)$, and $\vv\in\mk^{\times n}$.
Suppose that the pentad $(n,k,p,q,\vv)$ satisfies 
the  {\it Condition 2} and the {\it Condition 3}
for some $m_1\in\nan $.
Let $l,m,r\in\nan$ with $m\ge \max\{m_1, L_{\vv}\}$,
$\Phi\in \lmk
\cgv{l+m}\otimes\lmk \cc^n\rmk^{\otimes r}\rmk\cap \cgv{l+m+r}^{\perp}$
and $\Psi\in \lmk
\lmk \cc^n\rmk^{\otimes l} \otimes \cgv{m+r}\rmk
\cap \cgv{l+m+r}^{\perp}$.
Let 
${\mathfrak Q}^{\Phi}:=\{ \Phi_{\mu^{(r)}}\}_{\mu^{(r)}\in\{1,\ldots,n\}^{\times r}}\subset p \mk q$,
${\mathfrak Q}_{\Psi}:=\{ \Psi_{\mu^{(l)}}\}_{\mu^{(l)}\in\{1,\ldots,n\}^{\times l}}\subset p\mk q$
as in {\it (1)} of Lemma \ref{lem:qpp}.
Then, for $V^{{\mathfrak Q}^{\Phi}}$, $V_{{\mathfrak Q}_{\Psi}}$
defined as in (\ref{eq:vp}),
we have
\begin{align*}
\braket{V^{{\mathfrak Q}^{\Phi}}}{V^{{\mathfrak Q}^{\Phi}}}_{\vv}^{\frac 12}
\le\sqrt 2 F_{\vv}^{\frac 32}E_\vv(m)\lV\Phi\rV,\quad
\braket{V_{{\mathfrak Q}_{\Psi}}}{V_{{\mathfrak Q}_{\Psi}}}_\vv^{\frac 12}
\le\sqrt 2 F_{\vv}^{\frac 32}E_\vv(m)\lV\Psi\rV.
\end{align*}
\end{lem}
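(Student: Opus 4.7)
\medskip

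\noindent\textbf{Proof plan.} The plan is to exploit the orthogonality $\Phi\perp\cgv{l+m+r}$ (and the analogous condition on $\Psi$) by probing against vectors of the form $\gv{l+m+r}(X)$ with $X\in p\mk q$. Such a vector lies in $\cgv{l+m}\otimes(\cc^n)^{\otimes r}$ and in $(\cc^n)^{\otimes l}\otimes\cgv{m+r}$ simultaneously, and after passing through the ``$V$''-construction of (\ref{eq:vp}) it returns $X$ itself, so Lemma \ref{lem:df} directly converts the orthogonality $\braket{\Phi}{\gv{l+m+r}(X)}=0$ into a bound of the form $\lv\braket{V^{{\mathfrak Q}^\Phi}}{X}_\vv\rv\le\mathrm{error}$. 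Formula (\ref{eq:iv}), applied to the unit sphere of $(p\mk q,\braket{\cdot}{\cdot}_\vv)$, then promotes this to the desired bound on $\braket{V^{{\mathfrak Q}^\Phi}}{V^{{\mathfrak Q}^\Phi}}_\vv^{1/2}$.

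Concretely, for the first inequality I fix $X\in p\mk q$ with $\braket{X}{X}_\vv=1$ and set $\Psi_0:=\gv{l+m+r}(X)$. Decomposing along the cut at position $l$ gives $\Psi_0=\sum_{\mu^{(l)}}\widehat{\psi_{\mu^{(l)}}}\otimes\gv{m+r}\lmk(\widehat{v_{\mu^{(l)}}})^*X\rmk$, so $\Psi_0\in(\cc^n)^{\otimes l}\otimes\cgv{m+r}$ with $\tilde\Psi_{\mu^{(l)}}=(\widehat{v_{\mu^{(l)}}})^*X$. Since $T_\vv(e_\vv)=e_\vv$ and $x_\vv e_\vv=p$, the definition (\ref{eq:vp}) gives $V_{{\mathfrak P}_{\Psi_0}}=x_\vv T_\vv^l(e_\vv)X=pX=X$. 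Applying Lemma \ref{lem:df} to the pair $(\Phi,\Psi_0)$ with ${\mathfrak P}^\Phi={\mathfrak Q}^\Phi$ and using $\braket{\Phi}{\Psi_0}=0$ then yields
\[
\lv\braket{V^{{\mathfrak Q}^\Phi}}{X}_\vv\rv\le k\lV T_\vv^m\lmk\unit-P^{T_\vv}_{\{1\}}\rmk\rV\cdot\alpha\cdot\beta,
\]
with $\alpha$ the $\Phi$-side trace factor of (\ref{eq:pp}) bounded by Lemma \ref{lem:qpp}(2) as $\alpha\le\sqrt{2F_\vv/(a_\vv c_\vv)}\lV\Phi\rV$, and $\beta$ the $X$-side trace factor bounded by the second inequality of Lemma \ref{lem:qpp}(3) as $\beta\le\sqrt{kF_\vv^2/(a_\vv c_\vv)}$. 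Rewriting $k^2\lV T_\vv^m(\unit-P^{T_\vv}_{\{1\}})\rV/(a_\vv c_\vv)=E_\vv(m)$ collapses the product to at most $\sqrt{2}F_\vv^{3/2}E_\vv(m)\lV\Phi\rV$ (the spare $k^{-1/2}$ is discarded), and the supremum over $X$ via (\ref{eq:iv}) gives the first claim.

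The inequality for $V_{{\mathfrak Q}_\Psi}$ follows from the mirrored construction: for a unit $X\in p\mk q$ I set $\Phi_0:=\gv{l+m+r}(X)$ and decompose along the cut at $l+m$ to obtain $\tilde\Phi_{\mu^{(r)}}=X(\widehat{v_{\mu^{(r)}}})^*$. The dual invariance $\sum_{\mu^{(r)}}(\widehat{v_{\mu^{(r)}}})^*\rho_\vv\widehat{v_{\mu^{(r)}}}=\rho_\vv$ (from $T_\vv$-invariance of $\varphi_\vv$), together with $\rho_\vv y_\vv=q$, gives $V^{{\mathfrak P}^{\Phi_0}}=X\rho_\vv y_\vv=Xq=X$. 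Lemma \ref{lem:df} applied to $(\Phi_0,\Psi)$ with $\braket{\Phi_0}{\Psi}=0$ then bounds $\lv\braket{X}{V_{{\mathfrak Q}_\Psi}}_\vv\rv$, now using the first inequality of Lemma \ref{lem:qpp}(3) for the $X$-side trace and the second inequality of Lemma \ref{lem:qpp}(2) for the $\Psi$-side trace; the identical algebraic bookkeeping produces $\sqrt{2}F_\vv^{3/2}E_\vv(m)\lV\Psi\rV$ after the supremum.

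The only delicate step I foresee is the identification $V_{{\mathfrak P}_{\Psi_0}}=X$ and $V^{{\mathfrak P}^{\Phi_0}}=X$; this is where the two invariance properties of the triple $(s_\vv,e_\vv,\varphi_\vv)$ enter, and where the hypothesis $X\in p\mk q$ makes the boundary factors $x_\vv e_\vv$ and $\rho_\vv y_\vv$ collapse to $p$ and $q$ respectively. Everything else is forced by Lemma \ref{lem:qpp} and the definition of $E_\vv(m)$.
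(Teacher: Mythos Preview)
Your proof is correct and follows essentially the same route as the paper: fix $X\in p\mk q$, set $\Psi_0=\gv{l+m+r}(X)$ (resp.\ $\Phi_0$), verify that the $V$-construction returns $X$ via $T_\vv$-invariance of $e_\vv$ and $\varphi_\vv$, then apply Lemma~\ref{lem:df} together with the bounds of Lemma~\ref{lem:qpp} and conclude via~(\ref{eq:iv}). The paper carries out the identical argument, only keeping $X$ unnormalized and tracking the factor $\braket{X}{X}_\vv^{1/2}$ throughout.
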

\begin{proof}
Take an arbitrary $X\in p \mk q$.
We show
\begin{align}\label{eq:xv}
\lv
\braket{V^{{\mathfrak Q}^{\Phi}}}{X}_{\vv}\rv
\le\sqrt 2 F_{\vv}^{\frac 32}E_\vv(m)\lV\Phi\rV\braket{X}{X}_{\vv}^{\frac 12},
\quad
\lv
\braket{X}{V_{{\mathfrak Q}_{\Psi}}}_\vv\rv
\le\sqrt 2 F_{\vv}^{\frac 32}E_\vv(m)\lV\Psi\rV\braket{X}{X}_{\vv}^{\frac 12}.
\end{align}

To see this, note that
\begin{align*}
\gv{l+m+r}(X)
=\sum_{\mu^{(r)}}\gv{l+m}
\lmk X\lmk \widehat{v_{\mu^{(r)}}}\rmk^*\rmk
\otimes\widehat{\psi_{\mu^{(r)}}}
=\sum_{\mu^{(l)}}
\widehat\psi_{\mu^{(l)}}\otimes
\gv{m+r}\lmk \lmk
\widehat v_{\mu^{(l)}}\rmk^*X \rmk.
\end{align*}
Set ${\mathfrak P}^{\gv{l+m+r}(X)}:=\{
X\lmk \widehat{v_{\mu^{(r)}}}\rmk^*
\}_{\mu^{(r)}}$,
${\mathfrak P}_{\gv{l+m+r}(X)}:=\{ \lmk
\widehat v_{\mu^{(l)}}\rmk^*X \}_{\mu^{(l)}}$
and consider the $V^{{\mathfrak P}^{\gv{l+m+r}(X)}}$,
$V_{{\mathfrak P}_{\gv{l+m+r}(X)}}$ given by the formula
(\ref{eq:vp}).
Then we find
$
V^{{\mathfrak P}^{\gv{l+m+r}(X)}}=V_{{\mathfrak P}_{\gv{l+m+r}(X)}}=X
$.

For $\Phi$ and $\Psi$, consider the unique
${\mathfrak Q}^{\Phi}:=\{ \Phi_{\mu^{(r)}}\}_{\mu^{(r)}\in\{1,\ldots,n\}^{\times r}}\subset p \mk q$,
${\mathfrak Q}_{\Psi}:=\{ \Psi_{\mu^{(l)}}\}_{\mu^{(l)}\in\{1,\ldots,n\}^{\times l}}\subset p\mk q$
given in Lemma \ref{lem:qpp}.
Then, by Lemma \ref{lem:df} and Lemma \ref{lem:qpp} and $\Phi\in\caG_{l+m+r}^{\perp}$,
we have
\begin{align*}
&\lv
\braket{V^{{\mathfrak Q}^{\Phi}}}{X}_{\vv}\rv
=\lv
\braket{V^{{\mathfrak Q}^{\Phi}}}{
V_{{\mathfrak P}_{\gv{l+m+r}(X)}}
}_{\vv}
-\braket{\Phi}{\gv{l+m+r}(X)}
\rv\\
&\le
k\lV T^m_{\vv}\lmk\unit-P_{\{1\}}^{T_{\vv}}\rmk\rV
\sqrt{\frac{2F_\vv}{a_\vv c_\vv}}\lV \Phi\rV
\sqrt{\frac{kF_\vv^2}{a_\vv c_\vv}}
\braket{X}{X}_\vv^{\frac 12}
\le\sqrt 2 F_{\vv}^{\frac 32}E_\vv(m)\lV\Phi\rV\braket{X}{X}_{\vv}^{\frac 12},
\end{align*}
proving the first inequality of (\ref{eq:xv}).
From (\ref{eq:iv}) and (\ref{eq:xv}), we obtain
the first inequality of Lemma \ref{lem:vv}.
The second inequality can be seen in the same manner.
\end{proof}
Now we are ready to prove Lemma \ref{lem:ov}.
\begin{proofof}[Lemma \ref{lem:ov}]
In the setting of Lemma \ref{lem:ov}, we fix arbitrary
 $\Phi\in \lmk
\cgv{l+m}\otimes\lmk \cc^n\rmk^{\otimes r}\rmk\cap \cgv{l+m+r}^{\perp}$
and $\Psi\in \lmk
\lmk \cc^n\rmk^{\otimes l} \otimes \cgv{m+r}\rmk
\cap \cgv{l+m+r}^{\perp}$.
From Lemma \ref{lem:vv}, we have
\begin{align*}
\lv
\braket{V^{{\mathfrak Q}^\Phi}}{V_{{\mathfrak Q}_\Psi}}_\vv
\rv
\le
\braket{V^{{\mathfrak Q}^\Phi}}{V^{{\mathfrak Q}^\Phi}}_\vv^{\frac 12}
\braket{V_{{\mathfrak Q}_\Psi}}{V_{{\mathfrak Q}_\Psi}}_\vv^{\frac12}
\le 2 F_\vv^3 E_\vv(m)^2\lV\Phi\rV\lV\Psi\rV.
\end{align*}
Combining this with Lemma \ref{lem:qpp} and Lemma \ref{lem:df},
we obtain
\begin{align*}
\lv
\braket{\Phi}{\Psi}
\rv
\le
2E_{\vv}(m)F_{\vv}\lmk E_{\vv}(m)F_\vv^2+1\rmk\lV\Phi\rV\lV\Psi\rV.
\end{align*}
This completes the proof.
\end{proofof}
\subsection{Proof of Proposition \ref{prop:maingen}}

\begin{proofof}[Proposition \ref{prop:maingen}]
As $r_{T_\vv}=1$, $\vv$ is nonzero and $\caG_{l,\vv}$ is a nonzero
subspace of $\bigotimes_{i=0}^{l-1}\cc^n$.
That $M_{\vv,p,q}<\infty$ is Lemma \ref{lem:bijec}.
Set $m_0=m_2+m_3$.
Then (ii) of Lemma \ref{lem:inter} implies Property~(I,$m_0$),
i.e., (i) of {\it Condition 1} holds.

To prove (ii) of  {\it Condition 1}, 
choose $L_1\in\nan$ such that
\begin{align*}
4 \sqrt{m}F_{\vv}E_{\vv}(m)\lmk
F_{\vv}^2E_{\vv}(m)+1
\rmk
<1,
\end{align*}
for all $m\ge L_1$.
This is possible because of the spectral property of $T_\vv$.
Set $l_0:=\max\{m_1,L_1,L_\vv\}+1\in\nan$.
For any $l\ge l_0$ and $n\ge 2l$,
we use 
Lemma \ref{lem:ov}
replacing $(l,m,r)$ in Lemma \ref{lem:ov}
by $(n-l+1,l-1,1)$.
Then we obtain (ii) for this $l_0$.
\end{proofof}
\section{Properties of the ground state structure of $H_{\Phi_{m,\bb}}$ 
for $\bb\in \ClassA$}\label{sec:gss}
In this section we prove Theorem \ref{thm:asymmetric}  of the Hamiltonian 
$H_{\Phi_{m,\bb}}$, given by 
$\bb\in \ClassA$.
In subsection \ref{subsec:sgc},
we prove the spectral gap and 
in subsection \ref{subsec:es},
we investigate the ground state structure 
of the Hamiltonians in this class.
\subsection{The spectral gap of $H_{\Phi_{m,\bb}}$ for $\bb\in \ClassA$}\label{subsec:sgc}
In this subsection, we prove 
the following proposition which includes (i),(ii) of Theorem \ref{thm:asymmetric}:
\begin{prop}\label{prop:bbspec}
Let $n,n_0\in\nan$ with $n\ge 2$, $k_R,k_L\in\nan\cup \{0\}$. 
Let $(\lal,\bbD,\bbG,Y)\in\caT(k_R,k_L)$ and 
 $\bb\in\ClassA$
with respect to $(n_0,k_R,k_L,\lal,\bbD,\bbG,Y)$.
Then,
there exist a constant $0<s_{\bb}<1$, a state $\varphi_{\bb}$, 
and a positive element $e_{\bb}\in \mnz\otimes\mkk$,
such that 
\begin{description}
\item[(1)]$T_{\bb}$ satisfies the Spectral Property II
with respect to $(s_{\bb},e_{\bb},\varphi_{\bb})$
\item[(2)]
$s(e_{\bb})=\hpu,\quad s(\varphi_{\bb})=\hpd$.
\end{description}
For
all $N\ge l_\bb(n,n_0,k_R,k_L,\lal,\bbD,\bbG,Y)$,
the map
\[
\gb{N}\vert_{\hpu(\mnz\otimes\mkk)  \hpd}:\hpu
(\mnz\otimes\mkk) \hpd\to \cgb{N}
\]
is bijective and 
$\bbm_\bb\le 2l_\bb(n,n_0,k_R,k_L,\lal,\bbD,\bbG,Y)$.
In particular, $\ker\lmk H_{\Phi_{m,\bb}}\rmk_{[0,N-1]}=\Ran\Gamma^{(R)}_{N,\bb}= \cgb{N}$
is $n_0^2(k_L+1)(k_R+1)$ dimensional for $m\ge2l_\bb$,
 and $N\ge m$.
Furthermore,
$H_{\Phi_{m,\bb}}$ is gapped with respect to the open boundary conditions
for all $m\ge \max\{ 2l_\bb, \frac{\log\lmk n_0^2(k_L+1)(k_R+1)+1\rmk}{\log n}\}$.
\end{prop}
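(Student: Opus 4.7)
The strategy is to reduce Proposition~\ref{prop:bbspec} to Proposition~\ref{prop:maingen} and Theorem~\ref{bruno}, applied to the pentad $(n, k, p, q, \bb)$ with $k = n_0(k_R+k_L+1)$, $p = \hpu$, $q = \hpd$. Once the three abstract conditions are checked, items (1)--(2) follow from Lemma~\ref{cpmain}, the bijectivity of $\gb{N}|_{\hpu \mnzk \hpd}$ from Lemma~\ref{lem:bijec}, the identity $\bbm_{\bb} \le 2 l_\bb$ and the kernel description from Proposition~\ref{prop:maingen} together with Theorem~\ref{bruno}, and the dimension count $\dim \cgb{N} = n_0^2(k_L+1)(k_R+1)$ from Condition~3.

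First I would dispose of Conditions~3 and~4, which are essentially algebraic and rely only on structural facts already established. By the very definition of $l_\bb$ in (\ref{eq:lblb}), for $N \ge l_\bb$ we have $\caK_N(\bb) = \mnz \otimes \bigl(\caD(k_R,k_L,\bbD,\bbG)(\Lambda_{\lal}(1+Y))^N\bigr)$. Since $\caD(k_R,k_L,\bbD,\bbG)$ is a unital subalgebra of $\UT_{k_L+k_R+1}$ (Remark~\ref{rem:dkk}) of dimension $(k_R+1)(k_L+1)$, and $\Lambda_{\lal}(1+Y)$ is invertible, this yields $\dim \caK_N(\bb) = n_0^2(k_R+1)(k_L+1) = (\rank \hpu)(\rank \hpd)$, i.e., Condition~3 with $m_1 = l_\bb$. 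For Condition~4, take $m_2 = m_3 = l_\bb$ and $X_{l_\bb} := \unit_{n_0} \otimes (\Lambda_{\lal}(1+Y))^{l_\bb} \in \caK_{l_\bb}(\bb)$, which is invertible in $\mnz \otimes \mkk$; Remark~\ref{rem:din}, specifically (\ref{eq:adl}), gives $(\Lambda_{\lal}(1+Y))^{-l_\bb} \caD(k_R,k_L,\bbD,\bbG) (\Lambda_{\lal}(1+Y))^{l_\bb} = \caD(k_R,k_L,\bbD,\bbG)$, whence $X_{l_\bb}^{-1} \caK_{N+l_\bb}(\bb) = \caK_N(\bb)$ for every $N \ge l_\bb$.

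The main obstacle is Condition~2. The commutation $\hpu \hpd = \hpd \hpu \neq 0$ and the Toeplitz-type identities $B_\mu \hpu = \hpu B_\mu \hpu$ and $\hpd B_\mu = \hpd B_\mu \hpd$ (items (i)--(iii)) flow from $\bb \in \mnz \otimes \bigl(\caD(k_R,k_L,\bbD,\bbG)\Lambda_{\lal}(1+Y)\bigr)$ together with the upper-triangular structure of the generators of $\caD$. The decay items (v)--(vi) are the technical heart: one isolates the off-diagonal part of $T_\bb$, uses the intertwining relations (\ref{eq:ld})--(\ref{eq:dygy2}) to diagonalise the action of $\Lambda_\lal(1+Y)$ on the pieces indexed by $\caC^R(k_R)$ and $\caC^L(k_L)$, and extracts geometric decay at rate $\max_{i\ne 0}|\lambda_i|^2 < 1$ from the strict inequalities in the definition of $\Wo(k_R,k_L)$. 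The convergence (iv) for $T_{\bb_{\hpu \hpd}}$ reduces to the spectral analysis of the ``bulk'' primitive $n$-tuple sitting in the $(0,0)$-block, using $r_{T_\bb} = 1$ and Perron--Frobenius. The cyclicity/coyclicity (vii)--(viii) then follow from the surjectivity statement built into (\ref{eq:lblb}) applied for $l \ge l_\bb$.

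Having verified Conditions~2--4, Proposition~\ref{prop:maingen} yields $\bbm_\bb \le m_2 + m_3 = 2 l_\bb$ and that $\{\cgb{l}\}$ satisfies Condition~1 for $(2l_\bb, l_0)$ for some $l_0$. Applying Theorem~\ref{bruno} to the interaction $\Phi_{m,\bb} = \Phi_{1 - G_{m,\bb}}$ with $m \ge 2l_\bb$ gives $\ker(H_{\Phi_{m,\bb}})_{[0,N-1]} = \cgb{N}$ for $N \ge m$, and the lower bound $(H_{\Phi_{m,\bb}})_{[0,N-1]} \ge \gamma (1 - G_{N,\bb})$ with a volume-independent $\gamma > 0$ for all $N \ge N_m$. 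Finally, the additional requirement $m \ge \log(n_0^2(k_L+1)(k_R+1)+1)/\log n$ ensures $n^N \ge n^m > \dim \cgb{N}$ for all $N \ge m$, so that a next-smallest eigenvalue genuinely exists and the operator-norm gap above the kernel translates into gappedness with respect to open boundary conditions in the sense of the paper.
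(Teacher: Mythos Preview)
Your plan coincides with the paper's proof: verify Conditions~2, 3, 4 for the pentad $(n,n_0(k_R+k_L+1),\hpu,\hpd,\bb)$ and then invoke Lemma~\ref{cpmain}, Proposition~\ref{prop:maingen}, and Theorem~\ref{bruno}. Your arguments for Conditions~3 and~4 are exactly those of Lemmas~\ref{lem:cbt} and~\ref{lem:cbf}, and your reading of the extra hypothesis $m\ge \log(n_0^2(k_L+1)(k_R+1)+1)/\log n$ as guaranteeing $G_{m,\bb}\neq\unit$ (hence $\gamma_{l,m}>0$) is the paper's as well.

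Two points deserve more care. First, Lemma~\ref{lem:bijec} only gives bijectivity of $\gb{N}\vert_{\hpu\mnzk\hpd}$ for $N\ge\max\{M_{\bb,\hpu,\hpd},l_\bb\}$; to reach the sharp threshold $N\ge l_\bb$ claimed in the proposition you still need $M_{\bb,\hpu,\hpd}\le l_\bb$. The paper proves this separately (Lemma~\ref{lem:ml}) by computing directly that $\hpu\caK_l(\bb)\hpd=\mnz\otimes\pu\mkk\pd$ for $l\ge l_\bb$, which is an easy but not automatic consequence of~(\ref{eq:lblb}). Second, your one-line derivation of Condition~2~(vii)--(viii) from~(\ref{eq:lblb}) is too quick. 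Knowing $\caK_l(\bb)=\mnz\otimes\caD(\Lambda_{\lal}(1+Y))^l$ does not immediately yield $\caK_l(\bb_{\hpu})\xi=\hpu\bbC^{n_0(k_R+k_L+1)}$ for a \emph{single} vector $\xi$ with $\heij{00}\xi\neq 0$: after restricting to $\hpu$, only the generators $\unit$ and $I_R(D_a)$ survive, and one must run an induction on the index $i\in\{-k_R,\ldots,0\}$ using the defining relation $D_aE_{00}^{(k_R,0)}=E_{-a,0}^{(k_R,0)}$ together with the triangularity statement $I_R(D_a)\overline{\pd}=\qu{-(a+1)}I_R(D_a)\overline{\pd}$ (the paper's Lemma~\ref{lem:nijyu}) to peel off coordinates one at a time. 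This is the substantive content of Lemma~\ref{lem:sn} and should not be skipped.
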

\begin{rem}
We use the notation $\rho_\bbB$, $x_\bbB$, $y_\bbB$ etc.
from Remark \ref{tavv}.
\end{rem}

\begin{lem}\label{lem:oob}
Let $n\in\nan$ with $n\ge 2$, and $\bb\in\ClassA$ with respect to $(n_0,k_R,k_L,\lal,\bbD,\bbG,Y)$. Define 
$\oo_{\bb}=(\omega_{1,\bb},\ldots,\omega_{n,\bb})\in \mnz^{\times n}$ by
\[
\omega_{\mu,\bb}\otimes \eij{00}=\lmk\unit\otimes \eij{00}\rmk
B_{\mu}\lmk\unit\otimes \eij{00}\rmk,\quad \mu=1,\ldots,n.
\]
Then $\oo_{\bb}\in\Prim(n,n_0)$.
\end{lem}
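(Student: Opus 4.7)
The plan is to verify both defining properties of $\Prim(n,n_0)$: primitivity of $\oo_\bb$ and $r_{T_{\oo_\bb}}=1$. Both are controlled by the ideal structure of $\caD(k_R,k_L,\bbD,\bbG)$. First I would show that the subspace $\caD_\star:=\spa\{I_R^{(k_R,k_L)}(D_a),\,I_L^{(k_R,k_L)}(G_b),\,E^{(k_R,k_L)}_{-a,b}\}$ is a two-sided ideal of $\caD$, stable under conjugation by $\Lambda:=\Lambda_\lal(1+Y)$, and that the map $\pi:C\mapsto(C)_{0,0}$ is a $\Lambda$-equivariant algebra homomorphism $\caD\to\cc$ with kernel $\caD_\star$. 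Tracing row $0$ and column $0$ of each generator using $D_aE_{00}^{(k_R,0)}=E_{-a,0}^{(k_R,0)}$, $E_{00}^{(0,k_L)}G_b=E_{0,b}^{(0,k_L)}$, strict upper-triangularity of $D_a,G_b,Y$, and $a,b\ge 1$, one finds that row $0$ of any element of $\caD$ is supported in columns $\ge 0$ while column $0$ of any element of $\caD_\star$ is supported in rows $\le -1$; these supports are disjoint, giving $(C_1C_2)_{0,0}=(C_1)_{0,0}(C_2)_{0,0}$ for all $C_1,C_2\in\caD$. Using $\lambda_0=1$, $\pu Y\pd=0$ and the strict upper-triangularity of $Y$, one checks that both row $0$ and column $0$ of $\Lambda$ and $\Lambda^{-1}$ equal $e_0$, which together with (\ref{eq:ld})--(\ref{eq:lg}) and the induced action on $E_{-a,b}$ makes $\pi$ $\Lambda$-equivariant and $\caD_\star$ $\Lambda$-conjugation-stable.

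From this the primitivity statement is quick. Writing $B_\mu=\sum_i M_i^{(\mu)}\otimes C_i\Lambda$ with $\{C_i\}$ running through the $\caD$-basis, the defining identity for $\omega_{\mu,\bb}$ extracts precisely the $\unit\cdot\Lambda$-coefficient $M_\unit^{(\mu)}$ of $B_\mu$, since $(C_i\Lambda)_{0,0}=\pi(C_i)$. An induction on $l$ using the $\Lambda$-equivariant homomorphism property then yields that the $\unit\cdot\Lambda^l$-coefficient of $\widehat{B_{\mu^{(l)}}}$ equals $\widehat{\omega_{\mu^{(l)},\bb}}$. Because $\bb\in\ClassA$ forces $\caK_l(\bbB)=\mnz\otimes\caD\Lambda^l$ for all $l\ge l_\bb$ (cf.\ (\ref{eq:lblb})), projecting onto the direct summand $\mnz\otimes\unit\cdot\Lambda^l\cong\mnz$ forces $\caK_l(\oo_\bb)=\mnz$ for every $l\ge l_\bb$, so $\oo_\bb$ is primitive with $\l_{\oo_\bb}\le l_\bb$.

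For $r_{T_{\oo_\bb}}=1$ I would exploit that $\caD\subset\UT_{k_R+k_L+1}$: each $B_\mu$ is block upper-triangular as a $(k_R+k_L+1)\times(k_R+k_L+1)$ array of $n_0\times n_0$ blocks indexed by $j,k\in\{-k_R,\ldots,k_L\}$, and only the $C_i=\unit$ summand contributes to the block-diagonal (since $D_a,G_b$ are strictly upper-triangular and $E_{-a,b}$ is off-diagonal), giving $(B_\mu)_{j,j}=\lambda_j\omega_{\mu,\bb}$. Consequently $T_\bb$ is block upper-triangular on $\mnz\otimes\mkk$ with respect to any total order extending the product order on the index pairs $(j,k)$, with diagonal block at $(j,k)$ equal to $X_{j,k}\mapsto \lambda_j\bar\lambda_k\,T_{\oo_\bb}(X_{j,k})$. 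The spectrum of a finite-dimensional block upper-triangular map is the union of spectra of its diagonal blocks, so
\[
\sigma(T_\bb)\;=\;\bigcup_{j,k=-k_R}^{k_L}\lambda_j\bar\lambda_k\,\sigma(T_{\oo_\bb}),
\]
and since $\lal\in\Wo(k_R,k_L)$ makes $|\lambda_0|=1$ the maximum among $|\lambda_j|$, $r_{T_\bb}=(\max_j|\lambda_j|)^2\,r_{T_{\oo_\bb}}=r_{T_{\oo_\bb}}$. The hypothesis $r_{T_\bb}=1$ built into the definition of $\ClassA$ then yields $r_{T_{\oo_\bb}}=1$, completing $\oo_\bb\in\Prim(n,n_0)$. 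The main technical obstacle is the first paragraph: keeping straight the row-$0$/column-$0$ supports of each generator and confirming that $\caD_\star$ is a $\Lambda$-stable two-sided ideal with quotient $\cc\unit$; once that structure is in hand, both primitivity and the spectral-radius identity follow by elementary upper-triangular spectral theory.
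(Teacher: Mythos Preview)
Your proposal is correct and rests on the same underlying observation as the paper: each $B_\mu\in\mnz\otimes\UT_{k_R+k_L+1}$ is block upper-triangular, so the $(0,0)$-block is multiplicative and the diagonal blocks of $T_\bb$ are $\lambda_j\bar\lambda_k\,T_{\oo_\bb}$. The differences are presentational. For primitivity, the paper bypasses your ideal/quotient setup entirely: upper-triangularity of $B_\mu$ alone already gives $(\unit\otimes\eij{00})\widehat{B_{\mu^{(l)}}}(\unit\otimes\eij{00})=\widehat{\omega_{\mu^{(l)},\bb}}\otimes\eij{00}$, so your analysis of the row-$0$/column-$0$ support of $\caD$ and the $\Lambda$-equivariance of $\pi$ is more machinery than needed. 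For $r_{T_{\oo_\bb}}=1$, the paper argues in two halves: $r_{T_{\oo_\bb}}\le r_{T_\bb}=1$ directly, and then introduces the auxiliary $\bb'=(\omega_{\mu,\bb}\otimes\Lambda_\lal)_\mu$, observes that $T_\bb-T_{\bb'}$ is nilpotent (hence $\sigma(T_\bb)\subset\sigma(T_{\bb'})$), and uses $r_{T_{\bb'}}=r_{T_{\oo_\bb}}$. Your route---reading off $\sigma(T_\bb)=\bigcup_{j,k}\lambda_j\bar\lambda_k\,\sigma(T_{\oo_\bb})$ from the block-triangular form of $T_\bb$ and using $\max_j|\lambda_j|=1$---gets the equality $r_{T_\bb}=r_{T_{\oo_\bb}}$ in one stroke and is arguably cleaner.
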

\begin{proof}
As elements in $\caD(k_R,k_L,\bbD,\bbG)$ and $Y$ are all upper triangular matrices,
for any $l\in\nan$ and $\mu^{(l)}\in\{1,\ldots,n\}^{\times l}$, we have
\[
\widehat{\omega_{\mu^{(l)},\bb}}\otimes \eij{00}=\lmk\unit\otimes \eij{00}\rmk
\widehat{B_{\mu^{(l)}}}\lmk\unit\otimes \eij{00}\rmk,\quad \mu=1,\ldots,n.
\]
Therefore, we have
\[
\caK_{l}(\oo_{\bb})\otimes \eij{00}=
\lmk\unit\otimes \eij{00}\rmk
\caK_{l}(\bb)\lmk\unit\otimes \eij{00}\rmk
=\mnz\otimes \eij{00},
\]
for any $l\ge \lb$.
This means $\oo_{\bb}$ is primitive.
We have to show that $r_{T_{{\oo_{\bb}}}}=1$.
First, we have
\[
r_{T_{{\oo_{\bb}}}}=\lim_{N\to\infty}
\lV T_{{\oo_{\bb}}}^N\rV^{\frac 1N}
=\lim_{N\to\infty}
\lV
(\unit \otimes\eij{00})
\lmk
T_{\bb}^N\lmk\unit \otimes\eij{00}\rmk\rmk
(\unit \otimes\eij{00})
\rV^{\frac 1N}
\le r_{T_{\bb}}=1.
\]
Define $\bb'$ by
$B_{\mu}':=\omega_{\mu,\bb}\otimes \Lambda_{\lal}$,
 $\mu=1,\ldots,n$.
As $B_{\mu}-B_{\mu}'$ is in $\mnz\otimes \UT_{0,k_1+k_2+1}$
and $B_{\mu}'\in \mnz\otimes \Lambda_{\lal}$,
$T_{\bb}-T_{\bb'}$ is nilpotent.
This and $B_\mu'\in \mnz\otimes \Lambda_\lal$ implies  
$\sigma(T_{\bb'})^c\subset \sigma(T_{\bb})^c$, i.e.,
we have $\sigma(T_{\bb})\subset \sigma(T_{\bb'})$.
Therefore, we have 
$1=r_{T_{\bb}}\le r_{T_{\bb'}}=r_{T_{{\oo_{\bb}}}}$,
proving $r_{T_{{\oo_{\bb}}}}=1$.
\end{proof}
\begin{lem}\label{lem:nijyu}
For 
$(\lal,\bbD,\bbG,Y)\in\caT(k_R,k_L)$,  we have
\[
I_R(D_a)\overline{\pd}=\qu{-(a+1)}I_R(D_a)\overline{\pd},\quad
\overline{\pu}I_L(G_b)=\overline{\pu}I_L(G_b)\qd{b+1}.
\]
\end{lem}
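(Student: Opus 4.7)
The plan is to examine, for the first identity, the $-b$-th column of $I_R(D_a)$ for each $b\in\{1,\ldots,k_R\}$ and show its only possibly nonzero entries lie in rows $\{-k_R,\ldots,-(a+1)\}$. Since $\overline{\pd}$ is the orthogonal projection onto $\spa\{f_{-b}^{(k_R,k_L)}:b=1,\ldots,k_R\}$ and $\qu{-(a+1)}$ is (from the notation appendix) the projection onto $\spa\{f_i^{(k_R,k_L)}:-k_R\le i\le -(a+1)\}$, this immediately yields $I_R(D_a)\overline{\pd}=\qu{-(a+1)}I_R(D_a)\overline{\pd}$. The second identity will follow by a symmetric analysis on the rows of $G_b$.

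First, expand $D_aD_b=\sum_{c=1}^{k_R}\alpha_c^{(a,b)}D_c$ using the subalgebra property of $\spa\{D_c\}_{c=1}^{k_R}$ in Definition \ref{def:dg}. Applying both sides to $f_0^{(k_R,0)}$ and using $D_c f_0^{(k_R,0)}=f_{-c}^{(k_R,0)}$ (Definition \ref{def:dg}(1)) yields $(D_a)_{-c,-b}=\alpha_c^{(a,b)}$ for $c\in\{1,\ldots,k_R\}$, while $(D_a)_{0,-b}=0$ from upper-triangularity with zero diagonal. Next, apply the intertwining relation (\ref{eq:ld}) twice and use multiplicativity of the block embedding $I_R$ (so $I_R(D_a)I_R(D_b)=I_R(D_aD_b)$) to get $\Lambda_\lal I_R(D_aD_b)=\lambda_{-a}\lambda_{-b}I_R(D_aD_b)\Lambda_\lal$. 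Substituting the decomposition of $I_R(D_aD_b)$, applying (\ref{eq:ld}) to each $I_R(D_c)$, and right-multiplying by the invertible $\Lambda_\lal^{-1}$ give
\[
\sum_{c=1}^{k_R}\alpha_c^{(a,b)}\bigl(\lambda_{-c}-\lambda_{-a}\lambda_{-b}\bigr)I_R(D_c)=0.
\]
The $\{D_c\}_{c=1}^{k_R}$ are linearly independent because $\{D_c f_0^{(k_R,0)}\}_c=\{f_{-c}^{(k_R,0)}\}_c$ is, and $I_R$ is injective, so $\{I_R(D_c)\}_c$ is linearly independent. Hence $\alpha_c^{(a,b)}=0$ unless $\lambda_{-c}=\lambda_{-a}\lambda_{-b}$, in which case $|\lambda_{-c}|=|\lambda_{-a}|\,|\lambda_{-b}|<|\lambda_{-a}|$ (since $|\lambda_{-b}|<1$ by Definition \ref{def:class}); combined with the monotonicity $|\lambda_{-k_R}|\le\cdots\le|\lambda_{-1}|$, this forces $c>a$. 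Thus $\alpha_c^{(a,b)}=0$ for all $c\in\{1,\ldots,a\}$, completing the first identity.

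The second identity is proved symmetrically. Expand $G_cG_b=\sum_{b'=1}^{k_L}\beta_{b'}^{(c,b)}G_{b'}$, and extract row $0$ using $E_{00}^{(0,k_L)}G_cG_b=E_{0,c}^{(0,k_L)}G_b$ together with $E_{00}^{(0,k_L)}G_{b'}=E_{0,b'}^{(0,k_L)}$ (Definition \ref{def:dg}(1)); this gives $(G_b)_{c,j}=\beta_j^{(c,b)}$. The same linear-independence argument, with (\ref{eq:lg}) in place of (\ref{eq:ld}), yields $\beta_{b'}^{(c,b)}=0$ unless $\lambda_{b'}=\lambda_c\lambda_b$, and the order $|\lambda_{k_L}|\le\cdots\le|\lambda_1|$ then forces $b'>b$, so $(G_b)_{c,j}=0$ for all $j\le b$ and $c\in\{1,\ldots,k_L\}$. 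Combined with the vanishing of column $0$ of $G_b$ and the fact that row $0$ of $G_b$ supports only column $b$, this yields $\overline{\pu}I_L(G_b)f_j^{(k_R,k_L)}=0$ for every $j\le b$, equivalent to $\overline{\pu}I_L(G_b)=\overline{\pu}I_L(G_b)\qd{b+1}$.

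The main subtlety is the linear-independence step used to pass from the subalgebra relation plus the intertwining (\ref{eq:ld}) or (\ref{eq:lg}) to the pointwise eigenvalue identity on the coefficients; once this is in hand, the magnitude ordering of Definition \ref{def:class} mechanically delivers the required index inequality $c\ge a+1$ (respectively $b'\ge b+1$). A secondary but necessary point is the identification of $\qu{-(a+1)}$ and $\qd{b+1}$ as the extremal-index projections indicated above, as fixed in the notation appendix.
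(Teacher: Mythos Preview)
Your proof is correct, but it takes a longer route than the paper's. The paper simply reads the intertwining relation (\ref{eq:ld}) entrywise: from $\Lambda_{\lal}I_R(D_a)=\lambda_{-a}I_R(D_a)\Lambda_{\lal}$ one gets directly that $(D_a)_{ij}\neq 0$ forces $\lambda_i=\lambda_{-a}\lambda_j$, and then the magnitude ordering finishes the argument exactly as you do. In particular, the paper's proof does not use the subalgebra condition in Definition \ref{def:dg}(2) at all.

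Your approach instead routes the entries $(D_a)_{-c,-b}$ through the structure constants of $D_aD_b=\sum_c\alpha_c^{(a,b)}D_c$, applies (\ref{eq:ld}) twice to the product, and invokes linear independence of $\{I_R(D_c)\}$ to extract the same eigenvalue constraint $\lambda_{-c}=\lambda_{-a}\lambda_{-b}$. This is perfectly valid and has the merit of making the subalgebra structure visible, but it is an avoidable detour here: the conclusion $\alpha_c^{(a,b)}(\lambda_{-c}-\lambda_{-a}\lambda_{-b})=0$ that you obtain is exactly the $(\,-c,-b)$ entry of the identity $\Lambda_{\lal}I_R(D_a)-\lambda_{-a}I_R(D_a)\Lambda_{\lal}=0$ once one knows $(D_a)_{-c,-b}=\alpha_c^{(a,b)}$. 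So the subalgebra expansion and the linear-independence step can be dropped in favor of one line.
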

\begin{proof}
Suppose $\eij{ii}I_R(D_a)\overline{\pd}\neq 0$ for $i\le 0$.
As the left hand side is equal to $I_R(E_{ii}^{(k_R,0)}D_a\sum_{j=-k_R}^{-1}E_{jj}^{(k_R,0)})$,
this means there is $j\in\{-k_R,\ldots,-1\}$ such that
$\lambda_{R,i}=\lambda_{R,-a}\lambda_{R,j}$. Note that we have $|\lambda_{R,j}|<1$ because
$j\le -1$. Therefore, we get
$\lv \lambda_{R,i}\rv=\lv \lambda_{R,-a}\rv\lv\lambda_{R,j}\rv
<\lv \lambda_{R,-a}\rv$.
This means $i\le -a-1$, proving the first equality.
The second one can be proven similarly.
\end{proof}
We also use the following facts repeatedly.
\begin{lem}\label{lem:ldgy}
For any $l\in\nan$,
\begin{align*}
\pu(1+Y)^l\pd=\eij{00},\quad
I_L(G_b)=I_L(G_b)\overline{\pu},\quad
I_R(D_a)=\overline{\pd}I_R(D_a).
\end{align*}
\end{lem}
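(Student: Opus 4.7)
My plan is to prove the three identities separately. The second and third are essentially immediate consequences of Remark~\ref{rem:dkk}, while the first requires a short combinatorial argument based on the index--block structure of $Y$ inside $\UT_{0,k_R+k_L+1}$, using the orthogonal decomposition determined by $\overline{\pd}$ (indices $-k_R,\ldots,-1$), $\eij{00}=\pu\pd$ (index $0$), and $\overline{\pu}$ (indices $1,\ldots,k_L$).

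For $I_R^{(k_R,k_L)}(D_a)=\overline{\pd}\,I_R^{(k_R,k_L)}(D_a)$, I would simply observe that Remark~\ref{rem:dkk} already records the stronger statement $I_R^{(k_R,k_L)}(D_a)=\overline{\pd}\,I_R^{(k_R,k_L)}(D_a)\,\pu$; left-multiplying by $\overline{\pd}$ and using $\overline{\pd}^{\,2}=\overline{\pd}$ gives the desired equality. The identity $I_L^{(k_R,k_L)}(G_b)=I_L^{(k_R,k_L)}(G_b)\,\overline{\pu}$ follows in the same fashion from the companion formula in Remark~\ref{rem:dkk} by right-multiplying with $\overline{\pu}$.

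The substantive claim is $\pu(1+Y)^l\pd=\eij{00}$. Since $\pu\pd=\eij{00}$, by binomial expansion it suffices to prove $\pu Y^k\pd=0$ for every $k\ge 1$, and I would deduce this from the block decomposition
\[
Y=\overline{\pd}\,Y\,\overline{\pd}+\overline{\pu}\,Y\,\overline{\pu}.
\]
To establish it, I write $Y=(\pu+\overline{\pu})Y(\pd+\overline{\pd})$ and examine the four pieces: $\pu Y\pd=0$ by hypothesis (\ref{eq:ly}); $\overline{\pu}Y\overline{\pd}=0$ because its matrix entries would have row index $\ge 1$ and column index $\le -1$, violating the strict upper-triangularity of $Y$; and writing $\pu=\overline{\pd}+\eij{00}$ and $\pd=\overline{\pu}+\eij{00}$, the two remaining pieces simplify to $\pu Y\overline{\pd}=\overline{\pd}\,Y\,\overline{\pd}$ and $\overline{\pu}Y\pd=\overline{\pu}\,Y\,\overline{\pu}$, because the cross-terms $\eij{00}Y\overline{\pd}$ and $\overline{\pu}Y\eij{00}$ both vanish by upper-triangularity.

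The punchline uses the orthogonality $\overline{\pd}\,\overline{\pu}=0$, coming from the disjointness of the index sets $[-k_R,-1]$ and $[1,k_L]$. Expanding
\[
Y^k=\bigl(\overline{\pd}\,Y\,\overline{\pd}+\overline{\pu}\,Y\,\overline{\pu}\bigr)^k,
\]
every mixed monomial contains a factor $\overline{\pd}\,\overline{\pu}$ or $\overline{\pu}\,\overline{\pd}$ and therefore vanishes, leaving $Y^k=(\overline{\pd}Y\overline{\pd})^k+(\overline{\pu}Y\overline{\pu})^k$. Left-multiplying by $\pu$ kills the second term via $\pu\overline{\pu}=0$, and right-multiplying by $\pd$ kills the first via $\pd\overline{\pd}=0$, so $\pu Y^k\pd=0$ and the first identity follows. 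The only step requiring any care is the block decomposition of $Y$; once that is in place, the rest is routine, so I do not anticipate any serious obstacle.
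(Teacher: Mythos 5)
Your proof is correct and follows essentially the same route as the paper's one-line proof, which obtains the second and third identities from the strict upper-triangularity of $D_a$ and $G_b$ (exactly the content of Remark~\ref{rem:dkk} that you cite) and the first from (\ref{eq:ly}). The only cosmetic difference is that you also invoke the upper-triangularity of $Y$ to get the full block decomposition $Y=\overline{\pd}\,Y\,\overline{\pd}+\overline{\pu}\,Y\,\overline{\pu}$, whereas the single hypothesis $\pu Y\pd=0$ already yields $\pu Y^k\pd=0$ for all $k\ge 1$ (since it gives $Y\pd=\overline{\pu}\,Y\pd$ and $\overline{\pu}\le\pd$, so inductively $Y^k\pd=\overline{\pu}\,Y^k\pd$), which is presumably why the paper cites only (\ref{eq:ly}) for that identity.
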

\begin{proof}
The first equality follows from (\ref{eq:ly}). The second and the third equations follows from 
$G_b\in \UT_{0,k_L+1}$ and $D_a\in\UT_{0,k_R+1}$.
\end{proof}
\begin{lem}\label{lem:but}
For any $l\in\nan$ and $\mu^{(l)}\in\{1,\ldots,n\}^{\times l}$, we have
\begin{align}\label{eq:wbp}
&\wb{l}\hpu=\hpu \wb{l}\hpu,\quad
\hpd \wb{l}=\hpd \wb{l}\hpd.
\end{align}
\end{lem}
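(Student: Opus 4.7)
The plan is induction on $l$, reducing to the base case $l=1$, i.e.\ $B_\mu\hpu = \hpu B_\mu\hpu$ and $\hpd B_\mu = \hpd B_\mu\hpd$ for every $\mu$. Given these, fixing $\mu^{(l)}$ and setting $\mu^\prime := (\mu_1,\ldots,\mu_{l-1})$, the chain
\[
\widehat{B_{\mu^{(l)}}}\hpu = B_{\mu_0}\widehat{B_{\mu^\prime}}\hpu = B_{\mu_0}\hpu\widehat{B_{\mu^\prime}}\hpu = \hpu B_{\mu_0}\hpu\widehat{B_{\mu^\prime}}\hpu = \hpu B_{\mu_0}\widehat{B_{\mu^\prime}}\hpu = \hpu \widehat{B_{\mu^{(l)}}}\hpu
\]
(using the inductive hypothesis in the second and fourth equalities and the base case in the third) yields the first assertion, and the chain for $\hpd$ is analogous.

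For the base case I would exploit the defining factorization $B_\mu \in \mnz\otimes\lmk \caD(k_R,k_L,\bbD,\bbG)\Lambda_{\lal}(1+Y)\rmk$ together with the tensor form $\hpu = \unit\otimes \pu$ and $\hpd = \unit\otimes \pd$. Writing $B_\mu$ as a linear combination of elementary tensors $M\otimes Z\Lambda_\lal(1+Y)$ with $M\in\mnz$ and $Z\in \caD(k_R,k_L,\bbD,\bbG)$, the two assertions reduce to
\[
(1-\pu)Z\Lambda_{\lal}(1+Y)\pu = 0, \qquad \pd Z\Lambda_{\lal}(1+Y)(1-\pd) = 0
\]
for every $Z\in\caD(k_R,k_L,\bbD,\bbG)$. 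Each of these decouples into a $Z$-part and a $\Lambda_\lal(1+Y)$-part. For the $Z$-part I would check the four generator types from Remark \ref{rem:dkk}: the unit is immediate; by Lemma \ref{lem:ldgy}, $I_R^{(k_R,k_L)}(D_a) = \overline{\pd}I_R^{(k_R,k_L)}(D_a)$, and since $\overline{\pd}\le\pu$ this places $I_R^{(k_R,k_L)}(D_a)$ in $\pu\mkk\pu$ while killing it against $\overline{\pd}$; dually $I_L^{(k_R,k_L)}(G_b) = I_L^{(k_R,k_L)}(G_b)\overline{\pu}$ kills against $\pu$ and, via $\overline{\pu}\overline{\pd}=0$, against $\overline{\pd}$; the remaining generators $\eij{-a,b}$ with $a,b\ge 1$ vanish on both $\pu$ and $\overline{\pd}$ since their column index $b\ge 1$ lies outside each. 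For the $\Lambda_\lal(1+Y)$-part, $\Lambda_\lal$ is diagonal and commutes with $\pu$ and $\pd$; the constraint $\pu Y\pd = 0$ from (\ref{eq:ly}), combined with the strict upper-triangularity of $Y\in\UT_{0,k_L+k_R+1}$, forces the block form $Y\in\overline{\pd}\mkk\overline{\pd}+\overline{\pu}\mkk\overline{\pu}$, yielding $Y\pu = \pu Y\pu$ and $Y\overline{\pd} = \overline{\pd}Y\overline{\pd}$. Assembling the two pieces completes the base case.

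The hard part is extracting the block decomposition of $Y$ from (\ref{eq:ly}) together with the triangularity constraint, as this is where the compatibility structure encoded in Definition \ref{def:dg} does its work. Once that is in hand, the rest is a routine finite case check on the generators of $\caD(k_R,k_L,\bbD,\bbG)$ together with the induction above.
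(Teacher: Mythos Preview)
Your argument is correct, but it is considerably more involved than necessary. The paper's proof is a single line: since $\caD(k_R,k_L,\bbD,\bbG)\subset \UT_{k_L+k_R+1}$, $\Lambda_{\lal}$ is diagonal, and $Y\in\UT_{0,k_L+k_R+1}$, every $B_\mu$ lies in $\mnz\otimes\UT_{k_L+k_R+1}$, and hence so does every product $\widehat{B_{\mu^{(l)}}}$. For any $U\in\UT_{k_L+k_R+1}$ (with respect to the ordering $-k_R<\cdots<k_L$) one has $\overline{\pu}U\pu=0$ and $\pd U\overline{\pd}=0$ directly from the index constraints, which gives both identities at once.

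The main difference is that you work generator by generator and single out the ``hard part'' as extracting a block decomposition of $Y$ from the constraint $\pu Y\pd=0$ in (\ref{eq:ly}). In fact that constraint is not needed for this lemma at all: the strict upper-triangularity $Y\in\UT_{0,k_L+k_R+1}$ already forces $\overline{\pu}Y\pu=0$ and $\pd Y\overline{\pd}=0$ (if $i\ge 1$ and $j\le 0$, or $i\ge 0$ and $j\le -1$, then $i>j$). Your case analysis on the generators of $\caD(k_R,k_L,\bbD,\bbG)$ is likewise subsumed by the single observation that they are all upper triangular. So while your route is sound and the induction step is clean, the detour through (\ref{eq:ly}) and the generator-by-generator check can be replaced by the one structural observation $B_\mu\in\mnz\otimes\UT_{k_L+k_R+1}$.
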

\begin{proof}
This is because $B_\mu\in\mnz\otimes \UT_{k_L+k_R+1}$.
\end{proof}
\begin{lem}\label{lem:sn}
Let $n\in\nan$ with $n\ge 2$, and $\bb\in\ClassA$ with respect to $(n_0,k_R,k_L,\lal,\bbD,\bbG,Y)$.
Then the pentad $(n,n_0(k_L+k_R+1),\hpu,\hpd,\bb)$ satisfies the {\it Condition 2}.
\end{lem}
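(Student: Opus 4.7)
The plan is to verify conditions (i)--(viii) of Condition~2 in order for the pentad $(n,\, n_0(k_L+k_R+1),\, \hpu,\, \hpd,\, \bb)$, leveraging the structural lemmas already proved in this subsection.

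Conditions (i)--(iii) are essentially immediate. For (i), $\hpu$ and $\hpd$ are tensor products of commuting projections in $\mnz\otimes\mkk$ whose $\mkk$-factors $\pu$ and $\pd$ share the ray $E^{(k_R,k_L)}_{00}$ in their ranges, so they commute and $\hpu\hpd\neq 0$. Conditions (ii) and (iii) are precisely the $l=1$ instances of Lemma~\ref{lem:but}.

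For (iv), I would restrict $T_{\bb}$ to the support $\hpu\hpd$, which in the $\mkk$-factor isolates the $E^{(k_R,k_L)}_{00}$-sector. Via Lemma~\ref{lem:oob}, the compressed $B_\mu$'s there reduce to $\omega_{\mu,\bb}$, and the restricted CP map becomes essentially $T_{\oo_\bb}$. Since $\oo_\bb\in\Prim(n,n_0)$ with $r_{T_{\oo_\bb}}=1$, standard Perron--Frobenius spectral theory for primitive CP maps yields a rank-one spectral projection for the eigenvalue $1$ together with an exponential gap to the rest of the spectrum, producing the required $(c_{pq},s_{pq},\varphi_{pq},e_{pq})$ with $s(\varphi_{pq})=s(e_{pq})=\hpu\hpd$. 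For (vii) and (viii) I would again invoke primitivity together with the saturation property~(\ref{eq:lblb}) defining $\ClassA$: given $\eta\in\hpd\cc^{\,n_0(k_L+k_R+1)}$ with $\hpu\hpd\eta\neq 0$, its $E^{(k_R,k_L)}_{00}$-component is nonzero, so primitivity of $\oo_\bb$ recovers the full $\mnz$-factor, while (\ref{eq:dygy2}) together with the fact that $\caK_{l'}(\bb)=\mnz\otimes\bigl(\caD(k_R,k_L,\bbD,\bbG)(\Lambda_\lal(1+Y))^{l'}\bigr)$ for $l'\geq l_\bb$ lets us sweep out the remaining indices in $\hpd(\mnz\otimes\mkk)\hpd$. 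Condition (viii) is the symmetric statement, handled via (\ref{eq:dygy1}).

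For (v) and (vi) the strategy is spectral-radius-based. By Lemma~\ref{lem:but}, $\bb_{\overline{\hpd}}$ acts entirely within $\overline{\hpd}(\mnz\otimes\mkk)\overline{\hpd}$, which in the $\mkk$-direction is supported on indices $\{-k_R,\ldots,-1\}$, i.e.\ on $\overline{\pd}$. On that sector the diagonal matrix $\Lambda_\lal$ has eigenvalues of modulus strictly less than $1$. Expanding $B_\mu\in\mnz\otimes\caD(k_R,k_L,\bbD,\bbG)\Lambda_\lal(1+Y)$ and using (\ref{eq:ld})--(\ref{eq:ly}) to commute the $\caD(k_R,k_L,\bbD,\bbG)$-part past $\Lambda_\lal(1+Y)$ produces a block-upper-triangular decomposition of $T_{\bb_{\overline{\hpd}}}$ whose diagonal blocks each carry a factor $|\lambda_i|^2<1$, giving $r_{T_{\bb_{\overline{\hpd}}}}<1$; finite dimensionality then promotes this to the required exponential norm decay. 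Condition (vi) follows symmetrically on the $\{1,\ldots,k_L\}$-sector using (\ref{eq:lg}) and $\overline{\hpu}$ in place of $\overline{\hpd}$.

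The main obstacle will be (v) and (vi). The qualitative picture --- that off-diagonal $\lambda_i$'s have modulus below $1$ --- is clear, but turning this into a rigorous spectral-radius bound for the CP map $T_{\bb_{\overline{\hpd}}}$ requires careful bookkeeping. In particular, the $(1+Y)$ correction and the nontrivial $\caD(k_R,k_L,\bbD,\bbG)$-factor do not respect the obvious filtration by $\Lambda_\lal$-eigenvalues, and it is precisely the identities (\ref{eq:ld})--(\ref{eq:ly}) together with Remark~\ref{rem:din} that are designed to restore a compatible triangular structure. Writing out the matrix elements of $T_{\bb_{\overline{\hpd}}}$ in a basis adapted to the $\{\lambda_i\}$-grading and then verifying the block-upper-triangular form is the technical heart of the argument.
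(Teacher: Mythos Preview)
Your overall plan is sound and lines up with the paper's proof for (i)--(iv), but you are making (v) and (vi) much harder than necessary, and your sketch for (vii)--(viii) differs from the paper's route.

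For (v) and (vi) the paper does \emph{not} touch the commutation relations (\ref{eq:ld})--(\ref{eq:ly}) or Remark~\ref{rem:din} at all. The only fact used is that every $B_\mu\in\mnz\otimes\caD(k_R,k_L,\bbD,\bbG)\Lambda_\lal(1+Y)\subset\mnz\otimes\UT_{k_L+k_R+1}$ is upper triangular in the $\mkk$ factor. Writing $B_\mu=B_\mu'+N_\mu$ with $B_\mu'=\omega_{\mu,\bb}\otimes\Lambda_\lal$ diagonal and $N_\mu\in\mnz\otimes\UT_{0,k_L+k_R+1}$ strictly upper triangular, exactly the nilpotent-perturbation argument already used in Lemma~\ref{lem:oob} shows $\sigma(T_{\bb_{\overline{\hpd}}})\subset\sigma(T_{\bb'_{\overline{\hpd}}})$, and since $B_\mu'\overline{\hpd}=\omega_{\mu,\bb}\otimes\Lambda_\lal\overline{\pd}$ one reads off $r_{T_{\bb_{\overline{\hpd}}}}\le|\lambda_{-1}|^2\,r_{T_{\oo_\bb}}=|\lambda_{-1}|^2<1$ in one line. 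The $(1+Y)$-correction and the $\caD$-factor never threaten the triangular structure because they are themselves upper triangular; there is no need to restore anything via (\ref{eq:ld})--(\ref{eq:ly}). Your proposed eigenvalue-graded filtration would also work, but it is strictly more laborious.

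For (vii) and (viii) the paper does not invoke (\ref{eq:dygy1})--(\ref{eq:dygy2}) either. Instead it uses Lemma~\ref{lem:nijyu}, which says $I_R(D_a)\overline{\pd}=\qu{-(a+1)}I_R(D_a)\overline{\pd}$, to run a clean downward induction on the index $i\in\{-k_R,\ldots,0\}$: starting from the nonzero $E^{(k_R,k_L)}_{00}$-component of (a shift of) $\eta$, applying $\zeij{\beta\alpha}\otimes I_R(D_{k_R})$ lands in $\cc^{n_0}\otimes\fii{-k_R}$, and then each $I_R(D_{-i})$ produces the $\fii{i}$-component modulo terms already controlled by the inductive hypothesis $(P_{i-1})$. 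Your appeal to the saturation $\caK_{l'}(\bb)=\mnz\otimes\caD(\Lambda_\lal(1+Y))^{l'}$ is the right starting point (the paper uses it too), but the ``sweeping out'' you describe still requires exactly this kind of filtration argument, and Lemma~\ref{lem:nijyu} is the lemma that makes it go through cleanly.
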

\begin{proof}
There exists $\oo_{\bb}\in\Prim(n,n_0)$ given by Lemma \ref{lem:oob}.
We would like to show (i)-(viii) of {\it Condition 2}
for the pentad $(n,n_0(k_R+k_L+1),\hpu,\hpd,\bb)$.
(i): From the definition of
$\hpu,\hpd$, we have $\hpu\hpd=\hpd\hpu=\heij{00}\neq 0$.\\
(ii) and (iii) are Lemma \ref{lem:but}.
(iv):
From $\oo_{\bb}\in\Prim(n,n_0)$ and Lemma \ref{lem:ps},
(iv) can be checked.\\
(v),(vi): 
Note, as in the proof of Lemma \ref{lem:oob},
that $r_{T_{(\bb_{\overline{\hpd}})}}
\le r_{T_{\bb'}}\le |\lambda_{-1}|^2$,
where $\bb'=(B_\mu')_{\mu}$
is given by $B_{\mu}'=\omega_{\mu,\bb}\otimes\Lambda_{\lal}\overline{\pd}$.
This implies (v). (vi) can be shown similarly.\\
(vii),(viii) : We prove (viii). The proof for (vii) is the same.
Assume that $k_R\ge 1$.
Let $\l\ge l_\bb(n,n_0,k_R,k_L,\lal,\bbD,\bbG,Y)$ and $\eta\in \hpu\lmk \cc^{n_0}\otimes \cc^{k_R+k_L+1}\rmk$
with $\hpu\hpd\eta=\heij{00}\eta\neq 0$. Fix some $\alpha\in\{1,\ldots, n_0\}$ such that
$\lmk \zeij{\alpha,\alpha}\otimes \eij{00}\rmk\eta\neq 0$. Set $\eta':=\lmk \unit\otimes\Lambda_{\lal}^l(1+Y)^l\rmk\eta$.
Note that
\[
\caK_l(\bb_{\hpu})\eta=\caK_l(\bb)\hpu\eta
=
\lmk
\mnz\otimes\spa\left\{ \pu,I_R(D_a)\right\}\rmk\eta',
\]
and $\braket{\chi_{\alpha}^{(n_0)}\otimes \fii{0}}{\eta'}\neq 0$.
Here we used Lemma \ref{lem:ldgy} and Lemma \ref{lem:but},
for the first equality, and (\ref{eq:lblb}) for the second one.

We consider the following proposition for $i=-k_R,\ldots,0$:
\begin{center}
$(P_i)$: 
$\cc^{n_0}\otimes \lmk\qu{i}\cc^{k_R+k_L+1}\rmk\subset \kl{l}(\bb_{\hpu})\eta$.
 \end{center}

To see that $(P_{-k_R})$ holds,
note that for any $\beta\in\{1,\ldots, n_0\}$
we have $\lmk
\zeij{\beta\alpha}\otimes I_R(D_{k_R})\rmk\eta'
\in\kl{l}(\bb_{\hpu})\eta$.
By Lemma \ref{lem:nijyu}, 
we have $I_R(D_{k_R})=I_R(D_{k_R})\eij{00}=\eij{-k_R,0}$.
Therefore, we have
\begin{align*}
\lmk
\zeij{\beta\alpha}\otimes I_R(D_{k_R})\rmk\eta'
=\braket{\cnz{\alpha}\otimes \fii{0}}{\eta'}
\cnz{\beta}\otimes \fii{-k_R},
\end{align*}
with $\braket{\cnz{\alpha}\otimes \fii{0}}{\eta'}\neq 0$.
Hence we have
$\cnz{\beta}\otimes \fii{-k_R}\in\kl{l}(\bb_{\hpu})\eta$, proving $(P_{-k_R})$.

Suppose that $(P_{i-1})$ holds for some $i\le 0$. 
We show that $(P_{i})$
holds.
To see this, 
note that for any $\beta\in\{1,\ldots, n_0\}$
we have $\lmk
\zeij{\beta\alpha}\otimes I_R(D_{-i})\rmk\eta'
\in\kl{l}(\bb_{\hpu})\eta$,
if $i\le -1$, and 
$\lmk
\zeij{\beta\alpha}\otimes I_R(\unit)\rmk\eta'
\in\kl{l}(\bb_{\hpu})\eta$.
Now if $i\le -1$, by Lemma \ref{lem:nijyu}, 
we have 
\[
I_R(D_{-i})=
I_R(D_{-i})\eij{00}+I_R(D_{-i})\qu{-1}
=\eij{i0}+\qu{i-1}I_R(D_{-i})\qu{-1}.
\]
This relation also holds for $i=0$ case, if we set $D_0=\unit$.
Therefore, we have
 \begin{align*}\lmk
 \zeij{\beta\alpha}\otimes I_R(D_{-i})\rmk\eta'
 =\braket{\cnz{\alpha}\otimes \fii{0}}{\eta'}
\cnz{\beta}\otimes \fii{i}
 +\text{an element of }
 \cc^{n_0}\otimes \qu{i-1}\cc^{k_R+k_L+1}
 \end{align*}
with $\braket{\cnz{\alpha}\otimes \fii{0}}{\eta'}\neq 0$.
From $(P_{i-1})$, we have
$\cc^{n_0}\otimes \qu{i-1}\cc^{k_R+k_L+1}\subset \kl{l}(\bb_{\hpu})\eta$.
Hence we have
$\cnz{\beta}\otimes \fii{i}\in  \kl{l}(\bb_{\hpu})\eta$, proving $(P_{i})$.

Inductively, we obtain $(P_0)$, proving (viii) for $k_R\ge 1$ case.
The $k_R=0$ case is much simpler. We just need to note 
$\zeij{\beta\alpha}\eta\in\kl{l}(\bb)\eta$.
\end{proof}

\begin{lem}\label{lem:cbt}
Let $n,n_0\in\nan$ with $n\ge 2$, $k_R,k_L\in\nan\cup \{0\}$
and $(\lal,\bbD,\bbG,Y)\in\caT(k_R,k_L)$.
Let $\bb\in\ClassA$
with respect to $(n_0,k_R,k_L,\lal,\bbD,\bbG,Y)$.
Then the pentad $(n,n_0(k_R+k_L+1),\hpu,\hpd,\bb)$ satisfies
the Condition 3 for $l_\bb(n,n_0,k_R,k_L,\lal,\bbD,\bbG,Y)$.
\end{lem}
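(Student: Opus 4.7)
\textbf{Plan for the proof of Lemma \ref{lem:cbt}.} The goal is to verify that, for all $N\ge l_\bb(n,n_0,k_R,k_L,\lal,\bbD,\bbG,Y)$,
\[
\dim\caK_N(\bb)=(\rank\hpu)(\rank\hpd)=n_0^2(k_R+1)(k_L+1).
\]
I would proceed in three short steps.

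First, I invoke the definition of $l_\bb$ given in (\ref{eq:lblb}): for every $N\ge l_\bb$ we have the exact identity
\[
\caK_N(\bb)=\mnz\otimes\bigl(\caD(k_R,k_L,\bbD,\bbG)\,(\Lambda_\lal(1+Y))^N\bigr).
\]
Because each $\lambda_i\neq 0$ and $Y\in\UT_{0,k_R+k_L+1}$ is strictly upper triangular, $\Lambda_\lal$ is invertible and $1+Y$ is unipotent; hence $\Lambda_\lal(1+Y)$ is invertible, and right multiplication by its $N$-th power is a linear automorphism of $\mkk$. Consequently
\[
\dim\caK_N(\bb)=n_0^2\cdot\dim\caD(k_R,k_L,\bbD,\bbG),
\]
independently of $N\ge l_\bb$.

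Second, I compute $\dim\caD(k_R,k_L,\bbD,\bbG)$ by showing that the generating family
\[
\{\unit\}\cup\{I_R^{(k_R,k_L)}(D_a)\}_{a=1}^{k_R}\cup\{I_L^{(k_R,k_L)}(G_b)\}_{b=1}^{k_L}\cup\{E_{-a,b}^{(k_R,k_L)}\}_{1\le a\le k_R,\,1\le b\le k_L}
\]
is linearly independent, and hence has cardinality $1+k_R+k_L+k_Rk_L=(k_R+1)(k_L+1)$. For this I would inspect matrix entries directly. Condition~(1) of $\caC^R(k_R)$ gives $D_aE_{00}^{(k_R,0)}=E_{-a,0}^{(k_R,0)}$, so the $0$-th column of $I_R^{(k_R,k_L)}(D_a)$ is $f_{-a}^{(k_R,k_L)}$; symmetrically, the $0$-th row of $I_L^{(k_R,k_L)}(G_b)$ is $f_b^{(k_R,k_L)}$. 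Since $D_a$ and $G_b$ are strictly upper triangular in their respective blocks, and the off-diagonal entries $(-a,b)$ with $a,b\ge 1$ belong to a block disjoint from both, one may take a vanishing linear combination and read off the coefficients entry-by-entry: the $(0,0)$-entry kills the coefficient of $\unit$; the $(-a,0)$-entries for $a\ge 1$ kill the $I_R^{(k_R,k_L)}(D_a)$-coefficients; the $(0,b)$-entries kill the $I_L^{(k_R,k_L)}(G_b)$-coefficients; the $(-a,b)$-entries kill the $E_{-a,b}^{(k_R,k_L)}$-coefficients. (In the degenerate cases $k_R=0$ or $k_L=0$, the corresponding families are empty by Remark \ref{rm:dgd}, and the count collapses correctly.)

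Combining the two steps, $\dim\caK_N(\bb)=n_0^2(k_R+1)(k_L+1)$ for all $N\ge l_\bb$. Since $\hpu$ and $\hpd$ are constructed so that $\rank\hpu=n_0(k_R+1)$ and $\rank\hpd=n_0(k_L+1)$ (consistent with $\hpu\hpd=\hpd\hpu=\heij{00}$ used in the proof of Lemma \ref{lem:sn} and with the dimension assertion of Theorem \ref{thm:asymmetric}(i)), this is exactly $(\rank\hpu)(\rank\hpd)$, so Condition 3 holds with $m_1=l_\bb$. I do not foresee a real obstacle: once the invertibility of $\Lambda_\lal(1+Y)$ reduces the problem to dimensioning $\caD(k_R,k_L,\bbD,\bbG)$, everything is a direct entrywise inspection forced by the axioms (1) in Definition \ref{def:dg}; the only point needing care is the bookkeeping when $k_R$ or $k_L$ is zero.
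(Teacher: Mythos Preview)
Your proposal is correct and follows essentially the same route as the paper's proof: invoke the definition of $l_\bb$ to identify $\caK_N(\bb)$, use the invertibility of $\Lambda_\lal(1+Y)$ to reduce to $\dim\caD(k_R,k_L,\bbD,\bbG)$, and then count by linear independence of the spanning set. The paper simply asserts that linear independence without argument, whereas you supply the entrywise verification using the column/row conditions $D_aE_{00}^{(k_R,0)}=E_{-a,0}^{(k_R,0)}$ and $E_{00}^{(0,k_L)}G_b=E_{0,b}^{(0,k_L)}$; this is a welcome addition but not a different idea.
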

\begin{proof}
Note that the matrices $\unit$, $\{I_R^{(k_R,k_L)}(D_a)\}_{a=1}^{k_R} $, $\{I_L^{(k_R,k_L)}(G_b)\}_{b=1}^{k_L}$
$\left\{E_{-a,b}^{(k_R,k_L)}\right\}_{a=1,\ldots,k_R,b=1,\ldots,k_L}$ are linearly independent.
By this fact,
for $ l\ge l_\bb(n,n_0,k_R,k_L,\lal,\bbD,\bbG,Y)$, we have
\begin{align*}
&\dim \kl{l}(\bb)=
\dim\lmk
\lmk\mnz\otimes\lmk
 \caD(k_R,k_L,\bbD,\bbG)\lmk \Lambda_{\lal}\lmk 1+Y\rmk\rmk^{l}
\rmk\rmk\rmk\\
&=n_0^2(k_R+1)(k_L+1)=\lmk \rank \hpu\rmk\cdot\lmk \rank\hpd\rmk.
\end{align*}
\end{proof}
\begin{lem}\label{lem:cbf}
Let $n,n_0\in\nan$ with $n\ge 2$, $k_R,k_L\in\nan\cup \{0\}$
and $(\lal,\bbD,\bbG,Y)\in \caT(k_R,k_L)$.
Let $\bb\in\ClassA$ with respect to 
$(\lal,\bbD,\bbG,Y)$.
Then the triple $(n,n_0(k_R+k_L+1),\bb)$ satisfies
the {\it Condition 4} for $(l_\bb,l_\bb)$.
\end{lem}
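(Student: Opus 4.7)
The plan is to exhibit an explicit invertible witness, namely
\[
X_{l_\bb} := \unit \otimes \bigl(\Lambda_{\lal}(1+Y)\bigr)^{l_\bb} \in \mnz\otimes\mkk,
\]
and verify Condition 4 by direct computation. This is natural: the defining identity (\ref{eq:lblb}) gives an exact description of $\caK_l(\bb)$ for all $l\ge l_\bb$ as $\mnz\otimes\bigl(\caD(k_R,k_L,\bbD,\bbG)\cdot(\Lambda_{\lal}(1+Y))^l\bigr)$, and the only obstruction one might worry about — namely whether conjugation by powers of $\Lambda_{\lal}(1+Y)$ disturbs $\caD(k_R,k_L,\bbD,\bbG)$ — has already been ruled out in Remark \ref{rem:din}.

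First I would check that $X_{l_\bb}\in\caK_{l_\bb}(\bb)$: since $\unit\in \caD(k_R,k_L,\bbD,\bbG)$ by Remark \ref{rem:dkk}, membership follows immediately from (\ref{eq:lblb}). Then I would verify invertibility: $\Lambda_{\lal}$ is diagonal with all diagonal entries nonzero (as $\lal\in\Wo(k_R,k_L)$), while $1+Y$ is unipotent upper triangular (since $Y\in\UT_{0,k_L+k_R+1}$ is strictly upper triangular), so $\Lambda_{\lal}(1+Y)$ is invertible, hence so is $X_{l_\bb}$, with $X_{l_\bb}^{-1}=\unit\otimes(\Lambda_{\lal}(1+Y))^{-l_\bb}$.

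Finally, for any $N\ge l_\bb$, using (\ref{eq:lblb}) twice and Remark \ref{rem:din} I would compute
\begin{align*}
X_{l_\bb}^{-1}\caK_{N+l_\bb}(\bb)
&= \mnz\otimes\Bigl(\bigl(\Lambda_{\lal}(1+Y)\bigr)^{-l_\bb}\,\caD(k_R,k_L,\bbD,\bbG)\,\bigl(\Lambda_{\lal}(1+Y)\bigr)^{N+l_\bb}\Bigr)\\
&= \mnz\otimes\Bigl(\caD(k_R,k_L,\bbD,\bbG)\,\bigl(\Lambda_{\lal}(1+Y)\bigr)^{N}\Bigr) = \caK_N(\bb),
\end{align*}
where the middle equality is exactly (\ref{eq:adl}) with $x=l_\bb$. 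This yields Condition 4 for $(l_\bb,l_\bb)$. There is no genuine obstacle here — the statement is a bookkeeping consequence of the definitions, and the only nontrivial input (invariance of $\caD(k_R,k_L,\bbD,\bbG)$ under conjugation by $\Lambda_{\lal}(1+Y)$) has already been established from (\ref{eq:dygy1})–(\ref{eq:dygy2}) in Remark \ref{rem:din}.
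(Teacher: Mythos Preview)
Your proof is correct and essentially identical to the paper's own argument: the paper also takes $X_{l_\bb}=\unit\otimes\Lambda_{\lal}^{l_\bb}(1+Y)^{l_\bb}$ (which equals your $\unit\otimes(\Lambda_{\lal}(1+Y))^{l_\bb}$ since $Y$ and $\Lambda_{\lal}$ commute by (\ref{eq:ly})), notes it lies in $\caK_{l_\bb}(\bb)$ by (\ref{eq:lblb}), and then invokes (\ref{eq:adl}) to get the required inclusion.
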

\begin{proof}
Set $l_\bb:=l_\bb(n,n_0,k_R,k_L,\lal,\bbD,\bbG,Y)$.
By the definition of $l_\bb$
the invertible element $\unit\otimes\Lambda_{\lal}^{l_\bb}(1+Y)^{l_{\bb}}$
belongs to $\lmk\mnz\otimes\lmk
 \caD(k_R,k_L,\bbD,\bbG)\lmk \Lambda_{\lal}\lmk 1+Y\rmk\rmk^{l_{\bb}}
\rmk\rmk=\kl{l_\bb}(\bb)$.
We also note by (\ref{eq:adl}),
\begin{align}
&\lmk
 \unit\otimes\lmk\Lambda_{\lal}(1+Y)\rmk^{l_\bb}
 \rmk^{-1}\kl{N+l_\bb}(\bb)
=\lmk
 \unit\otimes\lmk \Lambda_{\lal}(1+Y)\rmk^{l_\bb}
 \rmk^{-1}
\lmk\mnz\otimes\lmk
 \caD(k_R,k_L,\bbD,\bbG)\lmk \Lambda_{\lal}\lmk 1+Y\rmk\rmk^{N+l_{\bb}}
\rmk\rmk\nonumber\\
&\subset \lmk\mnz\otimes\lmk
 \caD(k_R,k_L,\bbD,\bbG)\lmk \Lambda_{\lal}\lmk 1+Y\rmk\rmk^{N}
\rmk\rmk
=\kl{N}(\bb),\quad N\ge l_\bb.
\end{align}
\end{proof}

\begin{lem}\label{lem:ml}
Let $n,n_0\in\nan$ with $n\ge 2$, $k_R,k_L\in\nan\cup \{0\}$
and $(\lal,\bbD,\bbG,Y)\in\caT(k_R,k_L)$.
Let $\bb\in\ClassA$
with respect to $(n_0,k_R,k_L,\lal,\bbD,\bbG,Y)$. 
Then we have $M_{\bb,\hpu,\hpd}\le l_\bb(n,n_0,k_R,k_L,\lal,\bbD,\bbG,Y)$.
\end{lem}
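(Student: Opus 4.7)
The plan is to prove that $\gb{N}$ restricted to $\hpu(\mnz\otimes\mkk)\hpd$ is injective for every $N\ge l_\bb$, which by the definition of $M_{\bb,\hpu,\hpd}$ gives the desired bound. For such $N$, definition (\ref{eq:lblb}) yields the explicit form $\caK_N(\bb)=\mnz\otimes\caD(k_R,k_L,\bbD,\bbG)\cdot U$ with $U:=(\Lambda_\lal(1+Y))^N$, and Lemma \ref{lem:cbt} gives $\dim\caK_N(\bb)=n_0^2(k_R+1)(k_L+1)=\dim\hpu\mk\hpd$. Expanding $X\in\hpu\mk\hpd$ as $X=\sum_{a\le 0,\,b\ge 0}A_{ab}\otimes\eij{ab}$ with $A_{ab}\in\mnz$, and any $Z\in\caK_N(\bb)$ as $Z=W\otimes M$, the trace pairing reads $\Tr(XZ^*)=\sum_{a\le 0,\,b\ge 0}\Tr(A_{ab}W^*)\,\overline{M_{ab}}$. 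By non-degeneracy of the Hilbert--Schmidt form on $\mnz$, injectivity of $\gb{N}|_{\hpu\mk\hpd}$ reduces to surjectivity---equivalently, by matching dimensions, bijectivity---of the projection $\pi:\caD\cdot U\to\pu\mkk\pd$ defined by $\pi(M):=\pu M\pd$.

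The heart of the proof is then bijectivity of $\pi$. I compute $\pi(D\cdot U)$ on each basis element $D\in\{\unit,\,I_R^{(k_R,k_L)}(D_a),\,I_L^{(k_R,k_L)}(G_b),\,\eij{-a,b}\}$ of $\caD$ (see Remark \ref{rem:dkk}). Two structural facts drive the calculation. First, from $\Lambda_\lal Y=Y\Lambda_\lal$ in (\ref{eq:ly}) and $\pu Y\pd=0$, one decomposes $Y=Y_R+Y_L$ with $Y_R\in\overline{\pd}\mkk\overline{\pd}$ and $Y_L\in\overline{\pu}\mkk\overline{\pu}$ acting on disjoint sub-blocks and satisfying $Y_RY_L=Y_LY_R=0$; consequently $U=\Lambda_\lal^N(1+Y)^N$ is upper triangular and its entries connecting the negative- and positive-index blocks vanish outside the $(0,0)$-entry. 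Second, the commutation relations (\ref{eq:dygy1})--(\ref{eq:dygy2}), together with their dual forms in Remark \ref{rem:din}, allow one to move $I_R(D_a)$ and $I_L(G_b)$ past $U$. A direct calculation then gives $\pi(\unit\cdot U)=\eij{00}$, $\pi(I_R(D_a)\cdot U)=\eij{-a,0}+\sum_{a''>a}c_{a,a''}\eij{-a'',0}$, $\pi(I_L(G_b)\cdot U)=\lambda_b^N\eij{0,b}+\sum_{b''>b}U_{b,b''}\eij{0,b''}$, and $\pi(\eij{-a,b}\cdot U)=\lambda_b^N\eij{-a,b}+\sum_{b''>b}U_{b,b''}\eij{-a,b''}$. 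These images respect the block decomposition $\{\eij{00}\}\sqcup\{\eij{-a'',0}\}_{a''\ge 1}\sqcup\{\eij{0,b''}\}_{b''\ge 1}\sqcup\{\eij{-a,b''}\}_{a,b''\ge 1}$ of $\pu\mkk\pd$, and within each block form a triangular system with nonzero diagonal entries (each equal to $1$ or $\lambda_b^N\neq 0$). Hence they are linearly independent, giving bijectivity of $\pi$.

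With $\pi$ bijective the conclusion is immediate: for each $(a_0,b_0)$ with $a_0\le 0,\,b_0\ge 0$, pick $M^{(a_0,b_0)}\in\caD\cdot U$ with $\pi(M^{(a_0,b_0)})=\eij{a_0,b_0}$. Then $Z:=W\otimes M^{(a_0,b_0)}\in\caK_N(\bb)$ yields $\Tr(XZ^*)=\Tr(A_{a_0,b_0}W^*)$, and the assumption $\gb{N}(X)=0$ forces this to vanish for every $W\in\mnz$; hence $A_{a_0,b_0}=0$ for each $(a_0,b_0)$, so $X=0$. The main obstacle will be the second step---verifying the block-triangular structure of $\pi$. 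The decomposition $Y=Y_R+Y_L$ (forced by $\pu Y\pd=0$) and the relations (\ref{eq:dygy1})--(\ref{eq:dygy2}) are precisely what make each generator of $\caD$ project cleanly into a single block of $\pu\mkk\pd$ with a dominant diagonal term, and the rest of the bookkeeping is routine.
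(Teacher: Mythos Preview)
Your proof is correct and follows essentially the same route as the paper's. Both arguments fix $N\ge l_\bb$, use $\caK_N(\bb)=\mnz\otimes\caD(k_R,k_L,\bbD,\bbG)(\Lambda_\lal(1+Y))^N$, and show that $\hpu\caK_N(\bb)\hpd=\mnz\otimes\pu\mkk\pd$ by computing $\pu D(\Lambda_\lal(1+Y))^N\pd$ for each generator $D$ of $\caD$; injectivity of $\gb{N}$ on $\hpu\mk\hpd$ then follows from the trace-pairing characterization of $\ker\gb{N}$. Two small remarks: your block decomposition $Y=Y_R+Y_L$ actually forces $\pi(I_R(D_a)U)=\eij{-a,0}$ exactly (the extra terms $c_{a,a''}$ vanish), matching the paper's computation via Lemma~\ref{lem:ldgy}; and the commutation relations (\ref{eq:dygy1})--(\ref{eq:dygy2}) and Remark~\ref{rem:din} are not needed here---the paper does not invoke them for this lemma, and your own block-structure argument already suffices.
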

\begin{proof}
Let $l\ge l_\bb(n,n_0,k_R,k_L,\lal,\bbD,\bbG,Y)$.
Then, we have
\begin{align}\label{eq:ukd}
\hpu \caK_l(\bb)\hpd=
\mnz\otimes\lmk\pu\mkk\pd\rmk.
\end{align}
To see this, note that from Lemma \ref{lem:ldgy},
we have
\begin{align*}
&\pu\lmk \Lambda_{\lal}\lmk 1+Y\rmk\rmk^l \pd=\eij{00},\\
&\pu I_R^{(k_R,k_L)}(D_a)\lmk \Lambda_{\lal}\lmk 1+Y\rmk\rmk^l \pd
=I_R^{(k_R,k_L)}(D_a E_{00}^{(k_R,0)})=I_R^{(k_R,k_L)}(E_{-a0}^{(k_R,0)})=\eij{-a,0}.
\end{align*}
Furthermore, for any $i\in\{-k_R,\ldots,0\}$, we have
\begin{align*}
&\eij{ib}\lmk \Lambda_{\lal}\lmk 1+Y\rmk\rmk^l 
=\sum_{j'=1}^{k_L}\braket{\fii{b}}{\lmk \Lambda_{\lal}\lmk 1+Y\rmk\rmk^l \fii{j'}}\eij{ij'},\quad b\in \{1,\ldots,k_L\},\\
&\eij{ij}=\sum_{b'=1}^{k_L}\braket{\fii{j}}{\lmk \Lambda_{\lal}\lmk 1+Y\rmk\rmk^{-l} \fii{b'}}\eij{ib'}\lmk \Lambda_{\lal}\lmk 1+Y\rmk\rmk^l ,\quad
j\in \{1,\ldots,k_L\}.
\end{align*}
In particular, we have
$\spa\{\eij{ib}\lmk \Lambda_{\lal}\lmk 1+Y\rmk\rmk^l \}_{b=1}^{k_L}=\spa\{\eij{ij}\}_{j=1}^{k_L}$, 
and we obtain
\begin{align*}
&\spa\left \{\pu I_L^{(k_R,k_L)}(G_b)\lmk \Lambda_{\lal}\lmk 1+Y\rmk\rmk^l  \right\}_{b=1}^{k_L}
=\spa\left \{\eij{0b}\lmk \Lambda_{\lal}\lmk 1+Y\rmk\rmk^l  \right\}_{b=1}^{k_L}
=\spa\left\{\eij{0j} \right\}_{j=1}^{k_L},\\
&\spa\left \{E_{-a,b}^{(k_R,k_L)}\lmk \Lambda_{\lal}\lmk 1+Y\rmk\rmk^l\right\}_{b=1}^{k_L}
=\spa\left \{E_{-a,j}^{(k_R,k_L)}\right\}_{j=1}^{k_L},\quad a=1,\ldots,k_R.
\end{align*}
These proves (\ref{eq:ukd}).
If $X\in\lmk
\hpu(\mnz\otimes\mkk)\hpd\rmk\cap\ker\gb{l}$,
then (\ref{eq:ukd}) implies
$X=0$, proving the claim.
\end{proof}
\begin{proofof}[Proposition \ref{prop:bbspec}]
The first statement follows from Lemma \ref{lem:sn} and 
Lemma \ref{cpmain}.
The bijectivity of $\gb{N}$ for $N\ge l_\bb(n,n_0,k_R,k_L,\lal,\bbD,\bbG,Y)$
can be see from Lemma \ref{lem:ml}.
From Lemma \ref{lem:sn},  \ref{lem:cbt} and Lemma \ref{lem:cbf},
applying Proposition \ref{prop:maingen}
we get $m_\bb\le 2l_\bb$ and $\{\caG_{l,\bb}\}$ satisfies {\it Condition 1}.
From Theorem \ref{bruno}, for all $m\ge 2l_\bb$ and $N\ge m$, 
we have $\ker\lmk H_{\Phi_{m,\bb}}\rmk_{[0,N-1]}=\caG_{N,\bb}=\Ran\Gamma^{(R)}_{N,\bb}$
and from the bijectivity proven above, its dimension is 
$n_0^2(k_L+1)(k_R+1)$.
Furthermore, if $m\ge \max\{ 2l_\bb, \frac{\log\lmk n_0^2(k_L+1)(k_R+1)+1\rmk}{\log n}\}$,
then $G_m\ne \unit $ and $\gamma_{l,m}>0$ for any $l\ge \max\{ l_0,m\}$.
Therefore, from (2) of Theorem \ref{bruno},
$H_{\Phi_{m,\bb}}$ is gapped with respect to the open boundary conditions.
\end{proofof}

\subsection{Edge states of $H_{\Phi_{m,\bb}}$}\label{subsec:es}
In this subsection, we prove (iii),(iv),(v),(vi),(vii), (viii) of
 Theorem \ref{thm:asymmetric}.
\begin{lem}\label{lem:fs}
Let $n,n_0\in\nan$ with $n\ge 2$, $k_R,k_L\in\nan\cup \{0\}$, $\bb\in\ClassA$, and
$m_\bb\le m\in\nan$.
Let $\Gamma=\bbZ$, $\Gamma=(-\infty,-1]$ or $\Gamma=[0,\infty)$.
For a state $\omega$ on ${\mathcal A}_\Gamma$, 
$\omega\in\caS_{\Gamma}(H_{\Phi_{m,\bb}})$
if and only if
$\omega(\tau_{i}(1-\gbp{m}))=0$
for any $i\in\bbZ$ with
$[i,i+m-1]\subset \Gamma$.
In particular, $\caS_{\Gamma}(H_{\Phi_{m,\bb}})$ is a convex set.
\end{lem}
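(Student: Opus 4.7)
The plan is to reduce everything to the finite-volume case via Proposition \ref{prop:bbspec} and then use the fact that a state's being a ground state of a frustration-free Hamiltonian is an ``atomic'' condition that passes to weak-$*$ limits in both directions.

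First I would record the key finite-volume observation. For any $\Lambda \in {\mathfrak I}_\Gamma$ with $|\Lambda|\ge m$, the local Hamiltonian
\[
(H_{\Phi_{m,\bb}})_\Lambda \;=\; \sum_{x:\;[x,x+m-1]\subset\Lambda} \tau_x(1-G_{m,\bb})
\]
is a sum of positive operators, and by Proposition \ref{prop:bbspec} its kernel $\caG_{|\Lambda|,\bb}$ is nonzero; hence its smallest eigenvalue is $0$. Therefore a state $\omega_\Lambda$ on $\caA_\Lambda$ lies in $\caS_\Lambda(H_{\Phi_{m,\bb}})$ if and only if $\omega_\Lambda\bigl((H_{\Phi_{m,\bb}})_\Lambda\bigr)=0$, which, by positivity of each summand, is equivalent to $\omega_\Lambda(\tau_x(1-G_{m,\bb}))=0$ for every $x$ with $[x,x+m-1]\subset\Lambda$.

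For the forward direction, let $\omega\in\caS_\Gamma(H_{\Phi_{m,\bb}})$. By definition there is a net of finite intervals $\Lambda_\alpha\to\Gamma$ and ground states $\omega_{\Lambda_\alpha}\in\caS_{\Lambda_\alpha}(H_{\Phi_{m,\bb}})$ whose extensions to $\caA_\Gamma$ converge weak-$*$ to $\omega$. Given any $i\in\bbZ$ with $[i,i+m-1]\subset\Gamma$, eventually $[i,i+m-1]\subset\Lambda_\alpha$, so the finite-volume characterization gives $\omega_{\Lambda_\alpha}(\tau_i(1-G_{m,\bb}))=0$. Since $\tau_i(1-G_{m,\bb})$ is a fixed local element of $\caA_\Gamma$, the weak-$*$ limit yields $\omega(\tau_i(1-G_{m,\bb}))=0$.

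For the converse, suppose $\omega$ is a state on $\caA_\Gamma$ with $\omega(\tau_i(1-G_{m,\bb}))=0$ for every $i$ with $[i,i+m-1]\subset\Gamma$. I would exhaust $\Gamma$ by finite intervals $\Lambda_N\in{\mathfrak I}_\Gamma$ with $|\Lambda_N|\ge m$ for $N$ large. Then $\omega\vert_{\caA_{\Lambda_N}}$ satisfies the finite-volume vanishing condition and so lies in $\caS_{\Lambda_N}(H_{\Phi_{m,\bb}})$. Since $\omega$ itself is a (trivial) extension of each $\omega\vert_{\caA_{\Lambda_N}}$ to $\caA_\Gamma$, the constant net $\{\omega\}_N$ is a net of such extensions, and its weak-$*$ limit is $\omega$; hence $\omega\in\caS_\Gamma(H_{\Phi_{m,\bb}})$. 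The convexity assertion is then immediate: the characterization exhibits $\caS_\Gamma(H_{\Phi_{m,\bb}})$ as the intersection of the state space of $\caA_\Gamma$ with the affine hyperplanes $\{\omega : \omega(\tau_i(1-G_{m,\bb}))=0\}$ indexed by admissible $i$.

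There is no real obstacle: all content is encoded in Proposition \ref{prop:bbspec}, which supplies the vanishing of the smallest eigenvalue of $(H_{\Phi_{m,\bb}})_\Lambda$. The only points requiring care are that each interaction term is a local observable (so weak-$*$ convergence applies in the forward direction) and that one may legitimately use $\omega$ itself as the required extension in the reverse direction.
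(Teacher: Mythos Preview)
Your proof is correct and follows essentially the same approach as the paper's: both directions rest on the frustration-free structure, with the implication $\omega\in\caS_\Gamma\Rightarrow$ vanishing obtained by passing the local condition through the weak-$*$ limit, and the converse obtained by observing that $\omega$ itself extends its own restrictions. The only minor remark is that your appeal to Proposition~\ref{prop:bbspec} for the nonvanishing of the kernel is stated there under $m\ge 2l_{\bb}$, whereas the lemma only assumes $m\ge \bbm_{\bb}$; the conclusion still holds since $m\ge \bbm_{\bb}$ gives Property~(I,$m$), hence $\ker(H_{\Phi_{m,\bb}})_\Lambda=\caG_{|\Lambda|,\bb}$, and $\caG_{N,\bb}\neq 0$ for all $N$ because $r_{T_{\bb}}=1$.
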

\begin{proof}
Suppose that $\omega$ is a state on ${\mathcal A}_{\Gamma}$ 
such that $\omega(\tau_{i}(1-\gbp{m}))=0$,
for any $i\in\bbZ$ with
$[i,i+m-1]\subset \Gamma$.
Then its restriction $\omega\vert_{\caA_{I}}$ 
to each interval $I\subset \Gamma$
is a ground state of $\lmk H_{\Phi_{m,\bb}}\rmk_{I}$.
Hence $\omega$ is a $wk*$-accumulation point of extensions of 
$\omega\vert_{\caA_{I}}\in {\caS}_{I}\lmk
 H_{\Phi_{m,\bb}}\rmk$, hence
$\omega\in {\caS}_{\Gamma}(H_{\Phi_{m,\bb}})$, 
by definition.

If $\omega\in {\caS}_{\Gamma}\lmk H_{\Phi_{m,\bb}}\rmk$, 
then
there exits a subnet $\{I'\}$ of 
intervals in $\Gamma$ associated with states 
$\omega_{I'}$
on ${\mathcal A}_{\Gamma}$ 
such that $\omega_{I'}\vert_{\caA_{I'}}
\in \caS_{I'}(H_{\Phi_{m,\bb}})$,
and $\omega={\rm wk}*-\lim_{I'} \omega_{I'}$.
Because $m\ge m_\bb$, we have $\omega\circ\tau_{i}(1-\gbp{m})=
\lim_{I'} 
\omega_{I'}\lmk \tau_{i}(1-\gbp{m})\rmk=0$,
for any $i\in\bbZ$ with
$[i,i+m-1]\subset \Gamma$.
\end{proof}

\begin{lem}\label{lem:left}
Let $n\in\nan$ with $n\ge 2$, and $\bb\in\ClassA$ with respect to $(n_0,k_R,k_L,\lal,\bbD,\bbG,Y)$.
Set \begin{align*}
\bbL_\bb(A):=\sum_{\mu^{(l)},\nu^{(l)}\in \{1,\ldots,n\}^{\times l}}
\braket{\ws{l}}{A\wsn{l}}
\lmk \wbn{l}\rmk^{*}\rho_{\bb} \wb{l},
\end{align*} 
 if $A\in {\mathcal A}_{[-l,-1]}\simeq \otimes_{i=0}^{l-1}\mn$ for $l\in\nan$.
 Then $\bbL_\bb$ defines a well-defined 
completely positive map on ${\mathcal A}^{\rm loc}_{(-\infty,-1]}$. 
This $L_{\bb}$ extends to a completely positive map from the half-infinite chain ${\mathcal A}_{(-\infty,-1]}$
 to $\mnz\otimes \mkk$, which we will denote by the same symbol $\bbL_\bb$.
We have
\begin{align}\label{eq:lbran}
\Ran \bbL_\bb=\hpd \lmk\mnz\otimes \mkk\rmk\hpd.
\end{align}
\end{lem}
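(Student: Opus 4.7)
The plan is to prove in order: (a) the formula is $l$-independent and extends to a bounded completely positive map on $\caA_{(-\infty,-1]}$; (b) $\Ran\bbL_\bb\subset\hpd(\mnz\otimes\mkk)\hpd$; and (c) the reverse inclusion in~(\ref{eq:lbran}).

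For (a), given $A\in\caA_{[-l,-1]}$ I substitute $\unit\otimes A\in\caA_{[-l-1,-1]}$ into the $(l+1)$-version of the formula. The factor $\braket{\psi_{\mu_0}}{\unit\psi_{\nu_0}}=\delta_{\mu_0\nu_0}$ collapses the outermost pair, and the factorization $\widehat{B_{\mu^{(l+1)}}}=B_{\mu_0}\widehat{B_{\mu^{(l)}}}$ isolates the inner sum $\sum_{\mu_0}B_{\mu_0}^*\rho_\bb B_{\mu_0}=T_\bb^*(\rho_\bb)$ in the middle. Proposition~\ref{prop:bbspec} gives that $\varphi_\bb$ is $T_\bb$-invariant, so $T_\bb^*(\rho_\bb)=\rho_\bb$ and the $l+1$ expression collapses to the $l$ expression; thus $\bbL_\bb$ is unambiguously defined on $\caA^{\rm loc}_{(-\infty,-1]}$. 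Expanding $A$ in matrix units puts $\bbL_\bb|_{\caA_{[-l,-1]}}$ into Stinespring form with Kraus data built from $\rho_\bb^{1/2}\wb{l}$, yielding complete positivity. The bound $\lV\bbL_\bb(A)\rV\le\lV A\rV\cdot\lV(T_\bb^*)^l(\rho_\bb)\rV=\lV A\rV\cdot\lV\rho_\bb\rV$ is $l$-independent, so $\bbL_\bb$ extends uniquely by continuity to a bounded CP map on the norm closure $\caA_{(-\infty,-1]}$.

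For (b), $s(\varphi_\bb)=\hpd$ from Proposition~\ref{prop:bbspec} forces $\rho_\bb=\hpd\rho_\bb\hpd$, and Lemma~\ref{lem:but} yields both $\hpd\wb{l}=\hpd\wb{l}\hpd$ and, by adjoint, $(\wb{l})^*\hpd=\hpd(\wb{l})^*\hpd$. Hence every summand $(\wbn{l})^*\rho_\bb\wb{l}$ in the formula already sits in $\hpd(\mnz\otimes\mkk)\hpd$, and the inclusion is automatic. For (c), I observe that $\bbL_\bb(\ket{\wsn{l}}\bra{\ws{l}})=(\wb{l})^*\rho_\bb\wbn{l}$, so for every $l\ge l_\bb$ the range contains the span $\caK_l(\bb)^*\rho_\bb\caK_l(\bb)$. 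Since $\rho_\bb$ is invertible in the corner $\hpd(\mnz\otimes\mkk)\hpd$ (because $s(\rho_\bb)=\hpd$), the surjectivity onto this corner reduces to showing that $\hpd\caK_l(\bb)\hpd$, together with its conjugate, spans $\hpd(\mnz\otimes\mkk)\hpd$ for $l$ large enough. The starting point is the bijectivity of $\gb{l}|_{\hpu(\mnz\otimes\mkk)\hpd}$ from Proposition~\ref{prop:bbspec}, equivalently~(\ref{eq:ukd}) which identifies $\hpu\caK_l(\bb)\hpd=\mnz\otimes(\pu\mkk\pd)$. To pass from this $\hpu\cdot\hpd$ slice to the full $\hpd\cdot\hpd$ slice I plan to use~(\ref{eq:adl}), the structural constraint~(\ref{eq:ly}) that $Y$ commutes with $\Lambda_{\lal}$ and satisfies $\pu Y\pd=0$, and the detailed form $\caK_l(\bb)=\mnz\otimes\caD(k_R,k_L,\bbD,\bbG)(\Lambda_{\lal}(1+Y))^l$ to produce enough independent elements, possibly after combining contributions from different $l$'s via the multiplicative closure $\caK_l(\bb)\caK_{l'}(\bb)\subset\caK_{l+l'}(\bb)$.

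The main obstacle will be this final spanning step in (c). A naive count of $\hpd\caK_l(\bb)\hpd$ for a single $l$ falls short of the $n_0^2(k_L+1)^2$-dimensional target $\hpd(\mnz\otimes\mkk)\hpd$, so the argument must carefully assemble the missing directions by exploiting the interplay between $\caD(k_R,k_L,\bbD,\bbG)$ and the powers $(\Lambda_{\lal}(1+Y))^l$ encoded by~(\ref{eq:dygy1})--(\ref{eq:dygy2}).
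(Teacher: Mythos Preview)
Parts (a) and (b) are fine and match the paper's argument essentially verbatim.

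Part (c) has a genuine gap. Your reduction ``surjectivity reduces to showing that $\hpd\caK_l(\bb)\hpd$, together with its conjugate, spans $\hpd(\mnz\otimes\mkk)\hpd$'' is a sufficient condition that \emph{fails}. For $l\ge l_\bb$ one has $\caK_l(\bb)=\mnz\otimes\caD(k_R,k_L,\bbD,\bbG)(\Lambda_\lal(1+Y))^l$; applying $\pd$ on the left kills the $I_R(D_a)$ and $E_{-a,b}$ generators (Lemma~\ref{lem:ldgy}), so $\hpd\caK_l(\bb)\hpd$ has dimension at most $n_0^2(k_L+1)$, independently of $l$. The corner you want to hit has dimension $n_0^2(k_L+1)^2$. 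Combining different $l$'s via $\caK_l\caK_{l'}\subset\caK_{l+l'}$ does not help, since for each fixed $l''$ the bound $n_0^2(k_L+1)$ persists. So the route you sketch cannot close.

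What the paper does instead is to work directly with the \emph{bilinear} image $\spa\bigl(\caK_l(\bb)^*\rho_\bb\,\caK_l(\bb)\bigr)$, which can be much larger than $\hpd\caK_l(\bb)\hpd$ itself. Concretely, for $l\ge l_\bb$ it takes the specific elements $X_{\alpha,b}:=\zeij{1,\alpha}\otimes(\Lambda_\lal(1+Y))^lI_L(G_b)\in\caK_l(\bb)$ (with $G_0:=\unit$) and computes
\[
X_{\alpha,b}^*\,\rho_\bb\,X_{\beta,b'}=\braket{\cnz{1}\otimes\fii{0}}{\rho_\bb(\cnz{1}\otimes\fii{0})}\,\zeij{\alpha\beta}\otimes\eij{bb'}+\text{(terms in }\mnz\otimes(\qd{b+1}\mkk\qd{b'}+\qd{b}\mkk\qd{b'+1})\text{)},
\]
using Lemma~\ref{lem:nijyu} to push the error into strictly higher $Q_L$-filtration. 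Since $s(\rho_\bb)=\hpd$ makes the leading coefficient nonzero, a downward induction starting at $b=b'=k_L$ produces every matrix unit $\zeij{\alpha\beta}\otimes\eij{bb'}$ inside $\spa(\caK_l(\bb)^*\rho_\bb\,\caK_l(\bb))$. That triangular induction is the missing idea in your proposal.
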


\begin{proof}
Note that from the relation $\sum_{\mu=1}^n B_{\mu}^*\rho_\bb B_{\mu}=\rho_\bb$,
$\bbL_\bb$ is well-defined.
It can be checked directly that $\bbL_\bb\vert_{{\caA}_{[-l,-1]}}$ defines
a completely positive map on ${\caA}_{[-l,-1]}$ 
with norm $\lV \bbL_\bb\vert_{{\caA}_{[-l,-1]}}\rV=\lV \bbL_\bb(1)\rV=\lV \rho_\bb\rV$, for any $l\in\nan$.
Therefore, we can extend it to 
a completely positive map on ${\mathcal A}_{(-\infty,-1]}$.

Next we check (\ref{eq:lbran}).
The inclusion $\Ran \bbL_\bb\subset \hpd \lmk\mnz\otimes 
\mkk\rmk\hpd$ can be seen
from $\lmk \wbn{l}\rmk^{*}\rho_{\bb} \wb{l}\in \hpd \lmk\mnz\otimes \mkk\rmk\hpd$
which follows from (\ref{eq:wbp}) and $s(\rho_\bb)=\hpd$.

To see the opposite inclusion, note that for any
$l\ge l_{\bb}(n,n_0,k_R,k_L,\lal,\bbD,\bbG,Y)$,
$\alpha,\beta\in\{1,\ldots,n_0\}$, and
$b,b'=0,\ldots,k_L$, we have
\begin{align*}
&\lmk \zeij{1,\alpha}\otimes \lmk\Lambda_{\lal}\lmk 1+Y\rmk\rmk^lI_L(G_b)\rmk^*
\rho_\bb
\lmk \zeij{1,\beta}\otimes \lmk\Lambda_{\lal}\lmk 1+Y\rmk\rmk^lI_L(G_{b'})\rmk\\
&=
\braket{\cnz{1}\otimes \fii{0}}
{\rho_\bb\lmk\cnz{1}\otimes \fii{0}\rmk}
\zeij{\alpha,\beta}\otimes \eij{b,b'}\\
&+
\text{an element of}
\mnz\otimes \lmk
\qd{b+1}{\mkk}\qd{b'}+
\qd{b}{\mkk}\qd{b'+1}\rmk,
\end{align*}
by Lemma \ref{lem:nijyu}.
Here, $G_0$ should be understood as $G_0:=\unit$.
As $l\ge l_\bb(n,n_0,k_R,k_L,\lal,\bbD,\bbG,Y)$, the left hand side is in
$\kl{l}(\bb)^*\rho_{\bb}\kl{l}(\bb)$.
Because $s(\rho_\bb)=\hpd$, $
\braket{\cnz{1}\otimes \fii{0}}
{\rho_\bb\lmk\cnz{1}\otimes \fii{0}\rmk}$ is not zero.
Therefore, we have
\begin{align*}
\zeij{\alpha,\beta}\otimes \eij{b,b'}
\in \kl{l}(\bb)^*\rho_{\bb}\kl{l}(\bb)
+
\mnz\otimes \lmk
\qd{b+1}{\mkk}\qd{b'}+
\qd{b}{\mkk}\qd{b'+1}\rmk
\end{align*}
for any $l\ge l_\bb(n,n_0,k_R,k_L,\lal,\bbD,\bbG,Y)$,
$\alpha,\beta\in\{1,\ldots,n_0\}$, and
$b,b'=0,\ldots,k_L$.
In particular, we have 
$\zeij{\alpha,\beta}\otimes \eij{k_L,k_L}
\in \kl{l}(\bb)^*\rho_{\bb}\kl{l}(\bb)$.
Starting from this, by induction with respect to $b,b'$, we conclude
\[
\mnz\otimes \hpd(\mkk)\hpd\subset \spa(\kl{l}(\bb)^*\rho_{\bb}\kl{l}(\bb))\subset\Ran \bbL_\bb.
\]
%
%
%
\end{proof}

Similarly, we obtain the following.
\begin{lem}\label{lem:right}
Let $n\in\nan$ with $n\ge 2$, and $\bb\in\ClassA$ with respect to $(n_0,k_R,k_L,\lal,\bbD,\bbG,Y)$.
Set \begin{align*}
\bbR_\bb(A):=\sum_{\mu^{(l)},\nu^{(l)}\in \{1,\ldots,n\}^{\times l}}
\braket{\ws{l}}{A\wsn{l}}
\wb{l}e_{\bb}\lmk \wbn{l}\rmk^*,
\end{align*} 
 if $A\in {\mathcal A}_{[0,l-1]}\simeq \otimes_{i=0}^{l-1}\mn$ for $l\in\nan$.
 Then $\bbR_\bb$ defines a well-defined completely positive map 
on ${\mathcal A}^{\rm loc}_{[0,\infty)}$. This $R_{\bb}$ extends to a completely positive map from the half-infinite chain ${\mathcal A}_{[0,\infty)}$
 to $\mnz\otimes \mkk$, which we will denote by the same symbol $\bbR_\bb$.
We have
\[
\Ran \bbR_\bb=\hpu \lmk\mnz\otimes \mkk\rmk\hpu.
\]
\end{lem}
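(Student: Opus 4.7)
The plan is to mirror the proof of Lemma \ref{lem:left}, swapping the two edges: replace $\rho_\bb$ by $e_\bb$, $\hpd$ by $\hpu$, $I_L(G_b)$ by $I_R(D_a)$, and the sandwich $\kl{l}(\bb)^*\rho_\bb\kl{l}(\bb)$ by $\kl{l}(\bb) e_\bb \kl{l}(\bb)^*$. The underlying symmetry is that the triple $(s_\bb, e_\bb, \varphi_\bb)$ from Proposition \ref{prop:bbspec} treats $e_\bb$ and $\rho_\bb$ on equal footing, with $s(e_\bb) = \hpu$ now playing the role that $s(\rho_\bb) = \hpd$ played for $\bbL_\bb$.

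First I would verify that $\bbR_\bb$ is well-defined on $\caA_{[0,\infty)}^{\rm loc}$ and extends continuously to $\caA_{[0,\infty)}$. Consistency under the embedding $A \mapsto A \otimes \unit$ reduces to $\sum_\mu B_\mu e_\bb B_\mu^* = T_\bb(e_\bb) = e_\bb$, i.e.\ the $T_\bb$-invariance of $e_\bb$ from Proposition \ref{prop:bbspec}. The Kraus representation then shows $\bbR_\bb\vert_{\caA_{[0,l-1]}}$ is completely positive with norm $\lV\bbR_\bb\vert_{\caA_{[0,l-1]}}\rV = \lV\bbR_\bb(1)\rV = \lV e_\bb\rV$ uniformly in $l$, so the extension to the inductive-limit algebra is automatic. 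The inclusion $\Ran \bbR_\bb \subset \hpu(\mnz\otimes\mkk)\hpu$ is then immediate from Lemma \ref{lem:but} (which gives $\wb{l}\hpu = \hpu\wb{l}\hpu$) together with $s(e_\bb) = \hpu$: each generator satisfies $\wb{l} e_\bb \wbn{l}^* = (\hpu\wb{l}\hpu) e_\bb (\hpu\wbn{l}\hpu)^* \in \hpu(\mnz\otimes\mkk)\hpu$.

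The main work is the reverse inclusion. For $l \ge l_\bb(n,n_0,k_R,k_L,\lal,\bbD,\bbG,Y)$, $\alpha,\beta \in \{1,\ldots,n_0\}$, and $a,a' \in \{0,\ldots,k_R\}$ (with the convention $D_0 := \unit$), I would consider
\[
X_a := \zeij{\alpha 1} \otimes I_R(D_a)\lmk\Lambda_\lal(1+Y)\rmk^l, \qquad X'_{a'} := \zeij{\beta 1} \otimes I_R(D_{a'})\lmk\Lambda_\lal(1+Y)\rmk^l,
\]
both of which lie in $\kl{l}(\bb) = \mnz \otimes \caD(k_R,k_L,\bbD,\bbG)\lmk\Lambda_\lal(1+Y)\rmk^l$ by Remark \ref{rem:dkk}. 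Expanding $X_a e_\bb (X'_{a'})^*$ and invoking Lemma \ref{lem:nijyu} (which gives $I_R(D_a) = \eij{-a,0} + \qu{-(a+1)} I_R(D_a)\overline{\pd}$), Lemma \ref{lem:ldgy}, and the commutation relations (\ref{eq:ld}) and (\ref{eq:dygy1}) (which let $\lmk\Lambda_\lal(1+Y)\rmk^l$ pass through $I_R(D_a)$ up to an upper-triangular combination of the $I_R(D_{a''})$), the product decomposes as a leading term proportional to $\zeij{\alpha\beta} \otimes \eij{-a,-a'}$ with scalar coefficient $\braket{\cnz{1}\otimes \fii{0}}{e_\bb(\cnz{1}\otimes \fii{0})}$---strictly positive since $\heij{00} \le s(e_\bb) = \hpu$---plus corrections confined to $\mnz \otimes \lmk \qu{-(a+1)}\mkk\qu{-a'} + \qu{-a}\mkk\qu{-(a'+1)} \rmk$. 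Starting from the extremal case $a = a' = k_R$, where $\qu{-(k_R+1)} = 0$ annihilates both correction spaces and one obtains $\zeij{\alpha\beta}\otimes\eij{-k_R,-k_R} \in \spa\lmk\kl{l}(\bb) e_\bb \kl{l}(\bb)^*\rmk \subset \Ran\bbR_\bb$, a downward induction on $(a,a')$ then captures every matrix unit $\zeij{\alpha\beta} \otimes \eij{-a,-a'}$, which together span $\hpu(\mnz\otimes\mkk)\hpu$. The main obstacle is the bookkeeping in this decomposition---one must combine Lemma \ref{lem:nijyu} with (\ref{eq:dygy1}) (or its inverse form from Remark \ref{rem:din}) to confirm that the error term really is confined to the claimed lower-corner subspaces so that the induction closes; this is precisely the right-edge analogue of the displayed calculation in the proof of Lemma \ref{lem:left}.
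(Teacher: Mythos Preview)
Your proposal is correct and follows exactly the approach the paper intends: the paper simply writes ``Similarly, we obtain the following'' before stating Lemma~\ref{lem:right}, meaning the proof is the mirror of Lemma~\ref{lem:left}, and your outline is precisely that mirror. One minor remark: the reference to (\ref{eq:dygy1}) and Remark~\ref{rem:din} is not actually needed for the error-term bookkeeping---the decomposition $I_R(D_a)(\Lambda_{\lal}(1+Y))^l\pu = \eij{-a,0} + Z_a$ with $Z_a\in\qu{-(a+1)}\mkk\pu$ follows directly from Lemma~\ref{lem:nijyu}, the defining relation $I_R(D_a)\eij{00}=\eij{-a,0}$, and the upper-triangularity of $(\Lambda_{\lal}(1+Y))^l$ (whose $(0,0)$ entry is $1$), exactly as in the paper's displayed computation for Lemma~\ref{lem:left}.
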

\begin{lem}\label{lem:gee}
Let $n\in\nan$ with $n\ge 2$, and $\bb\in\ClassA$ with respect to $(n_0,k_R,k_L,\lal,\bbD,\bbG,Y)$.
For each $\sigma_L\in {\mathfrak E}_{n_0(k_L+1)}$,
under the identification 
$\Mat_{n_0(k_L+1)}\simeq \mnz\otimes \pd\mkk\pd$,
define
$\Xi_L(\sigma_L):\caA_{(-\infty,-1]}\to \cc$
by
\[
\Xi_L(\sigma_L)(A)
:=\sigma_{L}(y_{\bb}^{\frac 12} \bbL_\bb(A)y_{\bb}^{\frac 12}),\quad
A\in\caA_{(-\infty,-1]}.
\]
For each $\sigma_R\in {\mathfrak E}_{n_0(k_R+1)}$,
under the identification 
$\Mat_{n_0(k_R+1)}\simeq \mnz\otimes \pu\mkk\pu$,
define
$\Xi_R(\sigma_R):\caA_{[0,\infty)}\to \cc$
by
\[
\Xi_R(\sigma_R)(A)
:=\sigma_{R}(x_{\bb}^{\frac 12} \bbR_\bb(A)x_{\bb}^{\frac 12}),\quad
A\in\caA_{[0,\infty)}.
\]
For each nonzero $X\in\mnz\otimes\pu\mkk\pd$,
set
\begin{align}
\sigma_{L,X}:=\frac{\Tr\lmk \rho_{\bb}^{\frac12}X^*e_{\bb}X
\rho_{\bb}^{\frac12}\lmk \cdot\rmk \rmk}{\varphi_{\bb}(X^*e_{\bb}X)}\in\caE_{n_0(k_L+1)},\quad
\sigma_{R,X}:=\frac{\varphi_{\bb}\lmk X^* e_{\bb}^{\frac12}\lmk\cdot\rmk
e_{\bb}^{\frac12}X\rmk}{\varphi_{\bb}(X^*e_{\bb}X)}\in\caE_{n_0(k_R+1)}.
\end{align}
Then
we have 
\begin{align}
&\lv\braket{\gb{N}(X)}{\tau_{N-l}(A)\gb{N}(X)}
-\varphi_{\bb}(X^*e_{\bb}X)
\Xi_L\lmk \sigma_{L,X}\rmk\circ\tau_{-l}(A)\rv
\le 
\tilde E_{\bb}(N-l)F_{\bb}\lV X\rV^2\lV A\rV,\nonumber\\
&\lv\braket{\gb{N}(X)}{A\gb{N}(X)}
-\varphi_{\bb}(X^*e_{\bb}X)
\Xi_R\lmk \sigma_{R,X}\rmk(A)
\rv
\le 
\tilde E_{\bb}(N-l)F_{\bb}\lV X\rV^2\lV A\rV
\end{align}
for all nonzero $X\in\mnz\otimes \pu \mkk\pd$, $l\in\nan$, $A\in{\mathcal A}_{[0,l-1]}$, and
$l\le N$.
\end{lem}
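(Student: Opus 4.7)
The plan is to expand $\braket{\gb{N}(X)}{\tau_{N-l}(A)\gb{N}(X)}$ by slicing off the last $l$ sites and applying the spectral decomposition of $T_{\bb}$ from Proposition \ref{prop:bbspec}, separating a principal term that matches the right-hand side from an error of order $\tilde E_{\bb}(N-l)$. The starting identity is
\[
\gb{N}(X) = \sum_{\mu^{(l)}} \gb{N-l}\lmk X\wb{l}^{\ast}\rmk\otimes \ws{l},
\]
which follows from the definition of $\gb{N}$ by splitting the multi-index and using cyclicity of the trace, together with the companion identity $\gb{N}(X) = \sum_{\mu^{(l)}} \ws{l}\otimes \gb{N-l}(\wb{l}^{\ast}X)$ needed for the right-edge inequality. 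Combined with the matrix-element formula
\[
\braket{\gb{M}(Y)}{\gb{M}(Z)} = \sum_{i,j} \braket{\chi_i^{(k)}}{T_{\bb}^{M}\lmk Y^{\ast}\ket{\chi_i^{(k)}}\bra{\chi_j^{(k)}}Z\rmk \chi_j^{(k)}},
\]
valid for arbitrary $Y, Z\in\mnzk$ by the same index-level computation as in the proof of Lemma \ref{lem:ge}, this rewrites the inner product as
\[
\braket{\gb{N}(X)}{\tau_{N-l}(A)\gb{N}(X)} = \sum_{i,j} \braket{\chi_i^{(k)}}{T_{\bb}^{N-l}\lmk \caE_A(X^{\ast}\ket{\chi_i^{(k)}}\bra{\chi_j^{(k)}}X)\rmk \chi_j^{(k)}},
\]
where $k:=n_0(k_R+k_L+1)$, $\{\chi_i^{(k)}\}$ is an orthonormal basis of $\cc^{k}$, and $\caE_A(Z) := \sum_{\mu^{(l)},\nu^{(l)}}\braket{\ws{l}}{A\wsn{l}}\wb{l}Z\wbn{l}^{\ast}$.

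For the norm of $\caE_A$, I will use a Stinespring-type factorisation: with $V:\cc^k\otimes\bigotimes_{i=0}^{l-1}\cc^n\to\cc^k$ defined by $V(\xi\otimes\ws{l}):=\wb{l}\xi$, one has $\caE_A(Z) = V(Z\otimes A)V^{\ast}$, hence $\lV\caE_A(Z)\rV \le \lV V\rV^{2}\lV A\rV\lV Z\rV = \lV T_{\bb}^l(\unit)\rV\lV A\rV\lV Z\rV \le F_{\bb}\lV A\rV\lV Z\rV$. Splitting $T_{\bb}^{N-l} = P_{\{1\}}^{T_{\bb}} + T_{\bb}^{N-l}\lmk \unit-P_{\{1\}}^{T_{\bb}}\rmk$ and using $P_{\{1\}}^{T_{\bb}}(W) = \varphi_{\bb}(W)e_{\bb}$, the principal term collapses to
\[
\sum_{i,j}(e_{\bb})_{ij}\,\varphi_{\bb}\lmk\caE_A(X^{\ast}\ket{\chi_i^{(k)}}\bra{\chi_j^{(k)}}X)\rmk = \Tr\lmk X^{\ast}e_{\bb}X\,\bbL_{\bb}(\tau_{-l}(A))\rmk,
\]
combining the identity $\varphi_{\bb}(\caE_A(W)) = \Tr(\bbL_{\bb}(\tau_{-l}(A))W)$ (obtained by repositioning $\rho_{\bb}\wb{l}$ via cyclicity and matching the definition of $\bbL_{\bb}$) with $\sum_{i,j}(e_{\bb})_{ij}\ket{\chi_i^{(k)}}\bra{\chi_j^{(k)}} = e_{\bb}$. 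To identify this with $\varphi_{\bb}(X^{\ast}e_{\bb}X)\Xi_L(\sigma_{L,X})\circ\tau_{-l}(A)$, I use that $\rho_{\bb}$ and $y_{\bb}$ commute with product $\rho_{\bb}y_{\bb} = \hpd$ a projection, forcing $\rho_{\bb}^{1/2}y_{\bb}^{1/2} = \hpd$, together with $\bbL_{\bb}(\tau_{-l}(A))\in \hpd(\mnzk)\hpd$ from Lemma \ref{lem:left}. The error term is dominated by
\[
\lV T_{\bb}^{N-l}\lmk\unit-P_{\{1\}}^{T_{\bb}}\rmk\rV\cdot k^{2}\cdot F_{\bb}\lV A\rV\lV X\rV^{2} = \tilde E_{\bb}(N-l)F_{\bb}\lV A\rV\lV X\rV^{2}.
\]

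The right-edge inequality is proved by the symmetric computation starting from the companion identity for $\gb{N}(X)$: this gives
\[
\braket{\gb{N}(X)}{A\gb{N}(X)} = \sum_{i,j}\braket{\chi_i^{(k)}}{T_{\bb}^{N-l}\lmk X^{\ast}\caE_A(\ket{\chi_i^{(k)}}\bra{\chi_j^{(k)}})X\rmk\chi_j^{(k)}},
\]
whose principal term collapses via $\caE_A(e_{\bb}) = \bbR_{\bb}(A)$ to $\varphi_{\bb}(X^{\ast}\bbR_{\bb}(A)X)$; the dual identities $e_{\bb}^{1/2}x_{\bb}^{1/2} = \hpu$, $\bbR_{\bb}(A)\in\hpu(\mnzk)\hpu$ (Lemma \ref{lem:right}), and $X=\hpu X$ then identify it with $\varphi_{\bb}(X^{\ast}e_{\bb}X)\Xi_R(\sigma_{R,X})(A)$, with error bound of the same form. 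I expect the main obstacle to be the bookkeeping in the principal-term computation: extracting the precise forms $\varphi_{\bb}(X^{\ast}e_{\bb}X)\Xi_L(\sigma_{L,X})\circ\tau_{-l}(A)$ and $\varphi_{\bb}(X^{\ast}e_{\bb}X)\Xi_R(\sigma_{R,X})(A)$ from the traces $\Tr(X^{\ast}e_{\bb}X\,\bbL_{\bb}(\tau_{-l}(A)))$ and $\varphi_{\bb}(X^{\ast}\bbR_{\bb}(A)X)$ hinges on applying the commutation and square-root identities among $\rho_{\bb}, y_{\bb}, e_{\bb}, x_{\bb}, \hpu, \hpd$ in the correct order together with the range statements of Lemmas \ref{lem:left}--\ref{lem:right}.
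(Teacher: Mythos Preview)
Your argument is correct and follows essentially the same route as the paper: the paper introduces the map $\Theta_A(Z)=V(A\otimes Z)V^{*}$ (your $\caE_A$), bounds it by $\lV T_{\bb}^{l}(1)\rV\le F_{\bb}$ via $VV^{*}=T_{\bb}^{l}(1)$, expands $\braket{\gb{N}(X)}{\tau_{N-l}(A)\gb{N}(X)}$ in the matrix-unit basis to get $T_{\bb}^{N-l}\circ\Theta_A$ acting on $X^{*}e_{ij}X$, and then splits off $P_{\{1\}}^{T_{\bb}}$ to obtain the principal term and the $\tilde E_{\bb}(N-l)F_{\bb}\lV X\rV^{2}\lV A\rV$ remainder. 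Your explicit verification that $\rho_{\bb}^{1/2}y_{\bb}^{1/2}=\hpd$ and $e_{\bb}^{1/2}x_{\bb}^{1/2}=\hpu$ (which the paper leaves implicit) is exactly what is needed to match the trace expressions with $\Xi_L(\sigma_{L,X})$ and $\Xi_R(\sigma_{R,X})$.
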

\begin{rem}
Recall the definitions (\ref{eq:efdef}).
\end{rem}
\begin{proof}
Let $A\in {\mathcal A}_{[0,l-1]}$ for some $l\in \nan$ and $l\le N$.
Define  a linear map $V:\lmk \bigotimes_{i=0}^{l-1}\cc^n\rmk\otimes \lmk\cc^{n_0}\otimes\cc^{k_L+k_R+1}\rmk
\to \cc^{n_0}\otimes\cc^{k_L+k_R+1}$ by
\begin{align*}
V(\xi\otimes \eta):=\sum_{\mu^{(l)}\in\{1,\ldots,n\}^{l}}\braket{\ws{l}}{\xi}B_{\mu^{(l)}}\eta,\quad
\xi\in \bigotimes_{i=0}^{l-1}\cc^n,\;\;\eta\in \cc^{n_0}\otimes\cc^{k_L+k_R+1}.
\end{align*}
We define a linear map
$\Theta_A:\mnz\otimes\mkk \to\mnz\otimes\mkk$ by
\begin{align*}
\Theta_A(X)=V \lmk A\otimes X\rmk V^*,\quad X\in  \mnz\otimes\mkk.
\end{align*}
Note that 
\[
\varphi_\bb\circ \Theta_A=\Tr\lmk  L_\bb\circ \tau_{-l}\lmk A\rmk\lmk\cdot\rmk\rmk,\quad
\Theta_A\lmk e_\bb\rmk =R_\bb\lmk A\rmk.
\]
Then we have
\[
\lV \Theta_A\rV\le \lV A\rV\lV V\rV\lV V^*\rV=\lV A\rV\lV V V^*\rV
=\lV A\rV\lV T_{\bb}^{l}(1)\rV\le \lV A\rV F_{\bb}.
\]
(Recall the estimate (\ref{eq:tnb}).)
Using these notations, for nonzero $X\in\mnz\otimes\pu\mkk\pd$, we have
\begin{align*}
&\braket{\gb{N}(X)}{\tau_{N-l}(A)\gb{N}(X)}\\
&=\sum_{\mu^{(l)},\nu^{(l)}} \sum_{\alpha,\beta=1}^{n_0}\sum_{i,j=-k_R}^{k_L}
\braket{\ws{l}}{A\wsn{l}}\\
&\braket{\cnz{\alpha}\otimes \fii{i}}{
T_{\bb}^{N-l}\lmk \wb{l} X^* \lmk
\ket{\cnz{\alpha}\otimes \fii{i}}
\bra{\cnz{\beta}\otimes \fii{j}} \rmk  X\lmk
\wbn{l}\rmk^*\rmk\lmk \cnz{\beta}\otimes \fii{j}\rmk}\\
&=\varphi_{\bb}(X^*e_{\bb}X)
\Xi_L\lmk \sigma_{L,X}\rmk\circ\tau_{-l}(A)\\
&+ \sum_{\alpha,\beta=1}^{n_0}\sum_{i,j=-k_R}^{k_L}\\
&\braket{\cnz{\alpha}\otimes \fii{i}}{
T_{\bb}^{N-l}\circ\lmk\unit-P_{\{1\}}^{T_{\bb}}\rmk
\circ\Theta_A\lmk X^* \lmk
\ket{\cnz{\alpha}\otimes \fii{i}}
\bra{\cnz{\beta}\otimes \fii{j}} \rmk  X
\rmk\lmk \cnz{\beta}\otimes \fii{j}\rmk}\\
\end{align*}
Hence we have
\begin{align*}
\lv\braket{\gb{N}(X)}{\tau_{N-l}(A)\gb{N}(X)}-\varphi_{\bb}(X^*e_{\bb}X)
\Xi_L\lmk \sigma_{L,X}\rmk\circ\tau_{-l}(A)\rv
\le
\tilde E_{\bb}(N-l)F_{\bb}\lV X\rV^2\lV A\rV
\end{align*}
The second inequality can be proven similarly.
\end{proof}
\begin{lem}\label{lem:leftgs}
Let $n\in\nan$ with $n\ge 2$, and $\bb\in\ClassA$ with respect to $(n_0,k_R,k_L,\lal,\bbD,\bbG,Y)$.
Let  $\sigma_L\in {\mathfrak E}_{n_0(k_L+1)}$, $\sigma_R\in {\mathfrak E}_{n_0(k_R+1)}$, and consider
$\Xi_L(\sigma_L)$ and $\Xi_R(\sigma_R)$ defined in
Lemma \ref{lem:gee}.
Then for $m\ge {\bbm}_{\bb}$,  we have
$\Xi_L(\sigma_L)\in\caS_{(-\infty,-1]}
\lmk H_{\Phi_{m,\bb}}\rmk$, and $\Xi_R(\sigma_R)\in\caS_{[0,\infty)}
\lmk H_{\Phi_{m,\bb}}\rmk$.
For $m\ge {\bbm}_{\bb}$, the maps
$\Xi_L:{\mathfrak E}_{n_0(k_L+1)}\to 
\caS_{(-\infty,-1]}(H_{\Phi_{m,\bb}})$ and
$\Xi_R:{\mathfrak E}_{n_0(k_R+1)}\to 
\caS_{[0,\infty)}(H_{\Phi_{m,\bb}})$
are affine bijections.
\end{lem}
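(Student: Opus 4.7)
The plan is to verify four sub-claims in turn: that $\Xi_L(\sigma_L)$ is a state, that it lies in $\caS_{(-\infty,-1]}(H_{\Phi_{m,\bb}})$, that $\Xi_L$ is affine and injective, and that it is surjective. The $\Xi_R$ case is completely parallel under the substitutions $\bbL_\bb\leftrightarrow\bbR_\bb$, $\rho_\bb\leftrightarrow e_\bb$, $y_\bb\leftrightarrow x_\bb$ and $\hpd\leftrightarrow\hpu$, so I describe only the left-edge argument. Positivity of $\Xi_L(\sigma_L)$ is immediate from complete positivity of $\bbL_\bb$ (Lemma~\ref{lem:left}) and of conjugation by $y_\bb^{1/2}$. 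For normalization, $T_\bb$-invariance of $\varphi_\bb$ gives $\bbL_\bb(\unit)=\rho_\bb$, and since $y_\bb$ and $\rho_\bb$ commute inside $\hpd(\mnz\otimes\mkk)\hpd$ with $\rho_\bb y_\bb=\hpd$, one gets $y_\bb^{1/2}\rho_\bb y_\bb^{1/2}=\hpd$, which is the identity of $\mnz\otimes\pd\mkk\pd\simeq\Mat_{n_0(k_L+1)}$; hence $\Xi_L(\sigma_L)(\unit)=\sigma_L(\hpd)=1$.

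To show $\Xi_L(\sigma_L)\in\caS_{(-\infty,-1]}(H_{\Phi_{m,\bb}})$, Lemma~\ref{lem:fs} reduces the problem to $\Xi_L(\sigma_L)(\tau_i(1-\gbp{m}))=0$ for every $[i,i+m-1]\subset(-\infty,-1]$, and I would prove the stronger statement $\bbL_\bb(\tau_i(1-\gbp{m}))=0$. The algebraic input is the identity $\sum_{\mu^{(m)}}(1-\gbp{m})_{\mu^{(m)},\nu^{(m)}}\widehat{B_{\mu^{(m)}}}=0$ for every $\nu^{(m)}$, which follows from the orthogonality $(1-\gbp{m})\gb{m}(X)=0$ by pairing against $\widehat{\psi_{\nu^{(m)}}}$, varying $X\in\mnz\otimes\mkk$, and taking adjoints (using Hermiticity of $1-\gbp{m}$). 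Substituting into the definition of $\bbL_\bb$ and splitting the multi-indices $\mu^{(l)},\nu^{(l)}$ into the three blocks left of, within, and right of $[i,i+m-1]$, the outer-block factors of $B$ pull out of the inner sum and the middle-block sum collapses to zero by this identity. The analogous computation yields $\bbR_\bb(\tau_i(1-\gbp{m}))=0$ for $[i,i+m-1]\subset[0,\infty)$. Affinity of $\Xi_L$ is evident; for injectivity, Lemma~\ref{lem:left} identifies $\Ran\bbL_\bb=\hpd(\mnz\otimes\mkk)\hpd$, and $y_\bb^{1/2}$ is invertible inside this corner, so $A\mapsto y_\bb^{1/2}\bbL_\bb(A)y_\bb^{1/2}$ is surjective onto $\mnz\otimes\pd\mkk\pd$, forcing $\sigma_L$ to be recovered uniquely from $\Xi_L(\sigma_L)$.

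Surjectivity is the main obstacle. Given $\psi\in\caS_{(-\infty,-1]}(H_{\Phi_{m,\bb}})$, take a net of intervals $I_\alpha=[-l_\alpha,-1]$ with $l_\alpha\to\infty$ and extensions $\omega_{I_\alpha}$ whose weak-$*$ limit is $\psi$. The density matrix of $\omega_{I_\alpha}\vert_{\caA_{I_\alpha}}$ is supported in $\tau_{-l_\alpha}(\cgb{l_\alpha})$, and by the bijection of Proposition~\ref{prop:bbspec} admits a spectral decomposition $\sum_i p_i^\alpha|\tau_{-l_\alpha}\gb{l_\alpha}(X_i^\alpha)\rangle\langle\tau_{-l_\alpha}\gb{l_\alpha}(X_i^\alpha)|$ with $\|\gb{l_\alpha}(X_i^\alpha)\|=1$ and $X_i^\alpha\in\hpu(\mnz\otimes\mkk)\hpd$. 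For $A\in\caA_{[-l,-1]}$ with $l\le l_\alpha$, translating and applying Lemma~\ref{lem:gee} with $N=l_\alpha$ and observable $\tau_l(A)\in\caA_{[0,l-1]}$ (whose $\tau_{-l}$-preimage is $A$) yields
\[
\omega_{I_\alpha}(A)=\Xi_L(\sigma_L^\alpha)(A)+O\bigl(\tilde E_\bb(l_\alpha-l)+E_\bb(l_\alpha)\bigr)\|A\|,\qquad \sigma_L^\alpha:=\sum_i p_i^\alpha\,\sigma_{L,X_i^\alpha}\in{\mathfrak E}_{n_0(k_L+1)},
\]
after using affinity of $\Xi_L$ and the asymptotic $\|\gb{l_\alpha}(X_i^\alpha)\|^2\simeq\varphi_\bb((X_i^\alpha)^*e_\bb X_i^\alpha)$ from Lemma~\ref{lem:ge} to absorb the normalizations $\varphi_\bb((X_i^\alpha)^*e_\bb X_i^\alpha)$ into unity. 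Compactness of ${\mathfrak E}_{n_0(k_L+1)}$ extracts a weak-$*$ convergent subnet $\sigma_L^\alpha\to\sigma_L$, and the exponential decay $\tilde E_\bb(l_\alpha-l)\to 0$ from the spectral gap (Proposition~\ref{prop:bbspec}) lets the limit pass through, yielding $\Xi_L(\sigma_L)=\psi$. The main technical difficulty is controlling the norms $\|X_i^\alpha\|$ uniformly (this is available since the Hilbert--Schmidt bound (\ref{eq:gtn}) of Lemma~\ref{lem:ge} and the normalization $\|\gb{l_\alpha}(X_i^\alpha)\|=1$ give a uniform bound, but it must be combined carefully with the operator-norm estimate so that the Lemma~\ref{lem:gee} error can be absorbed uniformly in $i$ inside the convex sum).
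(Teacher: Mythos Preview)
Your proposal is correct, and the overall architecture matches the paper's proof: state, ground-state membership via Lemma~\ref{lem:fs}, affinity/injectivity via surjectivity of $\bbL_\bb$ (Lemma~\ref{lem:left}), and surjectivity via compactness plus Lemma~\ref{lem:gee}. Two points of comparison are worth noting.

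\textbf{Ground-state membership.} Your direct algebraic route, proving $\bbL_\bb(\tau_i(1-\gbp{m}))=0$ via the identity $\sum_{\mu^{(m)}}(1-\gbp{m})_{\mu^{(m)},\nu^{(m)}}\widehat{B_{\mu^{(m)}}}=0$, is genuinely different from the paper. The paper instead represents $\sigma_L(y_\bb^{1/2}\cdot y_\bb^{1/2})$ as $\sum_j\Tr(e_\bb Z_j(\cdot)Z_j^*)$ for suitable $Z_j\in\hpu(\mnz\otimes\mkk)\hpd$, then invokes Lemma~\ref{lem:gee} to write $\Xi_L(\sigma_L)(\tau_i(1-\gbp{m}))$ as a limit of $\braket{\gb{N}(Z_j)}{\tau_N\tau_i(1-\gbp{m})\gb{N}(Z_j)}$, which vanishes by the intersection property. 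Your argument is cleaner and yields the stronger operator-level statement $\bbL_\bb(\tau_i(1-\gbp{m}))=0$; the paper's has the advantage of reusing Lemma~\ref{lem:gee} rather than a separate computation.

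\textbf{Surjectivity.} The paper's argument is more economical in one respect: rather than going back to the defining net of $\caS_{(-\infty,-1]}$, it uses Lemma~\ref{lem:fs} immediately to conclude that the restriction $\omega\vert_{\caA_{[-N,-1]}}$ of the given state $\omega$ itself already has density matrix supported in $\cgb{N}$. This avoids your appeal to intervals $I_\alpha=[-l_\alpha,-1]$ (whose existence is not automatic from the definition, though it does follow once Lemma~\ref{lem:fs} is in hand). The paper then decomposes $\tau_N(D_N)=\sum_i|\gb{N}(X_{i,N})\rangle\langle\gb{N}(X_{i,N})|$ without normalizing, bounds $\sum_i\|X_{i,N}\|^2\le 2/(a_\bb c_\bb)$ by (\ref{eq:gtn}), and takes a convergent subsequence of the \emph{matrices} $X_{i,N}\to X_{i,\infty}$; the limiting $\sigma_L$ is then built by hand from the $X_{i,\infty}$. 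Your route, packaging everything into $\sigma_L^\alpha\in{\mathfrak E}_{n_0(k_L+1)}$ and using compactness of the state space, is equivalent and arguably more conceptual, but as you note it requires the same uniform norm control via (\ref{eq:gtn}) to make the error estimate from Lemma~\ref{lem:gee} uniform in $i$ inside the convex sum.
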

\begin{proof}
Let $\sigma_L\in{\mathfrak E}_{n_0(k_L+1)} $.
As $\bbL_\bb$ is positive, $\Xi_L(\sigma_L)$
is positive. Furthermore, we have
\[
\Xi_L(\sigma_L)(1)
=\sigma_{L}(y_{\bb}^{\frac 12} \bbL_\bb(1)y_{\bb}^{\frac 12})
=\sigma_{L}(y_{\bb}^{\frac 12} \rho_\bb y_{\bb}^{\frac 12})
=\sigma_{L}(1)=1.
\]
Hence $\Xi_L(\sigma_L)$ is a state on $A\in\caA_{(-\infty,-1]}$.

Next we show $\Xi_L(\sigma_L)\in\caS_{(-\infty,-1]}(H_{\Phi_{m,\bb}})$,
for $m\ge \bbm_{\bb}$.
From Lemma \ref{lem:fs}, it suffices to show 
$\Xi_L(\sigma_L)(\tau_{i}(1-\gbp{m}))=0$
for any $i\in\bbZ$ with
$[i,i+m-1]\subset (-\infty,-1]$.
To do this, note that there exists a set $\{Z_j\}_{j=1}^{n_0(k_L+1)}\subset
\mnz\otimes\pu \mkk\pd$ such that
$\sum_{j}\Tr\lmk e_{\bb}\lmk  Z_j \cdot  Z_j^*\rmk\rmk
=\sigma_{L}\lmk y_\bb^{\frac 12}\cdot y_\bb^{\frac 12}\rmk $.
From Lemma \ref{lem:gee}, we have
\begin{align*}
&\Xi_L(\sigma_L)\lmk
\tau_{i}(1-\gbp{m})\rmk
=\sum_{j}\Tr\lmk e_{\bb}\lmk  Z_j \lmk \bbL_\bb\lmk
\tau_{i}(1-\gbp{m})\rmk\rmk  Z_j^*\rmk\rmk\\
&=\sum_{j}
\lim_{N\to\infty}
\braket{\gb{N}(Z_j)}{\tau_{N}\lmk\tau_{i}(1-\gbp{m})\rmk
\gb{N}(Z_j)}
=0,
\end{align*}
because
$m\ge \bbm_\bb$.

It is clear from the definition that 
$\Xi_L$ is affine.
To show that $\Xi_L$ is injective, 
assume that $\Xi_L(\sigma_1)=\Xi_L(\sigma_2)$
for some $\sigma_1,\sigma_2\in {\mathfrak E}_{n_0(k_L+1)}$.
Then 
we have
\begin{align*}
\sigma_{1}(y_{\bb}^{\frac 12} \bbL_\bb(A)y_{\bb}^{\frac 12})
=\Xi_{L}(\sigma_1)(A)
=\Xi_L(\sigma_2)(A)
=\sigma_{2}(y_{\bb}^{\frac 12} \bbL_\bb(A)y_{\bb}^{\frac 12}),
\end{align*}
for all $A\in\caA_{(-\infty,-1]}$.
As $\bbL_\bb$ is onto $\hpd\lmk
\mnz\otimes \mkk\rmk \hpd\simeq\Mat_{n_0(k_L+1)}$ 
and $y_{\bb}^{\frac 12}$
is an invertible element of $\hpd\lmk
\mnz\otimes \mkk\rmk \hpd\simeq\Mat_{n_0(k_L+1)}$,
 we have
$\sigma_1=\sigma_2$.

To complete the proof, we prove that $\Xi_L$ is onto
$\caS_{(-\infty,-1]}(H_{\Pbm{m}})$ for $m\ge \bbm_\bb$.
Let $\omega\in\caS_{(-\infty,-1]}(H_{\Pbm{m}}) $.
For each $N\in\nan$, let $D_N$ be the density matrix of  
the restriction of $\omega$ to ${\mathcal A}_{[-N,-1]}$, namely 
$\omega(A) = \Tr_{[-N,-1]}(D_N A)$ for any $A\in {\mathcal A}_{[-N,-1]}$.
By Lemma \ref{lem:fs}, we have
$\omega\lmk \tau_{i}(1-\gbp{m})\rmk=0$
for any $i\in\bbZ$ with
$[i,i+m-1]\subset (-\infty,-1]$.
Therefore, from the intersection property,
we have that $\Ran(\tau_N(D_N))\subset \cgb{N}$
for all $N\ge\bbm_{\bb}$.

From Proposition \ref{prop:bbspec}, this means for 
$N\ge \max\{l_\bb(n,n_0,k_R,k_L,\lal,\bbD,\bbG),\bbm_{\bb}\}$,
that there exist 
\[
X_{i,N}\in \hpu\lmk\mnz\otimes\mkk\rmk\hpd,\quad
i=1,\ldots, n_0^2(k_R+1)(k_L+1),\] such that 
\begin{align*}
\tau_N(D_N)=\sum_i\ket{\gb{N}(X_{i,N})}\bra{\gb{N}(X_{i,N})}.
\end{align*}
Furthermore, for  $N\ge L_\bb$,
we have
\begin{align}\label{eq:acb}
\sum_{i=1}^{n_0^2(k_R+1)(k_L+1)}\lV X_{i,N}\rV^2
\le\frac{2}{a_{\bb}c_{\bb}}
\sum_{i=1}^{n_0^2(k_R+1)(k_L+1)} 
\Vert \gb{N}(X_{i,N})\Vert^2 =\frac{2}{a_{\bb}c_{\bb}},
\end{align}
by Lemma \ref{lem:ge}.
Hence, by the compactness, there is a subsequence $\{N_m\}_m$ such that
$\lim_{m\to\infty}X_{i,N_m} = X_{i,\infty}\in \hpu(\mnz\otimes\mkk)\hpd$,
for all $i=1,\ldots,n_0^2(k_R+1)(k_L+1)$. 

By Lemma \ref{lem:gee}, we have
\begin{align*}
&\omega(A)
=\lim_{m\to\infty} \Tr D_{N_m}A=
\lim_{m\to\infty} \Tr\lmk \tau_{N_m}\lmk D_{N_m}\rmk \tau_{N_m}\lmk A\rmk\rmk
=\lim_{m\to\infty}
\sum_i\braket{\gb{N_m}(X_{i,N_m})}{\tau_{N_m}\lmk A\rmk\gb{N_m}(X_{i,N_m})}\\
&
=\sum_i\Tr\lmk e_\bb\lmk X_{i,\infty}\bbL_\bb\lmk A\rmk X_{i,\infty}^*\rmk\rmk
,\quad A\in\caA_{(-\infty,-1]}^{\rm loc}.
\end{align*}
From this, we have
\[
\sum_{i}\Tr\lmk e_\bb\lmk X_{i,\infty}\rho_\bb^{\frac 12}(1 )\rho_\bb^{\frac 12} X_{i,\infty}^* \rmk\rmk
=1,
\]
and
\[
\sigma_L(\cdot):=\sum_{i}\Tr\lmk e_\bb\lmk X_{i,\infty}\rho_\bb^{\frac 12}(\cdot )\rho_\bb^{\frac 12} X_{i,\infty}^* \rmk\rmk.
\]
defines a state $\sigma_L$ on $\Mat_{n_0(k_L+1)}\simeq \mnz\otimes \hpd\mkk\hpd$.
We have $\Xi_L(\sigma_L)(A)=\omega$.
\end{proof}
Similarly, we have the following.
\begin{lem}\label{lem:infty}
Let $n\in\nan$ with $n\ge 2$, and $\bb\in\ClassA$ with respect to $(n_0,k_R,k_L,\lal,\bbD,\bbG,Y)$.
Then for any $m\ge \bbm_{\bb}$, 
$\caS(H_{\Pbm{m}})$ consists of
a unique state
$\omega_{\bb,\infty}$ such that
\[
\omega_{\bb,\infty}\lmk
A
\rmk=
\sum_{\mu^{(l)},\nu^{(l)}\in \{1,\ldots,n\}^{\times l}}
\braket{\ws{l}}{A\wsn{l}}
\varphi_\bb\lmk \wb{l}e_{\bb}\lmk \wbn{l}\rmk^*\rmk,
A\in\caA_{[i,i+l-1]},\;i\in \bbZ,\;l\in \nan .
\]
\end{lem}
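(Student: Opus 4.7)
The approach closely mirrors the proof of Lemma \ref{lem:leftgs}, but with two-sided asymptotics: instead of using the spectral convergence $T_\bb^N \to P_{\{1\}}^{T_\bb} = \varphi_\bb(\cdot)e_\bb$ to reduce a half-chain to a matrix-element calculation, I would use it on \emph{both} sides of the local observable simultaneously. The plan is to first establish existence (and the explicit formula) and then prove uniqueness.

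For existence, I would first verify that the formula defining $\omega_{\bb,\infty}$ yields a well-defined functional on $\caA_{\bbZ}$, independent of the choice of $l$ used to represent a given local observable $A$. Replacing $A \in \caA_{[i,i+l-1]}$ by $A \otimes \unit \in \caA_{[i,i+l]}$ leaves the value unchanged thanks to the twin identities $\sum_\mu B_\mu^* \rho_{\bb} B_\mu = \rho_{\bb}$ and $T_{\bb}(e_{\bb}) = e_{\bb}$, both of which follow from the Spectral Property II of $T_{\bb}$ in Proposition \ref{prop:bbspec}. Positivity is immediate, while normalization $\omega_{\bb,\infty}(\unit) = \varphi_{\bb}(e_{\bb}) = 1$ and translation invariance are manifest. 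To conclude $\omega_{\bb,\infty} \in \caS_{\bbZ}(H_{\Pbm{m}})$ via Lemma \ref{lem:fs}, I only need $\omega_{\bb,\infty}(\unit - \gbp{m}) = 0$; this reduces to $(\unit - G_{m,\bb})\Gamma^{(R)}_{m,\bb} = 0$, which holds since $\Ran \Gamma^{(R)}_{m,\bb} = \cgb{m}$.

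For uniqueness, take an arbitrary $\omega \in \caS_{\bbZ}(H_{\Pbm{m}})$ and a local $A \in \caA_{[0,l-1]}$. For $N \geq \max\{\bbm_{\bb}, l_{\bb}\}$, Lemma \ref{lem:fs} forces the density matrix of $\omega$ on $[-N, N+l-1]$ to be supported in $\cgb{2N+l}$, and Proposition \ref{prop:bbspec} provides a decomposition $\tau_{-N}(D_{[-N,N+l-1]}) = \sum_j \ket{\gb{2N+l}(X_{j,N})}\bra{\gb{2N+l}(X_{j,N})}$ with $X_{j,N} \in \hpu(\mnz \otimes \mkk)\hpd$ and the uniform bound $\sum_j \lV X_{j,N}\rV^2 \leq 2(a_{\bb}c_{\bb})^{-1}$ coming from Lemma \ref{lem:ge} (exactly as in (\ref{eq:acb})). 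Substituting this into $\omega(A)$ produces an expression in which $T_{\bb}^N$ acts on both sides of a central block involving $A$, $X_{j,N}$, and $X_{j,N}^*$. Writing $T_{\bb}^N = P_{\{1\}}^{T_{\bb}} + (T_{\bb}^N - P_{\{1\}}^{T_{\bb}})$ on each side and using the exponential decay $\lV T_{\bb}^N - P_{\{1\}}^{T_{\bb}}\rV \to 0$ together with the uniform $X_{j,N}$-bound, the cross and error terms vanish in the limit. The leading term factorizes as $\big(\sum_j \varphi_{\bb}(X_{j,N}^* e_{\bb} X_{j,N})\big) \cdot F_A$, where $F_A$ is the expression defining $\omega_{\bb,\infty}(A)$; the prefactor equals $\omega(\unit) = 1$ by normalization, so $\omega(A) = \omega_{\bb,\infty}(A)$.

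The main obstacle is the careful bookkeeping of this two-sided limit: one must confirm that after the spectral projections on both sides, the residual dependence on $\{X_{j,N}\}$ enters only through the scalar $\sum_j \varphi_{\bb}(X_{j,N}^* e_{\bb} X_{j,N})$, so that the limit is independent of the MPS realization of $\omega$. The uniform bound from Lemma \ref{lem:ge} is what makes the error estimate uniform in $N$, permitting the $N \to \infty$ limit. Once the central structure is identified, the remaining steps are routine in view of the spectral estimates already assembled in Section \ref{general} and Proposition \ref{prop:bbspec}.
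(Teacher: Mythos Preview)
Your proposal is correct and is precisely the two-sided adaptation of the proof of Lemma \ref{lem:leftgs} that the paper's ``Similarly, we obtain the following'' points to: existence via the invariance relations $T_{\bb}(e_{\bb})=e_{\bb}$ and $\varphi_{\bb}\circ T_{\bb}=\varphi_{\bb}$ together with Lemma \ref{lem:fs}, and uniqueness by decomposing the density matrix through Proposition \ref{prop:bbspec}, bounding the pieces via Lemma \ref{lem:ge}, and letting the spectral projection $P_{\{1\}}^{T_{\bb}}=\varphi_{\bb}(\cdot)e_{\bb}$ act on both sides (the two-sided analogue of Lemma \ref{lem:gee}). One cosmetic slip: the shift bringing $D_{[-N,N+l-1]}$ to $[0,2N+l-1]$ is $\tau_{N}$, not $\tau_{-N}$; and the prefactor $\sum_j \varphi_{\bb}(X_{j,N}^* e_{\bb} X_{j,N})$ is not literally equal to $1$ for finite $N$ but converges to $1$ (by applying your estimate with $A=\unit$), which is all you need.
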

The latter property immediately means that 
$\omega_{\bb,\infty}$ is translation invariant.
\begin{rem}
The last representation can be written
\[
\omega_{\bb,\infty}\lmk
A
\rmk=
\sum_{\mu^{(l)},\nu^{(l)}\in \{1,\ldots,n\}^{\times l}}
\braket{\ws{l}}{A\wsn{l}}
\varphi_\bb\lmk \lmk\widehat{\omega_{\mu^{(l)}}}\otimes \eij{00}\rmk e_{\bb}\lmk \widehat{\omega_{\nu^{(l)}}}\otimes \eij{00} \rmk^*\rmk,
A\in\caA_{[i,i+l-1]},\;i\in \bbZ,\;l\in \nan .
\]
\end{rem}

Next we study how the information of support of
 $\sigma_L\in\caE_{n_0(k_L+1)}$ is reflected to state
 $\Xi_L(\sigma_L)$.
\begin{lem}\label{lem:ffs}
If $\sigma_L\in\caE_{n_0(k_L+1)}$ (resp. $\sigma_R\in\caE_{n_0(k_R+1)}$) is faithful, then
\begin{align*}
\inf\left\{ \sigma\lmk\Xi_L(\sigma_L)\vert_{\caA_{[-l,-1]}}\rmk\setminus \{0\}\mid l\in\nan\right\}>0,\quad(resp.
\inf\left\{ \sigma\lmk\Xi_R(\sigma_R)\vert_{\caA_{[0,l-1]}}\rmk\setminus \{0\}
\mid l\in\nan\right\}>0.)
\end{align*}
\end{lem}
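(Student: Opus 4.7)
The statement for $\Xi_R$ is symmetric (exchanging the roles of $\hpu$ and $\hpd$, and left/right boundary), so I focus on $\Xi_L$. The plan is to reduce to one canonical faithful state on the bond algebra, express the resulting density matrix as a Gram matrix linked to the bijection $\gb{l}|_{\hpu(\mnzk)\hpd}\to\cgb{l}$ of Proposition~\ref{prop:bbspec}, and then combine Lemma~\ref{lem:ge} with a compactness argument.

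The reduction step will use faithfulness of $\sigma_L$ to obtain $s_L\ge c_L\hpd$ in the bond algebra $\hpd(\mnzk)\hpd\simeq\Mat_{n_0(k_L+1)}$ for some $c_L>0$, where $s_L$ denotes the density matrix of $\sigma_L$. Multiplying by $y_\bb^{1/2}$ on both sides and using $\hpd y_\bb=y_\bb$, we get $y_\bb^{1/2}s_Ly_\bb^{1/2}\ge c_Ly_\bb$. Positivity of $\bbL_\bb$ applied to $\Xi_L(\sigma_L)(A)=\Tr(y_\bb^{1/2}s_Ly_\bb^{1/2}\bbL_\bb(A))$ then gives the operator inequality $D_l^{\sigma_L}\ge c_L\hat D_l$, where $\hat D_l$ is defined by $\Tr(\hat D_l A)=\Tr(y_\bb\bbL_\bb(A))$. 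A Kraus-decomposition argument for $\tilde\bbL_\bb(A):=y_\bb^{1/2}\bbL_\bb(A)y_\bb^{1/2}$ shows that $\Ran D_l^{\sigma_L}$ depends on $\sigma_L$ only through the support of $s_L$; hence for any faithful $\sigma_L$ the ranges coincide with that of $\hat D_l$, and the smallest nonzero eigenvalue of $D_l^{\sigma_L}$ is at least $c_L$ times that of $\hat D_l$. This reduces the problem to the uniform lower bound for $\hat D_l$.

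For the analysis of $\hat D_l$ I will compute
\[
(\hat D_l)_{\mu,\nu}=\Tr(y_\bb\widehat{B_\mu}^*\rho_\bb\widehat{B_\nu})=\langle w_\nu,w_\mu\rangle_{HS},\qquad w_\mu:=\rho_\bb^{1/2}\widehat{B_\mu}y_\bb^{1/2}\in\hpd(\mnzk)\hpd,
\]
exhibiting $\hat D_l=W^*W$ as the Gram matrix of $\{w_\mu\}$, with $W$ factoring as $W=L_{\rho_\bb^{1/2}}R_{y_\bb^{1/2}}\gb{l}^*$. I then pass to the spectral decomposition $\tau_l\hat D_l=\sum_i\hat\epsilon_{i,l}|\gb{l}(X_{i,l})\rangle\langle\gb{l}(X_{i,l})|$ with $X_{i,l}\in\hpu(\mnzk)\hpd$ and $\gb{l}(X_{i,l})$ orthonormal (bijectivity from Proposition~\ref{prop:bbspec}). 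Lemma~\ref{lem:ge} gives $\|X_{i,l}\|_\bb\asymp 1$, and (\ref{eq:acb}) bounds $\|X_{i,l}\|_{HS}\le\sqrt{2/(a_\bb c_\bb)}$. I will argue by contradiction: suppose $\hat\epsilon_{i_0,l_k}\to 0$ along a subsequence. After a further subsequence, $X_{i,l_k}\to X_{i,\infty}$ (with $X_{i_0,\infty}\neq 0$) and $\hat\epsilon_{i,l_k}\to\hat\epsilon_{i,\infty}$. Mirroring the computation in the proof of Lemma~\ref{lem:leftgs}, the limit data must satisfy $\sum_i\hat\epsilon_{i,\infty}X_{i,\infty}^*e_\bb X_{i,\infty}=y_\bb$ in $\hpd(\mnzk)\hpd$, with $y_\bb$ invertible there and the $X_{i,\infty}$'s $\bb$-orthonormal.

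The hard part will be the final dimension-counting step: showing that a nonzero $X_{i_0,\infty}$ of $\bb$-norm one, $\bb$-orthogonal to the other $X_{i,\infty}$'s, is incompatible with the remaining (nonzero-weight) terms still reproducing $y_\bb$ on $\hpd(\mnzk)\hpd$. This will require the explicit supports $s(e_\bb)=\hpu$ and $s(\rho_\bb)=s(y_\bb)=\hpd$ from Lemma~\ref{cpmain} together with the specific structure of $\hpu,\hpd$ and of $\caK_l(\bb)$ in the $\ClassA$ framework, to rule out a ``leaking'' of spectral weight to zero in the limit.
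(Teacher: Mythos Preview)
Your reduction step is correct: faithfulness of $\sigma_L$ gives $s_L\ge c_L\hpd$, hence $D_l^{\sigma_L}\ge c_L\hat D_l$ with equal supports, so it suffices to bound the smallest nonzero eigenvalue of $\hat D_l$ uniformly in $l$. The compactness argument up through the limiting identity $\sum_i\hat\epsilon_{i,\infty}X_{i,\infty}^*e_\bb X_{i,\infty}=y_\bb$, with $\bb$-orthonormal limits $X_{i,\infty}\in\hpu(\mnzk)\hpd$, is also sound.

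The genuine gap is exactly the step you label ``hard'' and leave open: no dimension count yields a contradiction. One can show (using the support computation $\Ran\tau_l\hat D_l=\gb{l}(\mnz\otimes\pd\mkk\pd)$ from the paper) that in fact all $X_{i,l}$, hence all $X_{i,\infty}$, lie in the $l$-independent subspace $\mnz\otimes\eij{00}\mkk\pd$. This forces $X_{i,\infty}^*e_\bb X_{i,\infty}=\tilde X_i^*\tilde X_i$ with $\tilde X_i:=(b_\bb^{1/2}\otimes\eij{00})X_{i,\infty}$ of rank $\le n_0$. But for $n_0\ge 2$ the remaining $n_0^2(k_L+1)-1$ terms can still span the $n_0(k_L+1)$-dimensional range of $y_\bb$; the $\bb$-orthonormality only gives $\Tr(\rho_\bb\tilde X_i^*\tilde X_j)=\delta_{ij}$, which is not enough. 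Concretely, already in the injective case $k_R=k_L=0$, $n_0=2$, $e_\bb=1$, $\rho_\bb=\tfrac12\unit$, the choice $X_i=\sqrt2\,e_{\sigma(i)}$ for the four matrix units with weights $(1,\tfrac12,\tfrac12,0)$ satisfies all your limiting constraints, so dropping one term is perfectly consistent with the identity. There is therefore no contradiction to extract from the limit data alone, and the contradiction scheme cannot be closed along the lines you sketch.

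The paper avoids this by working directly rather than by contradiction. It first identifies the support of the density matrix as $\gb{l}(\mnz\otimes\pd\mkk\pd)$, then writes down the \emph{explicit} basis $\caW=\{\gb{l}((\zeij{\beta\alpha}\otimes\eij{0j})\tilde\sigma_L^{1/2}y_\bb^{1/2})\}_{\alpha,\beta,j}$ of that support and computes its Gram matrix $A_l$. By Lemma~\ref{lem:ge} the Gram matrix converges to the strictly positive $b_\bb\otimes O_L^*O_L$, so $A_l\ge\tfrac{c_1}{2}\unit$ for all large $l$; Lemma~\ref{lem:lb} converts this into $\sum|\cdot\rangle\langle\cdot|\ge\tfrac{c_1}{2}(\text{support projection})$, and a final comparison using $\rho_\bb\ge c_\bb\hpd$ bounds the density matrix below on its support. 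The key ingredient missing from your proposal is this explicit Gram-matrix computation against a fixed basis, which replaces the compactness-and-dimension argument with a quantitative convergence on a finite-dimensional space.
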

\begin{proof}
Let $\sigma_L\in \caE_{n_0(k_L+1)}$ be a faithful state.
We denote the density matrix of $\sigma_L$ by 
$\tilde\sigma_L$.
Let $l_\bb\le l\in\nan$ and $\zeta\in\bigotimes_{i=0}^{l-1}\cc^n$.
By the definition, we have
\begin{align*}
&\Xi_L(\sigma_L)\lmk\tau_{-l} \lmk\ket{\zeta}\bra{\zeta}\rmk\rmk
=
\sigma_L\lmk y_{\bb}^{\frac 12}L_\bb\lmk\tau_{-l} \lmk\ket{\zeta}\bra{\zeta}\rmk \rmk y_{\bb}^{\frac 12}\rmk
=
\sum_{\mu^{(l)},\nu^{(l)}}\braket{\ws{l}}{\zeta}\braket{\zeta}{\wsn{l}}
\sigma_L\lmk y_{\bb}^{\frac12}\lmk\wbn{l}\rmk^*
\rho_{\bb}\wb{l}y_{\bb}^{\frac12}\rmk\\
&=\sum_{\alpha,\beta=1}^{n_0}\sum_{i,j=0}^{k_L}
\braket{\cnz{\alpha}\otimes \fii{i}}{\tilde\sigma_L^{\frac12}y_{\bb}^{\frac12}\sum_{\nu^{(l)}}\braket{\zeta}{\wsn{l}}\lmk\wbn{l}\rmk^*
\rho_{\bb}^{\frac12}\lmk\cnz{\beta}\otimes \fii{j}\rmk}\\
&\quad\quad\braket{\cnz{\beta}\otimes \fii{j}}
{\rho_{\bb}^{\frac12}\sum_{\mu^{(l)}}\braket{\ws{l}}{\zeta}\wb{l}y_{\bb}^{\frac12}\tilde\sigma_L^{\frac12}
\lmk\cnz{\alpha}\otimes \fii{i}\rmk}\\
&=\sum_{\alpha,\beta=1}^{n_0}\sum_{i,j=0}^{k_L}
\lv
\braket{\zeta}{\gb{l}\lmk\rho_{\bb}^{\frac12}\lmk\zeij{\beta\alpha}\otimes
\eij{ji}\rmk\tilde\sigma_{L}^{\frac12}y_{\bb}^{\frac12}\rmk}
\rv^2.
\end{align*}
From this equality, we see $\zeta\in\overline{\tau_{l} \lmk s\lmk \Xi_L(\sigma_L)\vert_{\caA_{[-l,-1]}}\rmk\rmk}
\bigotimes_{i=0}^{l-1}\cc^n$
if and only if 
\[
\zeta\in \lmk\gb{l}\lmk\rho_{\bb}^{\frac12}\lmk\mnz\otimes
\pd\mkk\pd\rmk\tilde\sigma_{L}^{\frac12}y_{\bb}^{\frac12}\rmk\rmk^{\perp}
=\lmk\gb{l}\lmk\lmk\mnz\otimes\pd\mkk\pd\rmk\rmk\rmk^{\perp}.
\]
Here for the equality, we used the fact that $\rho_\bb$, $y_\bb$ and $\tilde\sigma_L$
are invertible in $\mnz\otimes\pd\mkk\pd$.
In other words, we have
\begin{align}\label{eq:sp}
\gb{l}\lmk\lmk\mnz\otimes\pd\mkk\pd\rmk\rmk= {\tau_{l} \lmk s\lmk \Xi_L(\sigma_L)\vert_{\caA_{[-l,-1]}}\rmk\rmk}
\bigotimes_{i=0}^{l-1}\cc^n.
\end{align}

We claim that 
\[
\caW=\{\gb{l}\lmk\lmk\zeij{\beta\alpha}\otimes \eij{0j}\rmk
\tilde\sigma_L^{\frac12}y_\bb^{\frac12}\rmk\mid \alpha,\beta=1,\ldots,n_0,j=0,\ldots,k_L\}
\]
is a basis of $\gb{l}\lmk\lmk\mnz\otimes\pd\mkk\pd\rmk\rmk$.
To see this, note that $\caW$ is linearly independent 
because $M_{\bb,\hpu,\hpd}\le l_\bb\le l$.
Therefore, $\spa\caW$ is  an $n_0^2(k_L+1)$ dimensional subspace of
$\gb{l}\lmk\lmk\mnz\otimes\pd\mkk\pd\rmk\rmk$.
On the other hand, for $X\in \mnz\otimes\pd\mkk\pd$, 
$X\in \ker\gb{l}$ if and only if
\[
0=\Tr\lmk X\lmk \caK_{l}(\bb)\rmk^*\rmk
=\Tr\lmk X\lmk
\mnz\otimes\spa\left\{
\pd\Lambda_{\lal}^l(1+Y)^l,
\pd\Lambda_{\lal}^l(1+Y)^lI_L(G_b)
\right\}\rmk^*
\rmk
\]
Therefore, we have $\dim\lmk\gb{l}\lmk\lmk\mnz\otimes\pd\mkk\pd\rmk\rmk\rmk
=n_0^2(k_L+1)$.
Hence we have $\gb{l}\lmk\lmk\mnz\otimes\pd\mkk\pd\rmk\rmk=\spa\caW$.

Define a bounded operator $A_l$ on $\cc^{n_0}\otimes \cc^{n_0}\otimes \cc^{k_L+1}$
by
\begin{align*}
&\braket{\cnz{\beta}\otimes\cnz{\alpha}\otimes \fiil{j}}{A_l\lmk \cnz{\beta'}\otimes\cnz{\alpha'}\otimes \fiil{j'}\rmk}\\
&:=\braket{\gb{l}\lmk\lmk\zeij{\beta\alpha}\otimes \eij{0j}\rmk
\tilde\sigma_L^{\frac12}y_\bb^{\frac12}\rmk}
{\gb{l}\lmk\lmk\zeij{\beta'\alpha'}\otimes \eij{0j'}\rmk
\tilde\sigma_L^{\frac12}y_\bb^{\frac12}\rmk},
\\
&\alpha,\beta,\alpha',\beta'=1,\ldots,n_0,\quad j,j'=0,\ldots,k_L.
\end{align*}
We would like to bound this operator from below for large $l$.\\
Let $J$ be the antilinear operator on $\cc^{n_0}\otimes \cc^{k_L+1}$ given as the complex conjugation
with respect to the basis $\{\cnz{\alpha}\otimes \fiil{i}\}_{\alpha=1,\ldots,n_0,i=0,\ldots, k_L}$. 
Set $O_L:=J\tilde\sigma_L^{\frac 12}J$.
As we have $s(e_\bb)=\hpu\ge \unit\otimes \eij{00}$, 
there exists a strictly positive element $b_\bb\in\mnz$ such that
$\heij{00}e_{\bb}\heij{00}=b_\bb\otimes\eij{00}$.
Note that
\begin{align*}
&\braket{\lmk\zeij{\beta\alpha}\otimes \eij{0j}\rmk
\tilde\sigma_L^{\frac12}y_\bb^{\frac12}}{\lmk\zeij{\beta'\alpha'}\otimes \eij{0j'}\rmk
\tilde\sigma_L^{\frac12}y_\bb^{\frac12}}_{\bb}
=\sigma_L\lmk\zeij{\alpha\alpha'}\otimes\eij{jj'}\rmk\braket{\cnz{\beta}}{b_\bb\cnz{\beta'}}\\
&=\braket{\cnz{\beta}\otimes\cnz{\alpha}\otimes \fiil{j}}{\lmk b_\bb\otimes O_L^*O_L
\rmk\lmk \cnz{\beta'}\otimes\cnz{\alpha'}\otimes \fiil{j'}\rmk}.
\end{align*}
There exist numbers $c_1,c_2>0$ such that $ c_1\unit\le \lmk b_\bb\otimes O_L^*O_L\rmk\le c_2\unit$.\\
By Lemma \ref{lem:ge}, we have
\begin{align*}
&\lv\braket{\gb{l}\lmk\lmk\zeij{\beta\alpha}\otimes \eij{0j}\rmk
\tilde\sigma_L^{\frac12}y_\bb^{\frac12}\rmk}
{\gb{l}\lmk\lmk\zeij{\beta'\alpha'}\otimes \eij{0j'}\rmk
\tilde\sigma_L^{\frac12}y_\bb^{\frac12}\rmk}\right.\\
&\quad\quad\quad\quad\quad\quad\quad\left.-\braket{\lmk\zeij{\beta\alpha}\otimes \eij{0j}\rmk
\tilde\sigma_L^{\frac12}y_\bb^{\frac12}}{\lmk\zeij{\beta'\alpha'}\otimes \eij{0j'}\rmk
\tilde\sigma_L^{\frac12}y_\bb^{\frac12}}_{\bb}
\rv\\
&\le E_\bb(l)\lV\lmk b_\bb^{\frac12}\otimes O_L
\rmk\cnz{\beta}\otimes\cnz{\alpha}\otimes \fiil{j}\rV\lV\lmk b_\bb^{\frac 12}\otimes O_L
\rmk\lmk \cnz{\beta'}\otimes\cnz{\alpha'}\otimes \fiil{j'}\rmk\rV\le c_2E_\bb(l).
\end{align*}
Hence we have
\begin{align*}
\lmk c_1-c_2(n_0^4(k_L+1)^2)E_\bb(l)\rmk \unit\le A_l.
\end{align*}
Set $2l_\bb\le l_1\in\nan$ so that 
$c_2(n_0^4(k_L+1)^2)E_\bb(l)<\frac{c_1}{2}$ for all $l\ge l_1$.
Then we have  
\[
\frac {c_1}{2}\unit \le A_l, \quad l_1\le l.
\]
Applying Lemma \ref{lem:lb},  and using (\ref{eq:sp}), we obtain
\begin{align}
\sum_{\alpha,\beta=1}^{n_0}\sum_{j=0}^{k_L}
\ket{\gb{l}\lmk\lmk\zeij{\beta\alpha}\otimes \eij{0j}\rmk
\tilde\sigma_L^{\frac12}y_\bb^{\frac12}\rmk}
\bra{\gb{l}\lmk\lmk\zeij{\beta\alpha}\otimes \eij{0j}\rmk
\tilde\sigma_L^{\frac12}y_\bb^{\frac12}\rmk}
\ge \frac {c_1}{2}\tau_l\lmk s\lmk \Xi_L(\sigma_L)\vert_{\caA_{[-l,-1]}}\rmk\rmk,
\end{align}
for $l\ge l_1$.

To complete the proof, let $l\ge l_1$. Then for any $\zeta\in\bigotimes_{i=0}^{l-1}\cc^n$, we have
\begin{align*}
&\Xi_L(\sigma_L)\lmk\tau_{-l} \lmk\ket{\zeta}\bra{\zeta}\rmk\rmk
=
\sigma_L\lmk y_{\bb}^{\frac 12}L_\bb\lmk\tau_{-l}\lmk \ket{\zeta}\bra{\zeta}\rmk \rmk y_{\bb}^{\frac 12}\rmk
\ge
c_\bb
\sum_{\mu^{(l)},\nu^{(l)}}\braket{\ws{l}}{\zeta}\braket{\zeta}{\wsn{l}}
\sigma_L\lmk y_{\bb}^{\frac12}\lmk\wbn{l}\rmk^*
\hpd\wb{l}y_{\bb}^{\frac12}\rmk\\
&=c_\bb\sum_{\alpha,\beta=1}^{n_0}\sum_{i,j=0}^{k_L}
\braket{\cnz{\alpha}\otimes \fii{i}}{\tilde\sigma_L^{\frac12}y_{\bb}^{\frac12}\sum_{\nu^{(l)}}\braket{\zeta}{\wsn{l}}\lmk\wbn{l}\rmk^*
\lmk\cnz{\beta}\otimes \fii{j}\rmk}\\
&\quad\quad\braket{\cnz{\beta}\otimes \fii{j}}
{\sum_{\mu^{(l)}}\braket{\ws{l}}{\zeta}\wb{l}y_{\bb}^{\frac12}\tilde\sigma_L^{\frac12}
\lmk\cnz{\alpha}\otimes \fii{i}\rmk}\\
&\ge c_\bb\sum_{\alpha,\beta=1}^{n_0}\sum_{i=0}^{k_L}
\lv
\braket{\zeta}{\gb{l}\lmk\lmk\zeij{\beta\alpha}\otimes
\eij{0i}\rmk\tilde\sigma_{L}^{\frac12}y_{\bb}^{\frac12}\rmk}
\rv^2\\
&\ge  c_\bb\frac{c_1}{2}
\braket{\zeta}{\tau_l\lmk s\lmk \Xi_L(\sigma_L)\vert_{\caA_{[-l,-1]]}}\rmk\rmk\zeta}.
\end{align*}
This proves the claim of the Lemma.
\end{proof}
\begin{lem}\label{lem:spts}
If the supports of $\sigma_L,\sigma_{L}'\in\caE_{n_0(k_L+1)}$  
(resp. $\sigma_R,\sigma_{R}'\in\caE_{n_0(k_R+1)}$) are not orthogonal,
then 
\[
\lV \Xi_L(\sigma_L)-\Xi_L(\sigma_L')\rV<2,\quad
(resp. \lV \Xi_R(\sigma_R)-\Xi_R(\sigma_R')\rV<2.)
\]
\end{lem}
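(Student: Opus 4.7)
The plan is to prove the contrapositive of the left statement: if $\|\Xi_L(\sigma_L)-\Xi_L(\sigma_L')\|=2$, then $s(\tilde\sigma_L)\,s(\tilde\sigma_L')=0$. By the standard characterization of mutual singularity in terms of the norm distance of states, the hypothesis yields positive local contractions $p_n\in\caA_{(-\infty,-1]}^{\rm loc}$ with $\Xi_L(\sigma_L)(p_n)\to 1$ and $\Xi_L(\sigma_L')(p_n)\to 0$. We will push these witnesses through the left transfer map into the finite-dimensional subalgebra $\hpd(\mnz\otimes\mkk)\hpd$ and then apply a short spectral argument.

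Define $Q_n:=y_\bb^{1/2}\bbL_\bb(p_n)y_\bb^{1/2}$. Two identities underlie the bound $0\le Q_n\le\hpd$. First, $\bbL_\bb(\unit)=\rho_\bb$: at $A=\unit$ the formula defining $\bbL_\bb$ collapses to $\sum_\mu \wb{l}^*\rho_\bb \wb{l}$, which equals $\rho_\bb$ by iterating the Schr\"odinger-picture form $\sum_\mu B_\mu^*\rho_\bb B_\mu=\rho_\bb$ of the $T_\bb$-invariance of $\varphi_\bb$ (property (4) of Spectral Property II). Second, $y_\bb^{1/2}\rho_\bb y_\bb^{1/2}=\hpd$, since $y_\bb$ and $\rho_\bb$ are mutually inverse positive elements of the subalgebra $\hpd(\mnz\otimes\mkk)\hpd$ and therefore commute. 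Complete positivity of $\bbL_\bb$ then gives $0\le\bbL_\bb(p_n)\le\rho_\bb$ and hence $0\le Q_n\le\hpd$; directly from the definition of $\Xi_L$ we have $\Xi_L(\sigma_L)(p_n)=\Tr(\tilde\sigma_L Q_n)$ and $\Xi_L(\sigma_L')(p_n)=\Tr(\tilde\sigma_L' Q_n)$.

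Finite-dimensional compactness extracts a subsequence with $Q_n\to Q_\infty$ satisfying $0\le Q_\infty\le\hpd$, $\Tr(\tilde\sigma_L Q_\infty)=1$, and $\Tr(\tilde\sigma_L' Q_\infty)=0$. From $\Tr(\tilde\sigma_L(\hpd-Q_\infty))=0$ with both factors positive, $(\hpd-Q_\infty)\tilde\sigma_L=0$, so $Q_\infty$ acts as the identity on $\Ran s(\tilde\sigma_L)$; similarly $Q_\infty\tilde\sigma_L'=0$ forces $Q_\infty$ to vanish on $\Ran s(\tilde\sigma_L')$. The self-adjoint $Q_\infty$ has spectrum in $[0,1]$, so its $1$- and $0$-eigenspaces are mutually orthogonal; hence $\Ran s(\tilde\sigma_L)\perp\Ran s(\tilde\sigma_L')$, i.e., $s(\tilde\sigma_L)\,s(\tilde\sigma_L')=0$, contradicting the hypothesis. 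The right-edge statement is proved by the identical argument with $(\bbL_\bb,\rho_\bb,y_\bb,\hpd)$ replaced throughout by $(\bbR_\bb,e_\bb,x_\bb,\hpu)$, using $\bbR_\bb(\unit)=e_\bb$ ($T_\bb$-invariance of $e_\bb$, property (3) of Spectral Property II) in place of $\bbL_\bb(\unit)=\rho_\bb$, together with $x_\bb^{1/2}e_\bb x_\bb^{1/2}=\hpu$. The only step that requires more than mechanical bookkeeping is identifying the correct pull-back $Q_n$; the spectral finale is then routine.
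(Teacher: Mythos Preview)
Your proof is correct and follows essentially the same approach as the paper's. Both argue the contrapositive by pushing witnessing contractions through the unital CP map $\hat L_{\bb}(\cdot)=y_{\bb}^{1/2}\bbL_{\bb}(\cdot)y_{\bb}^{1/2}$ into the finite-dimensional block, invoking compactness there, and finishing with a spectral argument; the only cosmetic difference is that the paper works with self-adjoint contractions $a_n$ (obtaining a limit $x$ with $-1\le x\le 1$ and $|\sigma_L(x)-\sigma_L'(x)|=2$), whereas you use the affine change of variables $p_n=(1+a_n)/2$ to positive contractions and spell out the eigenspace argument in more detail.
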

\begin{proof}
Note that $\hat L_\bb : \caA_{(-\infty,-1]}\to \mnz\otimes\pd\mkk\pd\simeq\Mat_{n_0(k_L+1)}$
given by $\hat L_\bb(\cdot):=y_\bb^{\frac 12}L_\bb(\cdot)y_{\bb}^{\frac 12}$
is a unital CP map.
We have $\Xi_L(\sigma)=\sigma\circ\hat L_\bb$.

Let
$\sigma_L,\sigma_{L}'\in\caE_{n_0(k_L+1)}$   be states such that
\[
\lV \Xi_L(\sigma_L)-\Xi_L(\sigma_L')\rV=2.
\]
Then there exists a sequence of self-adjoint elements $\{a_n\}$ in the unit ball of
$\caA_{(-\infty,-1]}$. such that
\begin{align*}
\lv \Xi_L(\sigma_L)(a_n)-\Xi_L(\sigma_L')(a_n)\rv \to 2. 
\end{align*}
As $\hat L_\bb$ is a unital CP map, $\{\hat L_\bb(a_n)\}_n$ is a sequence of
self-adjoint operators in the unit ball of $\Mat_{n_0(k_L+1)}$.
As the unit ball of $\Mat_{n_0(k_L+1)}$ is compact, there exists a subsequence 
$\{a_n'\}$ such that  $\hat L_\bb(a_n')$ converges to some self-adjoint element $x$ in the unit ball of 
$ \Mat_{n_0(k_L+1)}$.
For this $x$, we have
\begin{align*}
\lv\sigma_L(x)-\sigma_L'(x)\rv
=\lim_n \lv
\sigma_L\circ\hat L_\bb(a_n')-\sigma_L'\circ\hat L_\bb(a_n')
\rv
=
\lim_{n}\lv \Xi_L(\sigma_L)(a_n')-\Xi_L(\sigma_L')(a_n')\rv =2
\end{align*}
As $-1\le x\le 1$, this means that $s(\sigma_L)$ and $s(\sigma_L')$ are orthogonal.
\end{proof}

\begin{lem}\label{lem:nr}Let $n\in\nan$ with $n\ge 2$, and $\bb\in\ClassA$ with respect to $(n_0,k_R,k_L,\lal,\bbD,\bbG,Y)$.
Let $m\ge \bbm_\bb$.
For any $\psi\in\caS_{[0,\infty)}(H_{\Phi_{m,\bb}})$ (resp. 
$\psi\in\caS_{(-\infty,-1]}(H_{\Phi_{m,\bb}})$), there exists an  $l_{\psi}\in \nan$
such that $\lV\psi-\psi\circ\tau_{l_\psi}\rV<2$
(resp.  $\lV\psi-\psi\circ\tau_{-l_\psi}\rV<2$).
\end{lem}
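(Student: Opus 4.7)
My plan is to use the bijection $\Xi_R$ of Lemma \ref{lem:leftgs} to encode $\psi=\Xi_R(\sigma_R)$ and its translate $\psi\circ\tau_l=\Xi_R(\sigma_R^{(l)})$ as finite-dimensional states on $\hpu(\mnz\otimes\mkk)\hpu\simeq\Mat_{n_0(k_R+1)}$, and then invoke Lemma \ref{lem:spts} to reduce the problem to finding $l\in\nan$ with $s(\sigma_R)\,s(\sigma_R^{(l)})\neq 0$. Membership $\psi\circ\tau_l\in\caS_{[0,\infty)}(H_{\Phi_{m,\bb}})$ follows from Lemma \ref{lem:fs}, since the ground state condition $\omega(\tau_i(1-\gbp{m}))=0$ at sites $i\geq 0$ is preserved under positive shifts. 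The left half-line case is handled by the symmetric argument using $\Xi_L$, $\bbL_\bb$, and $\tau_{-l}$.

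First I would establish the shift-covariance identity $\bbR_\bb(\tau_l A)=T_\bb^l(\bbR_\bb(A))$ for local $A$, which follows directly from the definition of $\bbR_\bb$ in Lemma \ref{lem:right} by writing $\tau_l A=\unit_{[0,l-1]}\otimes A$ and summing over the $l$ bridging indices. Combined with the fixed-point identity $T_\bb(e_\bb)=e_\bb$ from Lemma \ref{cpmain}, this shows that $\Phi_1(Z):=x_\bb^{1/2}T_\bb(x_\bb^{-1/2}Zx_\bb^{-1/2})x_\bb^{1/2}$ is a \emph{unital} completely positive map on the corner $\hpu(\mnz\otimes\mkk)\hpu$, satisfying $\sigma_R^{(l)}=\sigma_R\circ\Phi_1^l$. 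The Spectral Property II of $T_\bb$ then yields exponential convergence $\Phi_1^l(Z)\to\tilde\sigma_R(Z)\,\hpu$ with $\tilde\sigma_R(Z):=\varphi_\bb(x_\bb^{-1/2}Zx_\bb^{-1/2})$ the unique $\Phi_1$-invariant state.

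The easy case is $\tilde\sigma_R(s(\sigma_R))>0$: then $\sigma_R(\Phi_1^l(s(\sigma_R)))\to\tilde\sigma_R(s(\sigma_R))>0$ as $l\to\infty$, so $s(\sigma_R)\,s(\sigma_R^{(l)})\neq 0$ for all sufficiently large $l$, and Lemma \ref{lem:spts} finishes the argument. The hard part will be the exceptional case $\tilde\sigma_R(s(\sigma_R))=0$, equivalently $s(\sigma_R)\leq\hpu-s(\tilde\sigma_R)$, which can genuinely arise once $k_R\geq 1$ because $\tilde\sigma_R$ need not be faithful.

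To handle this exceptional case I plan to combine two inputs. First, $s(\tilde\sigma_R)\geq\heij{00}$, which follows from the identities $s(e_\bb)=\hpu$, $s(\varphi_\bb)=\hpd$, $\hpd\hpu=\heij{00}$, the strict positivity of the compression $\heij{00}x_\bb^{-1/2}\heij{00}$ inside $\heij{00}\mk\heij{00}$, and the faithfulness of $\varphi_\bb$ restricted there. Second, the primitivity $\oo_\bb\in\Prim(n,n_0)$ of Lemma \ref{lem:oob}, which ensures that $\{\widehat{\omega_{\mu^{(l)},\bb}}\}_{\mu^{(l)}}$ spans all of $\mnz$ for every sufficiently large $l$. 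Writing $\Phi_1^l(\cdot)=\sum_{\mu^{(l)}}V_{\mu^{(l)}}(\cdot)V_{\mu^{(l)}}^*$ with $V_\mu:=x_\bb^{1/2}\hpu B_\mu\hpu x_\bb^{-1/2}$, the persistent vanishing $s(\sigma_R)\,V_{\mu^{(l)}}\,s(\sigma_R)=0$ for every $l\geq 1$ and $\mu^{(l)}$ would produce a proper $\{V_\mu\}$-invariant subspace of $\hpu\cc$ disjoint from $\heij{00}\cc$ yet containing $s(\sigma_R)\cc\subset(\hpu-\heij{00})\cc$, contradicting the propagation of the primitive $\{\widehat{\omega_{\mu^{(l)},\bb}}\}$-action from $\heij{00}\cc$ out through the $\caD(k_R,k_L,\bbD,\bbG)\Lambda_\lal(1+Y)$ structure of the $B_\mu$; a suitable $l$ must therefore exist, and Lemma \ref{lem:spts} concludes.
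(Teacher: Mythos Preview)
Your reduction is the same as the paper's: write $\psi=\Xi_R(\sigma_R)$, identify $\psi\circ\tau_l=\Xi_R(\sigma_R^{(l)})$ with $\sigma_R^{(l)}=\sigma_R\circ\Phi_1^l$, and invoke Lemma~\ref{lem:spts} to reduce to finding $l$ with $s(\sigma_R)\,s(\sigma_R^{(l)})\neq 0$. Your easy case is correct, though the paper does not split into cases at all.

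The genuine gap is your hard case. From the assumed failure you correctly extract $s(\sigma_R)\,V_{\mu^{(l)}}\,s(\sigma_R)=0$ for all $l\ge 1$ and all $\mu^{(l)}$, but the conclusion you draw from this is not a proof. The condition does \emph{not} say that $s(\sigma_R)\cc$ is $\{V_\mu\}$-invariant; it says the opposite, that every $V_{\mu^{(l)}}$ maps $s(\sigma_R)\cc$ into its orthogonal complement. The $\{V_\mu\}$-invariant subspace generated by $s(\sigma_R)\cc$ can perfectly well be all of $\hpu\cc$ without contradiction. Nor does the upper-triangular structure of $\hpu B_\mu\hpu$ help directly, since conjugation by $x_\bb^{1/2}$ destroys triangularity. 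Your appeal to primitivity of $\oo_\bb$ and ``propagation through the $\caD(k_R,k_L,\bbD,\bbG)\Lambda_\lal(1+Y)$ structure'' is a hope, not an argument: you have not exhibited any concrete obstruction.

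The paper closes this gap by a completely different mechanism that makes no use of primitivity at this step. Unwinding to a single nonzero vector $\eta$, the vanishing hypothesis gives $\bra{\eta}\rho_\bb^{1/2}\caK_N(\bb)y_\bb^{1/2}\ket{\eta}=0$ for all $N$. For $N\ge l_\bb$ one has the specific element $\unit\otimes(\Lambda_{\lal}(1+Y))^N\in\caK_N(\bb)$; expanding $(1+Y)^N$ binomially and using the commutation $Y\Lambda_\lal=\Lambda_\lal Y$ turns the vanishing into a system $\sum_k\binom{N}{k}\sum_{\lambda}\lambda^N c_{\lambda,k}=0$ for all large $N$, with $c_{\lambda,k}$ independent of $N$. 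A Vandermonde-type linear independence (Lemma~\ref{lem:lsl}) then forces every $c_{\lambda,0}=0$; summing over the distinct $\lambda$ gives $\bra{\eta}\rho_\bb^{1/2}y_\bb^{1/2}\ket{\eta}=\lV\eta\rV^2=0$, the desired contradiction. This is the missing idea: it is the explicit presence of $\unit\otimes(\Lambda_{\lal}(1+Y))^N$ in $\caK_N(\bb)$, together with the spectral separation of the $\lambda_i$, that does the work, not an abstract invariance argument.
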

\begin{proof}
Let $\psi\in\caS_{(-\infty,-1]}(H_{\Phi_{m,\bb}})$.
By Lemma \ref{lem:leftgs},
there exists a state $\sigma_L$ on 
$\Mat_{n_0(k_R+k_L+1)}\simeq\mnz\otimes\pd\mkk\pd$
such that $\Xi_L(\sigma_L)=\psi$. Let $\hat\sigma_L$ be the density matrix of 
$\sigma_L$.
For each $N\in\nan$, define a state $\kappa_N\in \caE_{n_0(k_L+1)}$ given by the density matrix
\[
\hat \kappa_N:=\rho_\bb^{\frac 12}T_\bb^N\lmk
y_\bb^{\frac 12}\hat \sigma_L  y_\bb^{\frac 12}\rmk \rho_\bb^{\frac 12}.
\]
By a straight forward calculation, one can check
\begin{align*}
\psi\circ\tau_{-N}=\Xi_L(\kappa_N),\quad N\in\nan.
\end{align*}
Hence
the existence of $N\in\nan$ such that
\[
\Tr \hat\sigma_L\hat \kappa_N\neq 0
\]
implies the existence of  $N\in\nan$ such that
$
\lV\psi-\psi\circ\tau_{-N}\rV<2
$
from Lemma \ref{lem:spts}.

Therefore, to prove the Lemma, it suffices to show that for any  nonzero $\eta\in\cc^{n_0}\otimes \cc^{k_L+1}$,
there exists an $N_\eta\in\nan$ such that 
\[
\braket{\eta}{\rho_\bb^{\frac 12}\caK_{N_\eta}(\bb)y_\bb^{\frac 12}\eta}\neq 0.
\]
We prove this by contradiction.
Suppose for some nonzero 
$\eta\in\cc^{n_0}\otimes \cc^{k_L+1}$,
we have
\[
\braket{\eta}{\rho_\bb^{\frac 12}\caK_{N}(\bb)y_\bb^{\frac 12}\eta}=0,\quad N\in\nan.
\]
In particular, we have 
\[
0=\braket{\eta}{\rho_\bb^{\frac 12}\lmk \unit\otimes \lmk \Lambda_{\lal}(1+Y)\rmk^{N} \rmk y_\bb^{\frac 12}\eta}
=\sum_{k=0}^{k_L+k_R+1} {}_NC_{k}\braket{\eta}{\rho_\bb^{\frac 12}\lmk \unit\otimes \Lambda_{\lal}^N Y^k\rmk y_\bb^{\frac 12}\eta},\quad  l_\bb(n,n_0,k_R,k_L,\lal,\bbD,\bbG,Y)\le N\in\nan.
\]
By Lemma \ref{lem:lsl}, for any $\lambda\in\{\lambda_i\}_{i=0}^{k_L}$, we obtain 
\[
\braket{\eta}{\rho_\bb^{\frac 12}\lmk \unit\otimes \sum_{i:\lambda_i=\lambda}\eij{ii}\rmk y_\bb^{\frac 12}\eta}=0.
\]
Summing up for all distinct $\lambda\in\{\lambda_i\}_{i=0}^{k_L}$,
we obtain 
\[
\braket{\eta}{\eta}=\braket{\eta}{\rho_\bb^{\frac 12}y_\bb^{\frac 12}\eta}=0.
\]
This is a contradiction.
\end{proof}
The boundary effect decays exponentially fast in these models.
\begin{lem}\label{lem:vii}
Let $n\in\nan$ with $n\ge 2$, and $\bb\in\ClassA$ with respect to $(n_0,k_R,k_L,\lal,\bbD,\bbG,Y)$.
Let $m\ge \bbm_\bb$.
Then there exist constants $0<s'_\bb<1$ and $C_{\bb}'>0$ such that
for any
$N\in \nan$, $\varphi_L\in\caS_{(-\infty,-1]}(H_{\Phi_{m,\bb}})$, 
and $\varphi_R\in\caS_{[0,\infty)}(H_{\Phi_{m,\bb}})$,
\begin{align*}
\lv \varphi_L\lmk \tau_{-N}(A)\rmk
-\omega_{\bb,\infty}\lmk A\rmk\rv
\le
C_\bb' \lmk s_\bb'\rmk^N\lV A\rV,\quad A\in \caA_{[-\infty,-1]},\\
\lv \varphi_R\lmk \tau_{N}(A)\rmk
-\omega_{\bb,\infty}\lmk A\rmk\rv
\le
C_\bb' \lmk s_\bb'\rmk^N\lV A\rV,\quad  A\in \caA_{[0,\infty)}.
\end{align*}
\end{lem}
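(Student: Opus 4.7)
\textbf{Proof plan for Lemma \ref{lem:vii}.} The whole statement reduces to a single intertwining identity combined with the Spectral Property II that $T_\bb$ enjoys by Proposition \ref{prop:bbspec}. I first claim
\[
\bbL_\bb\circ\tau_{-N}=(T_\bb)^{*N}\circ \bbL_\bb,\qquad \bbR_\bb\circ\tau_N=T_\bb^N\circ \bbR_\bb,
\]
on $\caA^{\rm loc}_{(-\infty,-1]}$ and $\caA^{\rm loc}_{[0,\infty)}$ respectively, where $T_\bb^*$ denotes the trace-dual of $T_\bb$, i.e.\ $T_\bb^*(X)=\sum_\mu B_\mu^* X B_\mu$. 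This follows by a direct index-splitting computation from the defining formulas of Lemmas \ref{lem:left} and \ref{lem:right}: for $A\in\caA_{[-l,-1]}$, evaluate $\bbL_\bb(\tau_{-N}(A))$ using length $M=N+l$, split the multi-index $\mu^{(N+l)}$ into its restrictions on $[-N-l,-N-1]$ and $[-N,-1]$, and collapse the right segment by the $T_\bb^*$-invariance $\sum_\mu B_\mu^*\rho_\bb B_\mu=\rho_\bb$. An analogous splitting in $\bbR_\bb$ uses $\sum_\mu B_\mu e_\bb B_\mu^*=e_\bb$. Both identities extend to the full half-infinite-chain algebra by continuity, since both sides are bounded.

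With $\varphi_L=\Xi_L(\sigma_L)$ for some $\sigma_L\in\caE_{n_0(k_L+1)}$ (provided by Lemma \ref{lem:leftgs}), the first identity gives
\[
\varphi_L(\tau_{-N}(A))=\sigma_L\lmk y_\bb^{\frac12}(T_\bb^*)^N(\bbL_\bb(A))y_\bb^{\frac12}\rmk.
\]
The Spectral Property II of $T_\bb$ dualizes in the trace pairing to
\[
\lV(T_\bb^*)^N(X)-\Tr(e_\bb X)\rho_\bb\rV\le \tilde c_\bb s_\bb^N\lV X\rV,\quad X\in\mnz\otimes\mkk,
\]
because $T_\bb$ and $T_\bb^*$ share the same spectrum and spectral projections (dualized). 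Applied to $X=\bbL_\bb(A)$, which satisfies $\lV\bbL_\bb(A)\rV\le\lV\rho_\bb\rV\lV A\rV$, and combined with the identity $\Tr(e_\bb\bbL_\bb(A))=\omega_{\bb,\infty}(A)$, which is immediate from the formulas of Lemma \ref{lem:left} and Lemma \ref{lem:infty}, this yields
\[
(T_\bb^*)^N(\bbL_\bb(A))=\omega_{\bb,\infty}(A)\rho_\bb+O\lmk s_\bb^N\lV A\rV\rmk.
\]
Since $\rho_\bb$ and $y_\bb$ are mutually inverse, commuting positive elements of $\hpd(\mnz\otimes\mkk)\hpd$, we have $y_\bb^{\frac12}\rho_\bb y_\bb^{\frac12}=\hpd$, and $\sigma_L(\hpd)=1$ because $\sigma_L$ is a state on $\hpd(\mnz\otimes\mkk)\hpd$. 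Substituting gives $|\varphi_L(\tau_{-N}(A))-\omega_{\bb,\infty}(A)|\le C_\bb's_\bb^N\lV A\rV$, first for $A\in\caA^{\rm loc}_{(-\infty,-1]}$ and then for all $A\in\caA_{(-\infty,-1]}$ by density.

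The right-handed estimate is verbatim with $\bbR_\bb$ in place of $\bbL_\bb$: use the convergence $T_\bb^N(Y)\to\varphi_\bb(Y)e_\bb$ at rate $s_\bb$ from Proposition \ref{prop:bbspec}, the identity $\varphi_\bb(\bbR_\bb(A))=\omega_{\bb,\infty}(A)$, and the relation $x_\bb^{\frac12}e_\bb x_\bb^{\frac12}=\hpu$ together with $\sigma_R(\hpu)=1$. Setting $s_\bb':=s_\bb$ and taking the larger of the two constants yields the stated bound. There is no serious obstacle beyond verifying the intertwining identities above; all analytic content is already packaged in Proposition \ref{prop:bbspec} and in the domains of $\bbL_\bb,\bbR_\bb$ established in Lemmas \ref{lem:left} and \ref{lem:right}.
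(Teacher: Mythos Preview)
Your proposal is correct and follows essentially the same route as the paper. The paper likewise writes $\varphi_L=\Xi_L(\sigma_L)$, expands $\bbL_\bb(\tau_{-N}(A))$ by splitting the length-$(N+l)$ multi-index to obtain $\sigma_L\bigl(y_\bb^{1/2}(T_\bb^*)^N(\bbL_\bb(A))y_\bb^{1/2}\bigr)$ (equivalently $\Tr\bigl(T_\bb^N(y_\bb^{1/2}\hat\sigma_L y_\bb^{1/2})\bbL_\bb(A)\bigr)$ via trace duality), identifies the limit term with $\omega_{\bb,\infty}(A)=\Tr(e_\bb\bbL_\bb(A))$, and bounds the remainder by $\lV T_\bb^N(\unit-P_{\{1\}}^{T_\bb})\rV$; your packaging of the index computation as the intertwining identity $\bbL_\bb\circ\tau_{-N}=(T_\bb^*)^N\circ\bbL_\bb$ is just a cleaner way to state the same step.
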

\begin{proof}
We prove the first inequality. The second one can be proven in the same way.
Let $\varphi_L\in\caS_{(-\infty,-1]}(H_{\Phi_{m,\bb}})$.
By Lemma \ref{lem:leftgs},
there exists a state $\sigma_L$ on 
$\Mat_{n_0(k_R+k_L+1)}\simeq\mnz\otimes\pd\mkk\pd$
such that $\Xi_L(\sigma_L)=\varphi_L$.
Let $\hat\sigma_L$ be the density matrix of 
$\sigma_L$.
For  $l,N\in\nan$ and 
$A\in \caA_{[-l,-1]}$, we have
$\tau_{-N}(A)\in \caA_{[-N-l,-N-1]}$.
Therefore we have
\begin{align*}
&\varphi_L(\tau_{-N}(A))
=\Xi_L(\sigma_L)(\tau_{-N}(A))
=\sigma_{L}(y_{\bb}^{\frac 12} \bbL_\bb(\tau_{-N}(A))y_{\bb}^{\frac 12})\\
&=\sum_{\mu^{(l)},\nu^{(l)}\in \{1,\ldots,n\}^{\times l}}
\sum_{\mu^{(N)},\nu^{(N)}\in \{1,\ldots,n\}^{\times N}}
\braket{\lmk \ws{l}\otimes\ws{N}\rmk}
{\tau_{-N}(A)\lmk \wsn{l}\otimes\wsn{N}\rmk}
\sigma_{L}\lmk
 y_{\bb}^{\frac 12}
\lmk \wbn{l} \wbn{N}\rmk^{*}\rho_{\bb} \wb{l}\wb{N}y_{\bb}^{\frac 12}\rmk\\
&=\sum_{\mu^{(l)},\nu^{(l)}\in \{1,\ldots,n\}^{\times l}}
\braket{\ws{l}}
{A\wsn{l}}
\sigma_{L}\lmk
 y_{\bb}^{\frac 12}
(T_\bb^*)^N\lmk \lmk \wbn{l} \rmk^{*}\rho_{\bb} \wb{l}\rmk y_{\bb}^{\frac 12}\rmk\\
&=\sigma_{L}(y_{\bb}^{\frac 12} (T_\bb^*)^N\lmk
\bbL_\bb(A)\rmk
y_{\bb}^{\frac 12})
=\Tr \lmk T_\bb^N\lmk y_{\bb}^{\frac 12}\hat\sigma_{L}y_{\bb}^{\frac 12}\rmk\lmk
\bbL_\bb(A)\rmk\rmk.
\end{align*}
Note from Lemma \ref{lem:infty} that 
$\omega_{\bb,\infty}(A)=\Tr e_{\bb}L_{\bb}(A)
=\Tr\lmk P_{\{1\}}^{T_{\bb}}\lmk y_{\bb}^{\frac 12}
\hat\sigma_Ly_{\bb}^{\frac 12}\rmk L_{\bb}(A)\rmk$.
Hence we obtain
\begin{align}
\lv\omega_{\bb,\infty}\lmk A\rmk
-\varphi_L(\tau_{-N}(A))\rv
\le \lV
T_{\bb}^N\lmk\unit-P^{T_{\bb}}_{\{1\}}\rmk\rV
\lV y_\bb\rV \lV A\rV\lV\bbL_\bb\rV.
\end{align}
By the density of $\cup_{l}\caA_{[-l,-1]}$ in $\caA_{(-\infty,-1]}$, this proves the claim.
\end{proof}
This corresponds to (vii) of Theorem \ref{thm:asymmetric}.
Let us use this to prove (viii).
\begin{lem}\label{lem:viii}Let $n\in\nan$ with $n\ge 2$, and $\bb\in\ClassA$ with respect to $(n_0,k_R,k_L,\lal,\bbD,\bbG,Y)$.
Let $m\ge \bbm_\bb$.
Then any element in $\caS_{(-\infty,-1]}(H_{\Phi_{m,\bb}})$ or $\caS_{[0,\infty)}(H_{\Phi_{m,\bb}})$ is a factor
state.
\end{lem}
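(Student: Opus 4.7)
The plan is to prove factoriality by verifying a strong clustering property for $\varphi_L\in\caS_{(-\infty,-1]}(H_{\Phi_{m,\bb}})$ and then invoking the standard criterion: a state $\varphi$ on the quasi-local algebra $\caA_\Gamma$ is a factor state provided that for every local observable $B$,
\begin{align*}
\sup\bigl\{|\varphi(BA)-\varphi(B)\varphi(A)|:A\in\caA_{\Gamma\setminus\Lambda'},\ \lV A\rV\le 1\bigr\}\longrightarrow 0
\end{align*}
as the finite interval $\Lambda'$ exhausts $\Gamma$. By Lemma \ref{lem:leftgs} every element of $\caS_{(-\infty,-1]}(H_{\Phi_{m,\bb}})$ has the form $\varphi_L=\Xi_L(\sigma_L)=\sigma_L\bigl(y_\bb^{1/2}\bbL_\bb(\cdot)y_\bb^{1/2}\bigr)$ for a state $\sigma_L$. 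I focus on the left edge; the right-edge case is entirely symmetric.

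Fix a local $B\in\caA_{[-l,-1]}$; by norm continuity it suffices to control $A\in\caA_{[-N-l',-N-1]}$ with $N>l$. Splitting each multi-index $\mu\in\{1,\ldots,n\}^{N+l'}$ according to the partition $[-N-l',-N-1]\sqcup[-N,-l-1]\sqcup[-l,-1]$ and using disjointness of the supports of $A$ and $B$, a direct MPS computation (parallel to the one appearing in the proof of Lemma \ref{lem:vii}) yields
\begin{align*}
\bbL_\bb(BA)=\sum_{\alpha,\beta\in\{1,\ldots,n\}^{l}}\langle\widehat{\psi_\alpha},B\widehat{\psi_\beta}\rangle\,(\widehat{B_\beta})^{*}\,(T_\bb^{*})^{N-l}\bigl(\bbL_\bb(A)\bigr)\,\widehat{B_\alpha},
\end{align*}
where $T_\bb^{*}(X):=\sum_\mu B_\mu^{*}XB_\mu$ is the dual of $T_\bb$ under the trace pairing. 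By Proposition \ref{prop:bbspec}, $T_\bb$ satisfies the Spectral Property II with respect to $(s_\bb,e_\bb,\varphi_\bb)$, so dually $\lV(T_\bb^{*})^{N-l}(X)-\Tr(e_\bb X)\rho_\bb\rV\le C_1 s_\bb^{N-l}\lV X\rV$. Combining this with the identity $\Tr(e_\bb\bbL_\bb(A))=\omega_{\bb,\infty}(A)$ (already used in the proof of Lemma \ref{lem:vii}) gives $\bbL_\bb(BA)=\omega_{\bb,\infty}(A)\bbL_\bb(B)+R$ with $\lV R\rV\le C_2 s_\bb^{N-l}\lV A\rV\lV B\rV$. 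Applying $\sigma_L\bigl(y_\bb^{1/2}\cdot y_\bb^{1/2}\bigr)$, and then using Lemma \ref{lem:vii} together with the translation invariance of $\omega_{\bb,\infty}$ to replace $\omega_{\bb,\infty}(A)$ by $\varphi_L(A)$ at an extra exponentially small cost, one arrives at
\begin{align*}
|\varphi_L(BA)-\varphi_L(B)\varphi_L(A)|\le C\,\tilde s^{\,N-l}\lV A\rV\lV B\rV,\qquad \tilde s:=\max\{s_\bb,s_\bb'\}<1,
\end{align*}
which is precisely the clustering estimate required by the criterion above.

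The case $\varphi_R\in\caS_{[0,\infty)}(H_{\Phi_{m,\bb}})$ is obtained verbatim upon replacing $\bbL_\bb,T_\bb^{*},\rho_\bb$ by $\bbR_\bb,T_\bb,e_\bb$ and using Lemma \ref{lem:right} together with the right-hand analogue of Lemma \ref{lem:vii}. The main technical point is clean bookkeeping of the MPS tensor decomposition at the interface between the near-boundary observable $B$ and the far-away observable $A$; once that separation-of-variables identity is in place, the spectral gap of $T_\bb$ supplied by Proposition \ref{prop:bbspec} and the exponential decay estimate of Lemma \ref{lem:vii} make the rest routine.
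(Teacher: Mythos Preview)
Your argument is correct, and it reaches the same conclusion by a genuinely different route from the paper.

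The paper works on $\caS_{[0,\infty)}(H_{\Phi_{m,\bb}})$ and uses a \emph{conditional-state trick}: for a fixed positive $A\in\caA_{[0,l-1]}$ with $\varphi(A)\neq 0$, one observes that $B\mapsto \varphi(A\tau_l(B))/\varphi(A)$ is again an element of $\caS_{[0,\infty)}(H_{\Phi_{m,\bb}})$ (positivity is immediate since $A$ and $\tau_l(B)$ commute; membership in the ground-state set follows from Cauchy--Schwarz and Lemma~\ref{lem:fs}). One then applies Lemma~\ref{lem:vii} to this new state and to $\varphi$ itself, and subtracts. The case $\varphi(A)=0$ is handled separately by Cauchy--Schwarz, and general $A$ follows by decomposition into positive parts. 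No transfer-operator computation beyond Lemma~\ref{lem:vii} is needed.

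Your approach instead exploits the explicit parametrization $\varphi_L=\Xi_L(\sigma_L)$ from Lemma~\ref{lem:leftgs}, expands $\bbL_\bb(BA)$ via the MPS tensor network, and isolates a factor $(T_\bb^{*})^{N-l}$ acting on an inner block; the Spectral Property~II of $T_\bb$ from Proposition~\ref{prop:bbspec} then does the work directly. This buys an explicit decay rate $\max\{s_\bb,s_\bb'\}$ and avoids the detour through positive $A$, at the cost of repeating some of the bookkeeping already implicit in the proof of Lemma~\ref{lem:vii}. The paper's route is shorter and more structural (it shows factoriality is really a corollary of Lemma~\ref{lem:vii} plus closure of the ground-state set under conditioning), while yours makes the mechanism---spectral gap of the transfer operator---completely transparent.
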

\begin{proof}
We consider $\varphi\in \caS_{[0,\infty)}(H_{\Phi_{m,\bb}})$.
By Theorem 2.6.10 of \cite{BR1}, it suffices to show that for any $\varepsilon>0$ and $l\in\nan$,
there exists an $L\in\nan$ such that
\begin{align}\label{eq:br}
\lv
\varphi\lmk AB\rmk-\varphi\lmk B\rmk\varphi\lmk A\rmk
\rv
\le
{\varepsilon}\lV A\rV\lV B\rV,\quad
A\in \caA_{[0,l-1]},\;B\in\caA_{[L,\infty)}.
\end{align}

Fix an arbitrary $\varepsilon>0$ and $l\in\nan$.
Choose $l\le L\in\nan$ so that $C_\bb' \lmk s_\bb'\rmk^{L-l}<\frac{\varepsilon}{8}$ and fix.
(Here, $C_\bb' $ and $ s_\bb'$ are given in the previous Lemma.)

We claim
\begin{align}
\lv
\varphi\lmk AB\rmk-\varphi\lmk B\rmk\varphi\lmk A\rmk
\rv
\le
\frac{\varepsilon}4\lV A\rV\lV B\rV,\quad
A\in \caA_{[0,l-1],+},\;B\in\caA_{[L,\infty)}.
\end{align}
If $\varphi(A)=0$, then by the Cauchy-Schwartz inequality, the left hand side is $0$ and the inequality holds.
If  $\varphi(A)\neq0$, then 
\[
\caA_{[0,\infty)}\ni B\mapsto \frac{\varphi\lmk A\tau_l(B)\rmk}{\varphi(A)}
\]
defines a state on $\caA_{[0,\infty)}$. By the Cauchy-Schwartz inequality
and Lemma \ref{lem:fs}, we can check that the state belongs
to $\caS_{[0,\infty)}(H_{\Phi_{m,\bb}})$. We apply the previous Lemma to this state and obtain
\begin{align*}
\lv \frac{\varphi\lmk AB\rmk}{\varphi(A)}
-\omega_{\bb,\infty}\lmk B\rmk\rv
\le
C_\bb' \lmk s_\bb'\rmk^{L-l}\lV B\rV,\quad\;B\in\caA_{[L,\infty)}.
\end{align*}
Considering $A=1$ case, we obtain
\begin{align*}
\lv {\varphi\lmk B\rmk}
-\omega_{\bb,\infty}\lmk B\rmk\rv
\le
C_\bb' \lmk s_\bb'\rmk^{L-l}\lV B\rV,\quad B\in\caA_{[L,\infty)}.
\end{align*}
From these inequalities, we obtain the claim.
That (\ref{eq:br}) follows from the claim is trivial.
\end{proof}

\begin{lem}\label{lem:iv}Let $n\in\nan$ with $n\ge 2$, and $\bb\in\ClassA$ with respect to $(n_0,k_R,k_L,\lal,\bbD,\bbG,Y)$.
Let $m\ge \bbm_\bb$.
There exist $0<C_{\bb}^{''}$, $N_{\bb}\in\nan$, $\omega_{R,\bb}\in \caS_{[0,\infty)}(H_{\Phi_{m,\bb}})$, and 
$\omega_{L,\bb}\in \caS_{(-\infty,-1]}(H_{\Phi_{m,\bb}})$, such that
\begin{align}
&\lv
\frac{\Tr_{[0,N-1]}\lmk G_{N,\bb} A\rmk}{\Tr_{[0,N-1]}\lmk G_{N,\bb}\rmk}-
\omega_{R,\bb}(A)
\rv\le C_{\bb}^{''} s_{\bb}^{N-l}\lV A\rV,\notag\\
&\lv
\frac{\Tr_{[0,N-1]}\lmk G_{N,\bb}\tau_{N-l}\lmk A\rmk\rmk}{\Tr_{[0,N-1]}\lmk G_{N,\bb}\rmk}-
\omega_{L,\bb}\circ\tau_{-l}(A)
\rv\le C_{\bb}^{''} s_{\bb}^{N-l}\lV A\rV,
\end{align}
for all $l\in\nan$, $A\in\caA_{[0,l-1]}$, and $N\ge \max\{l, N_{\bb}\}$,
and 
\begin{align}
\inf\left\{ \sigma\lmk\omega_{R,\bb}\vert_{\caA_{[0,l-1]}}\rmk\setminus \{0\}\mid l\in\nan\right\}>0,\notag\\
\inf\left\{ \sigma\lmk\omega_{L,\bb}\vert_{\caA_{[-l,-1]}}\rmk\setminus \{0\}\mid l\in\nan\right\}>0.
\end{align}
\end{lem}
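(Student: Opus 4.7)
The natural candidates are $\omega_{R,\bb} := \Xi_R(\sigma_R^{\max})$ and $\omega_{L,\bb} := \Xi_L(\sigma_L^{\max})$, where $\sigma_R^{\max}$, $\sigma_L^{\max}$ are the normalized traces on $\hpu(\mnz\otimes\mkk)\hpu \simeq \Mat_{n_0(k_R+1)}$ and $\hpd(\mnz\otimes\mkk)\hpd \simeq \Mat_{n_0(k_L+1)}$ respectively. Lemma \ref{lem:leftgs} places them in the required edge ground state spaces, and since normalized traces are faithful, Lemma \ref{lem:ffs} immediately yields the two spectrum lower bounds claimed for $\omega_{R,\bb}$ and $\omega_{L,\bb}$.

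The rest of the argument establishes the exponential convergence. Set $d := n_0^2(k_R+1)(k_L+1)$, so that $\Tr(G_{N,\bb}) = \dim\caG_{N,\bb} = d$ by Proposition \ref{prop:bbspec}. Fix a basis $\{X_j\}_{j=1}^d$ of $\hpu(\mnz\otimes\mkk)\hpd$ orthonormal with respect to $\braket{\cdot}{\cdot}_\bb$. Applying Lemma \ref{lem:ge} to the pentad from Lemma \ref{lem:sn}, the Gram matrix $(g_N)_{ij} := \braket{\gb{N}(X_i)}{\gb{N}(X_j)}$ obeys $\lV g_N - \unit\rV = O(E_\bb(N))$, so for $N$ past a threshold $N_\bb$ we have $\lV g_N^{-1} - \unit\rV = O(E_\bb(N))$ and $G_{N,\bb} = \sum_{i,j}(g_N^{-1})_{ij}\ket{\gb{N}(X_i)}\bra{\gb{N}(X_j)}$. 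Polarizing the proof of Lemma \ref{lem:gee} (the main terms there are bilinear in $X$, while the error bound $\tilde E_\bb(N-l)F_\bb\lV X\rV^2\lV A\rV$ polarizes to $\tilde E_\bb(N-l)F_\bb\lV X_i\rV\lV X_j\rV\lV A\rV$) and simplifying via $e_\bb^{1/2}x_\bb^{1/2} = \hpu$ and $\rho_\bb^{1/2}y_\bb^{1/2} = \hpd$ (each commuting pair has product equal to its support), we obtain for $A\in\caA_{[0,l-1]}$ with $l\le N$:
\begin{align*}
\braket{\gb{N}(X_j)}{A\,\gb{N}(X_i)} &= \varphi_\bb\lmk X_j^*\,\bbR_\bb(A)\,X_i\rmk + O\lmk \tilde E_\bb(N-l)\lV A\rV\rmk,\\
\braket{\gb{N}(X_j)}{\tau_{N-l}(A)\,\gb{N}(X_i)} &= \Tr\lmk X_j^*e_\bb X_i\cdot \bbL_\bb(\tau_{-l}(A))\rmk + O\lmk \tilde E_\bb(N-l)\lV A\rV\rmk.
\end{align*}
Combining with the Gram expansion and using $E_\bb(N)\le E_\bb(N-l)$ yields
$$\Tr(G_{N,\bb}A) = \sum_{j=1}^d \varphi_\bb\lmk X_j^*\,\bbR_\bb(A)\,X_j\rmk + O\lmk s_\bb^{N-l}\lV A\rV\rmk,$$
together with the analogous formula for the shifted observable.

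The crux is identifying the main terms in a basis-independent way. Setting $\eta_j := e_\bb^{1/2}X_j\rho_\bb^{1/2}$, the relation $\braket{X_i}{X_j}_\bb = \Tr(\eta_i^*\eta_j)$ means $\{\eta_j\}$ is a Hilbert--Schmidt orthonormal basis of $\hpu(\mnz\otimes\mkk)\hpd$. Since the expressions $\sum_j \eta_j\eta_j^*$ and $\sum_j\eta_j^*\eta_j$ are unchanged under unitary rotation of the basis, a computation with matrix units gives $\sum_j \eta_j\eta_j^* = n_0(k_L+1)\hpu$ and $\sum_j\eta_j^*\eta_j = n_0(k_R+1)\hpd$, which translate to
$$\sum_j X_j\rho_\bb X_j^* = n_0(k_L+1)\,x_\bb,\qquad \sum_j X_j^*e_\bb X_j = n_0(k_R+1)\,y_\bb.$$
Dividing by $d$ and rearranging then gives $\Tr(G_{N,\bb}A)/d \to \Tr(x_\bb\bbR_\bb(A))/n_0(k_R+1) = \omega_{R,\bb}(A)$ and the dual identification for the right-edge convergence. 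The main obstacle is not analytic but organizational: the polarization of Lemma \ref{lem:gee} and the Neumann inversion of the Gram matrix must be arranged so that all errors are uniformly bounded by $C_\bb''s_\bb^{N-l}\lV A\rV$ with a single threshold $N_\bb$ good for all $l$.
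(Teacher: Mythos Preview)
Your proof is correct and follows the same overall strategy as the paper---fix a $\braket{\cdot}{\cdot}_\bb$-orthonormal basis $\{X_j\}$ of $\hpu(\mnz\otimes\mkk)\hpd$, control $G_{N,\bb}$ through $\{\gb{N}(X_j)\}$, and invoke Lemma~\ref{lem:gee}---but there are two organizational differences worth noting. First, the paper avoids polarizing Lemma~\ref{lem:gee} by proving instead the operator estimate $\lV G_{N,\bb}-\sum_j\ket{\gb{N}(X_j)}\bra{\gb{N}(X_j)}\rV=O(E_\bb(N))$ (your Gram inversion and polarization accomplish the same thing with slightly more bookkeeping). Second, and more interestingly, the paper \emph{defines} $\omega_{L,\bb}:=d^{-1}\sum_j\Xi_L(\sigma_{L,X_j})=\Xi_L(\kappa)$ and then gives a separate argument that $\kappa$ is faithful; your computation $\sum_j X_j^*e_\bb X_j=n_0(k_R+1)\,y_\bb$ actually shows that $\kappa$ has density matrix $d^{-1}\sum_j\rho_\bb^{1/2}X_j^*e_\bb X_j\rho_\bb^{1/2}=(n_0(k_L+1))^{-1}\hpd$, i.e.\ $\kappa$ \emph{is} the tracial state. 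So your direct identification $\omega_{L,\bb}=\Xi_L(\sigma_L^{\max})$ is the same object, recognized more cleanly, and the faithfulness step becomes trivial. One minor quibble: the inequality ``$E_\bb(N)\le E_\bb(N-l)$'' need not hold literally (the sequence $\|T_\bb^N(1-P_{\{1\}}^{T_\bb})\|$ is not monotone if $T_\bb$ has Jordan blocks), but what you actually need and use is $E_\bb(N)\le C s_\bb^{N}\le C s_\bb^{N-l}$, which is immediate from the Spectral Property~II.
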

\begin{proof}
Set \[
N_\bb:=\max\{ l_\bb(n,n_0,k_R,k_L,\lal,\bbD,\bbG,Y),L_\bb\}.
\]
Fix a basis $\{X_i\}_{i=1}^{n_0^2(k_L+1)(k_R+1)}$ of $\mnz\otimes\pu\mkk\pd$
such that
\begin{align}\label{eq:basis}
\braket{X_i}{X_j}_\bb=\delta_{ij}.
\end{align}
Then by Lemma \ref{lem:ge}, we have
\begin{align}
\lv\braket{\gb{N}(X_i)}{\gb{N}(X_j)}-\delta_{ij}\rv\le
 E_{\bb}(N),\quad
N\in\nan.
\end{align}
We claim
 \begin{align}\label{eq:gng}
\lV
G_{N,\bb}-\sum_{i}\ket{\gb{N}(X_i)}\bra{\gb{N}(X_i)}
\rV
\le \lmk
4+\sqrt 3n_0^2(k_L+1)(k_R+1)\rmk
E_{\bb}(N)
,\quad N\ge N_{\bb}.
\end{align}
 To see this, let $N\ge N_{\bb}$ and $\xi\in\bigotimes_{j=0}^{N-1}\cc^n$, and 
 $\xi=\eta_1+\eta_2$, $\eta_1\in \Ran\gb{N}$, $\eta_2\in\lmk \Ran\gb{N}\rmk^{\perp}$
 its orthogonal decomposition.
 Then, 
 as $l_\bb(n,n_0,k_R,k_L,\lal,\bbD,\bbG,Y)\le N_\bb\le  N$, there exists an
 $X\in\mnz\otimes\pu\mkk\pd$ such that
 $\eta_1=\gb{N}({X})$ by Proposition \ref{prop:bbspec}.
 Using Lemma \ref{lem:ge}, and $ \braket{X}{X}_\bb=\sum_i\lv \braket{X}{X_i}_\bb\rv^2$,
 we obtain
 \begin{align*}
&\lv
\braket{\xi}{\lmk G_{N,\bb}-\sum_{i}\ket{\gb{N}(X_i)}\bra{\gb{N}(X_i)}\rmk\xi}
\rv
=\lv\braket{\gb{N}({X})}{\gb{N}({X})}
-\sum_{i}\lv \braket{\gb{N}(X)}{\gb{N}(X_i)}\rv^2
\rv\\
&\le \sqrt 2E_{\bb}(N)
\lmk
\sqrt2\lV \gb{N}({X})\rV^2+\sum_{i}\lmk
\lv \braket{\gb{N}(X)}{\gb{N}(X_i)}\rv\lV \gb{N}({X})\rV
+\lv
 \braket{X}{X_i}_\bb
\rv\lV \gb{N}({X})\rV
\rmk
\rmk\\
&+\lv
\braket{X}{X}_\bb-\sum_i\lv \braket{X}{X_i}_\bb\rv^2
\rv\\
&\le \sqrt 2E_{\bb}(N)
\lmk
\sqrt2\lV \gb{N}({X})\rV^2+\sum_{i}\lmk
\lv \braket{\gb{N}(X)}{\gb{N}(X_i)}\rv\lV \gb{N}({X})\rV
+
 \sqrt 2\lV \gb{N}({X})\rV^2
\rmk
\rmk\\
&\le \sqrt 2E_{\bb}(N)\lV \gb{N}({X})\rV^2
\lmk
2\sqrt2+\sum_{i}\lmk
\lV{\gb{N}(X_i)}\rV
\rmk\rmk\\
&\le  \sqrt 2E_{\bb}(N)\lV \gb{N}({X})\rV^2
\lmk
2\sqrt2+\sqrt{\frac {3}{2}}n_0^2(k_L+1)(k_R+1)\rmk\\
&\le\lmk
4+\sqrt 3n_0^2(k_L+1)(k_R+1)\rmk
E_{\bb}(N)\lV \xi\rV^2
\end{align*}
This proves the claim.
 We set
 \[
 \omega_{L,\bb}:=\frac{1}{n_0^2(k_L+1)(k_R+1)}\sum_{i=1}^{n_0^2(k_L+1)(k_R+1)}\Xi_L(\sigma_{L,X_i})\in\caS_{(-\infty,-1]}(H_{\Phi_{m,\bb}}).
 \]
 
 Fix any $l\in\nan$ and $A\in\caA_{[0,l-1]}$.
 Then for any $N\ge \max\{l,N_{\bb}\}$, we have
 \begin{align*}
& \lv
\frac{\Tr_{[0,N-1]}\lmk G_{N,\bb}\tau_{N-l}\lmk A\rmk\rmk}{\Tr_{[0,N-1]}\lmk G_{N,\bb}\rmk}-
\omega_{L,\bb}\circ\tau_{-l}(A)
\rv\\
&\le 
\lv
\frac{\Tr_{[0,N-1]}\lmk \lmk G_{N,\bb}-\sum_{i=1}^{n_0^2(k_L+1)(k_R+1)}\ket{\gb{N}(X_i)}\bra{\gb{N}(X_i)}\rmk
\tau_{N-l}\lmk A\rmk\rmk}{\Tr_{[0,N-1]}\lmk G_{N,\bb}\rmk}
\rv\\
&+
\lv
\frac{ \sum_{i=1}^{n_0^2(k_L+1)(k_R+1)}\lmk \braket{\gb{N}(X_i)}
{\tau_{N-l}(A)\gb{N}(X_i)}-\Xi_L(\sigma_{L,X_i})(\tau_{-l}(A))\rmk}{\Tr_{[0,N-1]}\lmk G_{N,\bb}\rmk}
\rv\\
&\le\lmk\lmk4+\sqrt 3n_0^2(k_L+1)(k_R+1)
\rmk
E_{\bb}(N)
+ {E_{\bb}(N-l)F_{\bb}}\rmk\lV A\rV
\end{align*}
Here we used (\ref{eq:gng}), Lemma \ref{lem:gee},  and the fact $\Tr_{[0,N-1]}\lmk G_{N,\bb}\rmk=n_0^2(k_L+1)(k_R+1)$
for $N\ge \max\{l,N_\bb\}$.

Set $C:=s_\bb(a_{\bb}c_{\bb})^{-1}n_0^2(k_L+k_R+1)^2\sup_{|z|=s_\bb}\lV(z-T_{\bb})^{-1}\rV$. Then we have
$E_{\bb}(N)\le Cs_\bb^N$.
Set
\[
C_\bb^{''}=C\lmk\lmk 4+\sqrt 3n_0^2(k_L+1)(k_R+1)
\rmk
+ {F_{\bb}}\rmk.
\]
Then we have
\begin{align*}
& \lv
\frac{\Tr_{[0,N-1]}\lmk G_{N,\bb}\tau_{N-l}\lmk A\rmk\rmk}{\Tr_{[0,N-1]}\lmk G_{N,\bb}\rmk}-
\omega_{L,\bb}\circ\tau_{-l}(A)
\rv\le
C_\bb^{''} s_{\bb}^{N-l}\lV A\rV,\quad
N\ge\max\{N_{\bb},l\}.
\end{align*}
From Lemma \ref{lem:ffs}, to show the last statement, it suffices to show that
\[
\kappa:=\frac{1}{n_0^2(k_L+1)(k_R+1)}\sum_{i=1}^{n_0^2(k_L+1)(k_R+1)}\sigma_{L,X_i}
\]
is faithful.
For $\xi\in\cc^{n_0}\otimes \cc^{k_L+1}$, if $\kappa\lmk \ket{\xi}\bra{\xi}\rmk =0$, 
we have
$e_\bb^{\frac 12}X_i\rho_\bb^{\frac 12}\xi=0$, for all $i=1,\ldots, n_0^2(k_L+1)(k_R+1)$.
As $\{X_i\}_{i=1}^{n_0^2(k_L+1)(k_R+1)}$  is a basis of $\mnz\otimes \pu \mkk \pd$,
there exist coefficients $c_i\in\cc$ such that 
\[
\sum_i c_iX_i=\ket{\cnz{1}\otimes \fii{0}}\bra{\xi}\rho_\bb^{-\frac 12}.
\]
Hence we have 
\[
\lV \xi\rV^2e_\bb^{\frac 12}\ket{\cnz{1}\otimes \fii{0}}=0,
\]
and we obtain $\xi=0$.
Therefore, $\kappa$ is faithful.
\end{proof}
\begin{lem}\label{lem:mme}
Let $n\in\nan$ with $n\ge 2$, and $\bb\in\ClassA$ with respect to $(n_0,k_R,k_L,\lal,\bbD,\bbG,Y)$.
Let $m\ge \bbm_\bb$. There exists a constant $C_\bbB^{'''}>0$ satisfying the following. : Let $M\in\nan$ 
and $\varphi$ be a state on $\caA_{\bbZ}$. Assume that we have $\varphi(\tau_{i}(1-\gbp{m}))=0$ for all $i\in\bbZ$ with
$[i,i+m-1]\subset [-M,M]^c$.
Then for any $L\in\nan$ with $M+1\le L$ and $A\in\caA_{[-L+1,L-1]^c}$, we have
\begin{align}
\lv \varphi \lmk A\rmk-\lmk \left. \omega_{\bbB,\infty}\rv_{\caA_{(-\infty,-1]}}\otimes \left. \omega_{\bbB,\infty}\rv_{\caA_{[0,\infty)}}\rmk\lmk A\rmk\rv
\le C_\bbB^{'''} s_\bbB^{L-M}\lV A\rV.
\end{align}
In particular, $\omega_{\bbB,\infty}$ satisfies the exponential decay of correlation functions.
\end{lem}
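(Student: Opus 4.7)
The plan is to exploit the matrix-product structure of the ground-state spaces $\cgb{N}$ together with the Spectral Property II of the transfer operator $T_\bbB$ from Lemma~\ref{cpmain}, which supplies the exponential estimate $\lV T_\bbB^N(A)-\varphi_\bbB(A)e_\bbB\rV \le c\, s_\bbB^N\lV A\rV$. This spectral gap of $T_\bbB$ is the sole source of the claimed exponential rate.

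First, by the hypothesis, $\varphi|_{\caA_{(-\infty,-M-1]}}$ satisfies the ground-state condition on $(-\infty,-M-1]$, and after translation by $\tau_{-M}$ lies in $\caS_{(-\infty,-1]}(H_{\Phi_{m,\bb}})$; symmetrically for $\varphi|_{\caA_{[M+1,\infty)}}$. Combined with the intersection property (I,$m$) established for $\{\cgb{l}\}$ in Proposition~\ref{prop:bbspec}, the density matrix of $\varphi|_{\caA_{[-L',L'']}}$ for any $L',L''\ge M$ is supported on $\cgb{L'-M}\otimes(\cc^n)^{\otimes(2M+1)}\otimes\cgb{L''-M}$. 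Using the bijection $\gb{N}\vert_{\hpu(\mnz\otimes\mkk)\hpd}\to\cgb{N}$ from Proposition~\ref{prop:bbspec}, one writes this density matrix explicitly in MPS form as a contraction of the matrices $\{B_\mu\}$.

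Next, for $A_L\in\caA_{[-L',-L]}$ and $A_R\in\caA_{[L,L'']}$, substituting the representation into $\varphi(A_L A_R)$ produces a contraction of MPS tensors in which the two ``bridge'' intervals $[-L+1,-M-1]$ and $[M+1,L-1]$ (each of width $L-M-1$), where $A_L\otimes A_R$ acts as the identity, contribute factors of $T_\bbB^{L-M-1}$ on the auxiliary index. The state $\tilde\omega:=\omega_{\bbB,\infty}|_{\caA_{(-\infty,-1]}}\otimes\omega_{\bbB,\infty}|_{\caA_{[0,\infty)}}$ admits an identical formula in which, however, the bridge factors $T_\bbB^{L-M-1}$ are replaced by the rank-one projection $P_{\{1\}}^{T_\bbB}=\varphi_\bbB(\cdot)e_\bbB$, since $\tilde\omega$ corresponds to iterating the transfer matrix infinitely many times on each side of the cut at $-1|0$. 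Subtracting the two expressions, the error is governed by $\lV T_\bbB^{L-M-1}-P_{\{1\}}^{T_\bbB}\rV\le c_\bbB s_\bbB^{L-M-1}$, which propagates through the remaining bounded tensor contractions (including a completely positive ``slice'' onto the middle region $[-M,M]$).

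The main obstacle is the last step: upgrading what is naturally an elementary-tensor estimate into an operator-norm bound on general $A\in\caA_{[-L+1,L-1]^c}=\caA_{(-\infty,-L]}\otimes\caA_{[L,\infty)}$, since a naive rewriting $A=\sum_i A_{L,i}\otimes A_{R,i}$ controlled by $\sum_i\lV A_{L,i}\rV\lV A_{R,i}\rV$ would reproduce the projective tensor norm and suffer dimensional blow-up. The resolution is to keep the MPS expression for $\varphi(A)-\tilde\omega(A)$ written as a single trace against $A$ in the physical Hilbert space, so that the operator-norm control of the auxiliary-space object $T_\bbB^{L-M-1}-P_{\{1\}}^{T_\bbB}$ transfers directly to a bound of the form $C_\bbB''' s_\bbB^{L-M}\lV A\rV$. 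Finally, the ``in particular'' statement follows by applying the lemma with $\varphi=\omega_{\bbB,\infty}$ (which satisfies the hypothesis for every $M\in\nan$ since $\omega_{\bbB,\infty}(\tau_i(1-\gbp{m}))=0$ for all $i\in\bbZ$) and $M=0$: for $A_L\in\caA_{(-\infty,-L]}$, $A_R\in\caA_{[L,\infty)}$ one obtains $|\omega_{\bbB,\infty}(A_LA_R)-\omega_{\bbB,\infty}(A_L)\omega_{\bbB,\infty}(A_R)|\le C_\bbB''' s_\bbB^L\lV A_L\rV\lV A_R\rV$.
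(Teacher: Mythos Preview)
Your proposal is correct and follows essentially the same route as the paper: MPS representation of the density matrix of $\varphi$ on the two outer regions, identification of the bridge factors $T_\bbB^{L-M-1}$, and replacement by $P_{\{1\}}^{T_\bbB}=\varphi_\bbB(\cdot)e_\bbB$ to recover $\omega_{\bbB,\infty}\vert_{\caA_{(-\infty,-1]}}\otimes\omega_{\bbB,\infty}\vert_{\caA_{[0,\infty)}}$ with error $O(s_\bbB^{L-M})$. Two implementation details the paper makes explicit: (i) since $A\in\caA_{[-M,M]^c}$, one may simply restrict $\varphi$ to $\caA_{[-N,-M-1]}\otimes\caA_{[M+1,N]}$ and avoid any ``slice'' through the middle region; (ii) the operator-norm control you describe is realized by the map $\tilde\Theta_A(X)=\tilde V_l(A\otimes X)\tilde V_l^*$ on the doubled auxiliary space, whose norm bound $\lV\tilde\Theta_A\rV\le F_\bbB^2\lV A\rV$ (from $\tilde V_l\tilde V_l^*=T_\bbB^l(1)\otimes T_\bbB^l(1)$) is exactly what converts the auxiliary-space estimate into a bound proportional to $\lV A\rV$.
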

\begin{proof}
Define  a linear map $\tilde V_l:\lmk \lmk \bigotimes_{i=0}^{l-1}\cc^n\rmk^{\otimes 2}\otimes\lmk\cc^{n_0}\otimes\cc^{k_L+k_R+1}\rmk^{\otimes 2}\rmk
\to \lmk \cc^{n_0}\otimes\cc^{k_L+k_R+1}\rmk^{\otimes 2}$ by
\begin{align*}
\tilde V_l(\xi\otimes \eta):=\sum_{\mu^{(l)},\nu^{(l)}\in\{1,\ldots,n\}^{l}}\braket{\ws{l}\otimes \wsn{l}}{\xi}\lmk \widehat{B_{\mu^{(l)}}}\otimes \widehat{B_{\nu^{(l)}}}\rmk\eta,\quad
\xi\in \lmk \bigotimes_{i=0}^{l-1}\cc^n\rmk^{\otimes 2},\;\;\eta\in \lmk \cc^{n_0}\otimes\cc^{k_L+k_R+1}\rmk^{\otimes 2}.
\end{align*}
For each $L,l\in\nan$ and $A\in \caA_{[-L-l+1,-L]}\otimes \caA_{[L,L+l-1]}$,
we define a linear map
$\tilde \Theta_A:\lmk \mnz\otimes\mkk\rmk^{\otimes 2} \to\lmk \mnz\otimes\mkk\rmk^{\otimes 2}$ by
\begin{align*}
\tilde \Theta_A(X)=\tilde V_l \lmk A\otimes X\rmk \tilde V_l^*,\quad X\in  \lmk \mnz\otimes\mkk\rmk^{\otimes 2}.
\end{align*}Here we identify $\bigotimes_{i=-L-l+1}^{-L}\cc^n$, $\bigotimes_{i=L}^{L+l-1}\cc^n$, with $\bigotimes_{i=0}^{l-1}\cc^n$.
As in the proof of Lemma \ref{lem:gee}, we have $\lV \tilde \Theta_A\rV\le \lV A\rV F_\bbB^2$.
Note that 
\begin{align}\label{eq:tto}
\lmk \varphi_\bbB\otimes \varphi_\bbB\rmk\circ\tilde\Theta_A\lmk e_\bbB\otimes e_\bbB\rmk
=\lmk \left. \omega_{\bbB,\infty}\rv_{\caA_{(-\infty,-1]}}\otimes \left. \omega_{\bbB,\infty}\rv_{\caA_{[0,\infty)}}\rmk\lmk A\rmk.
\end{align}

Fix a basis $\{X_i\}_{i=1}^{n_0^2(k_L+1)(k_R+1)}$ of $\mnz\otimes\pu\mkk\pd$ satisfying (\ref{eq:basis}).
Let $M,N\in\nan$ with $N-M\ge \max\{m_\bbB,l_\bbB\}$. 
Then $\{\gb{N-M}(X_i)\otimes \gb{N-M}(X_j)\}_{i,j=1}^{n_0^2(k_L+1)(k_R+1)}$ is a basis of $\lmk \tau_{-N}\lmk G_{N-M,\bbB}\rmk \bigotimes_{i=-N}^{-M-1}\cc^n\rmk\otimes \lmk \tau_{M+1}\lmk
G_{N-M,\bbB}\rmk\bigotimes_{i=M+1}^{N}\cc^n\rmk$.
(Here we again identify $\bigotimes_{i=-N}^{-M-1}\cc^n$, $\bigotimes_{i=M+1}^{N}\cc^n$, with $\bigotimes_{i=0}^{N-M-1}\cc^n$.)
Let $\xi=\sum_{i,j=1}^{n_0^2(k_L+1)(k_R+1)}c_{i,j}\gb{N-M}(X_i)\otimes \gb{N-M}(X_j)\in \lmk \tau_{-N}\lmk G_{N-M,\bbB}\rmk \bigotimes_{i=-N}^{-M-1}\cc^n\rmk\otimes \lmk \tau_{M+1}\lmk
G_{N-M,\bbB}\rmk\bigotimes_{i=M+1}^{N}\cc^n\rmk$,
with $c_{i,j}\in\bbC$. Let $L,l\in\nan$ with $M+1\le L\le N-l+1$ and $A\in \caA_{[-L-l+1,-L]}\otimes \caA_{[L,L+l-1]}$.
By a straightforward calculation, we obtain
\begin{align*}
&\braket{\xi}{A\xi}\\
&=\sum_{iji'j'}\sum_{\alpha\beta\alpha'\beta'}\sum_{aba'b'}\bar c_{ij}c_{i'j'}\\
&\lmk\Tr\otimes\Tr\rmk\lmk
\begin{gathered}
 \lmk \zeij{\beta\alpha}\otimes \eij{ba}\rmk\otimes   \lmk \zeij{\beta'\alpha'}\otimes \eij{b'a'}\rmk\\
\lmk T_\bbB^{N-L-l+1}\otimes T_\bbB^{L-M-1}\rmk \circ \tilde\Theta_A\circ \lmk T_\bbB^{L-M-1}\otimes T_\bbB^{N-L-l+1}\rmk
\lmk \lmk X_i^*\lmk \zeij{\alpha\beta}\otimes \eij{ab}\rmk X_{i'}\rmk \otimes  \lmk X_j^* \lmk \zeij{\alpha'\beta'}\otimes \eij{a'b'}\rmk X_{j'}\rmk\rmk
\end{gathered}
\rmk.
\end{align*}
Recall the bounds $\lV \tilde \Theta_A\rV\le \lV A\rV F_\bbB^2$ and (\ref{eq:tnb}). 
Using  (\ref{eq:basis}), and the routine argument, we see that there exists a constant $\tilde C_\bbB>0$ such that
\begin{align}\label{eq:xax}
\lv
\braket{\xi}{A\xi}-\sum_{ij}\lv c_{ij}\rv^2\lmk \varphi_\bbB\otimes \varphi_\bbB\rmk\circ\tilde\Theta_A\lmk e_\bbB\otimes e_\bbB\rmk
\rv
\le \tilde C_\bbB \lmk s_\bbB^{N-L-l}+s_{\bbB}^{L-M}\rmk\lV A\rV\sum_{ij}\lv c_{ij}\rv^2.
\end{align}
Substituting (\ref{eq:tto}),
we obtain
\begin{align*}
\lv
\braket{\xi}{A\xi}-\sum_{ij}\lv c_{ij}\rv^2\lmk \left. \omega_{\bbB,\infty}\rv_{\caA_{(-\infty,-1]}}\otimes \left. \omega_{\bbB,\infty}\rv_{\caA_{[0,\infty)}}\rmk\lmk A\rmk
\rv
\le \tilde C_\bbB \lmk s_\bbB^{N-L-l}+s_{\bbB}^{L-M}\rmk\lV A\rV\sum_{ij}\lv c_{ij}\rv^2.
\end{align*}
This holds for any  $M,N\in\nan$ with $N-M\ge \max\{m_\bbB,l_\bbB\}$,
$\xi=\sum_{i,j=1}^{n_0^2(k_L+1)(k_R+1)}c_{i,j}\gb{N-M}(X_i)\otimes \gb{N-M}(X_j)\in \lmk \tau_{-N}\lmk G_{N-M,\bbB}\rmk \bigotimes_{i=-N}^{-M-1}\cc^n\rmk\otimes \lmk \tau_{M+1}\lmk
G_{N-M,\bbB}\rmk\bigotimes_{i=M+1}^{N}\cc^n\rmk$, $L,l\in\nan$ with $M+1\le L\le N-l+1$ and $A\in \caA_{[-L-l+1,-L]}\otimes \caA_{[L,L+l-1]}$.

For each $N\in\nan$, define $L_N:=\max\{M+1,[\frac N2]\}$. Because we have $N-L_N,L_N\to\infty$ as $N\to\infty$, there exists an $\tilde N_\bbB\in\nan$ such that
$\tilde C_\bbB \lmk s_\bbB^{N-L_N-1}+s_{\bbB}^{L_N-M}\rmk<\frac 12$ for all $N\ge \tilde N_\bbB$.
We claim for any
$N\ge\max\{m_\bbB+M,l_\bbB+M,\tilde N_\bbB\}$, $\xi=\sum_{i,j=1}^{n_0^2(k_L+1)(k_R+1)}c_{i,j}\gb{N-M}(X_i)\otimes \gb{N-M}(X_j)\in \lmk \tau_{-N}\lmk G_{N-M,\bbB}\rmk \bigotimes_{i=-N}^{-M-1}\cc^n\rmk\otimes \lmk \tau_{M+1}\lmk
G_{N-M,\bbB}\rmk\bigotimes_{i=M+1}^{N}\cc^n\rmk$, we have $\sum_{ij}\lv c_{ij}\rv^2\le 2\lV \xi\rV^2$.
Applying the above observation with $L:=L_N$, $l=1$, $A=1$, we get this claim
\begin{align*}
\frac 12 \sum_{ij}\lv c_{ij}\rv^2\le \lmk 1-\tilde C_\bbB\lmk s_\bbB^{N-L_N-1}+s_{\bbB}^{L_N-M}\rmk\rmk\sum_{ij}\lv c_{ij}\rv^2\le \lV\xi\rV^2.
\end{align*} 

Substituting this to (\ref{eq:xax}), for all  $N\ge\max\{m_\bbB+M,l_\bbB+M,\tilde N_\bbB\}$, $M+1\le L\le N-l+1$,
$\xi\in \lmk \tau_{-N}\lmk G_{N-M,\bbB}\rmk \bigotimes_{i=-N}^{-M-1}\cc^n\rmk\otimes \lmk \tau_{M+1}\lmk
G_{N-M,\bbB}\rmk\bigotimes_{i=M+1}^{N}\cc^n\rmk$, and $A\in \caA_{[-L-l+1,-L]}\otimes \caA_{[L,L+l-1]}$, we have
\begin{align}\label{eq:xbd}
\lv
\braket{\xi}{A\xi}-\lV\xi\rV^2\lmk \left. \omega_{\bbB,\infty}\rv_{\caA_{(-\infty,-1]}}\otimes \left. \omega_{\bbB,\infty}\rv_{\caA_{[0,\infty)}}\rmk\lmk A\rmk
\rv
\le 4\tilde C_\bbB \lmk s_\bbB^{N-L-l}+s_{\bbB}^{L-M}\rmk\lV A\rV\lV \xi\rV^2.
\end{align}

Let $M\in\nan$ 
and $\varphi$ be a state on $\caA_{\bbZ}$. Assume that we have $\varphi(\tau_{i}(1-\gbp{m}))=0$ for all $i\in\bbZ$ with
$[i,i+m-1]\subset [-M,M]^c$. Let $L\ge M+1$, $l\in\nan$ and $A\in \caA_{[-L-l+1,L+l-1]\setminus [-L+1,L-1]}$.
For any $N\in\nan$ with $N\ge \max\{m_\bbB+M,l_\bbB+M,\tilde N_\bbB,L+l-1\}$, the density matrix of $\rho$ of the restriction of $\varphi$ to
$\caA_{[-N,-M-1]}\otimes \caA_{[M+1,N]}$ can be written as
$\rho=\sum_i\ket{\xi_i}\bra{\xi_i}$ with mutually orthogonal vectors $\xi_i\in \lmk \tau_{-N}\lmk G_{N-M,\bbB}\rmk \bigotimes_{i=-N}^{-M-1}\cc^n\rmk\otimes \lmk \tau_{M+1}\lmk
G_{N-M,\bbB}\rmk\bigotimes_{i=M+1}^{N}\cc^n\rmk$. 
Applying (\ref{eq:xbd}) to $\xi_i$s, we obtain 
\begin{align}\label{eq:xbdd}
\lv
\varphi(A)-\lmk \left. \omega_{\bbB,\infty}\rv_{\caA_{(-\infty,-1]}}\otimes \left. \omega_{\bbB,\infty}\rv_{\caA_{[0,\infty)}}\rmk\lmk A\rmk
\rv
\le 4\tilde C_\bbB \lmk s_\bbB^{N-L-l}+s_{\bbB}^{L-M}\rmk\lV A\rV.
\end{align}
Taking $N\to\infty$ limit, we obtain the result.
\end{proof}

\begin{proofof}[Theorem \ref{thm:asymmetric}]
That $m_\bb\le 2l_\bb(n,n_0,k_R,k_L,\lal,\bbD,\bbG,Y)$ and (i), (ii) are in Proposition \ref{prop:bbspec}.
(iii), (iv), (v), (vii), and (viii) are in Lemma \ref{lem:infty}, Lemma \ref {lem:iv}, 
Lemma \ref{lem:nr}, Lemma \ref{lem:vii} and Lemma \ref{lem:viii},
respectively.
(vi) is from Lemma \ref{lem:fs} and Lemma \ref{lem:leftgs}.
(ix) is Lamma \ref{lem:mme}.
\end{proofof}

\noindent
{\bf Acknowledgment.}\\
{
This work was supported by JSPS KAKENHI Grant Number 25800057 and 16K05171. 
}
\appendix
\section{Notations}\label{sec:nota}
Throughout the article $\inf\emptyset=+\infty$.
We denote the Euclidean distance between a point $x$ and a subset $S$ in $\cc$ (resp. $\rr$) by 
$\dc(x,S)$ (resp. $\drr(x,S)$). 
For a subset $S$ of $\cc$ and $\delta>0$, the $\delta$-neighbourhood of $S$ is denoted by $S_{\delta}$.
We denote the open ball in $\cc$ centered at 
$x\in\cc$ with radius $r$ by $\caB_{r}(x)$.

For a linear space $\caV$, $\dim \caV$ denotes the dimension
of $\caV$. For a vector space $\caV$ and a set of its elements $\{v_{\alpha}\}_{\alpha}\subset\caV$,
$\spa\{v_{\alpha}\}_{\alpha}$ denotes the linear span of $\{v_{\alpha}\}_{\alpha}$ in $\caV$.
We set $\spa\emptyset:=\{0\}$.

For  $k\in\nan$, the set of all $k\times k$ matrices
over $\cc$ is denoted by
$\mk$, while  $\UT_k$ (resp. $\DT_k$) denotes 
the set of all upper (resp. lower) trianguler $k\times k$ matrices.
Furthermore, $\UT_{0,k}$ (resp. $\DT_{0,k}$) denotes the set of elements in $\UT_k$
(resp. $\DT_{k}$)
with $0$ diagonal elements. Let ${\mathfrak E}_k$ denote 
the set of states on ${\Mat}_k$.

For $A\in\mk$, $\lV A\rV$ denotes the uniform norm of $A$, while $\lV A\rV_2$
denotes the Hilbert Schmidt norm.
The set of orthogonal projections in
$\mk$ is denoted by ${\caP}(\mk)$
and the set of positive elements of $\mk$ by $\mk_+$.
Furthermore, we denote the set of unitary elements of $\mk$
by $\caU(\mk)$.
For $A\in\mk_+$, $s(A)$ denotes the support of $A$.
For a positive linear functional $\varphi$ on $\mk$, we
denote its support by $s(\varphi)$ as well.
For $A\in\mk$, $s_l(A)$ (resp. $s_r(A)$) denotes the left (resp. right)
support of $A$.
We write $A>0$ for $A\in\mk$ if $A$ is strictly positive.
The rank of $A\in\mk$ is denoted by $\rank A$.
For a projection $p\in\caP(\mk)$, we denote $1-p$ by $\bar p$.
For subsets $S_1,S_2$ of $\mk$, $S_1\cdot S_2$ denotes the set of matrices
of the form $A_1A_2$ with $A_1\in S_1$, $A_2\in S_2$.
The symbol $\Tr$ denotes the trace of the matrix under consideration.
We denote an inner product given by $\Tr$ by 
$\braket{}{}_{\Tr}$, i.e., $\braket{A}{B}_{\Tr}=\Tr A^*B$
for $A,B\in\mk$.
For $A\in\mk$, we set $\Ad A:\mk\to\mk$ by $\Ad A(B)=A BA^*$, $B\in\mk$.

For $k\in\nan$, we denote the standard basis of $\cc^{k}$
by $\{\chi_{i}^{(k)}\}_{i=1}^{k}$.
We denote the matrix unit $\ket{\chi_{i}^{(k)}}\bra{\chi_{j}^{(k)}}$
by $e_{ij}^{(k)}$.
However, when numbers $k_R,k_L\in\nan\cup\{0\}$ are given
explicitly, we denote the standard basis of $\cc^{k_R+k_L+1}$
by $\{f_{i}^{(k_R,k_L)}\}_{i=-k_R}^{k_L}$, and set $E_{ij}^{(k_R,k_L)}:=\ket{f_i^{(k_R,k_L)}}\bra{f_j^{(k_R,k_L)}}$.
We also use the notation
$\eu:=\sum_{i=-k_R}^{-1}\fii{i}$
and
$\ed:=\sum_{i=1}^{k_L}\fii{i}$.
Furthermore, we define the projections 
$P_R^{(k_R,k_L)}:=\sum_{i=-k_R}^0 E_{ii}^{(k_R,k_L)}$, 
$P_L^{(k_R,k_L)}:=\sum_{i=0}^{k_L} E_{ii}^{(k_R,k_L)}$.
For $n_0\in\nan$,
we also set $\hat P^{(n_0,k_R,k_L)}_R:=\unit_{\mnz}\otimes P_R^{(k_R,k_L)}, \hat P^{(n_0,k_R,k_L)}_L:=\unit_{\mnz}\otimes P_L^{(k_R,k_L)}
$,
$\hat E_{ij}^{(k_R,k_L)}=\unit\otimes E_{ij}^{(k_R,k_L)}$.
For $a=-k_R,\ldots,k_L$ we set
$Q_{R,a}^{(k_R,k_L)}:=
\sum_{i=-k_R}^a E_{ii}^{(k_R,k_L)}$,
$Q_{L,a}^{(k_R,k_L)}:=
\sum_{i=a}^{k_L} E_{ii}^{(k_R,k_L)}$,
$\hat Q_{R,a}^{(n_0,k_R,k_L)}:=\unit_{\mnz}\otimes
\sum_{i=-k_R}^a E_{ii}^{(k_R,k_L)}$
and $\hat Q_{L,a}^{(n_0,k_R,k_L)}:=\unit_{\mnz}\otimes
\sum_{i=a}^{k_L} E_{ii}^{(k_R,k_L)}$.
In this terminology, we have $\hat P^{(n_0,k_R,k_L)}_U=
\hat Q_{R,0}^{(n_0,k_R,k_L)}$
and $\hat P^{(n_0,k_R,k_L)}_L=\hat Q_{L,0}^{(n_0,k_R,k_L)}$.
For $X\in\mnzk$ and $i,j=-k_R,\ldots,k_L$, 
we write the $(i,j)$-element of $X$ by $X_{ij}$, i.e.,
$X=\sum_{-k_R\le i,j\le k_L} X_{ij}\otimes E_{ij}^{(k_R,k_L)}$.
We set $\cn:=\CN$.
For $k_R,k_L\in\nan\cup\{0\}$, we define linear maps $I_R^{(k_R,k_L)}:\Mat_{k_R+1}\to\Mat_{k_L+k_R+1}$, and
$I_L^{(k_R,k_L)}:\Mat_{k_L+1}\to\Mat_{k_L+k_R+1}$
by $I_R^{(k_R,k_L)}(E_{ij}^{(k_R,0)})=E_{ij}^{(k_R,k_L)}$, $i,j=-k_R,\ldots,0$,
$I_L^{(k_R,k_L)}(E_{ij}^{(0,k_L)})=E_{ij}^{(k_R,k_L)}$, $i,j=0,\ldots,k_L$.

For a linear operator $T$ acting on a vector space, we denote the spectrum of $T$ by $\sigma(T)$, and the spectral radius of $T$ by $r_T$.
For an isolated  subset $S$ of $\sigma(T)$, we denote the spectral projection 
of $T$ onto $S$ by $P_{S}^T$.
If $T$ is a self-adjoint operator on a Hilbert space and $S$ a subset of $\rr$, then $\Proj[T\in S]$ also indicates the spectral projection
of $T$ corresponding to $\sigma(T)\cap S$.
For a linear map $\Gamma$, $\ker\Gamma$, and $\Ran\Gamma$ denote the kernel and the range of $\Gamma$ respectively.

For a linear map $T:\mk\to\mk$, $A\in\mk$ is $T$-invariant if $T(A)=A$.
A linear functional $\psi$ is $T$-invariant if $\psi\circ T=\psi$.

For a finite dimensional Hilbert space, braket $\braket{}{}$ denotes the
inner product of
the space under consideration.
We denote the set of all bounded linear operators on 
$\mathcal H$ by
$B({\mathcal H})$.
For  a subspace $\mathfrak K$, ${\mathfrak{K}}^{\perp}$ means the orthogonal complement of $\mathcal H$ .

For an $n$-tuple of matrices $\vv=(v_1,\ldots,v_n)\in\Mat_k^{\times n}$ and $R\in \GL(k,\cc)$,
we denote the $n$-tuple $(R v_1R^{-1},\ldots,R v_nR^{-1})\in\Mat_k^{\times n}$
by $R \vv R^{-1}$.

\section{Proof of Lemma \ref{basiccp}}\label{apbasiccp}
{\it 1.} is obtained by the repeated use of $v_{\mu}p=pv_{\mu}p$.
From the same relation, we see
that
\begin{align}\label{bpvp}
\bar p\lmk\widehat{ v_{\mu^{(N)}}}\rmk^*p 
=\sum_{m=1}^N\lmk\bar p v_{\mu_{N}}^*\bar p\rmk\cdots
\lmk\bar p v_{\mu_{m+1}}^*\bar p\rmk\lmk\bar p v_{\mu_{m}}^* p\rmk\lmk p v_{\mu_{m-1}}^* p\rmk
 \cdots\lmk p v_{\mu_{1}}^* p\rmk.
\end{align}
For $N\in\nan$, $m\in\{1,\ldots,N\}$, $\eta\in\cc^k$, and a CONS $\{\chi_i^{(k)}\}_{i=1}^k$ of $\cc^k$,
we have
\begin{align}
&\sum_{\mu^{(N)}\in \{1,\ldots,n\}^{\times N}}\lV
\lmk\bar p v_{\mu_{N}}^*\bar p\rmk\cdots
\lmk\bar p v_{\mu_{m+1}}^*\bar p\rmk\lmk\bar p v_{\mu_{m}}^* p\rmk\lmk p v_{\mu_{m-1}}^* p\rmk
 \cdots\lmk p v_{\mu_{1}}^* p\rmk\eta
\rV^2\nonumber\\
&= \sum_{\mu^{(N)}\in \{1,\ldots,n\}^{\times N}}\sum_{i=1}^k
\lv 
\braket{\lmk\bar p v_{\mu_{m+1}}
\bar p\rmk\cdots\lmk\bar p v_{\mu_{N}}\bar p\rmk \chi_{i}^{(k)}}
{\lmk\bar p v_{\mu_{m}}^* p\rmk\lmk p v_{\mu_{m-1}}^* p\rmk
 \cdots\lmk p v_{\mu_{1}}^* p\rmk\eta}\rv^2\nonumber\\
&\le\sum_{i=1}^k
 \sum_{\mu^{(N)}\in \{1,\ldots,n\}^{\times N}}
 \lV \lmk\bar p v_{\mu_{m+1}}\bar p\rmk\cdots\lmk\bar p v_{\mu_{N}}\bar p\rmk 
\chi_{i}^{(k)}\rV^2
 \lV v_{\mu_{m}}\rV^2\lV
\lmk p v_{\mu_{m-1}}^* p\rmk
 \cdots\lmk p v_{\mu_{1}}^* p\rmk\eta\rV^2\nonumber\\ 
&=
 \lmk \Tr\lmk T^{N-m}_{\vv_{\bar p}}\lmk \sum_{i=1}^k
e_{ii}^{(k)}\rmk\rmk\rmk
 \braket{\eta}{T^{m-1}_{\vv_p}(1)\eta}
 \lmk \sum_{\mu=1}^n\lV v_{\mu}\rV^2\rmk\label{toch1}.
\end{align}
From (\ref{toch1})  and (\ref{bpvp}), we obtain
{\it 2}:
\begin{align}
&\sum_{\mu^{(N)}\in \{1,\ldots,n\}^{\times N}}\lV\bar p\lmk\widehat{ v_{\mu^{(N)}}}\rmk^*p  \eta\rV^2\nonumber\\
&\le \sum_{\mu^{(N)}\in \{1,\ldots,n\}^{\times N}}
\lmk
\sum_{m=1}^N
\lV
\lmk\bar p v_{\mu_{N}}^*\bar p\rmk\cdots
\lmk\bar p v_{\mu_{m+1}}^*\bar p\rmk\lmk\bar p v_{\mu_{m}}^* p\rmk\lmk p v_{\mu_{m-1}}^* p\rmk
 \cdots\lmk p v_{\mu_{1}}^* p\rmk\eta
\rV\rmk^2\nonumber\\
&=\sum_{m_1=1}^N\sum_{m_2=1}^N\sum_{\mu^{(N)}\in \{1,\ldots,n\}^{\times N}}
\lV
\lmk\bar p v_{\mu_{N}}^*\bar p\rmk\cdots
\lmk\bar p v_{\mu_{m_1+1}}^*\bar p\rmk\lmk\bar p v_{\mu_{m_1}}^* p\rmk\lmk p v_{\mu_{m_1-1}}^* p\rmk
 \cdots\lmk p v_{\mu_{1}}^* p\rmk\eta
\rV\nonumber\\
&\quad\quad\lV
\lmk\bar p v_{\mu_{N}}^*\bar p\rmk\cdots
\lmk\bar p v_{\mu_{m_2+1}}^*\bar p\rmk\lmk\bar p v_{\mu_{m_2}}^* p\rmk\lmk p v_{\mu_{m_2-1}}^* p\rmk
 \cdots\lmk p v_{\mu_{1}}^* p\rmk\eta
\rV\nonumber\\
&\le
\sum_{m_1=1}^N\sum_{m_2=1}^N\lmk
\sum_{\mu^{(N)}\in \{1,\ldots,n\}^{\times N}}
\lV
\lmk\bar p v_{\mu_{N}}^*\bar p\rmk\cdots
\lmk\bar p v_{\mu_{m_1+1}}^*\bar p\rmk\lmk\bar p v_{\mu_{m_1}}^* p\rmk\lmk p v_{\mu_{m_1-1}}^* p\rmk
 \cdots\lmk p v_{\mu_{1}}^* p\rmk\eta
\rV^2\rmk^{\frac 12}\nonumber\\
&\quad\quad\lmk
\sum_{\mu^{(N)}\in \{1,\ldots,n\}^{\times N}}
\lV
\lmk\bar p v_{\mu_{N}}^*\bar p\rmk\cdots
\lmk\bar p v_{\mu_{m_2+1}}^*\bar p\rmk\lmk\bar p v_{\mu_{m_2}}^* p\rmk\lmk p v_{\mu_{m_2-1}}^* p\rmk
 \cdots\lmk p v_{\mu_{1}}^* p\rmk\eta
\rV^2\rmk^{\frac 12}\nonumber\\
&\le
\lmk \sum_{m=1}^N\lmk \Tr T_{\vv_{\bar p}}^{N-m}(1)\rmk^{\frac 12}
\braket{\eta}{T_{\vv_p}^{m-1}(1)\eta}^{\frac 12}\rmk^{2}\sum_{\mu=1}^n\lV v_{\mu}\rV^2.
\end{align}
To see {\it 3}, we first bound $\lV \bar p T^N_{\vv}(A)\rV$ for $A\in\mk$.
By {\it 1} and {\it 2},
\begin{align*}
&\lv
\braket{\bar pT^N_{\vv}(A)\xi}{\eta}
\rv
 =\lv
  \sum_{\mu^{(N)}\in \{1,\ldots,n\}^{\times N}}
\braket{A\lmk \widehat{v_{\mu^{(N)}}}\rmk^* \xi}
 {\bar p \lmk \widehat{v_{\mu^{(N)}}}\rmk^* \bar p \eta}
 \rv\\
 &\le
 \sum_{\mu^{(N)}\in \{1,\ldots,n\}^{\times N}}
 \lV A\rV
\lV
 \lmk \widehat{v_{\mu^{(N)}}}\rmk^* \xi
 \rV
 \lV
 \bar p \lmk \widehat{v_{\mu^{(N)}}}\rmk^* \bar p \eta
 \rV\\
& \le
  \lV A \rV
  \sum_{\mu^{(N)}}\lmk
  \lV p\lmk \widehat{v_{\mu^{(N)}}}\rmk^* p\xi\rV+\lV \bar p\lmk \widehat{v_{\mu^{(N)}}}\rmk^*p\xi\rV
  +\lV \bar p\lmk \widehat{v_{\mu^{(N)}}}\rmk^*\bar p\xi\rV
  \rmk
 \cdot \lV \bar p\lmk \widehat{v_{\mu^{(N)}}}\rmk^*\bar p\eta\rV\\
& \le
  \lV A \rV\cdot 
\lmk
\sum_{\mu^{(N)}}\lV \bar p\lmk \widehat{v_{\mu^{(N)}}}\rmk^*\bar p\eta\rV^2
\rmk^{\frac 12}\\
&\quad \lmk
  \lmk \sum_{\mu^{(N)}}\lV p\lmk \widehat{v_{\mu^{(N)}}}\rmk^* p\xi\rV^2\rmk^{\frac 12} +\lmk\sum_{\mu^{(N)}}\lV \bar p\lmk \widehat{v_{\mu^{(N)}}}\rmk^*\bar p\xi\rV^2\rmk^{\frac 12}
+\lmk\sum_{\mu^{(N)}}\lV \bar p\lmk \widehat{v_{\mu^{(N)}}}\rmk^*p\xi\rV^2\rmk^{\frac 12} 
\rmk \\
&
=\lV A \rV\cdot 
\lmk
\braket{\xi}{T_{\vv_p}^N(1)\xi}^{\frac 12}
+
\braket{\xi}{T_{\vv_{\bar p}}^N(1)\xi}^{\frac 12}
+\lmk\sum_{\mu^{(N)}}\lV \bar p\lmk \widehat{v_{\mu^{(N)}}}\rmk^* p\xi\rV^2\rmk^{\frac 12}
\rmk
\braket{\eta}{T_{\vv_{\bar p}}^N(1)\eta}^{\frac 12}
\\
&\le
\lV A \rV
 \lV T_{\vv_{\bar p}}^N(1)\rV^{\frac 12}\\
& \lmk
 \sup_{M\in\nan}\lV T_{\vv_p}^M\rV^{\frac 12} +\sup_{M\in\nan}\lV T_{\vv_{\bar p}}^M\rV^{\frac 12}
 +\lmk \sum_{m=1}^N\lmk \Tr T_{\vv_{\bar p}}^{N-m}(1)\rmk^{\frac 12}
\lV T_{\vv_p}^{m-1}(1)\rV^{\frac 12}\rmk\lmk \sum_{\mu=1}^n\lV v_{\mu}\rV^2\rmk^{\frac 12}
 \rmk\lV\xi\rV\lV \eta\rV
\end{align*}
for any $\xi,\eta \in\cc^k$.
For the third inequality, we used the Cauchy-Schwarz inequality.
From this, we obtain
\begin{align*}
&\lV T^N_{\vv}(A)-pT^N_{\vv}(A)p\rV
=
\lV \bar pT^N_{\vv}(A)+pT^N_{\vv}(A)\bar p\rV\\
&\le
2\lV A \rV
 \lV T_{\vv_{\bar p}}^N(1)\rV^{\frac 12}\\
&\quad \lmk \sup_{M\in\nan}\lV T_{\vv_p}^M\rV^{\frac 12} +\sup_{M\in\nan}\lV T_{\vv_{\bar p}}^M\rV^{\frac 12}
+\lmk \sum_{m=1}^N\lmk \Tr T_{\vv_{\bar p}}^{N-m}(1)\rmk^{\frac 12}
\lV T_{\vv_p}^{m-1}(1)\rV^{\frac 12}\rmk\lmk \sum_{\mu=1}^n\lV v_{\mu}\rV^2\rmk^{\frac 12}
\rmk
\end{align*}
To see {\it 4}, note that as $T^M_{\vv}$ is positive, we have
\begin{align*}
&\lV T^M_{\vv}\rV =\lV T^M_{\vv}(1)\rV
\le
\lV T^M_{\vv}(p)\rV+\lV T^M_{\vv}(\bar p)\rV
\le \lV T^M_{\vv}(p)\rV
+\lV \bar p T^M_{\vv}(\bar p)\bar p\rV
+\lV \bar p T^M_{\vv}(\bar p) p\rV
+\lV p T^M_{\vv}(\bar p)\bar p\rV+
\lV p T^M_{\vv}(\bar p)p\rV\\
&
\le \lV T^M_{\vv_p}(p)\rV
+\lV T^M_{\vv_{\bar p}}(\bar p)\rV
+
2\lV p T^M_{\vv}(\bar p)p\rV^{\frac 12}
\lV \bar p T^M_{\vv}(\bar p)\bar p\rV^{\frac 12}
+
\lV p T^M_{\vv}(\bar p)p\rV\\
&\le
\sup_{M\in\nan}\lV T^M_{\vv_p}\rV+\sup_{M\in\nan}\lV \ T^M_{\vv_{\bar p}}\rV
+2\lmk \sum_{m=1}^M\lmk \Tr T_{\vv_{\bar p}}^{M-m}(1)\rmk^{\frac 12}
{\lV T_{\vv_p}^{m-1}(1)\rV}^{\frac 12}\rmk
\lmk
\sum_{\mu=1}^n\lV v_{\mu}\rV^2\rmk^{\frac 12}
\lV T^M_{\vv_{\bar p}}(\bar p)\rV^{\frac 12}\\
&+\lmk \sum_{m=1}^M\lmk \Tr T_{\vv_{\bar p}}^{M-m}(1)\rmk^{\frac 12}
{\lV T_{\vv_p}^{m-1}(1)\rV}^{\frac 12}\rmk^{2}\sum_{\mu=1}^n\lV v_{\mu}\rV^2.
\end{align*}
Here we used {\it 2}.
\section{Matrix algebra and CP maps}
In this section we collect useful facts about matrix
algebras. See \cite{Wolf:2012aa}, for example.

\begin{lem}\label{lem:lb}
Let $m,k\in\nan$, with $m\le k$, and
$\{\xi_i\}_{i=1}^m$, a set of 
vectors of $\cc^k$.
Let $A$ be an $m\times m$ matrix
given by
$A=\lmk\braket{\xi_i}{\xi_j}\rmk_{i,j=1}^m$.
Let $X:=\sum_{i=1}^m\ket{\xi_i}\bra{\xi_i}\in\Mat_k$
and $P$ be the support projection of $X$.
Suppose that there  exists a positive constant
$c$ such that
$
c\unit\le A$.
Then we have
$
cP\le X$.In particular, we have
$\sigma(X)\setminus \{0\}\subset [c,\lV X\rV]$.
\end{lem}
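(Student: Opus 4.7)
The approach is to realize $A$ and $X$ as the two Gram-type operators attached to a single linear map. I would define $V\colon \cc^m \to \cc^k$ by $V e_i := \xi_i$ where $\{e_i\}$ is the standard basis of $\cc^m$; then by direct computation $V^*V = A$ and $VV^* = X$. This is the identification that makes the whole lemma transparent, since it turns the hypothesis into a statement on $\cc^m$ and the conclusion into a statement on $\cc^k$.

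Next I would observe that the hypothesis $A \ge c\unit$ with $c>0$ forces $V^*V$ to be invertible, hence $V$ is injective. Consequently $\Ran V$ has dimension $m$ and $\Ran X = \Ran(VV^*) = \Ran V$, so $P$ is exactly the orthogonal projection onto $\Ran V$. This identifies the support projection in very concrete terms, which is the only mildly subtle point of the argument.

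The inequality $cP \le X$ then reduces to checking it on $\Ran V$. For any $\eta = V\zeta \in \Ran V$, a short calculation yields
\[
\braket{\eta}{X\eta} = \braket{V\zeta}{VV^*V\zeta} = \lV A\zeta\rV^2,\qquad
\lV\eta\rV^2 = \braket{\zeta}{A\zeta},
\]
so it suffices to show $A^2 \ge cA$ on $\cc^m$. But $A \ge c\unit$ and $A - c\unit \ge 0$ commute, so their product $A(A-c\unit) = A^2 - cA$ is positive, giving the desired inequality.

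Finally, for the spectral statement, $cP \le X$ together with $X \ge 0$ and $X = PXP$ implies that, viewed as an operator on $P\cc^k$, $X$ is bounded below by $c$, while on $(1-P)\cc^k$ it vanishes; hence $\sigma(X)\setminus\{0\} \subset [c, \lV X\rV]$. There is really no single ``hard step'' — the only place care is needed is in correctly identifying $P$ as the projection onto $\Ran V$, which is where the invertibility coming from $A \ge c\unit$ is used.
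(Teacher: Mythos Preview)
Your proof is correct. The identification $V^*V=A$, $VV^*=X$ via the map $Ve_i=\xi_i$ is exactly the right framework, and the reduction to $A^2\ge cA$ on $\cc^m$ is clean and complete. One tiny wording slip: where you write ``$A\ge c\unit$ and $A-c\unit\ge 0$ commute'' you mean ``$A\ge 0$ and $A-c\unit\ge 0$,'' but the conclusion $A(A-c\unit)\ge 0$ is of course correct either way.

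As for comparison with the paper: this lemma is stated in the appendix as a standard linear-algebra fact and is not given a proof there, so there is no argument to compare against. Your write-up would serve perfectly well as the omitted proof.
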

\begin{thm}\label{pm1}
Let $T:\Mat_k(\cc)\to \Mat_k(\cc)$ be a
positive linear map. The following properties are equivalent:
\begin{enumerate}
\item There is no nontrivial 
orthogonal projection $P$ such that
$T(P\Mat_k(\cc)P)\subset
P\Mat_k(\cc)P$.
\item For any
nonzero $A\geq0$ and $t>0$,
$\exp(tT)(A)>0$.
\end{enumerate}
\end{thm}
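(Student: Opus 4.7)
The key preliminary is that for any positive linear $T:\Mat_k(\cc)\to\Mat_k(\cc)$ and any nonzero $A\geq 0$, the support projection of $\exp(tT)(A)=\sum_{n\ge 0}\frac{t^n}{n!}T^n(A)$ is independent of $t>0$. Indeed, each term in the series is positive, so the support of a finite partial sum equals $\bigvee_{n\le N}s(T^n(A))$; by finite-dimensionality this increasing chain stabilizes at some projection $Q_A=\bigvee_{n\ge 0}s(T^n(A))$, which must then be the support of $\exp(tT)(A)$ for every $t>0$. Using $Q_A$, I prove both implications by contraposition.

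For (2)$\Rightarrow$(1): suppose a nontrivial projection $P$ satisfies $T(P\Mat_k(\cc)P)\subset P\Mat_k(\cc)P$. Since $P\in P\Mat_k(\cc)P$, induction gives $T^n(P)\in P\Mat_k(\cc)P$ for every $n$, and the series limit yields $\exp(tT)(P)\in P\Mat_k(\cc)P$. Its support is then dominated by the nontrivial $P\ne 1$, so $\exp(tT)(P)$ is not strictly positive and (2) fails with $A=P$.

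For (1)$\Rightarrow$(2), again by contraposition: assume $A\ge 0$, $A\ne 0$ and $t_0>0$ make $\exp(t_0T)(A)$ non-invertible, and set $Q:=Q_A$. Then $Q$ is nontrivial ($Q\ne 0$ from $s(A)\le Q$; $Q\ne 1$ by hypothesis). The goal is to show $T(Q\Mat_k(\cc)Q)\subset Q\Mat_k(\cc)Q$, contradicting (1). The core step is to prove $T(Q)\in Q\Mat_k(\cc)Q$: since $s(\exp(T)(A))=Q$, there are constants $c_1,c_2>0$ with $c_1 Q\le \exp(T)(A)\le c_2 Q$; applying $T$ and using $T\circ\exp(T)=\exp(T)\circ T$ yields $c_1 T(Q)\le \exp(T)(T(A))$; the right-hand side has support in $Q$ because $s(T^{n+1}(A))\le Q$ for every $n\ge 0$, so it lies in $Q\Mat_k(\cc)Q$. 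Positivity of $T(Q)$ then forces $\bar Q T(Q)\bar Q=0$ and hence $T(Q)=QT(Q)Q$. Once $T(Q)\in Q\Mat_k(\cc)Q$ is in hand, any positive $B\in Q\Mat_k(\cc)Q$ satisfies $B\le\lV B\rV Q$, so $0\le T(B)\le\lV B\rV T(Q)$ has support in $Q$; linearity plus the self-adjoint decomposition then upgrade this to invariance of the whole corner.

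The main obstacle is this final chain of deductions: the sandwich bound $c_1 Q\le \exp(T)(A)\le c_2 Q$ is exactly what permits positivity of $T$ to promote a support containment for the single element $\exp(T)(A)$ into invariance of the entire corner $Q\Mat_k(\cc)Q$. Everything else is either bookkeeping with the series or the elementary fact that $0\le X\le Y$ with $\bar Q Y\bar Q=0$ implies $\bar Q X\bar Q=0$ and hence $X\in Q\Mat_k(\cc)Q$.
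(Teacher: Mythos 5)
Your proof is correct and complete. Note that the paper itself offers no proof of this theorem: it is stated in the appendix as a collected fact with a pointer to Wolf's lecture notes, so there is no in-paper argument to compare against. Your route --- introducing the stabilized support projection $Q_A=\bigvee_{n\ge 0}s\lmk T^n(A)\rmk$, observing that it equals $s\lmk \exp(tT)(A)\rmk$ for every $t>0$, and then using the two-sided bound $c_1 Q\le \exp(T)(A)\le c_2 Q$ together with positivity of $T$ to promote the support containment of $\exp(T)(T(A))$ into invariance of the full corner $Q\Mat_k(\cc)Q$ --- is the standard textbook argument for this equivalence, and all the individual steps (the join of supports for a convergent series of positive terms, the implication $0\le X\le Y$ with $\bar Q Y\bar Q=0\Rightarrow X\in Q\Mat_k(\cc)Q$, and the reduction of the general corner element to positive ones) are justified correctly.
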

\begin{rem}
A positive map satisfying the above equivalent conditions is said to be irreducible.
\end{rem}
We say that $\lambda$ is a non degenerate eigenvalue of $T$
if the corresponding projection $P_{\{\lambda\}}^T$ is one dimensional. Irreducible positive maps satisfy the following properties.
\begin{thm}\label{pm2}
Let
$T:\Mat_k(\cc)\to \Mat_k(\cc)$ be a nonzero irreducible positive linear 
map.
Then the spectral radius $r_T$ of $T$ is a 
strictly positive, non-degenerate eigenvalue with a strictly positive 
eigenvector $h_T$:
\begin{equation*}
T(h_T) = r_T h_T >0.
\end{equation*}
\end{thm}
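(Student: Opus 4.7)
\textbf{The plan} is to first produce a nonnegative eigenvector of $T$ via Brouwer's fixed-point theorem on the state space of $\Mat_k(\cc)$, then invoke the exponential characterization of irreducibility supplied by Theorem~\ref{pm1}(2) to upgrade positivity and establish simplicity of the spectral-radius eigenvalue.

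First I would regularize by setting $T_\epsilon(X) := T(X) + \epsilon (\Tr X)\,\unit$ for $\epsilon > 0$. Because $T_\epsilon$ is strictly positive, the map $\rho \mapsto T_\epsilon(\rho)/\Tr T_\epsilon(\rho)$ is a well-defined continuous self-map of the compact convex state space on $\Mat_k(\cc)$, and Brouwer yields a fixed state $\rho_\epsilon$. Passing to a convergent subsequence as $\epsilon \to 0$ produces a state $h_T$ and a scalar $\lambda \geq 0$ with $T(h_T) = \lambda h_T$. Since $\exp(T)(h_T) = e^\lambda h_T$ and $h_T$ is nonzero and positive, Theorem~\ref{pm1}(2) forces $h_T > 0$ strictly. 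If $\lambda = 0$ then $T(h_T) = 0$; combined with $\unit \leq \lV h_T^{-1}\rV h_T$ and positivity of $T$ this would force $T(\unit) = 0$, and by the standard estimate $\lV T\rV \leq 2 \lV T(\unit)\rV$ for positive maps (decompose $X$ into self-adjoint parts sandwiched by $\pm \unit$) this gives $T = 0$, contradicting the hypothesis. Hence $\lambda > 0$. The estimate $T^n(\unit) \leq \lV h_T^{-1}\rV T^n(h_T) = \lV h_T^{-1}\rV\, \lambda^n h_T$, combined with $\lV T^n\rV \leq 2 \lV T^n(\unit)\rV$ and Gelfand's formula, yields $r_T \leq \lambda$, and as $h_T$ is a $\lambda$-eigenvector the reverse inequality is immediate, so $\lambda = r_T$.

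For geometric simplicity, suppose $h'$ is a self-adjoint eigenvector of $T$ at $r_T$. Set $t^* := \max\{t \in \rr : h_T - t h' \geq 0\}$, which is finite because $h_T > 0$. Then $y := h_T - t^* h' \geq 0$ is not strictly positive by maximality of $t^*$, yet $\exp(T)(y) = e^{r_T} y$, so Theorem~\ref{pm1}(2) forces $y = 0$ and hence $h' \in \cc h_T$; the complex case reduces to this one by splitting into real and imaginary parts. To rule out Jordan blocks at $r_T$ (and hence pin down the algebraic multiplicity), I would introduce on the self-adjoint part of $\Mat_k(\cc)$ the order-unit norm $\lV X\rV_{h_T} := \inf\{c > 0 : -c h_T \leq X \leq c h_T\}$, which is finite and equivalent to the operator norm because $h_T > 0$. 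From $-c h_T \leq X \leq c h_T$ and positivity of $T$ one obtains $-c r_T h_T \leq T(X) \leq c r_T h_T$, so $T/r_T$ is a contraction in $\lV\cdot\rV_{h_T}$ on self-adjoint elements. Therefore $\{(T/r_T)^n\}_{n \in \nan}$ is uniformly bounded on self-adjoint elements, and then on $\Mat_k(\cc)$ by splitting into real and imaginary parts, which precludes any Jordan block of size $\geq 2$ at the eigenvalue $r_T$. Combined with geometric simplicity this yields $\dim \Ran P_{\{r_T\}}^T = 1$.

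\textbf{The main obstacle} will be the algebraic-simplicity step: one must choose a norm in which $T/r_T$ becomes a genuine contraction, verify the order majorization carefully on self-adjoint elements, and then transfer the resulting uniform boundedness from the self-adjoint cone to the full matrix algebra. This whole argument is only available once strict positivity of $h_T$ has been secured via the exponential trick, which is why the sequencing — Brouwer existence, then $\exp(T)$-strict-positivity, then order-unit norm contraction — is essential.
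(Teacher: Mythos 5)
The paper does not prove Theorem~\ref{pm2} at all: it is stated in the appendix as a collected standard fact, with a pointer to \cite{Wolf:2012aa}. So there is no in-paper argument to compare against; what you have written is a self-contained proof, and it is essentially the standard Evans--H{\o}egh-Krohn / Perron--Frobenius argument for irreducible positive maps. Your chain of steps is sound: Brouwer on the regularized map $T_\epsilon$ produces a nonnegative eigenvector; the identity $\exp(T)(h_T)=e^{\lambda}h_T$ together with Theorem~\ref{pm1}(2) upgrades it to $h_T>0$; the comparison $\unit\le\lV h_T^{-1}\rV h_T$ rules out $\lambda=0$ and, via $T^n(\unit)\le \lV h_T^{-1}\rV\lambda^n h_T$ and Gelfand's formula, identifies $\lambda=r_T$; the order-unit norm $\lV\cdot\rV_{h_T}$ makes $T/r_T$ a contraction on self-adjoint elements, which kills Jordan blocks; and the sliding argument with $t^*$ gives geometric simplicity. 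The only place you should tighten the wording is the definition of $t^*$: the set $\{t\in\rr : h_T-th'\ge 0\}$ is bounded above because $h'$ has a strictly positive spectral part, not because $h_T>0$ (if $h'\le 0$ the set is unbounded); since $h'\ne 0$ is self-adjoint you may simply replace $h'$ by $-h'$ if necessary so that it has a positive eigenvalue, after which $t^*$ is finite, attained, and strictly positive, and the rest of your argument goes through. With that one-line normalization the proof is complete and correct.
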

\begin{thm}\label{pm3}
Let $T:\Mat_k(\cc)\to \Mat_k(\cc)$ be a unital completely positive map and let
\begin{align*}
T(A)=\sum_{i=1}^{n}v_iAv_i^* 
\end{align*}
be its Kraus decomposition. Let $\vv := (v_1,\ldots,v_n)$. Then the following properties are equivalent:
\begin{enumerate}
\item
There exists $l\in\nan$ such that $T^l(A)>0$
for any nonzero $A\geq 0$.
\item
There exists a unique faithful $T$-invariant state $\varphi$, and it satisfies
 \[
\lim_{l\to\infty} T^l(A)=\varphi(A) 1,\quad
A\in \Mat_k(\cc).
\]
\item
$\sigma(T)\cap\{z\in\cc:|z|\ge 1\}=\{1\}$, $1$ is a non degenerate eigenvalue of $T$, and there exists a faithful $T$-invariant state.
\item
There exists $m\in\nan$
such that 
${\mathcal K}_m(\vv)=\Mat_k(\cc)$.
\item
There exists $m\in\nan$
such that 
${\mathcal K}_l(\vv)=\Mat_k(\cc)$, for all $l\ge m$.
\end{enumerate}
\end{thm}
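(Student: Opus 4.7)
The plan is to prove the cycle of implications $(4) \Rightarrow (1) \Rightarrow (2) \Leftrightarrow (3)$ together with $(2) \Rightarrow (5) \Rightarrow (4)$, which closes all five conditions. The trivial step is $(5) \Rightarrow (4)$, and $(4) \Rightarrow (1)$ is a direct computation from the Kraus expansion
\[
T^m(|\xi\rangle\langle\xi|) = \sum_{\mu^{(m)}} |\widehat{v_{\mu^{(m)}}}\xi\rangle\langle\widehat{v_{\mu^{(m)}}}\xi|,
\]
whose range is spanned by $\caK_m(\vv)\xi$; this equals $\cc^k$ under $(4)$, so the sum is strictly positive for every nonzero $\xi$, and the spectral decomposition of a general $A\geq 0$ extends this to $T^m(A)>0$. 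For $(2) \Leftrightarrow (3)$ I would use the spectral decomposition of $T$: under $(3)$, the spectral projection $P_1$ at the simple eigenvalue $1$ has image $\cc\cdot 1$ and commutes with $T$, so $P_1(A)=\psi(A)\cdot 1$ for a $T$-invariant state $\psi$; combined with spectral radius $<1$ on $\ker P_1$ one obtains $T^l \to P_1$, and any invariant state $\varphi$ obeys $\varphi = \varphi\circ T^l \to \psi$, pinning $\psi$ down as the unique (hence faithful) invariant state. Conversely, $(2)$ says $T^l$ converges to a rank-one projection, which forces the peripheral spectrum to be $\{1\}$ with simple multiplicity.

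The heart of the argument is $(1)\Rightarrow(2)$, for which I would run an oscillation-contraction scheme in the spirit of Birkhoff--Hopf. For self-adjoint $A\in\Mat_k$, let $m(A),M(A)$ denote the smallest/largest eigenvalues and set $\mathrm{osc}(A) := M(A)-m(A)$. Unitality and positivity of $T$ applied to $m(A)\cdot 1\leq A\leq M(A)\cdot 1$ give $m(A)\leq m(T(A))\leq M(T(A))\leq M(A)$, so oscillation is non-increasing. Fix $l_0$ with $T^{l_0}(A)>0$ for all nonzero $A\geq 0$; on the compact set $\mathcal{S}:=\{A=A^*:0\leq A\leq 1,\ \mathrm{osc}(A)=1\}$ the continuous function $A\mapsto m(T^{l_0}(A))$ is strictly positive, hence bounded below by some $c_0>0$. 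Affine normalization $A\mapsto (A-m(A)\cdot 1)/\mathrm{osc}(A)$ then yields $\mathrm{osc}(T^{l_0}(A))\leq (1-c_0)\,\mathrm{osc}(A)$ for all self-adjoint $A$, so $T^{l_0 n}(A)\to c_A\cdot 1$ for some scalar $c_A$. The linear, positive, unital assignment $A\mapsto c_A$ defines a $T$-invariant state $\varphi$. For nonzero $A\geq 0$, the bound $T^{l_0}(A)\geq m(T^{l_0}(A))\cdot 1>0$ propagates under all further powers of $T$ by unitality, giving $\varphi(A)\geq m(T^{l_0}(A))>0$ and hence faithfulness; uniqueness and convergence along the full sequence $T^n$ are then standard.

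For $(2)\Rightarrow(5)$ I would pass to the Choi matrix $C_{T^l}:=(T^l\otimes\mathrm{id})(|\omega\rangle\langle\omega|)$ with $|\omega\rangle=\sum_i e_i\otimes e_i$. Under vectorization, the range of $C_{T^l}$ is the image of $\caK_l(\vv)$, so $\mathrm{rank}\,C_{T^l}=\dim\caK_l(\vv)$; meanwhile the convergence $T^l(\cdot)\to\varphi(\cdot)\cdot 1$ passes to $C_{T^l}\to 1\otimes \rho^T$, where $\rho$ is the density matrix of $\varphi$. Faithfulness ($\rho>0$) makes $1\otimes\rho^T$ full-rank $k^2$ on $\cc^k\otimes\cc^k$, and lower semi-continuity of rank under matrix convergence (via continuity of singular values) then forces $\mathrm{rank}\,C_{T^l}=k^2$ for all large $l$, i.e., $\caK_l(\vv)=\Mat_k(\cc)$. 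The main obstacle is the uniform oscillation-contraction constant $c_0$ in $(1)\Rightarrow(2)$: the compactness argument on $\mathcal{S}$ is clean, but because $\mathrm{osc}$ is only a seminorm vanishing on $\cc\cdot 1$, care is needed in bookkeeping the affine normalization and in lifting the contraction on the quotient to convergence at the operator level.
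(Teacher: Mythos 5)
Your proposal is correct: the cycle $(4)\Rightarrow(1)\Rightarrow(2)\Rightarrow(5)\Rightarrow(4)$ together with $(2)\Leftrightarrow(3)$ closes all five conditions, and each step as sketched goes through. Note, however, that the paper does not actually prove Theorem~\ref{pm3}; it is stated in the appendix as a known fact with a pointer to \cite{Wolf:2012aa} (see also \cite{Sanz:2010aa}), so there is no in-text argument to compare against. The route in those references, and the one the surrounding appendix is set up for (Theorems~\ref{pm1} and~\ref{pm2}), is Perron--Frobenius theory for irreducible and primitive positive maps: primitivity yields a strictly positive fixed point and a faithful invariant state, the peripheral spectrum is controlled by irreducibility, and the link to $\caK_m(\vv)=\Mat_k(\cc)$ is the quantum Wielandt statement. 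You replace that machinery by a Birkhoff--Hopf oscillation contraction for $(1)\Rightarrow(2)$, which is more elementary and delivers the faithful invariant state and the convergence $T^l\to\varphi(\cdot)1$ in one stroke, and by a Choi-matrix rank count for $(2)\Rightarrow(5)$, which cleanly converts the analytic convergence back into the algebraic condition on $\caK_l(\vv)$; the price is that you get no quantitative bound on $m$, which the Wielandt-type argument of \cite{Sanz:2010aa} does provide. Two points to make explicit in a write-up, neither of which is a gap: in $(3)\Rightarrow(2)$ the positivity of the functional $\psi$ with $P_{\{1\}}^{T}(\cdot)=\psi(\cdot)1$ is best obtained by evaluating the assumed faithful invariant state against $T^l\to P_{\{1\}}^{T}$ (which simultaneously identifies $\psi$ with that state, giving faithfulness), and in $(1)\Rightarrow(2)$ faithfulness of $\varphi$ uses that $l\mapsto m(T^l(A))$ is non-decreasing, so that $\varphi(A)=\lim_l m(T^l(A))\ge m(T^{l_0}(A))>0$, exactly as you indicate.
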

\begin{lem}\label{lem:ps}
Let $n,n_0\in\nan$ and $\oo\in\Primz(n,n_0)$.
Then $r_{T_{\oo}}>0$ and
there exist constants $c>0$, $0<s<1$, 
a faithful positive linear functional $\varphi$ on $\mnz$
and a strictly
positive element $e\in \mnz$ 
such that
\[
\lV r_{T_\oo}^{-N}T^N_{\oo}\lmk A\rmk-\varphi(A)e\rV\le
cs^N\lV A\rV,\quad
\text{for all} \; A\in \mnz,\text{and}\;N\in\nan.
\]
\end{lem}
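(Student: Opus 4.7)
The plan is to reduce to the unital Perron--Frobenius situation in Theorem~\ref{pm3} by rescaling $T_\oo$ by its spectral radius and conjugating by the square root of its leading eigenvector, and then to transport the resulting exponential convergence back to $T_\oo$.

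First I would check that $T_\oo$ is irreducible in the sense of Theorem~\ref{pm1}. If there were a nontrivial projection $P$ with $T_\oo(P\mnz P)\subset P\mnz P$, expanding $T_\oo(P)=\sum_{\mu}\omega_\mu P\omega_\mu^*$ and sandwiching with $\bar P$ forces $\bar P\omega_\mu P=0$, i.e.\ $\omega_\mu P=P\omega_\mu P$ for each $\mu$. Iterating, $\widehat{\omega_{\mu^{(l)}}}P=P\widehat{\omega_{\mu^{(l)}}}P$ for every $l$, so $\caK_l(\oo)P\subset P\mnz P$ for all $l$, which contradicts $\caK_l(\oo)=\mnz$ for large $l$. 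Hence Theorem~\ref{pm2} applies and $r:=r_{T_\oo}>0$ is a non-degenerate eigenvalue with a strictly positive eigenvector $h>0$, $T_\oo(h)=rh$.

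Next I would introduce the rescaled map
\[
\tilde T(A):=r^{-1}h^{-\frac12}T_\oo\!\lmk h^{\frac12}Ah^{\frac12}\rmk h^{-\frac12},\qquad A\in\mnz,
\]
which is CP (its Kraus operators are $\tilde\omega_\mu:=r^{-\frac12}h^{-\frac12}\omega_\mu h^{\frac12}$) and unital, since $T_\oo(h)=rh$. A direct computation gives $\widehat{\tilde\omega_{\mu^{(l)}}}=r^{-l/2}h^{-1/2}\widehat{\omega_{\mu^{(l)}}}h^{1/2}$, hence $\caK_l(\tilde\oo)=r^{-l/2}h^{-1/2}\caK_l(\oo)h^{1/2}$, so primitivity of $\oo$ transfers to $\tilde\oo$. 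Therefore condition (5) of Theorem~\ref{pm3} holds for $\tilde T$, giving a unique faithful $\tilde T$-invariant state $\tilde\varphi$ with $P_{\{1\}}^{\tilde T}(\cdot)=\tilde\varphi(\cdot)\,\unit$, and moreover $\sigma(\tilde T)\setminus\{1\}\subset\{|z|<1\}$ with $1$ non-degenerate. In finite dimension the latter gives, via a contour/resolvent estimate exactly as in item~(6) of the Spectral Property~II, constants $C>0$ and $0<s<1$ with
\[
\lV \tilde T^N(A)-\tilde\varphi(A)\unit\rV\le C s^N\lV A\rV,\qquad A\in\mnz,\;N\in\nan.
\]

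Finally I would transport the estimate back by setting $e:=h$ and $\varphi(A):=\tilde\varphi\!\lmk h^{-1/2}Ah^{-1/2}\rmk$. Then $e>0$, $\varphi$ is a faithful positive linear functional on $\mnz$, and
\[
r^{-N}T_\oo^N(A)-\varphi(A)e=h^{\frac12}\lmk\tilde T^N(h^{-\frac12}Ah^{-\frac12})-\tilde\varphi(h^{-\frac12}Ah^{-\frac12})\unit\rmk h^{\frac12},
\]
so the desired bound holds with $c:=C\lV h\rV\cdot\lV h^{-1}\rV$. The only genuinely delicate point is upgrading Theorem~\ref{pm3}(2), which is only stated as pointwise convergence, to the exponential rate; this is handled by combining the spectral picture of Theorem~\ref{pm3}(3) with the Riesz functional calculus in the finite-dimensional algebra $\mnz$, and is exactly the reasoning already recorded in the paper's Spectral Property~II.
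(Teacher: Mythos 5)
Your proposal is correct and follows essentially the same route as the paper: irreducibility via Theorem~\ref{pm1}, Perron--Frobenius data from Theorem~\ref{pm2}, the rescaled unital map $r^{-1}h^{-1/2}T_\oo(h^{1/2}\,\cdot\,h^{1/2})h^{-1/2}$ to which Theorem~\ref{pm3} applies, and transport of the exponential estimate back via $e=h$ and $\varphi=\tilde\varphi(h^{-1/2}\,\cdot\,h^{-1/2})$. The only difference is that you spell out the irreducibility check and the conjugation bookkeeping, which the paper leaves as "easy to check" and "checked immediately."
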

\begin{proof}
It is easy to check that for $\oo\in\Primz(n,n_0)$,
$T_{\oo}$ is irreducible,
from Lemma \ref{pm1}.
Hence from Lemma \ref{pm2}, $r_{T_\oo}>0$
is a non degenerate eigenvalue of
$T_{\omega}$ with strictly positive eigenvector
$h_{\oo}$.
Define 
$\hat T_{\oo}:=r_{T_\oo}^{-1}h_{\oo}^{-\frac 12}T_{\oo}\lmk h_{\oo}^{\frac 12}\cdot 
h_{\oo}^{\frac 12}\rmk h_{\oo}^{-\frac 12}$.
Then this $\hat T_{\oo}$ is unital and satisfies
condition {\it 4} of Theorem \ref{pm3}.
From Theorem \ref{pm3}, there exists $0<s_\oo<1$ such that
$\sigma(\hat T_{\oo})\setminus \{1\}\subset \caB_{s_\oo}(0),$ 
$1$ is a non degenerate eigenvalue of $\hat T_{\oo}$, 
and there exists a faithful $\hat T_{\oo}$-invariant state $\varphi_{\oo}$.
We have
\[
\lV
\hat T_{\oo}^N(A)-\varphi_\oo(A) 1
\rV
\le s_\oo^N
C_\oo\lV A\rV,\quad A\in\mnz,\quad N\in\nan,
\]
for some $C_\oo>0$.
The claim of Lemma can be checked immediately from this.
\end{proof}
\begin{lem}\label{lem:lsl}
For $l,k,m\in\nan$ with $k\le l$, 
define
\begin{align*}
v_{a}\lmk s\rmk
=\lmk
\begin{array}{c}
{}_lC_{a}\cdot s^l\\
{}_{l+1}C_{a}\cdot s^{l+1}\\
\vdots\\
{}_{l+km-1}C_{a}\cdot s^{l+km-1}
\end{array}
\rmk,\quad
w_{a}\lmk s\rmk
=\lmk
\begin{array}{c}
l^as^l\\
(l+1)^as^{l+1}\\
\vdots\\
\lmk l+km-1\rmk ^a s^{l+km-1}
\end{array}
\rmk
\in\cc^{km},\quad
a=0,\ldots,k-1,\quad s\in\cc.
\end{align*}
Let $\{s_i\}_{i=1}^m$ be distinct elements in $\cc\setminus\{0\}$.
Then we have the followings.
\begin{enumerate}
\item The vectors
$\{v_a(s_i)\}_{a=0,\ldots,k-1,\;
i=1,\ldots,m}$
are linearly independent. In particular, there exist 
vectors $\zeta_{a,i}=\lmk \zeta_{a,i}(j)\rmk_{j=0}^{km-1}
\in\cc^{km}$, $a=0,\ldots,k-1,\;
i=1,\ldots,m$ such that
\[
\sum_{j=0}^{km-1}{}_{l+j}C_{a'}\cdot s_{i'}^{l+j} \cdot \zeta_{ai}(j)=\delta_{aa'}\delta_{ii'}.
\]
\item The vectors
$\{w_a(s_i)\}_{a=0,\ldots,k-1,\;
i=1,\ldots,m}$
are linearly independent. In particular, there exist 
vectors $\xi_{a,i}=\lmk \xi_{a,i}(j)\rmk_{j=0}^{km-1}\in\cc^{km}$, $a=0,\ldots,k-1,\;
i=1,\ldots,m$ such that
\[
\sum_{j=0}^{km-1}\lmk l+j\rmk^{a'} s_{i'}^{j+l} \xi_{ai}(j)=\delta_{aa'}\delta_{ii'}.
\]
\end{enumerate}
\end{lem}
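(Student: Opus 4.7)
The plan is to observe that parts (i) and (ii) are equivalent via an invertible, $s$-independent change of basis, then prove (ii) using the classical fact that quasi-exponential sequences with distinct nonzero bases are linearly independent.

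For the equivalence, note that $v_a(s)$ is the evaluation of the function $n\mapsto\binom{n}{a}s^n$ at $n=l,l+1,\ldots,l+km-1$, while $w_a(s)$ is the analogous evaluation of $n\mapsto n^a s^n$. Because $\binom{n}{a}=\tfrac{1}{a!}n(n-1)\cdots(n-a+1)$ is a polynomial in $n$ of degree exactly $a$, the families $\{\binom{n}{a}\}_{a=0}^{k-1}$ and $\{n^a\}_{a=0}^{k-1}$ are related by an invertible lower-triangular $k\times k$ matrix $M$ (with diagonal entries $1/a!$) in the space of polynomials of degree $<k$. Evaluating this relation fiberwise gives an invertible linear map of $\cc^{km}$ that transforms $\{v_a(s_i)\}_{a,i}$ into $\{w_a(s_i)\}_{a,i}$ independently of the $s_i$, so the two families have identical linear-independence properties.

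To prove (ii), suppose $\sum_{a,i}c_{ai}w_a(s_i)=0$ and set
\[
f(n):=\sum_{a=0}^{k-1}\sum_{i=1}^{m}c_{ai}\,n^a s_i^n,\qquad n\in\bbZ_{\geq 0}.
\]
The hypothesis is that $f(n)=0$ for $n=l,l+1,\ldots,l+km-1$. Writing $E$ for the shift operator, each summand $n^a s_i^n$ with $a<k$ is annihilated by $(E-s_i)^k$, so $f$ satisfies the linear recurrence $\prod_{i=1}^{m}(E-s_i)^k\,f=0$ of order $km$. Any solution of such a recurrence is determined by $km$ consecutive values, whence $f\equiv 0$ on $\{l,l+1,l+2,\ldots\}$. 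Since the $s_i$ are distinct and nonzero, the quasi-exponentials $\{n\mapsto n^a s_i^n\}_{a,i}$ are linearly independent as functions on any infinite set; this can be shown by induction on $m$, using $(E-s_m)^k$ to eliminate the last exponential block and reducing to an ordinary Vandermonde determinant for the base case $m=1$. Consequently all $c_{ai}$ vanish, which proves the linear independence in (ii), and by the first paragraph also in (i).

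Finally, since each family consists of $km$ linearly independent vectors in $\cc^{km}$, it is a basis; taking the dual basis with respect to the standard inner product $\braket{\cdot}{\cdot}$ on $\cc^{km}$ produces vectors $\zeta_{a,i}$ and $\xi_{a,i}$ satisfying the stated biorthogonality relations. The only substantive step is the recurrence argument, which is entirely classical; the hypothesis $k\le l$ plays no role in the proof and is presumably present for the application elsewhere in the paper.
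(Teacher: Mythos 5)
Your proof is correct, but it takes a different route from the paper, whose entire argument is the one-line remark that the claim ``can be checked by the use of Vandermonde determinant'' --- i.e.\ the intended proof is to evaluate the $km\times km$ generalized (confluent) Vandermonde determinant $\det\bigl((l+j)^{a}s_i^{\,l+j}\bigr)$ and observe it is nonzero for distinct nonzero $s_i$. You instead avoid any determinant computation: you reduce (i) to (ii) by the triangular change of basis between $\{\binom{n}{a}\}_{a<k}$ and $\{n^a\}_{a<k}$, and then prove (ii) by noting that $f(n)=\sum_{a,i}c_{ai}n^a s_i^n$ satisfies the order-$km$ recurrence $\prod_i(E-s_i)^k f=0$ with nonzero trailing coefficient, so vanishing on $km$ consecutive integers forces $f\equiv 0$, after which the standard independence of quasi-exponentials on an infinite set (your induction on $m$ via $(E-s_m)^k$) kills the coefficients. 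Both arguments are classical and of comparable length; yours has the mild advantage of being self-contained and of isolating where distinctness and nonvanishing of the $s_i$ enter (invertibility of the degree-preserving operators $(E-s_m)^k$ on the other blocks), while the determinant route is what the author evidently had in mind. Two small remarks: your phrase ``an invertible linear map of $\cc^{km}$'' in the reduction is a slight misnomer --- what you actually use is that each $v_a(s)$ is an invertible triangular recombination of $\{w_b(s)\}_{b\le a}$ with coefficients independent of $s$, which is a change of the $km$-element family, not a map of the ambient space; the conclusion is unaffected. And you are right that $k\le l$ is not needed for the independence itself (the polynomial identity $\binom{n}{a}=\frac{1}{a!}n(n-1)\cdots(n-a+1)$ holds for all integers $n$), so flagging it as an artifact of the application is accurate.
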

\begin{proof}
This can be checked by the use of Vandermonde determinant.
\end{proof}


\begin{thebibliography}{10}
\bibitem[AKLT]{Affleck:1988vr}
I.~Affleck, T.~Kennedy, E.H. Lieb, and H.~Tasaki.
\newblock {Valence bond ground states in isotropic quantum antiferromagnets}.
\newblock {\em Comm. Math. Phys.}, 115, 477--528, 1988.

\bibitem[BMNS]{Bachmann:2011kw}
S.~Bachmann, S.~Michalakis, B.~Nachtergaele, and R.~Sims.
\newblock{ Automorphic equivalence within gapped phases of quantum lattice
  systems}.
\newblock {\em Comm. Math. Phys.}, 309, 835--871, 2011.

\bibitem[BN1]{Bachmann:2012uu}
S.~Bachmann and B.~Nachtergaele.
\newblock {Product vacua with boundary states and the classification of gapped
  phases}.
\newblock {\em Comm. Math.
  Phys.}, 
329, 509--544, 2014.

\bibitem[BN2]{Bachmann:2012bf}
S.~Bachmann and B.~Nachtergaele.
\newblock {Product vacua with boundary states}.
\newblock {\em Phys. Rev. B}, 86, 035149, 2012.

\bibitem[BN3]
S.~Bachmann and B.~Nachtergaele.
\newblock{On gapped phases with a continuous symmetry and boundary operators.}
\newblock{\em J. Stat. Phys}, 154, 91--112, 2014.

 \bibitem[BR]{BR1}
 O. Bratteli,  D. W. Robinson: 
 {\em Operator Algebras and Quantum Statistical 
 Mechanics 1}, Springer-Verlag, 1986.
\bibitem[BO]{bo}
S.~Bachmann and Y.~Ogata
\newblock{$C^1$-Classification of gapped parent Hamiltonians of quantum spin chains.}
\newblock {\em Comm. Math.
  Phys.} 338,  1011--1042, 2015\\
  
\bibitem[BH]{bh} Brandao and Horodecki, 
 \newblock{Exponential decay of correlations implies area law }
\newblock{\em Comm. Math. Phys.}
333, 761--798, 2015.
  
  
\bibitem[CGW1]{Chen:2010gb}
X.~Chen, Z.-C. Gu, and X.-G. Wen.
\newblock {Local unitary transformation, long-range quantum entanglement, wave
  function renormalization, and topological order}.
\newblock {\em Phys. Rev. B}, 82, 155138, 2010.

\bibitem[CGW2]{Chen:2011iq}
X.~Chen, Z.-C. Gu, and X.-G. Wen.
\newblock {Classification of gapped symmetric phases in one-dimensional spin
  systems}.
\newblock {\em Phys. Rev. B}, 83, 035107, 2011.

\bibitem[FNW]{Fannes:1992vq}
M.~Fannes, B.~Nachtergaele, and R.F. Werner.
\newblock {Finitely correlated states on quantum spin chains}.
\newblock {\em Comm. Math. Phys.}, 144, 443--490, 1992.


\bibitem[HMNS]{hmns}
Hamza, Michalakis, Nachtergaele, and Sims
\newblock{ Approximating the ground state of gapped quantum spin systems}. 
\newblock{\em J. Math. Phys}. 
 50,  095213, 2009. 

\bibitem[H]{area}
Hastings.
\newblock{An area law for one-dimensional quantum systems.} 
\newblock{\em J. Stat. Mech. Theory Exp.}
P08024,  2007.

\bibitem[HK]{hk}
Hastings and Koma.
\newblock{Spectral gap and exponential decay of correlations}.
\newblock{\em Comm. Math. Phys.}  265, 781--804,  2006.

\bibitem[HW]{Hastings:2005cs}
M.~Hastings and X.-G. Wen.
\newblock {Quasiadiabatic continuation of quantum states: The stability of
  topological ground-state degeneracy and emergent gauge invariance}.
\newblock {\em Phys. Rev. B}, 72, 045141, 2005.


\bibitem[N]{Nachtergaele:1996vc}
B.~Nachtergaele.
\newblock {The spectral gap for some spin chains with discrete symmetry
  breaking}.
\newblock {\em Comm. Math. Phys.}, 175, 565--606, 1996.

\bibitem[NS]{ns}
Nachtergaele and Sims
\newblock{Lieb-Robinson bounds and the exponential clustering theorem}.
\newblock{\em Comm. Math. Phys.},  265, 119--130,  2006.

\bibitem[M1]{Matsui1}
T.~Matsui
\newblock{A characterization of pure finitely correlated states}.
\newblock{\em Infinite Dimensional Analysis, Quantum Probability and Related Topics},
647--661, 1998.

\bibitem[M2]{Matsui2}T.~Matsui
\newblock{Boundedness of entanglement entropy and split property of quantum spin chains}.
\newblock{\em
Reviews in Mathematical Physics},
1350017, 2013.


\bibitem[SPWC]{Sanz:2010aa}
M.~Sanz, D.~P{\'e}rez-Garc{\'\i}a, M.M. Wolf, and J.I. Cirac.
\newblock{A quantum version of {Wielandt's} inequality}.
\newblock {\em IEEE Transactions on Information Theory}, 56, 4668--4673, 
  2010.
  
\bibitem[SPC]{Schuch:2011ve}
N.~Schuch, D.~P{\'e}rez-Garc{\'\i}a, and J.I. Cirac.
\newblock {Classifying quantum phases using matrix product states and projected
  entangled pair states}.
\newblock {\em Phys. Rev. B}, 84, 165139, 2011.

\bibitem[WOVC]{Wolf:2006cc}
M.M. Wolf, G.~Ortiz, F.~Verstraete, and J.I. Cirac.
\newblock {Quantum Phase Transitions in Matrix Product Systems}.
\newblock {\em Phys. Rev. Lett.}, 97, 110403, 2006.





\bibitem[W]{Wolf:2012aa}
M.M. Wolf.
\newblock{Quantum channels {\&} operations.}
\newblock{ Unpublished.}  2012.

\end{thebibliography}
\end{document}